\newcommand{\EM}[1]{\ensuremath{#1}\xspace}
\newcommand{\xt}[1]{{\sf{#1}}}
\newcommand{\EMxt}[1]{\EM{\xt{#1}}}
\newcommand{\any}    {\EMxt{any}}  
\renewcommand{\S}{\EMxt{S}}           
\newcommand{\m}   {\EMxt m}          
\newcommand{\emp}{\EM{\epsilon}}  
\DeclareDocumentCommand\TR{om}{\EM{\IfNoValueTF{#1}{\PackageWarning{}{Undefined Type System}}{#1}\llbracket #2 \rrbracket}}
\DeclareDocumentCommand\TRG{omm}{\EM{\IfNoValueTF{#1}{\PackageWarning{}{Undefined Type System}}{#1}\llbracket #2 \rrbracket_{#3}}}
\DeclareDocumentCommand\TAG{ommm}{\EM{\IfNoValueTF{#1}{\PackageWarning{}{Undefined Type System}}{#1}\llparenthesis #2 \rrparenthesis_{#3}^{#4}}}
\newcommand{\n}   {\EMxt n}
\DeclareDocumentCommand\a{o}{\IfNoValueTF{#1}{\EMxt {a}}{\EM{\xt {a}_{#1}}}}
\DeclareDocumentCommand\ap{o}{\IfNoValueTF{#1}{\EMxt {a'}}{\EM{\xt {a'}_{#1}}}}
\DeclareDocumentCommand\app{o}{\IfNoValueTF{#1}{\EMxt {a''}}{\EM{\xt {a''}_{#1}}}}
\DeclareDocumentCommand\t{o}{\IfNoValueTF{#1}{\EMxt t}{\EM{\xt t_{#1}}}}
\DeclareDocumentCommand{\tp}{o}{\IfNoValueTF{#1}{\EM{ \xt t' }}{\EM{\xt t_{#1}'}}}
\DeclareDocumentCommand{\tpp}{o}{\IfNoValueTF{#1}{\EM{ \xt t'' }}{\EM{\xt t_{#1}''}}}
\DeclareDocumentCommand{\tppp}{o}{\IfNoValueTF{#1}{\EM{ \xt t''' }}{\EM{\xt t_{#1}'''}}}
\DeclareDocumentCommand{\e}{o}{\IfNoValueTF{#1}{\EM{ \xt e }}{\EM{\xt e_{#1}}}}
\DeclareDocumentCommand{\ep}{o}{\IfNoValueTF{#1}{\EM{ \xt e' }}{\EM{\xt e_{#1}'}}}
\DeclareDocumentCommand{\epp}{o}{\IfNoValueTF{#1}{\EM{ \xt e'' }}{\EM{\xt e''_{#1}}}}
\DeclareDocumentCommand{\eppp}{o}{\IfNoValueTF{#1}{\EM{ \xt e''' }}{\EM{\xt e'''_{#1}}}}
\DeclareDocumentCommand{\fd}{o}{\IfNoValueTF{#1}{\EM{ \xt{fd} }}{\EM{\xt{fd}_{#1}}}}
\DeclareDocumentCommand{\fdp}{o}{\IfNoValueTF{#1}{\EM{ \xt{fd}' }}{\EM{\xt{fd}_{#1}'}}}
\DeclareDocumentCommand{\fdpp}{o}{\IfNoValueTF{#1}{\EM{ \xt{fd}'' }}{\EM{\xt{fd}_{#1}''}}}
\DeclareDocumentCommand{\fdppp}{o}{\IfNoValueTF{#1}{\EM{ \xt{fd}''' }}{\EM{\xt{fd}_{#1}'''}}}
\DeclareDocumentCommand{\md}{o}{\IfNoValueTF{#1}{\EM{ \xt{md} }}{\EM{\xt{md}_{#1}}}}
\DeclareDocumentCommand{\f}{o}{\IfNoValueTF{#1}{\EM{ \xt f }}{\EM{\xt f_{#1}}}}
\DeclareDocumentCommand{\mdp}{o}{\IfNoValueTF{#1}{\EM{ \xt{md}' }}{\EM{\xt{md}_{#1}'}}}
\DeclareDocumentCommand{\mdpp}{o}{\IfNoValueTF{#1}{\EM{ \xt{md}'' }}{\EM{\xt{md}_{#1}''}}}
\DeclareDocumentCommand{\mdppp}{o}{\IfNoValueTF{#1}{\EM{ \xt{md}''' }}{\EM{\xt{md}_{#1}'''}}}
\DeclareDocumentCommand{\C}{o}{\IfNoValueTF{#1}{\EM{ \xt{C} }}{\EM{\xt{C}_{#1}}}}
\DeclareDocumentCommand{\mt}{o}{\IfNoValueTF{#1}{\EM{ \xt{mt} }}{\EM{\xt{mt}_{#1}}}}
\DeclareDocumentCommand{\mtp}{o}{\IfNoValueTF{#1}{\EM{ \xt{mt}' }}{\EM{\xt{mt}_{#1}'}}}
\DeclareDocumentCommand{\mtpp}{o}{\IfNoValueTF{#1}{\EM{ \xt{mt}'' }}{\EM{\xt{mt}_{#1}''}}}
\DeclareDocumentCommand{\mtppp}{o}{\IfNoValueTF{#1}{\EM{ \xt{mt}''' }}{\EM{\xt{mt}_{#1}'''}}}
\DeclareDocumentCommand{\D}{o}{\IfNoValueTF{#1}{\EM{ \xt{D} }}{\EM{\xt{D}_{#1}}}}
\DeclareDocumentCommand{\Dp}{o}{\IfNoValueTF{#1}{\EM{ \xt{D'} }}{\EM{\xt{D'}_{#1}}}}
\DeclareDocumentCommand{\Dpp}{o}{\IfNoValueTF{#1}{\EM{ \xt{D''} }}{\EM{\xt{D''}_{#1}}}}
\renewcommand{\k} {\EMxt k}
\newcommand{\E}   {\EMxt {E}}
\newcommand{\Rule}[4][]{\inferrule*{#3}{#4}}
\newcommand{\efqn}{
\foreach\n in {1,...,\@listdepth}{foo}
}
\newlist{myEnumerate}{enumerate}{10}
\DeclareDocumentCommand\stepp{o}{\IfNoValueTF{#1}{\item}{\item\hypertarget{proofstep:#1}{}\label{proofstep:#1}}}
\newcommand{\ebox}[1]{\fbox{#1}\hfill\vspace{-1em}\centering}
\newcounter{rules}
\newcommand{\ol}[1]{\EM{\overline{#1}}}
\newcounter{theo}
\newcounter{lem}
\newcommand{\stepwidth}{.55\textwidth-\leftskip}
\newcommand{\numwidth}{3.5em}
\newcommand{\stepspace}
  {\ifbool{stepspace}{\vspace{0.4em}\boolfalse{stepspace}}{}\par}
\newcommand{\step}[2]
  {\stepspace\par\stepnum\makebox[\stepwidth][l]{#1} by #2}
\newsavebox{\stepsby}
\newcommand{\done}[1]{\step{done}{#1}}
\DeclareFontFamily{OMS}{cmtt}{\skewchar\font48 }
\DeclareFontShape{OMS}{cmtt}{m}{n}%
   {<->ssub*cmsy/m/n}{}
\DeclareFontShape{OMS}{cmtt}{m}{it}%
   {<->ssub*cmsy/m/n}{}
\DeclareFontShape{OMS}{cmtt}{m}{sl}%
   {<->ssub*cmsy/m/n}{}
\DeclareFontShape{OMS}{cmtt}{m}{sc}%
   {<->ssub*cmsy/m/n}{}
\DeclareFontShape{OMS}{cmtt}{bx}{n}%
   {<->ssub*cmsy/b/n}{}
\DeclareFontShape{OMS}{cmtt}{bx}{it}%
   {<->ssub*cmsy/b/n}{}
\DeclareFontShape{OMS}{cmtt}{bx}{sl}%
   {<->ssub*cmsy/b/n}{}
\DeclareFontShape{OMS}{cmtt}{bx}{sc}%
   {<->ssub*cmsy/b/n}{}
\newcommand{\IGNOREUNLESSNEEDED}[1]{}
\newcommand{\figref}[1]{Fig.~\ref{#1}\xspace}
\newcommand{\lemref}[1]{Lem.~\ref{#1}\xspace}
\newcommand{\thmref}[1]{Thm.~\ref{#1}\xspace}
\newcommand{\secref}[1]{Sec.~\ref{#1}\xspace}
\newcommand{\defref}[1]{Def.~\ref{#1}\xspace}
\newcommand{\appref}[1]{Appendix~\ref{#1}\xspace}
\renewcommand{\EM}[1]{\ensuremath{#1}\xspace}   
\newcommand{\SF}[1]{\mathsf{#1}}                
\newcommand{\SC}[1]{\textsc{#1}}
\newcommand{\tinyb}[1]{\scalebox{0.8}{{\normalsize #1}}}  
\renewcommand{\v}[1]{\EM{{\tinyb{\%}}\SF{#1}}}  
\newcommand{\val}{\EM{\SF{v}}}                  
\newcommand{\ass}[2]{\EM{\v{#1} \leftarrow #2}} 
\renewcommand{\int}{\EM{\SF{Int}}}              
\renewcommand{\any}{\EM{\SF{Any}}}              
\newcommand{\ty}{\EM{\SF{ty}}}                   
\newcommand{\aty}{\EM{\SF{A}}}                  
\newcommand{\call}[2]{\EM{\SF{#1}(#2)}}         
\newcommand{\cond}[4]{\EM{\ass{#1}{\v{#2} ~?~ #3 : #4}}}  
\newcommand{\get}[2]{\EM{\v{#1}[\SF{#2}]}}      
\renewcommand{\i}{\EM{\SF{i}}}                  
\renewcommand{\j}{\EM{\SF{j}}}                  
\renewcommand{\m}{\EM{\SF{m}}}                  
\renewcommand{\k}{\EM{\SF{k}}}                  
\newcommand{\p}{\EM{\SF{p}}}                    
\renewcommand{\l}{\EM{\SF{l}}}                  
\newcommand{\st}{\EM{\SF{st}}}                  
\newcommand{\env}{\EM{\SF{E}}}                  
\newcommand{\frm}{\EM{\SF{F}}}                  
\newcommand{\mtbl}{\EM{\SF{M}}}                 
\newcommand{\tytbl}{\EM{\SF{D}}}                
\newcommand{\config}[2]{\EM{#1,\, #2}}          
\newcommand{\configd}{\config{\frm}{\mtbl}}   
\newcommand{\stk}[2]{\EM{#1 \cdot #2}}          
\renewcommand{\step}{\EM{~\rightarrow~}}        
\newcommand{\stepmul}{\EM{~\rightarrow^*~}}   
\newcommand{\stepd}{\EM{~{\rightarrow_{\mathcal{D}}}~}} 
\newcommand{\stepj}{\EM{~{\rightarrow_{\SF{JIT}}}~}} 
\newcommand{\stepdmul}{\EM{~{\rightarrow^*_{\mathcal{D}}}~}}
\newcommand{\stepjmul}{\EM{~{\rightarrow^*_{\SF{JIT}}}~}}
\newcommand{\idx}[2]{\EM{#1[{\SF{#2}}]}}        
\newcommand{\construct}[2]{\EM{#1(#2)}}   
\newcommand{\msig}[2]{\EM{#1!{#2}}}         
\newcommand{\meth}[3]{\EM{#1!{#2}\,}}
\newcommand{\direct}[3]{\EM{\msig{#1}{#2}(#3)}}      
\newcommand{\Ty}{\EM{\SF{T}}}                   
\newcommand{\last}[1]{\EM{\mathit{last}(#1)}}   
\renewcommand{\done}{\EM{\epsilon}}
\newcommand{\jules}{\EM{\SF{Jules}}}
\renewcommand{\c}[1]{\lstinline{#1}\xspace}
\newcommand{\VD}{\vdash}
\newcommand{\VDnd}{\EM{\vdash^{{\mathcal{D}}}}}
\renewcommand{\n}{\EM{\SF n}}
\newcommand{\main}{\EM{\SF{main}}}
\newcommand{\origmtbl}[1]{\EM{\lfloor #1 \rfloor}}
\newcommand{\compst}[7]{\EM{#1\ \VD\ \config{#2}{#3}\ \leadsto\
  \config{#5}{#6}}}
\newcommand{\devirtst}[3]{\EM{#1\ \VDnd_{#2}\ #3}}
\newcommand{\devirtm}[1]{\EM{\VDnd\ #1}}
\newcommand{\eqdirop}{\triangleright}
\newcommand{\eqst}[5]{\EM{#1\ \vdash_{#2,#3}\ #4\ \eqdirop\ #5}}
\newcommand{\eqstd}[3]{\eqst{#1}{\mtbl}{\mtbl'}{#2}{#3}}
\newcommand{\eqmtbl}[2]{\EM{#1 \eqdirop #2}}
\newcommand{\eqmtbld}{\eqmtbl{\mtbl}{\mtbl'}}
\newcommand{\typeinfst}[3]{\EM{\VD^{\typeinfop}_{#1}\,#2\,<:\,#3}}
\newcommand{\typeinfstd}[2]{\typeinfst{\mtbl}{#1}{#2}}
\newcommand{\optst}[5]{\EM{#3\,\VD_{#1,#2}\, #4\ \eqdirop\ #5}}
\newcommand{\optstd}[3]{\optst{\mtbl}{\mtbl'}{#1}{#2}{#3}}
\newcommand{\optconfig}[5]{\EM{\config{#1}{#2}\ \eqdirop\ \config{#3}{#4}\ <:\ #5}}
\DeclareMathOperator{\typeof}{\mathit{typeof}}
\DeclareMathOperator{\dispatchop}{{\mathcal D}}
\DeclareMathOperator{\body}{\mathit{body}}
\DeclareMathOperator{\signature}{\mathit{signtr}}
\DeclareMathOperator{\origsignature}{\mathit{o-signtr}}
\DeclareMathOperator{\typeinfop}{\mathcal{I}}
\DeclareMathOperator{\jitop}{\mathit{jit}}
\newcommand{\dispatch}[3]{\EM{\dispatchop(#1,\SF{#2},{#3})}} 
\newcommand{\typeinf}[3]{\EM{\typeinfop(#1, #2, #3)}} 
\newcommand{\jit}[0]{\EM{\jitop}} 
\newtheorem{requirement}{Requirement}[section]
\newcommand{\goodpkgsnum}{760\xspace}
\newcommand{\juliaversion}{Julia 1.5.4\xspace}
\newcommand{\ExtendedVersion}{1}    
\newcommand{\HightlightChanges}{0}  
\newcommand{\PAPERVERSION}[2]{
\ifnum1=\ExtendedVersion\relax
\begin{changebar}
#2%
\end{changebar}\xspace
\else
#1\xspace
\fi}
\newcommand{\PAPERVERSIONINLINE}[2]{
\ifnum1=\ExtendedVersion\relax
#2%
\else
#1%
\fi}
\newcommand{\ADD}[1]{%
\ifnum1=\HightlightChanges\relax
\cbcolor{green}
\begin{changebar}
#1%
\end{changebar}
\cbcolor{gray}
\else
#1%
\fi}
\newcommand{\MODIFY}[1]{%
\ifnum1=\HightlightChanges\relax
\begin{changebar}
#1
\end{changebar}
\else
#1
\fi}
\definecolor{Gray}{gray}{0.9}
\definecolor{vlightgray}{gray}{0.93}
\newcommand{\HIDEFORLONGVERSION}[1]{}
\lstdefinelanguage{Jules}{
  keywords={struct,is,end},
  keywordstyle=\color{darkgray}\bfseries,
  ndkeywords={struct,is,end},
  ndkeywordstyle=\color{darkgray}\bfseries,
  identifierstyle=\color{black},
  sensitive=false,  comment=[l]{//},  morecomment=[s]{/*}{*/},
  commentstyle=\color{gray}\ttfamily,  stringstyle=\color{gray}\ttfamily,
  morestring=[b]',  morestring=[b]",
  aboveskip=\medskipamount, 
  belowskip=\medskipamount, 
  escapeinside={(*@}{@*)}
}
\begin{document}

\begin{CCSXML}
<ccs2012>
<concept>
<concept_id>10011007.10011006.10011041.10011044</concept_id>
<concept_desc>Software and its engineering~Just-in-time compilers</concept_desc>
<concept_significance>500</concept_significance>
</concept>
<concept>
<concept_id>10011007.10011006.10011039.10011311</concept_id>
<concept_desc>Software and its engineering~Semantics</concept_desc>
<concept_significance>300</concept_significance>
</concept>
</ccs2012>
\end{CCSXML}

\ccsdesc[500]{Software and its engineering~Just-in-time compilers}
\ccsdesc[300]{Software and its engineering~Semantics}

\renewcommand{\c}[1]{\lstinline[language=Julia]!#1!\xspace}

\title{Type Stability in Julia\PAPERVERSIONINLINE{}{\ (Extended Version)}}
\subtitle{Avoiding Performance Pathologies in JIT Compilation}

\author{Artem Pelenitsyn}\affiliation{\institution{Northeastern University}\country{USA} }
\email{pelenitsyn.a@northeastern.edu}
\author{Julia Belyakova}\affiliation{\institution{Northeastern University}\country{USA} }
\author{Benjamin Chung}\affiliation{\institution{Northeastern University}\country{USA} }
\author{Ross Tate}\affiliation{\institution{Cornell University}\country{USA} }
\author{Jan Vitek}\affiliation{\institution{Northeastern University}\country{USA}}
\affiliation{\institution{Czech Technical University in Prague}\country{Czech Republic}}

\renewcommand{\shorttitle}{Type Stability in Julia: Avoiding Performance
  Pathologies in JIT Compilation}
\keywords{method dispatch, type inference, compilation, dynamic languages}

\begin{abstract}
  As a scientific programming language, Julia strives for performance but also
  provides high-level productivity features. To avoid performance pathologies,
  Julia users are expected to adhere to a coding discipline that enables
  so-called type stability. Informally, a function is type stable if the type of
  the output depends only on the types of the inputs, not their values.
  This paper provides a formal definition of type stability as well as a
  stronger property of type groundedness, shows that groundedness enables
  compiler optimizations, and proves the compiler correct. We also perform a
  corpus analysis to uncover how these type-related properties manifest in
  practice.
\end{abstract}
\maketitle

\PAPERVERSION{}{\emph{A note on Extended Version. This extended version
    of~\cite{Pelenitsyn21} contains full proofs of the theorems in
    \secref{sec:jules} and an appendix with graphs, as described in
    \secref{sec:empirical}. Most notable extension blocks (usually, the proofs)
    are marked with a vertical bar on the right margin, like
    this paragraph.}}

\section{Introduction}

Performance is serious business for a scientific programming language. Success
in that niche hinges on the availability of a rich ecosystem of high-performance
mathematical and algorithm libraries. Julia is a relative newcomer in this space.
Its approach to scientific computing is predicated on the bet that
users can write efficient numerical code in Julia without needing to resort to C
or Fortran.
The design of the language is a balancing act between productivity features, such
as multiple dispatch and garbage collection, and performance features, such as
limited inheritance and restricted-by-default dynamic code loading.
Julia has been designed to ensure that a
relatively straightforward path exists from source code to machine code,
and its performance results are encouraging. They show, on some simple benchmarks,
speeds in between those of C and Java. In other words, a new dynamic language
written by a small team of language engineers can compete with mature, statically
typed languages and their highly tuned compilers.

Writing efficient Julia code is best viewed as a dialogue between the programmer
and the compiler.
From its earliest days, Julia exposed the compiler's intermediate representation
to users, encouraging them to (1) observe if and how the compiler is able to optimize
their code, and (2) adapt their coding style to warrant optimizations. This came with a
simple execution model: each time a function is called with a different tuple of
concrete argument types, a new \emph{specialization} is generated by Julia's
just-in-time compiler. That specialization leverages the run-time type information
about the arguments, to apply \emph{unboxing} and \emph{devirtualization} transformations
to the code. The former lets the compiler manipulate values without indirection
and stack allocate them; the latter sidesteps the powerful but costly
multiple-dispatch mechanism and enables inlining of callees.

One key to performance in Julia
stems from the compiler's success in determining the \emph{concrete} return
type of any function call it encounters. The intuition is that in such cases,
the compiler is able to propagate precise type information as it traverses the
code, which, in turn, is crucial for unboxing and devirtualization.
More precisely, Julia makes an important distinction between 
concrete and abstract types: a concrete type such as \c{Int64}, is final
in the sense that the compiler knows the size and exact layout of
values of that type; for values of abstract types such as \c{Number}, the compiler has
no information. The property we alluded to is called \emph{type stability}.
Its informal definition states that a method is type stable if the concrete type of its
output is entirely determined by the concrete types of its arguments.\footnote{
\url{https://docs.julialang.org/en/v1/manual/faq/\#man-type-stability}} Folklore
suggests that one should strive to write type-stable methods outright, or, if
performance is an issue, refactor methods so that they become type stable.

Our goal in this paper is three-fold. First, we give a precise definition of
type stability that allows for formal reasoning.
Second, we formalize the relationship between type-stable code and the
ability of a compiler to perform type-based optimizations; to this end, we find
that a stronger property, which we call \emph{type groundedness}, describes
optimizable code more precisely.
Finally, we analyze prevalence of type stability in open-source Julia code
and identify patterns of type-stable code.
We make the following contributions:

\begin{itemize}
  \item An overview of type stability and its role in Julia, as well as the
    intuition behind stability and groundedness (\secref{sec:stability}).
  \item An abstract machine called \jules, which models Julia's intermediate
    representation, dynamic semantics, and just-in-time (JIT) compilation
    (\secref{sec:jules}).
  \item Formal definitions of groundedness and stability, and a proof that
    type-grounded methods are fully devirtualized by the JIT compiler
    (\secref{sec:stability-formal}). \ADD{Additionally, we prove that the JIT
    compilation is correct with respect to the dynamic-dispatch semantics
    (\secref{sec:jit-correct}). \PAPERVERSIONINLINE{Detailed proofs are available
    in the extended version of the paper~\cite{oopsla21jules:arx}.}{}}
  \item A corpus analysis of packages to measure stability and groundedness in
    the wild, find patterns of type-stable code, and identify properties
    that correlate with unstable code (\secref{sec:empirical}).
\end{itemize}
The paper is accompanied by the artifact~\cite{artifact} reproducing results
of~\secref{sec:empirical}.

\section{Julia in a Nutshell}

The Julia language is designed around multiple dispatch~\cite{BezansonEKS17}.
Programs consist of \emph{functions} that are implemented by multiple
\emph{methods} of the same name; each method is distinguished by a distinct type
signature, and all methods are stored in a so-called method table.
At run time, the Julia implementation dispatches a function call to
the \emph{most specific} method by comparing the types of the arguments to the
types of the parameters of all methods of that function. As an example of a
function and its constituent methods, consider \c{+}; as of version 1.5.4 of the
language, there are 190 implementations of \c{+}, each covering a
specific case determined by its type signature. Fig.~\ref{plus} displays custom
implementations for 16-bit floating point numbers, missing values,
big-floats/big-integers, and complex arithmetic.
\ADD{Although at the source-code level, multiple methods look similar to
overloading known from languages like C++ and Java, the key difference is that
those languages resolve overloading statically whereas Julia does that
dynamically using multiple dispatch.}

\begin{figure}
\begin{lstlisting}[language=julia]
# 184 methods for generic function "+":
[1] +(a::Float16, b::Float16) in Base at float.jl:398
[2] +(::Missing, ::Missing) in Base at missing.jl:114
[3] +(::Missing) in Base at missing.jl:100
[4] +(::Missing, ::Number) in Base at missing.jl:115
[5] +(a::BigFloat, b::BigFloat, c::BigFloat, d::BigFloat) in Base.MPFR at mpfr.jl:541
[6] +(a::BigFloat, b::BigFloat, c::BigFloat) in Base.MPFR at mpfr.jl:535
[7] +(x::BigFloat, c::BigInt) in Base.MPFR at mpfr.jl:394
[8] +(x::BigFloat, y::BigFloat) in Base.MPFR at mpfr.jl:363
...
\end{lstlisting}
\caption{Methods from the standard library}\label{plus}
\Description{First 8 out of 184 methods for the plus function in the Julia standard library.}
\end{figure}

Julia supports a rich type language for defining method
signatures. Base types consist of either bits types---types that have a direct
binary representation, like integers---or structs. Both bits types and struct
types, referred to as \emph{concrete types}, can have supertypes, but all
supertypes are \emph{abstract types}. Abstract types can be arranged into a
single-subtyping hierarchy rooted at \c{Any}, and no abstract type can be
instantiated. The type language allows for further composition of these base
types using unions, tuples, and bounded existential constructors; the result of
composition can be abstract or concrete. \citet{oopsla18b} gives a detailed
discussion of the type language and of subtyping.

Any function call in a program, such as \c{x+y}, requires choosing one of the
methods of the target function. \emph{Method dispatch} is a multi-step process.
First, the implementation obtains the concrete types of arguments. Second, it
retrieves applicable methods by checking for subtyping between argument types
and type annotations of the methods. Next, it sorts these methods into subtype
order. Finally, the call is dispatched to the most specific method---a method
such that no other applicable method is its strict subtype. If no such
method exists, an error is produced. As an example, consider the above
definition of \c{+}: a call with two \c{BigFloat}'s dispatches to
definition 8 from \figref{plus}.

\MODIFY{
Function calls are pervasive in Julia, and their efficiency is crucial for
performance. However, the many complex type-level operations involved in dispatch
make the process slow. Moreover, the language implementation, as of this writing,
does not perform inline caching~\cite{DS84}, meaning that dispatch results are
not cached across calls. To attain acceptable performance, the compiler attempts
to remove as many dispatch operations as it can. This optimization leverages
run-time type information whenever a method is compiled, i.e., when it is called
for the first time with a novel set of argument types.  These types are used by
the compiler to infer types in the method body. Then, this type information
frequently allows the compiler to devirtualize and inline the function calls
within a method~\cite{aigner}, thus improving performance. However, this
optimization is not always possible: if type inference cannot produce a
sufficiently specific type, then the call cannot be devirtualized. Consider the
prior example of \c{x+y}: if \c{y} is known to be one of \c{BigFloat} or
\c{BigInt}, the method to call cannot be determined. This problem arises for
various reasons, for example, accessing a struct field of an abstract type, or
the type inferencer losing precision due to a branching statement. A more
detailed description of the compilation strategy and its performance is given in
\cite{oopsla18a}.}

\section{Type Stability: a Key to Performance?}\label{sec:stability}

Designing performant Julia code is best understood as a conversation between the
language designers, compiler writers, and the developers of widely used
packages. The designers engineered Julia so that it is possible to achieve
performance with a moderately optimizing just-in-time (JIT) compiler
and with disciplined use of the abstractions included in the language~\cite{oopsla18a}.
For example, limiting the reach of \c{eval} by default~\cite{oopsla20a} allows
for a simpler compiler implementation,
and constraining struct types to be final enables unboxing.
The compiler's work is split between high-level optimizations, mainly
devirtualization and unboxing, and low-level optimizations that are off-loaded
to LLVM.\@ This split is necessary as LLVM on its own does not have enough
information to optimize multiple dispatch.
Removing dispatch is the key to performance, but to perform the optimization,
the compiler needs precise type information. Thus, while developers
are encouraged to write generic code, the code also needs to be conducive
to type inference and type-based optimizations. In this section,
we give an overview of the appropriate coding discipline, and explain
how it enables optimizations.

\paragraph{Performance.} To illustrate performance implications of careless
coding practices, consider Fig.~\ref{ide}, which displays a method for one of
the Julia microbenchmarks, \c{pisum}. For the purposes of this example, we
have added an identity function \c{id} which was initially implemented to
return its argument in both branches, as well-behaved identities do.
Then, the \c{id} method was changed to return a string in the impossible
branch (\c{rand()} returns a value from 0 to 1). The impact of that change
was about a 40\% increase in the execution time of the benchmark (\juliaversion).

\lstdefinestyle{jterm}{
    basicstyle=\footnotesize\ttfamily,
    moredelim=[is][\bfseries\color{Red}]{a@}{a@},
    moredelim=[is][\color{MidnightBlue}]{b@}{b@},
    moredelim=[is][\bfseries\color{OliveGreen}]{`}{`},
}

\begin{figure}[h!]
\centering
\begin{subfigure}[b]{0.42\textwidth}
\centering
\begin{lstlisting}[language=julia,style=jterm]
function id(x)
  (rand() == 4.2) ? "x" : x
end


function pisum()
  sum = 0.0
  for j = 1:500
    sum = 0.0
    for k = 1:10000
      sum += id(1.0/(k*id(k)))
    end
  end
  sum
end
\end{lstlisting}
\caption{Microbenchmark source code, redacted}
\end{subfigure}
\hspace{5mm}
\begin{subfigure}[b]{0.47\textwidth}
\begin{lstlisting}[style=jterm]
`julia>` @code_warntype id(5)
Variables
  #self#b@::Core.Compiler.Const(id, false)b@
  xb@::Int64b@
Bodya@::Union{Int64, String}a@
1 - %1 = Main.rand()b@::Float64b@
|   %2 = (%1 == 4.2)b@::Boolb@
+--      goto #3 if not %2
2 -      return "x"
3 -      return x

`julia>` @code_warntype pisum()
...
|   %20 = kb@::Int64b@
|   %21 = Main.id(k)a@::Union{Int64, String}a@
\end{lstlisting}
\caption{Julia session}
\end{subfigure}
\caption{\MODIFY{A Julia microbenchmark (a) illustrating performance implications
  of careless coding practices: changing \c{id} function to return
  values of different types leads to longer execution
  because of the \c{Union} type of \c{id(..)}, which propagates to \c{pisum} (b).}}
\label{ide}
\Description{Analyzing a redacted version of the pisum() Julia benchmark with
  the built-in code_warntype macro, we spot type instability. Indeed, the
  redaction amounted to insertion of an id function that does not change
  semantics but makes Julia believe that id and, in turn, pisum have
  instability. Our id function pretends that it can return something completely
  random of the string type in a very unlikely case when the rand function
  returns 4.2.}
\end{figure}

When a performance regression occurs, it is common for developers to study the
intermediate representation produced by the compiler. To facilitate this, the
language provides a macro, \c{code_warntype}, that shows the code along with the
inferred types for a given function invocation. Fig.~\ref{ide} demonstrates the result of
calling \c{code_warntype} on \c{id(5)}. Types that are imprecise, i.e., not
concrete, show up in red: they indicate that concrete type of a value may vary
from run to run. Here, we see that when called with an integer,
\c{id} may return either an
\c{Int64} or a \c{String}.
Moreover, the imprecise return type of \c{id} propagates to the caller,
as can be seen by inspecting \c{pisum} with \c{code_warntype}.
Such type imprecision can impact performance in two ways. First,
the \c{sum} variable has to be boxed, adding a level of indirection to
any operation performed therein. Second, it is harder for
the compiler to devirtualize and inline consecutive calls, thus requiring
dynamic dispatch.

\paragraph{Type Stability}
Julia's compilation model is designed to accommodate source programs
with flexible types, yet to make such programs efficient. The compiler, by
default, creates an \emph{instance} of each source method for each distinct tuple of
argument types. Thus, even if the programmer does not provide any type
annotations, like in the \c{id} example, the compiler will create method
instances for \emph{concrete} input types seen during
an execution. For example, since in \c{pisum}, function \c{id} is called both
with a \c{Float64} and an \c{Int64}, the method table will hold two method instances
in addition to the original, user-defined method.
Because method instances have more precise argument types, the compiler can
leverage them to produce more efficient code and infer more precise return types.

In Julia parlance, a method is called \emph{type stable} if its inferred
return type depends solely on the types of its arguments; in the
example, \c{id} is not type stable, as its return type may change depending on
the input value (in principle). The definition of type stability, though, is somewhat slippery.
The community has multiple, subtly different, informal definitions that capture
the same broad idea, but describe varying properties. The canonical definition
from the Julia documentation describes type stability as

\begin{quote}
  \it ``[...] the type of the output is predictable from the types of the
  inputs. In particular, it means that the type of the output cannot vary
  depending on the values of the inputs.''
\end{quote}
However, elsewhere, the documentation also states that \emph{``An analogous
type-stability problem exists for variables used repeatedly within a
function:''}
\begin{lstlisting}[basicstyle=\footnotesize\tt,language=julia]
function foo()
    x = 1
    for i = 1:10
        x /= rand()
    end
    x
end
\end{lstlisting}
This function will always return a \c{Float64}, which is the type of \c{x}
at the end of the \c{foo} definition, regardless of the (nonexistent)
inputs. However, the manual says that it is a type stability issue nonetheless.
This is because the variable \c{x} is initialized to an \c{Int64} but then
assigned a \c{Float64} in the loop. Some versions of the compiler boxed \c{x} as
it could hold two different types; of course, in this example, one level of loop
unrolling would alleviate the performance issue, but in general,
imprecise types limit compiler optimizations.
Conveniently, the \c{code_warntype} macro mentioned above
will highlight imprecise types for \emph{all} intermediate variables.
Furthermore, the documentation states that
\begin{quote}
  \it ``[t]his serves as a warning of potential type instability.''
\end{quote}

Effectively, there are two competing, type-related properties
of function bodies. To address this confusion, we
define and refer to them using two distinct terms:
\begin{itemize}
  \item \emph{type stability} is when a function's return type depends only on
    its argument types, and
  \item \emph{type groundedness} is when every variable's type depends
    only on the argument types.
\end{itemize}
\MODIFY{Although type stability is strictly weaker than type groundedness,
for the purposes of this paper, we are interested in both properties.
The latter, type groundedness, is useful for performance of the function itself,
as it implies that unboxing, devirtualization, and inlining can occur.
The former, type stability, allows the function to be used efficiently
by other functions: namely, type-grounded functions may call a function that
is only type stable but not grounded.
For brevity, when the context is clear, we will refer to type stability and
type groundedness as stability and groundedness in what follows.
}

\paragraph{Flavors of stability.}

Type stability is an inter-procedural property, and in the worst case, it can depend
on the whole program. Consider the functions of Fig.~\ref{fla}. Function \c{f0}
is trivially type unstable, regardless of the type of its input: if \c{good(x)}
returns true, \c{f0} returns an \c{Int64} value, otherwise \c{f0} returns a
\c{String}. Function \c{f1} is trivially type stable as all control-flow paths
return a constant of \c{Int64} type. Function \c{f2} is type stable as
long as the negation operator is type stable and returns a value of the same type
as its argument. As an example, we show a method \c{-(::Float64)} that causes
\c{f2(::Float64)} to lose type stability. This is a common bug where the function
\c{(-)} either returns a \c{Float64} or \c{Int64} due to the constant \c{0} being
of type \c{Int64}.  The proper, Julia-style implementation for this method is
to replace the constant \c{0} with \c{zero(x)}, which returns the zero value
for the type of \c{x}, in this case \texttt{\small0\!.\!0}.

\begin{figure}[!h]\footnotesize
\begin{minipage}{3cm}\begin{lstlisting}[basicstyle=\footnotesize\tt,language=julia]
function f0(x)
  if (good(x))
    0
  else
    "not 0"
  end
end
\end{lstlisting}\end{minipage}
\hspace{.5cm}
\begin{minipage}{3cm}\begin{lstlisting}[basicstyle=\footnotesize\tt,language=julia]
function f1(x)
  if (good(x))
    0
  else
    1
  end
end
\end{lstlisting}\end{minipage}
\hspace{.5cm}
\begin{minipage}{3cm}\begin{lstlisting}[basicstyle=\footnotesize\tt,language=julia]
function f2(x)
  if (good(x))
    x
  else
    -x
  end
end
\end{lstlisting}\end{minipage}
\hspace{.5cm}
\begin{minipage}{3.7cm}\footnotesize\begin{lstlisting}[basicstyle=\footnotesize\tt,language=julia]
function -(x::Float64)
  if (x==0)
    0
  else
    Base.neg_float(x)
  end
end
\end{lstlisting}\end{minipage}
\caption{\MODIFY{Flavors of stability: \c{f0} is unstable,
\c{f1} is stable; \c{f2} is stable if \c{(-)} is stable; \c{(-)} is unstable.}}\label{fla}
\Description{Several simple variants of a type-(un)stable function.}
\end{figure}

The example of function \c{f2} illustrates the fact that stability is a
whole-program property. Adding a method may cause some, seemingly unrelated,
method to lose type stability.

\paragraph{Stability versus groundedness}
Type stability of a method is important for the groundedness of its callers.
Consider the function \c{h(x::Int64)=g(x)+1}. If \c{g(x)=x}, it follows that
\c{h} is both stable and grounded, as \c{g} will always return an \c{Int64}, and
so will \c{h}. However, if we define \c{g(x) = (x==2) ? "Ho" : 4}, then \c{h}
suddenly loses both properties. To recover stability and groundedness of \c{h},
it is necessary to make \c{g} type stable, yet it does not have to be grounded.
For example, despite the presence of the imprecise variable \c{y}, the following
definition makes \c{h} grounded: \c{g(x) = let y = (x==2 ? "Ho" : 4) in x
end}.

In practice, type stability is sought after when making architectural decisions.
Idiomatic Julia functions are small and have no internal type-directed
branching; instead, branches on types are replaced with multiple dispatch.
Once type ambiguity is lifted into dispatch, small functions with specific
type arguments are relatively easy to make type stable. In turn, this
architecture allows for effective devirtualization in a caller, as in many cases, the
inferred type at a call site will determine its callee at compilation time.

Thus, writing type-stable functions is a good practice, for it provides
callers of those functions with an opportunity to be efficient.
However, stability of callees is not a sufficient condition for the efficiency
of their callers: the callers themselves need to strive for type groundedness,
which requires eliminating type imprecision from control flow.

\ADD{
\paragraph{Patterns of instability.} There are several code patterns
that are inherently type unstable. For one, accessing abstract fields of
a structure is an unstable operation:
the concrete type of the field value depends on the struct value,
not just struct type. In Julia, it is recommended to avoid abstractly typed
fields if performance is important, 
but they are a useful tool for interacting with external data sources
and representing heterogenous data.

Another example is sum types (algebraic data types or open unions), which can be
modeled with subtyping in Julia. Take a hierarchy of an abstract type \c{Expr}
and its two concrete subtypes, \c{Lit} and \c{BinOp}.
Such a hierarchy is convenient, because
it allows for an \c{Expr} evaluator written with multiple dispatch:
}
\begin{lstlisting}[basicstyle=\footnotesize\tt,language=julia]
run(e :: Lit)   = ...                           run(e :: BinOp) = ...
\end{lstlisting}
If we want the evaluator to be called with the result of a function
\c{parse(s :: String)}, the latter cannot be type stable: \c{parse} will return
values of different concrete types, \c{Lit} and \c{BinOp}, depending on the
input string.
If one does want \c{parse} to be stable, they need to always
return the same concrete type, e.g. an S-expression-style struct. Then, \c{run}
has to be written without multiple dispatch, as a big if-expression, which may
be undesirable.

\section{Jules: Formalizing Stability and Groundedness}\label{sec:jules}

\MODIFY{
To simplify reasoning about type stability and groundedness, we first
define \jules, an abstract machine that provides an idealized version of
Julia's intermediate representation (IR) and compilation model.
\jules captures the just-in-time (JIT) compilation process that (1) specializes methods
for concrete argument types as the program executes, and (2) replaces dynamically
dispatched calls with direct method invocations when type inference
is able to get precise information about the argument types.
It is the type inference algorithm that directly affects
type stability and groundedness of code, and thus the ability of the JIT compiler
to optimize it. While Julia's actual type inference algorithm
is quite complex, its implementation is not relevant for understanding
our properties of interest; thus, \jules abstracts over type inference
and uses it as a black box.
}

\begin{figure}[!h]
  \includegraphics[width=1.1\columnwidth]{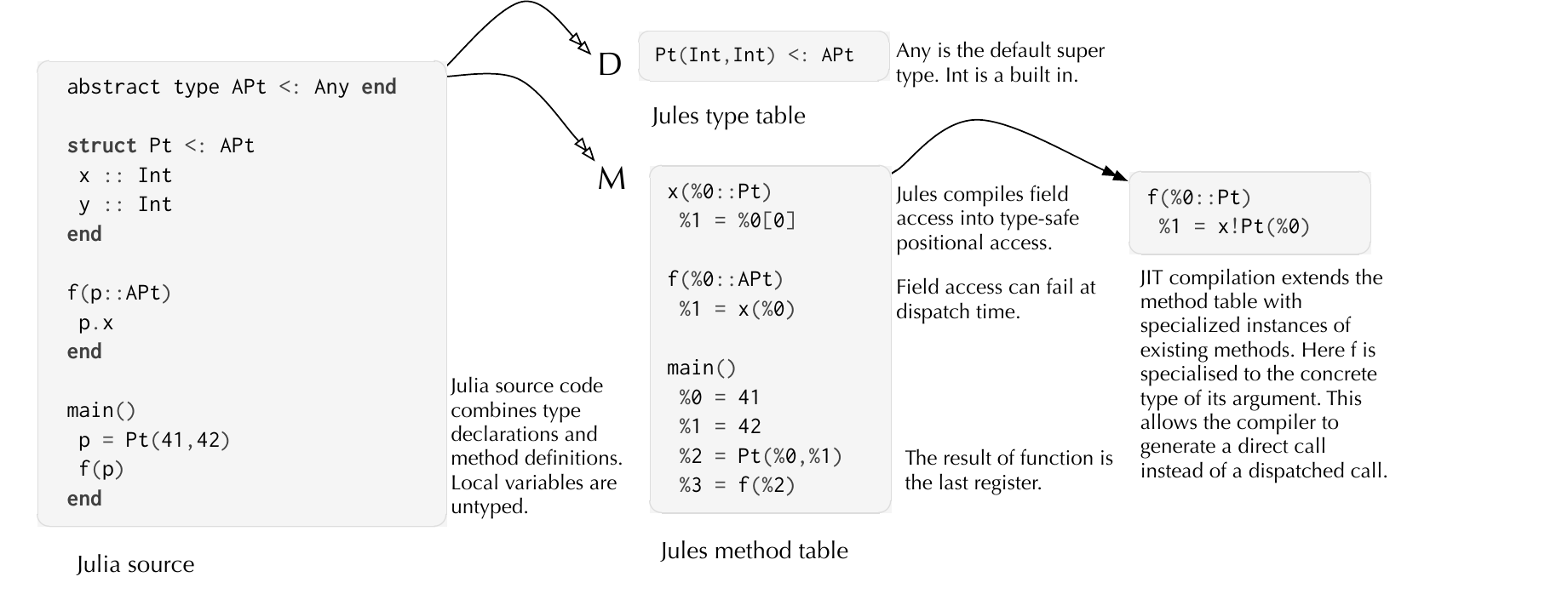}
  \caption{Compilation from Julia to \jules}\label{comp}
  \Description{Julia source code translated to Jules; one function is further compiled
    to a specialized version.}
\end{figure}

Fig.~\ref{comp} illustrates the relationship between Julia source code, the
\jules intermediate representation, and the result of compilation. We do not model
the translation from the source to \jules, and simply assume that the front-end
generates well-formed \jules code.\footnote{The front-end does not devirtualize
function calls, as Julia programmers do not have the ability to write direct
method invocations in the source.} A \jules program consists of an immutable \emph{type
table}~\tytbl and a \emph{method table}~\mtbl; the method table can be incrementally extended
with method instances that are compiled by the just-in-time compiler.

The source program of Fig.~\ref{comp} defines two types, the concrete \c{Pt} and
its parent, the abstract type \c{APt}, as well as two methods, \c{f} and
\c{main}. When translated to \jules, \c{Pt} is added to the type table along
with its supertype \c{APt}. Similarly, the methods \c{main} and \c{f} are added
to the \jules method table, along with accessors for the fields of \c{Pt}, with
bodies translated to the \jules intermediate representation.

The \jules IR is similar to static single assignment form. Each statement can
access values in registers, and saves its result into a new, consecutively
numbered, register. Statements can perform a dispatched call \c{f(\%2)}, direct
call \c{x\!Pt(\%0)}, conditional assignment (not shown), and a number of other
operations. The IR is untyped, but the translation from Julia is type sound. In
particular, type soundness guarantees that only dispatch errors can occur at run
time. For example, compilation will produce only well-formed field accesses such
as the one in \c{x(\%0::Pt)}, but a dispatched call \c{x(\%0)} in \c{f} could
fail if \c{f} was called with a struct that did not have an \c{x} field. In
order to perform this translation, \jules uses type inference to determine the
types of the program's registers. We abstract over this type inference mechanism
and only specify that it produces sound (with respect to our dynamic semantics)
results. 

Execution in \jules occurs between \emph{configurations} consisting of both
a stack of \emph{frames} \frm (representing the current execution state)
and a \emph{method table} \mtbl (consisting of original methods and specialized
instances), denoted $\config{\frm}{\mtbl}$. A configuration
$\config{\frm}{\mtbl}$ evolves to a new configuration
$\config{\frm'}{\mtbl'}$ by stepping $\config\frm\mtbl \step
\config{\frm'}{\mtbl'}$; every step processes the top instruction in $\frm$
and possibly compiles new method instances into $\mtbl'$. Notably, due to the
so-called world-age mechanism~\cite{oopsla20a} which restricts the effect
of \c{eval}, source methods are fixed from the
perspective of compilation; only compiled instances changes.

\subsection{Syntax}

The syntax of \jules methods is defined in~\figref{syntax}. We use two key
notational devices. First, sequences are denoted \ol{\,\cdot\,}; thus, \ol\ty
stands for types $\ty_0\dots\ty_n$, \ol{\v\k} for registers $\v\k_0\dots\v\k_n$,
and \ol\st for instructions $\st_0\dots\st_n$. An empty sequence is written
\emp. Second, indexing is denoted $[\cdot]$; \idx{\ol\ty}\k is the $k$-th type
in \ol\ty (starting from 0), \get\j\k is the $k$-th field of \v\j, and
\idx\mtbl{\msig\m{\ol\ty}} indexes \mtbl by method signature where \m denotes
method name and \ol\ty denotes argument types.

\begin{figure}[!h]\small
\begin{tabular}{lcl}
\begin{tabular}{llll}
\ty &::=& & type \\
    & | & \Ty  & \it concrete type \\
    & | & \aty & \it abstract type \\
\\
\tytbl &::=&  & type table \\
&&      $\left(\ \ \msig\Ty{\ol\ty} ~<:~ \aty \ \ \right)\!*$&\\
\mtbl&::=&  & {method table}\\
&&
  $\left(\ \ \langle \meth\m{\ol{\ty'}}{\ol{\v\j}}\ol\st,\ \ol{\ty}\rangle \ \ \right)\!*$\\
\end{tabular}
&\quad&
\begin{tabular}{ll@{~}ll}
\st &::=& & instruction\\
    & | & $\ass\i\p$ &\it int. assignment\\
    & | & $\ass\i{\v\j}$ &\it reg. transfer\\
    & | & $\ass\i{\construct\Ty{\ol{\v\k}}}$ & \it allocation\\
    & | & $\ass\i{\get\j\k}$ & \it field access\\
    & | & $\cond\i\j{\call\m{\ol{\v\k}}}{\v\l}$&\it dispatched call\\
    & | & $\cond\i\j{\direct\m{\ol{\Ty}}{\ol{\v\k}}}{\v\l}$&\it direct call\\
\multicolumn{3}{l}{$i \in \mathbb{N},\ \p \in \mathbb{Z}$} & \\
\end{tabular}
\end{tabular}\caption{Syntax of \jules}\label{syntax}
\Description{Syntax of Jules}
\end{figure}

Types \ty live in the immutable type table \tytbl, which contains both concrete
(\Ty) and abstract (\aty) types. Each type table entry is of the form
$\msig{\Ty}{\ol\ty} <: \aty$, introducing concrete type $\Ty$, with fields of
types $\ol\ty$, along with the single supertype $\aty$. Two predefined types are
the concrete integer type \int, and the universal abstract supertype \any.

Method tables \mtbl contain method definitions of two sorts:
first, original methods that come from source code;
second, method instances compiled from original methods.
To distinguish between the two sorts, we store the type signature of the
original method in every table entry.
Thus, table entry
$\langle\meth\m{\ol{\ty'}}{\ol{\v\j}}\ol\st,\ \ol{\ty}\rangle$ describes a
method \m with parameter types $\ol{\ty'}$ and the body comprised of
instructions $\ol\st$; type signature \ol{\ty} points to the original method.
If \ol\ty is equal to \ol{\ty'}, the entry defines an original method,
and \ol{\st} cannot contain direct calls.%
\footnote{All function calls in Julia source code are dispatched calls.}
Otherwise, \ol{\ty'}
denotes some concrete types~\ol\Ty, and the entry defines a
method instance compiled from \msig\m{\ol\ty}, specialized for concrete argument
types $\ol\Ty <: \ol\ty$.
Method table may contain multiple method definitions with the same name,
but they have to have distinct type signatures.

Method bodies consist of instructions \ol{\st}. An instruction $\v{\i}
\leftarrow \text{\c{op}}$ consists of an operation \c{op} whose result is
assigned to register $\v\i$. An instruction can assign from a primitive integer
\p, another register \v\j, a newly created struct $\construct\Ty{\ol{\v\k}}$
of type \Ty with field values $\ol{\v\k}$, or the result of looking up a struct
field as $\get\j\k$. Finally, the instruction may perform a function call. Calls
can be dispatched, $\call\m{\ol{\v\k}}$, where the target method is dynamically
looked up, or they can be direct, \direct\m{\ol\Ty}{\ol{\v\k}}, where the target
method is specified. All calls are conditional: \EM{\v\j ~?~ \text{\c{call}} :
  \v\l}, to allow for recursive functions.
If the register \v\j is non-zero, then \c{call} is performed. Otherwise,
the result is the value of register \v\l. Conditional calls can be trivially
transformed into unconditional calls; in examples, this transformation is
performed implicitly.

\subsection{Dynamic Semantics}

\jules is parameterized over three components: method dispatch $\mathcal D$,
type inference $\mathcal I$, and just-in-time compilation \jit. We do not
specify how the first two work, and merely provide their interface and a set of
criteria that they must meet (in sections \ref{sec:disp} and \ref{sec:infer},
respectively). The compiler, \jit, is instantiated with either the identity
function, which gives a regular non-optimizing semantics, or an optimizing
compiler, which is defined in section~\ref{sec:comp}. The optimizing compiler
relies on type inference $\mathcal I$ to devirtualize method calls. Type
inference also ensures well-formedness of method tables. Method dispatch
$\mathcal D$ is used in operational semantics.

\begin{figure}[!t]\small
\begin{tabular}{rclcrclcrcl}
\val &::=& \p\ |\ \construct\Ty{\ol\val}
& \qquad &
\env &::=& \ol\val
& \qquad &
\frm &::=& \done\ |\ \stk{\env\ \ol\st}{\frm} \\
\end{tabular}
\begin{mathpar}
\PAPERVERSIONINLINE{}{\framebox{\configd \step \config{\frm'}{\mtbl}}\\}

\inferrule[Prim]{
  \st = \ass\i\p\\\\
  \env' = \env + \p
}{
  \config{ \stk{ \env \ \st\,\ol\st}{\frm} }{\mtbl} \step
  \config{ \stk{ \env'\ \ol\st     }{\frm} }{\mtbl}
}

\inferrule[Reg]{
  \st = \ass\i{\v\j}\\\\
   \env'=\env + \idx\env\j
}{
  \config{ \stk{ \env \ \st\,\ol\st}{\frm} }{\mtbl} \step
  \config{ \stk{ \env'\ \ol\st     }{\frm} }{\mtbl}
}

\inferrule[New]{
  \st= \ass\i{\construct\Ty{\ol{\v\k}}}\\\\
  \env'  = \env + \construct\Ty{\idx\env{\ol{k}}}
}{
  \config{ \stk{ \env \ \st\,\ol\st }{\frm} }{\mtbl} \step
  \config{ \stk{ \env'\ \ol\st      }{\frm} }{\mtbl}
}
\\

\inferrule[Field]{
  \st= \ass\i{\get\j\k}\\\\
  \val = \idx\env\j \and
  \env' = \env + \idx\val\k
}{
  \config{ \stk{ \env \ \st\,\ol\st }{\frm} }{\mtbl} \step
  \config{ \stk{ \env'\ \ol\st      }{\frm} }{\mtbl}
}

\inferrule[False1]{
  \st= \cond\i\j{\call\m{\ol{\v k}}}{\v\l}\\\\
  0 = \idx\env\j \and
  \env'=\env +  \idx\env\l
}{
  \config{ \stk{ \env \ \st\,\ol\st }{\frm} }{\mtbl} \step
  \config{ \stk{ \env'\ \ol\st      }{\frm} }{\mtbl}
}

\inferrule[False2]{
  \st= \cond\i\j{\direct\m{\ol\Ty}{\ol{\v\k}}}{\v\l}\\\\
  0 = \idx\env\j \and
  \env'=\env + \idx\env\l
}{
  \config{ \stk{ \env \ \st\,\ol\st }{\frm} }{\mtbl} \step
  \config{ \stk{ \env'\ \ol\st      }{\frm} }{\mtbl}
}
\\

\inferrule[Disp]{
  \st= \cond\i\j{\call\m{\ol{\v\k}}}{\v\l}\\\\
  0 \not=\idx\env\j \and
  \ol\Ty = \typeof(\idx\env{\ol{k}}) \and
  \mtbl' = \jit(\mtbl, \m, \ol\Ty) \\\\\
  \ol{\st'} = \body(\dispatch{\mtbl'}\m{\ol\Ty}) \and
  \env' =  \idx\env{\ol{k}}
}{
  \config{                        \stk{\env\ \st\,\ol\st}{\frm}  }{\mtbl} \step
  \config{ \env'\ \stk{\ol{\st'}}{\stk{\env\ \st\,\ol\st}{\frm}} }{\mtbl'}
}

\inferrule[Direct]{
  \st= \cond\i\j{\direct\m{\ol\Ty}{\ol{\v\k}}}{\v\l}\\\\
  0 \not = \idx\env\j \\\\
  \ol{\st'} = \body(\mtbl[\msig\m{\ol\Ty}]) \and
  \env' = \idx\env{\ol{k}}
}{
  \config{                        \stk{\env\ \st\,\ol\st}{\frm}  }{\mtbl} \step
  \config{ \env'\ \stk{\ol{\st'}}{\stk{\env\ \st\,\ol\st}{\frm}} }{\mtbl}
}

\inferrule[Ret]{
  \env''=\env + \last{\env'}
}{
  \config{ \stk{\env'\ \done}{\stk{ \env  \ \st\,\ol\st }{\frm}}}{\mtbl} \step
  \config{                    \stk{ \env''\ \ol\st      }{\frm} }{\mtbl}
}
\end{mathpar}
\caption{Dynamic semantics of Jules}\label{sems}
\Description{Dynamic semantics of Jules}
\end{figure}

\subsubsection{Operational Semantics}\label{sec:opsem}

Fig.~\ref{sems} gives rules for the dynamic semantics. Given a type table \tytbl
as context, \jules can step a configuration $\config{\frm}{\mtbl}$ to
$\config{\frm'}{\mtbl'}$, written as $\config\frm\mtbl \step
\config{\frm'}{\mtbl'}$.
Stack frames \frm consist of a sequence of environment-instruction list pairs.
Thus, $\stk{\env\ \ol\st}\frm$ denotes a stack with environment \env and
instructions \ol\st on top, followed by a sequence of environment-instruction
pairs. Each environment is a list of values $\env=\ol\val$, representing
contents of the sequentially numbered registers. Environments can then be
extended as $\env + \val$, indexed as $\idx\env\k$, and their last value is
\last\env if \env is not empty.

The small-step dynamic semantics is largely straightforward. The first four
rules deal with register assignment: updating the environment with a constant
value (\SC{Prim}), the value in another register (\SC{Reg}),
a newly constructed struct (\SC{New}), or the value in a field (\SC{Field}).
The remaining five rules deal with function calls, either dispatched
\call\m{\ol{\v\k}} or direct \direct\m{\ol\Ty}{\ol{\v\k}}.
Call instructions are combined with conditioning: a call can only be made after
testing the register \v\j, called a guard register.
If the register value is zero, then the value of the alternate register
\v\l is returned (\SC{False1/False2}).
Otherwise, the call can proceed, by \SC{Disp} for dispatched calls and \SC{Direct}
for direct ones.
A dispatched call starts by prompting the JIT compiler to
specialize method \m from the method table~\mtbl with the argument types \ol\Ty
and produce a new method table $\mtbl'$.
Next, using the new table $\mtbl'$, the dispatch mechanism $\mathcal D$
determines the method to invoke. Finally, the body \ol{\st'} of the method and
call arguments \idx\env{\ol{k}} form a new stack frame for the callee,
and the program steps with the extended stack and the new table.
Direct calls are simpler because
a direct call $\msig\m{\ol\Ty}$ uniquely identifies the
method to invoke. Thus, the method's instructions are looked up in \mtbl
by the method signature, a new stack frame is created,
and the program steps with the new stack and the same method table.
The top frame without instructions to execute indicates the
end of a function call (\SC{Ret}): the last value of the top-frame environment
becomes the return value of the call, and the top frame is popped from the stack.

Program execution begins with the frame $\emp\ \main()$%
\footnote{Recall that unconditional calls are implicitly expanded into conditional ones.},
i.e. a call to the \main function with an empty environment;
the execution either diverges, finishes with a final configuration $\env\ \done$,
or runs into an error.
We define two notions of error. An \emph{err} occurs only in the \SC{Disp} rule,
when the dispatch function $\mathcal D$ is undefined for the call; the \emph{err}
corresponds to a dynamic-dispatch error in Julia. A configuration is
\emph{wrong} if it cannot make a step for any other reason.

\begin{definition}[Errors] A non-empty configuration \frm, \mtbl
 that cannot step $\frm, \mtbl \step \frm', \mtbl'$ has \emph{erred} if its
 top-most frame, \E\ \ol\st, starts with $\cond\i\j{\call\m{\ol{\v\k}}}{\v\l}$,
 where \ol\Ty is the types of \ol{\v\k} in \E, $\m\in\mtbl$, and
 \dispatch\mtbl\m{\ol\Ty} is undefined. Otherwise, \frm, \mtbl is
 \emph{wrong}.
\end{definition}

\subsubsection{Dispatch}\label{sec:disp}

\jules is parametric over method dispatch: any mechanism $\mathcal D$ that
satisfies the \emph{Dispatch Contract} (\defref{disprel}) can be used. Julia's
method dispatch mechanism
is designed to, given method table, method name, and argument types,
return the \emph{most specific method applicable} to the given arguments
if such a method exists and is unique. First, applicable
methods are those whose declared type signature is a supertype of the argument
type. Then, the most specific method is the one whose type signature is the
most precise. Finally, only one most specific applicable method may exist, or
else an error is produced. Each of these components appears in our dispatch
definition. As in Julia, dispatch is only defined for tuples of concrete types.

\begin{definition}[Dispatch Contract] The \emph{dispatch} function
  $\dispatch\mtbl\m{\ol\Ty}$ takes method table \mtbl, method name \m,
  and concrete argument types \ol\Ty, and returns a method
  $\meth\m{\ol{\ty}}{\ol{\v\j}}\ol\st \in \mtbl$ such that
  the following holds (we write $\ol\ty <: \ol{\ty'}$ as a shorthand for
  $\forall i.\,\ty_i<:\ty'_i$):
  \begin{enumerate}
    \item $\ol\Ty <: \ol\ty$, meaning that $\msig\m{\ol\ty}$
      is applicable to the given arguments;
    \item $\forall \meth\m{\ol{\ty'}}{\ol{\v\j}}\ol\st \in \mtbl.\ \
      \ol\Ty <: \ol{\ty'}\ \implies\ \ol{\ty} <: \ol{\ty'},$
      meaning that $\msig\m{\ol\ty}$ is the most specific applicable method.
  \end{enumerate}
  \label{disprel}
\end{definition}

\subsubsection{Inference}\label{sec:infer}

The Julia compiler infers types of variables by forward data-flow analysis. Like
dispatch, inference is complex, so we parameterize over it. For our purposes, an
inference algorithm~$\mathcal I$ returns a sound typing for a sequence of
instructions in a given method table,
$\typeinf{\origmtbl{\mtbl}}{\ol\ty}{\ol\st}=\ol{\ty'}$, where \origmtbl{\mtbl}
denotes the table containing only methods without direct calls. Inference
returns types $\ol\ty'$ such that each $\ty'_i$ is the type of register of
$\st_i$. Any inference algorithm that satisfies the \SC{Soundness} and
\SC{Monotonicity} requirements is acceptable.

\begin{requirement}[Soundness] If
  $\typeinf\mtbl{\ol\ty}{\ol\st} = \ol{\ty'}$, then
  for any environment \env compatible with \ol\ty, that is, $\typeof(E_{i}) <: \ty_{i}$,
  and for any stack $\frm$, the following holds:
  \begin{enumerate}
  \item If
  $\config{\stk{\env\ \ol\st}\frm}{\mtbl} \stepmul \stk{\frm'}{\frm},\mtbl'$
  and 
  cannot make another step,
  then $\stk{\frm'}{\frm},\mtbl'$ has \emph{erred}.
  \item If $\config{\stk{\env\ \ol\st}{\frm}}{\mtbl} \stepmul
    \config{\stk{\env\,\env'\ \ol{\st''}}{\frm}}{\mtbl'}$,
  then $\ol\st = \ol{\st'}\,\ol{\st''}$ and $\typeof(\env'_{i}) <: \ty'_{i}$.
  \end{enumerate}
 \end{requirement}

The soundness requirement guarantees that if type inference succeeds on a method,
then, when the method is called with compatible arguments,
it will not enter a \emph{wrong} state but may \emph{err} at a dynamic call.
Furthermore, if the method terminates, all its instructions evaluate
to values compatible with the results of type inference.
That is, when $\config{\stk{\env\ \ol\st}{\frm}}{\mtbl} \stepmul
\config{\stk{\env\,\env'\ \done}{\frm}}{\mtbl'}$, we have
$\typeof(\env') <: \ol{\ty'}$.

The second requirement of type inference, monotonicity, is important to specialization:
it guarantees that using more precise argument types for original method
bodies succeeds and does not break assumptions of the caller about
the callee's return type. If inference was not
monotonic, then given more precise argument types, it could return
a method specialization with a less precise return type. As a result,
translating a dynamically dispatched call into a direct call may be unsound.

\begin{requirement}[Monotonicity]\label{prop:ti-monot}
  For all $\mtbl, \ol{\ty},
  \ol\st, \ol{\ty'},$ such that $\typeinf\mtbl{\ol{\ty}}{\ol\st} = \ol{\ty'}$,
  \[
    \forall\, \ol{\ty''}.\quad \ol{\ty''} <: \ol{\ty}
    \quad \implies \quad
    \exists \ol{\ty'''}.\ \ 
    \typeinf\mtbl{\ol{\ty''}}{\ol\st} = \ol{\ty'''} \ \ \land\ \
    \ol{\ty'''} <: \ol{\ty'}.
  \]
\end{requirement}

\subsubsection{Well-Formedness}\label{wellform}

Initial \jules configuration \config{\emp\ \main()}{\mtbl} is well-formed
if the method table \mtbl is well-formed according to \defref{ref:def-well-formed}.
Such a configuration will either successfully terminate, \emph{err},
or run forever, but it will never reach a \emph{wrong} state.

\begin{definition}[Well-Formedness of Method Table]\label{ref:def-well-formed}
A method table is \emph{well-formed} $\mathit{WF}(\mtbl)$ if the following
holds:
\begin{enumerate}
\item Entry method $\msig{\SF{main}}\emp$ belongs to $\mtbl$.
\item Every type \ty in \mtbl, except \int and \any, is declared in \tytbl.
\item Registers are numbered consecutively from 0, increasing by one for each
  parameter and instruction. An instruction assigning to \v\k only refers to
  registers \v\i such that $i<k$.
\item For any original method $\langle
  \meth\m{\ol{\ty}}{\ol{\v\j}}\ol\st,\ \ol\ty \rangle \in \mtbl$, the body is
  not empty and does not contain direct calls, and type
  inference succeeds
  $\typeinf{\origmtbl{\mtbl}}{\ol\ty}{\ol\st}=\ol{\ty'}$
  on original methods \origmtbl\mtbl.
\item Any two methods in \mtbl with the same name,
  $\msig{\m}{\ol\ty}$ and $\msig{\m}{\ol{\ty'}}$, have distinct type
  signatures, i.e. $\ol\ty \neq \ol{\ty'}$.
  \item For any method specialization $\langle \msig\m{\ol{\ty'}}, \ol{\ty}\rangle\in\mtbl$,
  i.\,e. $\ol{\ty'} \neq \ol{\ty}$, the following holds:
  $\ol{\ty'} = \ol\Ty$, and
  $\ol{\Ty} <: \ol{\ty}$, and
  $\langle \msig\m{\ol{\ty}}, \ol{\ty}\rangle\in\mtbl$.
  Moreover,
   $\forall \msig\m{\ol\ty''} \in \mtbl.\
      \ol\Ty <: \ol{\ty''} \implies \ol{\ty} <: \ol{\ty''}$.
\end{enumerate}
\end{definition}

The last requirement ensures that only the most specific original methods have
specializations, which precludes compilation from modifying program behavior.
For example, consider type hierarchy
\c{Int<:Number<:Any} and function \c{f} with original methods for \c{Any} and
\c{Number}. If there are no compiled method instances, the call \c{f(5)}
dispatches to \c{f(::Number)}. But if the method table contained a specialized
instance \c{f(::Int)} of the original method \c{f(::Any)}, the call would dispatch
to that instance, which is not related to the originally used \c{f(::Number)}.
Thus, program behavior would be modified by compilation, which is undesired.

\begin{figure}[!t] \small\begin{mathpar}
\Rule{CompileTrue}{
  \msig{\m}{\ol\Ty} \not\in \mtbl \and
  \ol\st = \body(\dispatch\mtbl\m{\ol\Ty}) \and
  \ol{\ty'} = \typeinf{\origmtbl{\mtbl}}{\ol\Ty}{\ol\st}
  \\
  \ol{\ty} = \mathit{signature}(\dispatch\mtbl\m{\ol\Ty}) \and
  \mtbl_0 = \mtbl + \langle \msig\m{\ol{\Ty}}\done, \ol\ty \rangle
  \\
  \ol\Ty\,\ol{\ty'} \vdash \st_0, \mtbl_0 \leadsto \st'_0,\mtbl_1\and
  \dots\and
  \ol\Ty\,\ol{\ty'} \vdash \st_n, \mtbl_n \leadsto  \st'_n,\mtbl_{n+1} \\
  \mtbl' = \mtbl_{n+1} [\msig\m{\ol{\Ty}} \mapsto \langle \meth\m{\ol\Ty}{\ol{\v k}} \ol{\st'},\ol{\ty}\rangle]
}{
   \jit(\mtbl,\m,\ol\Ty) = \mtbl'
}

\Rule{CompileFalse}{
  \msig{\m}{\ol\Ty} \in \mtbl
}{
  \jit(\mtbl,\m,\ol\Ty) = \mtbl
}

\Rule{Dispatch}{
  \st = \cond\i\j{\call\m{\ol{\v\k}}}{\v\l} \and
  \ol{\Ty} = \ol\ty[\ol\k] \\\\
  \mtbl' = \jit(\mtbl,\m,\ol{\Ty}) \and
  \st' = \cond\i\j{\direct\m{\ol{\Ty}}{\ol{\v\k}}}{\v\l}
}{
  \ol\ty\ \vdash\ \st,\mtbl \leadsto \st',\mtbl'
}

\Rule{NoDispatch}{
  \st \neq \cond\i\j{\call\m{\ol{\v\k}}}{\v\l}\\\\
  \text{or} \and
  \ol{\Ty} \neq \ol\ty[\ol\k]
}{
  \ol\ty\ \vdash\ \st,\mtbl \leadsto \st,\mtbl
}
\end{mathpar}

  \caption{\MODIFY{Compilation: extending method table with a specialized method
  instance (top rule) and replacing a~dynamically dispatched call with
  a direct method invocation in the extended table (bottom-right)}}\label{compile}
\Description{Compilation: extending method table with a specialized method
  instance and replacing a~dynamically dispatched call with
  a direct method invocation in the extended table}
\end{figure}

\subsubsection{Compilation}\label{sec:comp}

\jules implements devirtualization through the $\jit(\mtbl,\m,\ol\Ty)$
operation, shown in \figref{compile}. The compiler specializes methods according to the
inferred type information, replacing non-\emph{err} dispatched calls with direct calls where
possible. Compilation begins with some bookkeeping. First, it ensures that there
is no pre-existing instance in the method table before compiling;
otherwise, the table is returned immediately without modification, by
the bottom-left rule. Next, using dispatch,
it fetches the most applicable original method to compile. Then, using concrete argument types,
the compiler runs the type inferencer on the method's body, producing an
instruction typing \ol{\ty'}. Because the original method table is well-formed,
monotonicity of $\mathcal I$ and the definition of $\mathcal D$ guarantee that
type inference succeeds for $\ol\Ty <: \ol\ty$. Lastly, the compiler can begin
translating instructions. Each instruction $\st_i$ is translated into an
optimized instruction $\st'_i$. This translation respects the existing type
environment containing the argument types \ol\Ty and instruction typing
$\ol{\ty'}$. The translation $ \ol\ty\ \vdash\ \st,\mtbl \leadsto \st',\mtbl'$
leaves all instructions unchanged except dispatched calls. Dispatched calls
cause a recursive JIT invocation, followed by rewriting to a direct call. To
avoid recursive compilation, a stub method $\langle \msig\m{\ol{\Ty}}\done,
\ol\ty \rangle$ is added at the beginning of compilation,
and over-written when compilation is done. As the source program is finite and
new types are not added during compilation, it terminates. Note that if
the original method has concrete argument types, the compiler does nothing.

\subsection{Type Groundedness and Stability}\label{sec:stability-formal}

We now formally define the properties of interest, type stability and type
groundedness. Recall the informal definition which stated that a function is
type stable if its return type depends only on its argument types, and type grounded
if every variable's type depends only on the argument types. In this
definition, ``types'' really mean concrete types, as concrete types allow for
optimizations. The following defines what it means for an original method to be
type stable and type grounded.

\begin{definition}[Stable and Grounded]\label{def:ground}
  Let \meth{\m}{\ol\ty}{}{\ol\st} be an original method in a well-formed
  method table \mtbl. Given concrete argument types $\ol\Ty <: \ol\ty$,
  if $\typeinf{\origmtbl{\mtbl}}{\m}{\ol\Ty} = \ol{\ty'}$,
  and
  \begin{enumerate}
  \item  $\last{\ol{\ty'}} = \Ty'$, i.e. the return type is concrete, then
    the method is \emph{type stable for \ol\Ty},
  \item $\ol{\ty'} = \ol{\Ty'}$, i.e. all register types are concrete, then the
    method is \emph{type-grounded for \ol\Ty}.
  \end{enumerate}

  Furthermore, a method is called \emph{type stable (grounded) for a set $W$
  of concrete argument types} if it is type stable (grounded) for every
  $\ol\Ty \in W$.
\end{definition}

As all method instances are compiled from some original method definitions,
type stabiltiy and groundedness for instances is defined in terms of their
originals.

\begin{definition}\label{def:ground-inst}
  A method instance $\langle \meth{\m}{\ol\Ty}{}{\ol{\st'}},\ \ol\ty \rangle$
  is called \emph{type stable (grounded)},
  if its original method \msig\m{\ol\ty} is type stable (grounded) for \ol\Ty.
\end{definition}

\subsubsection{Full Devirtualization}\label{sec:ground-inst-prop}

The key property of type groundedness is that
compiling a type-grounded method results in a fully devirtualized instance.
We say that a method instance \meth\m{\ol\Ty}{\ol{\v k}}\ol{\st} is \emph{fully devirtualized}
if \ol{\st} does not contain any dispatched calls. To show that \jit
indeed has the above property, we will need an additional
notion, \emph{maximal devirtualization}, which is defined
in~\figref{fig:max-devirt}. Intuitively, the predicate \devirtst{\ol\ty}{\mtbl}{\st}
states that an instruction \st does not contain a dispatched call
that can be resolved in table \mtbl for argument types found in \ol\ty.
Then, a method instance is maximally devirtualized if this predicate
holds for every instruction using \ol\ty that combines argument types with
the results of type inference.

\begin{figure}[!t]\small
\begin{mathpar}
\PAPERVERSIONINLINE{\\}{\framebox{\devirtst{\ol\ty}{\mtbl}{\st}}\\}%

\inferrule[D-NoCall]{
  \st \neq \cond\i\j{\call\m{\ol{\v\k}}}{\v\l} \and
  \st \neq \cond\i\j{\call{\msig\m{\ol{\Ty}}}{\ol{\v\k}}}{\v\l}
}{
  \devirtst{\ol\ty}{\mtbl}{\st}
}
\\

\inferrule[D-Disp]{
  \st = \cond\i\j{\call\m{\ol{\v\k}}}{\v\l} \and
  \ol\Ty \neq \idx{\ol\ty}{\ol\k}
}{
  \devirtst{\ol\ty}{\mtbl}{\st}
}

\inferrule[D-Direct]{
  \st = \cond\i\j{\call{\msig\m{\ol\Ty}}{\ol{\v\k}}}{\v\l} \and
  \ol\Ty = \idx{\ol\ty}{\ol\k} \and
  \msig{\m}{\ol{\Ty}} \in \mtbl
}{
  \devirtst{\ol\ty}{\mtbl}{\st}
}
\\

\PAPERVERSIONINLINE{}{\framebox{\devirtst{\ol\ty}{\mtbl}{\ol\st}}\\}
\inferrule[D-Seq]{
  \devirtst{\ol\ty}{\mtbl}{\st_0} \and \ldots \and
  \devirtst{\ol\ty}{\mtbl}{\st_n}
}{
  \devirtst{\ol\ty}{\mtbl}{\ol\st}
}
\\

\PAPERVERSIONINLINE{}{\framebox{\devirtm{\mtbl}}\\}%
\inferrule[D-Table]{
  \forall \langle \meth\m{\ol\Ty}{}\ol{\st'},\ol{\ty} \rangle \in \mtbl. \ \ \
  \ol\Ty \neq \ol\ty \ \land\ \ol{\st'} \neq \done \ \ \ \implies \ \ \ 
  \meth\m{\ol\ty}{}\ol{\st} \in \mtbl \ \ \land \ \
  \ol{\ty'} = \typeinf{\origmtbl{\mtbl}}{\ol\Ty}{\ol\st} \ \ \land \ \
  \devirtst{\ol\Ty\,\ol{\ty'}}{\mtbl}{\ol{\st'}}
}{
  \devirtm{\mtbl}
}
\end{mathpar}
\caption{\MODIFY{Maximal devirtualization of instructions
  and method tables}}\label{fig:max-devirt}
\Description{Maximal devirtualization of instructions
  and method tables}
\end{figure}

Next, we review the definition from~\figref{fig:max-devirt} in more details.
\SC{D-NoCall} states that an instruction without a call
is maximally devirtualized. \SC{D-Disp} requires that for a dispatched call,
\ol\ty does not have precise enough type information to resolve the call with
$\mathcal{D}$. Finally, \SC{D-Direct} allows a direct call to a concrete method
with the right type signature: as concrete types do not have subtypes, the
\msig\m{\ol\Ty} with concrete argument types is exactly the definition that a call
\call\m{\ol{\v\k}} would dispatch to. \SC{D-Seq} simply checks that all
instructions in a sequence \ol\st are devirtualized in the same register typing
\ol\ty. Here, \ol\ty has to contain typing for all instructions $\st_i$, because
later instructions refer to the registers of the previous ones. The last rule
\devirtm{\mtbl} covers the entire table \mtbl, requiring all methods to be maximally
devirtualized. Namely, \SC{D-Table} says that for all method instances
(condition $\ol\Ty \neq \ol\ty$ implies that \msig\m{\ol\Ty}
is not an original method) that are not
stubs ($\ol{\st'} \neq \done$), the body \ol{\st'} is maximally devirtualized
according to the typing of the original method with respect to the instance's
argument types.

Using the notion of maximal devirtualization, we can connect type groundedness
and full devirtualization.

\begin{lemma}[Full Devirtualization]\label{lem:full-virt}
  If \mtbl is well-formed and maximally devirtualized, then any type-grounded
  method instance $\meth\m{\ol\Ty}{\ol{\v k}}\ol{\st'} \in \mtbl$ is fully
  devirtualized.
\end{lemma}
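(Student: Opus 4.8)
The plan is to proceed by contradiction, using maximal devirtualization of \mtbl to expose the shape of the instance's body and using type-groundedness to eliminate the one rule, \SC{D-Disp}, that could tolerate a surviving dispatched call. Intuitively, \SC{D-Disp} permits a dispatched call to remain only when inference failed to pin down concrete argument types, but groundedness guarantees that every register---hence every argument register---receives a concrete type, so that escape hatch is closed.

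First I would fix the instance $\langle\meth\m{\ol\Ty}{\ol{\v k}}\ol{\st'},\ \ol\ty\rangle\in\mtbl$. If its body is empty, $\ol{\st'}=\done$, it is a stub and contains no dispatched call, so full devirtualization is immediate; I therefore assume $\ol{\st'}\neq\done$. Since a genuine instance is specialized away from its original, $\ol\Ty\neq\ol\ty$. These are exactly the hypotheses under which the sole rule defining $\devirtm{\mtbl}$, namely \SC{D-Table}, fires, so I would instantiate it at this entry. It supplies the original method $\meth\m{\ol\ty}{}\ol\st\in\mtbl$, the inference result $\ol{\ty'}=\typeinf{\origmtbl{\mtbl}}{\ol\Ty}{\ol\st}$ on the original body at the instance's concrete argument types, and, crucially, the body judgment $\devirtst{\ol\Ty\,\ol{\ty'}}{\mtbl}{\ol{\st'}}$.

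Next I would tie the context $\ol\Ty\,\ol{\ty'}$ to groundedness. By \defref{def:ground-inst}, the instance being type-grounded means its original $\msig\m{\ol\ty}$ is grounded for $\ol\Ty$, which by \defref{def:ground} says precisely that the very inference result $\ol{\ty'}$ just obtained is a tuple of concrete types, $\ol{\ty'}=\ol{\Ty'}$. Here I rely on well-formedness to guarantee that inference is defined on the original body, and on inference being a function, so that the $\ol{\ty'}$ named by \SC{D-Table} and the one named by the groundedness definition are literally the same object. Consequently every entry of the combined context $\ol\Ty\,\ol{\ty'}$ is concrete. The judgment $\devirtst{\ol\Ty\,\ol{\ty'}}{\mtbl}{\ol{\st'}}$ can only have been derived by \SC{D-Seq}, so each instruction $\st'_i$ of $\ol{\st'}$ satisfies $\devirtst{\ol\Ty\,\ol{\ty'}}{\mtbl}{\st'_i}$.

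Finally, suppose for contradiction that some $\st'_i$ is a dispatched call $\cond\i\j{\call\m{\ol{\v\k}}}{\v\l}$, and inspect which instruction rule could justify $\devirtst{\ol\Ty\,\ol{\ty'}}{\mtbl}{\st'_i}$. \SC{D-NoCall} is excluded because $\st'_i$ is a call; \SC{D-Direct} is excluded because $\st'_i$ is dispatched, not direct; and \SC{D-Disp} is excluded because its side condition demands that the argument-register types $\idx{\ol\ty}{\ol\k}$ of the context \emph{not} coincide with any tuple of concrete types, whereas groundedness has made every entry of $\ol\Ty\,\ol{\ty'}$ concrete, so $\idx{\ol\Ty\,\ol{\ty'}}{\ol\k}$ \emph{is} a concrete tuple. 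With all three rules unavailable we contradict the per-instruction judgment, hence $\ol{\st'}$ has no dispatched call and the instance is fully devirtualized. The one genuinely delicate point---the step I would verify most carefully---is this alignment of the three occurrences of $\ol{\ty'}$, together with the register-numbering convention guaranteeing that $\idx{\ol\ty}{\ol\k}$ picks out exactly the argument-register types that \SC{D-Disp} tests; once that lines up, groundedness directly negates the \SC{D-Disp} side condition, which is the crux of the argument.
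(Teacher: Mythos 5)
Your proposal is correct and follows essentially the same route as the paper's own proof: instantiate \SC{D-Table} at the instance to obtain $\devirtst{\ol\Ty\,\ol{\ty'}}{\mtbl}{\ol{\st'}}$, use groundedness to make the register typing fully concrete, and observe that only \SC{D-NoCall} and \SC{D-Direct} can then justify each instruction. Your contradiction framing and explicit handling of the stub case are only cosmetic additions to the paper's direct rule analysis.
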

\begin{proof}
\PAPERVERSION{
  Follows from the definitions of type groundedness and maximal
  devirtualization; details can be found in the extended
  version~\cite{oopsla21jules:arx}.
}{
  Recall that a method instance is type-grounded if type inference
  produces concrete typing \ol{\Ty'} for the original method body \ol\st, i.e.
  $\typeinf{\origmtbl{\mtbl}}{\ol\Ty}{\ol\st} = \ol{\Ty'}$.
  By the definition of a maximally devirtualized table, we know that
  \devirtst{\ol\Ty\,\ol{\Ty'}}{\mtbl}{\ol{\st'}}.
  Since all types in the register typing $\ol\Ty\,\ol{\Ty'}$ are concrete,
  by analyzing maximal devirtualization for instructions, we can see
  that the only applicable rules are \SC{D-NoCall} and \SC{D-Direct}.
  Therefore, \ol{\st'} does not contain any dispatched calls.
}
\end{proof}

\begin{figure}\small
\begin{mathpar}
\PAPERVERSIONINLINE{}{\framebox{\compst{\ol\ty}{\st}{\mtbl}{\S}{\st'}{\mtbl'}{\S'}}\\}

\inferrule[C-NoDisp]{
  \st \neq \cond\i\j{\call\m{\ol{\v\k}}}{\v\l}
}{
  \compst{\ol\ty}{\st}{\mtbl}{\S}{\st}{\mtbl}{\S}
}

\inferrule[C-Disp]{
  \st = \cond\i\j{\call\m{\ol{\v\k}}}{\v\l} \and
  \ol\Ty \neq \idx{\ol\ty}{\ol\k}
}{
  \compst{\ol\ty}{\st}{\mtbl}{\S}{\st}{\mtbl}{\S}
}

\inferrule[C-Direct]{
  \st = \cond\i\j{\call\m{\ol{\v\k}}}{\v\l} \and
  \ol\Ty = \idx{\ol\ty}{\ol\k} \and
  \msig{\m}{\ol\Ty} \in \mtbl \and
  \st' = \cond\i\j{\call{\msig\m{\ol\Ty}}{\ol{\v\k}}}{\v\l}
}{
  \compst{\ol\ty}{\st}{\mtbl}{\S}{\st'}{\mtbl}{\S}
}

\inferrule[C-Instance]{
  \st = \cond\i\j{\call\m{\ol{\v\k}}}{\v\l} \and
  \ol\Ty = \idx{\ol\ty}{\ol\k} \and
  \meth{\m}{\ol{\ty'}}{}{\ol\st} = \dispatch\mtbl\m{\ol\Ty} \and
  \ol\Ty \neq \ol{\ty'} \\\\
  \ol{\ty''} = \typeinf{\origmtbl{\mtbl}}{\ol\Ty}{\ol\st} \and
  \mtbl_0 = \mtbl + \langle \msig\m{\ol{\Ty}}\done, \ol{\ty'} \rangle \\\\
  \compst{\ol\Ty\,\ol{\ty''}}{\st_0}{\mtbl_0}{\S_0}{\st'_0}{\mtbl_1}{\S_1} \and
  \ldots \and
  \compst{\ol\Ty\,\ol{\ty''}}{\st_n}{\mtbl_n}{\S_n}{\st'_n}{\mtbl_{n+1}}{\S_{n+1}} \\\\
  \st' = \cond\i\j{\call{\msig\m{\ol\Ty}}{\ol{\v\k}}}{\v\l} \and
  \mtbl' = \mtbl_{n+1} + \langle \meth\m{\ol\Ty}{\ol{\v k}} \ol{\st'},\ol{\ty'} \rangle
}{
  \compst{\ol\ty}{\st}{\mtbl}{\S}{\st'}{\mtbl'}{\S'}
}
\end{mathpar}
\caption{Reformulated compilation}\label{fig:compile-wd}
\Description{Reformulated compilation}
\end{figure}

The final step is to show that compilation defined in \figref{compile} preserves
maximal devirtualization. To simplify the proof, \figref{fig:compile-wd}
reformulates \figref{compile} by inlining \jit into the compilation
relation. The relation does not have a rule for processing direct calls because
we compile only original methods. Since the set of method instances is finite,
the relation is well-defined: every recursive call to compilation happens
for a method table that contains at least one more instance.
Every compilation step produces a maximally devirtualized instruction and
potentially extends the method table with
new maximally devirtualized instances.

\MODIFY{

\begin{lemma}[Preserving Well Formedness]\label{lem:wf}
  For any method table \mtbl that is well-formed $\mathit{WF}(\mtbl)$,
  register typing \ol\ty, and instruction \st,
  if the instruction \st is compiled,
  $\compst{\ol\ty}{\st}{\mtbl}{\S}{\st'}{\mtbl'}{\S'},$
  then the new table is well-formed $\mathit{WF}(\mtbl')$.
\end{lemma}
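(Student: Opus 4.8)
The plan is to proceed by induction on the derivation of the compilation judgment $\compst{\ol\ty}{\st}{\mtbl}{\S}{\st'}{\mtbl'}{\S'}$ from \figref{fig:compile-wd}, with a case analysis on the last rule applied. Three of the four rules leave the table untouched: in \SC{C-NoDisp}, \SC{C-Disp}, and \SC{C-Direct} we have $\mtbl' = \mtbl$, so $\mathit{WF}(\mtbl')$ is immediate from the hypothesis $\mathit{WF}(\mtbl)$. All the work therefore lives in the single interesting case, \SC{C-Instance}, which is the only rule that grows the table, by compiling a fresh specialization.

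In the \SC{C-Instance} case I would first argue that the intermediate table $\mtbl_0 = \mtbl + \langle \msig\m{\ol\Ty}\done, \ol{\ty'} \rangle$ obtained by inserting the stub is well-formed. Most clauses of \defref{ref:def-well-formed} are inherited verbatim from $\mathit{WF}(\mtbl)$ since we add only one specialization with an empty body; the points needing attention are clause 5 and clause 6. For clause 5, the stub signature $\msig\m{\ol\Ty}$ must not clash with an existing entry; this holds because the rule is only applicable when dispatch returns a strictly more general method ($\ol\Ty \neq \ol{\ty'}$), and since concrete signatures admit no proper subtypes this forces $\msig\m{\ol\Ty} \notin \mtbl$. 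For clause 6, the stub must be a legal specialization: its signature $\ol\Ty$ is concrete, $\ol\Ty \Sub \ol{\ty'}$, the original $\langle\msig\m{\ol{\ty'}},\ol{\ty'}\rangle$ lies in the table, and $\ol{\ty'}$ is the most specific applicable method. All four facts come directly from the Dispatch Contract (\defref{disprel}) applied to $\dispatch\mtbl\m{\ol\Ty} = \meth\m{\ol{\ty'}}{}{\ol\st}$, together with the observation that this dispatch must return an \emph{original} method, since any applicable instance would carry a concrete signature equal to $\ol\Ty$, contradicting $\ol\Ty \neq \ol{\ty'}$.

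With $\mathit{WF}(\mtbl_0)$ in hand I would thread well-formedness through the recursive compilation of the callee body. Each premise $\compst{\ol\Ty\,\ol{\ty''}}{\st_i}{\mtbl_i}{\S_i}{\st'_i}{\mtbl_{i+1}}{\S_{i+1}}$ is a strict sub-derivation, so the induction hypothesis applies; iterating it along $\mtbl_0, \mtbl_1, \dots, \mtbl_{n+1}$ yields $\mathit{WF}(\mtbl_{n+1})$. Finally $\mtbl'$ overwrites the stub with the compiled instance $\langle \meth\m{\ol\Ty}{\ol{\v k}}\ol{\st'}, \ol{\ty'}\rangle$. Because this replacement keeps the signature $\ol\Ty$ and original pointer $\ol{\ty'}$ fixed, clauses 5 and 6 carry over as for the stub; clause 3 (consecutive register numbering) holds because every compilation rule maps $\st_i$ to an $\st'_i$ with the same target register and the instruction count is unchanged; and the body $\ol{\st'}$ is non-empty since it is the compiled image of the original body $\ol\st$, which is non-empty by clause 4 of $\mathit{WF}(\mtbl)$. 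Clause 4 itself is undisturbed because compilation never alters original methods, so the projection $\origmtbl{\mtbl'}$ onto source methods coincides with $\origmtbl\mtbl$ and inference still succeeds on them.

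I expect the main obstacle to be clause 6, the most-specific-original condition, for the freshly inserted specialization. The delicate point is that this condition is quantified over all same-named entries, and after insertion the table contains the new instance whose own signature is exactly $\ol\Ty$; one must argue that this exact-match entry is the specialization under consideration rather than a competing witness, and that every genuinely distinct applicable entry is an original method for which the Dispatch Contract already gives $\ol{\ty'} \Sub \ol{\ty''}$. Confirming that no intervening recursive compilation step introduces a distinct same-named instance that is applicable to $\ol\Ty$ yet not above $\ol{\ty'}$—which again reduces to the exactness of concrete signatures—is the crux that makes the argument go through.
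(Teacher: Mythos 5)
Your proposal is correct and follows essentially the same route as the paper's proof: induction on the compilation derivation, dismissing \SC{C-NoDisp}, \SC{C-Disp}, and \SC{C-Direct} as table-preserving, establishing $\mathit{WF}(\mtbl_0)$ for the stub-extended table via the Dispatch Contract and the fact that concrete signatures have no proper subtypes (so $\msig\m{\ol\Ty}\notin\mtbl$), threading the induction hypothesis through $\mtbl_0,\dots,\mtbl_{n+1}$, and observing that $\mtbl'$ only replaces the stub's body. Your closing worry about clause~6 under intervening recursive compilations is already discharged by the induction hypothesis (well-formedness of each $\mtbl_{i+1}$ covers clause~6 for the stub), resolved exactly by the concrete-signature exactness argument you give.
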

\begin{proof}
  By induction on the derivation of
  \compst{\ol\ty}{\st}{\mtbl}{\S}{\st'}{\mtbl'}{\S'}.
  The only case that modifies the method table is \SC{C-Instance}.
\PAPERVERSION{
  There, by analyzing $\mtbl_0$, we can see that it is well-formed,
  and thus the induction hypothesis can be applied to
  \compst{\ol\Ty\,\ol{\ty''}}{\st_0}{\mtbl_0}{\S_0}{\st'_0}{\mtbl_1}{\S_1},
  and then to all
  \compst{\ol\Ty\,\ol{\ty''}}{\st_i}{\mtbl_i}{\S_0}{\st'_i}{\mtbl_{i+1}}{\S_{i+1}}.
  Since $\mtbl_{n+1}$ is well-formed, so is $\mtbl'$.
  More details are available in the extended version~\cite{oopsla21jules:arx}.
}{
  Let us analyze $\mtbl_0$ first.
  Since $\ol\Ty \neq \ol{\ty'}$ and \mtbl is well-formed,
  we know that \msig\m{\ol{\ty'}} is an original method and $\ol\Ty \neq \ol{\ty'}$.
  Furthermore, as dispatch is known to return the most applicable method,
  $\msig\m{\ol\Ty} \notin \mtbl$ and properties (5) and (6) of
  \defref{ref:def-well-formed} for $\mtbl_0$ are satisfied.
  Other properties are trivially satisfied because $\mtbl_0$
  does not add or modify any original methods, so:
  \[\mathit{WF}(\mtbl_0).\]
  Therefore, we can apply the induction hypothesis to
  \compst{\ol\Ty\,\ol{\ty''}}{\st_0}{\mtbl_0}{\S_0}{\st'_0}{\mtbl_1}{\S_1}
  to get $\mathit{WF}(\mtbl_1).$
  Proceeding in this manner, we arrive to:
  \[\mathit{WF}(\mtbl_{n+1}).\]
  As $\mtbl'$ only changes the body of the compiled instance \msig\m{\ol\Ty}
  compared to the well-formed $\mtbl_{n+1}$, we get the desired conclusion:
  \[\mathit{WF}(\mtbl').\]
}
\end{proof}

\begin{lemma}[Preserving Maximal Devirtualization]\label{lem:max-devirt}
  For any well-formed method table \mtbl, register typing \ol\ty,
  and instruction \st, if the method table is maximally devirtualized,
  \devirtm{\mtbl},  and the instruction \st is compiled,
  $\compst{\ol\ty}{\st}{\mtbl}{\S}{\st'}{\mtbl'}{\S'},$
  then the following holds:
  \begin{enumerate}
    \item the resulting instruction is maximally devirtualized in the new table,
      \devirtst{\ol\ty}{\mtbl'}{\ol{\st'}},
    \item the new table is maximally devirtualized, \devirtm{\mtbl'},
    \item any maximally devirtualized instruction stays maximally devirtualized
      in the new table,
      $\forall\, \ol{\ty^x}, \ol{\st^x}.\quad
      \devirtst{\ol{\ty^x}}{\mtbl }{\ol{\st^x}}\ \implies\
      \devirtst{\ol{\ty^x}}{\mtbl'}{\ol{\st^x}}$.
  \end{enumerate}
\end{lemma}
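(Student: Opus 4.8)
The plan is to prove the three conclusions simultaneously by induction on the derivation of \compst{\ol\ty}{\st}{\mtbl}{\S}{\st'}{\mtbl'}{\S'}, since conclusion (3) is exactly the monotonicity property needed to transport the instruction-level devirtualization facts established in intermediate tables over to the single final table demanded by (1) and (2). The three rules that leave the table unchanged, \SC{C-NoDisp}, \SC{C-Disp}, and \SC{C-Direct}, all set $\mtbl' = \mtbl$, so (2) and (3) are immediate; conclusion (1) then follows by selecting the matching instruction rule from \figref{fig:max-devirt}: \SC{D-NoCall} when \st is not a call, \SC{D-Disp} when \st is a dispatched call whose arguments satisfy $\ol\Ty \neq \idx{\ol\ty}{\ol\k}$, and \SC{D-Direct} for the produced direct call $\st'$, whose target $\msig\m{\ol\Ty}$ lies in \mtbl by the premise of \SC{C-Direct} (concrete argument types have no proper subtypes, so this is exactly the method a dispatch would select).

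The substance of the argument is the \SC{C-Instance} case, where $\mtbl'$ arises by inserting a stub $\langle \msig\m{\ol\Ty}\done, \ol{\ty'}\rangle$ to form $\mtbl_0$, recursively compiling the original body $\ol\st = \st_0\dots\st_n$ through a chain $\mtbl_0,\dots,\mtbl_{n+1}$, and finally overwriting the stub with the finished instance. I would first show $\devirtm{\mtbl_0}$: the stub is excluded from \SC{D-Table} by its side condition $\ol{\st'} \neq \done$, while every pre-existing instance stays devirtualized because the table has only grown, which is precisely the monotonicity captured by (3). I would then thread the induction hypothesis along $\st_0,\dots,\st_n$: at step $i$, \lemref{lem:wf} supplies well-formedness of $\mtbl_i$ and the inductive conclusion (2) supplies $\devirtm{\mtbl_i}$, so the hypothesis applies to $\st_i$ and yields $\devirtst{\ol\Ty\,\ol{\ty''}}{\mtbl_{i+1}}{\st'_i}$, $\devirtm{\mtbl_{i+1}}$, and the per-step monotonicity (3). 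Composing the monotonicity across the chain lifts each $\devirtst{\ol\Ty\,\ol{\ty''}}{\mtbl_{i+1}}{\st'_i}$ up to $\mtbl'$, and \SC{D-Seq} then assembles them into $\devirtst{\ol\Ty\,\ol{\ty''}}{\mtbl'}{\ol{\st'}}$.

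From here the three conclusions close. For (1), the output $\st'$ is the direct call $\cond\i\j{\call{\msig\m{\ol\Ty}}{\ol{\v\k}}}{\v\l}$; since $\ol\Ty = \idx{\ol\ty}{\ol\k}$ and the compiled instance places $\msig\m{\ol\Ty}$ in $\mtbl'$, rule \SC{D-Direct} yields $\devirtst{\ol\ty}{\mtbl'}{\st'}$. For (2), $\mtbl'$ agrees with the devirtualized $\mtbl_{n+1}$ except that the stub is replaced by $\langle\meth\m{\ol\Ty}{\ol{\v k}}\ol{\st'},\ol{\ty'}\rangle$; because compilation never adds or alters an original method, $\origmtbl{\mtbl'} = \origmtbl{\mtbl}$ and inference still returns $\ol{\ty''}$ on $\ol\st$, so the judgment $\devirtst{\ol\Ty\,\ol{\ty''}}{\mtbl'}{\ol{\st'}}$ just obtained is exactly the premise \SC{D-Table} requires of the new instance, while all remaining instances survive by monotonicity; hence $\devirtm{\mtbl'}$. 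For (3), I would compose the extension $\mtbl \to \mtbl_0$, the inductive (3) for each $\mtbl_i \to \mtbl_{i+1}$, and the body overwrite $\mtbl_{n+1} \to \mtbl'$, observing that a devirtualization judgment consults the table only through the positive membership test $\msig\m{\ol\Ty} \in \mtbl$ of \SC{D-Direct}, which no step ever falsifies.

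The main obstacle is the bookkeeping inside \SC{C-Instance}: the per-instruction devirtualization facts are produced in \emph{distinct} intermediate tables $\mtbl_{i+1}$, whereas \SC{D-Seq} demands them all in one common table, so the argument hinges on threading conclusion (3) through the entire chain of recursive compilations rather than proving it as an afterthought. A second, subtler point is that the freshly compiled instance can satisfy \SC{D-Table} only because $\origmtbl{\mtbl}$ is invariant under adding instances, which keeps the inferred typing $\ol{\ty''}$ fixed between the computation in \SC{C-Instance} and its re-derivation inside \SC{D-Table}.
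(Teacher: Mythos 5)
Your proposal is correct and follows essentially the same route as the paper's own proof: induction on the compilation derivation, immediate handling of the three table-preserving rules, and, in the \SC{C-Instance} case, threading the induction hypothesis through the chain $\mtbl_0,\dots,\mtbl_{n+1}$ using \lemref{lem:wf} and conclusion (2), then using conclusion (3) as the monotonicity device that transports the per-instruction facts into the single final table so that \SC{D-Seq} and \SC{D-Table} can close the argument. You also correctly isolate the two load-bearing observations the paper relies on — that (3) must be proved simultaneously rather than as an afterthought, and that $\origmtbl{\mtbl'} = \origmtbl{\mtbl}$ keeps the inferred typing fixed for the \SC{D-Table} premise — so nothing further is needed.
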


\begin{proof}
\PAPERVERSION{
  By induction on the derivation of
  \compst{\ol\ty}{\st}{\mtbl}{\S}{\st'}{\mtbl'}{\S'}.
  The most interesting case is \SC{C-Instance} where
  compilation of additional instances happens.
  The key step of the proof is to observe that the induction hypothesis
  is applicable to
  \compst{\ol\Ty\,\ol{\ty''}}{\st_0}{\mtbl_0}{\S_0}{\st'_0}{\mtbl_1}{\S_1},
  and then to all consecutive
  \compst{\ol\Ty\,\ol{\ty''}}{\st_i}{\mtbl_i}{\S_0}{\st'_i}{\mtbl_{i+1}}{\S_{i+1}},
  which is possible due to \lemref{lem:wf} and facts (2).
  Using facts (3), we propagate the information about maximal devirtualization
  of $\st_i$ in their respective tables to the final $\mtbl'$.
  A complete proof is available in~\cite{oopsla21jules:arx}.
}
{
  By induction on the derivation of
  \compst{\ol\ty}{\st}{\mtbl}{\S}{\st'}{\mtbl'}{\S'}.
  \begin{itemize}
    \item Cases \SC{C-NoDisp} and \SC{C-Disp} are straightforward:
      to show (1), we use rules \SC{D-NoCall} and \SC{D-Disp}, respectively.
      Since the resulting method table \mtbl is the same, (2) and (3)
      hold trivially.

    \item Case \SC{C-Direct}. By assumption, we know that
      $\ol\Ty = \idx{\ol\ty}{\ol\k}$ and that $\msig{\m}{\ol\Ty} \in \mtbl$.
      Therefore, we have (1) by \SC{D-Direct}:
      \[\devirtst{\ol\ty}{\mtbl}{\cond\i\j{\call{\msig\m{\ol\Ty}}{\ol{\v\k}}}{\v\l}}.\]
      The resulting method table \mtbl is the same, so (2) and (3) hold.

    \item Case \SC{C-Instance}. This is the interesting case where
      compilation of additional instances happens.
      First, note that $\mtbl_0$ is the same as \mtbl except for a new instance
      stub $\langle \msig\m{\ol{\Ty}}\done, \ol{\ty'} \rangle$.
      By assumption, \devirtm{\mtbl} holds, and \SC{D-Table} does not impose
      any constraints on stubs, so \devirtm{\mtbl_0} also holds.
      As shown in the proof of \lemref{lem:wf}, $\mtbl_0$ is also well-formed.
      Therefore, we can apply the induction hypothesis to
      \[\compst{\ol\Ty\,\ol{\ty''}}{\st_0}{\mtbl_0}{\S_0}{\st'_0}{\mtbl_1}{\S_1},\]
      and we get that:
      \begin{itemize}
        \item \devirtst{\ol\Ty\,\ol{\ty''}}{\mtbl_1}{\st'_0},
        \item \devirtm{\mtbl_1}, and
        \item $\forall\, \ol{\ty^x}, \ol{\st^x}.\quad
        \devirtst{\ol{\ty^x}}{\mtbl_0}{\ol{\st^x}}\ \implies\
        \devirtst{\ol{\ty^x}}{\mtbl_1}{\ol{\st^x}}$.
      \end{itemize}
      The fact that $\mtbl_1$ is maximally devirtualized by (2)
      and well-formed by \lemref{lem:wf}, lets us apply the
      induction hypothesis to the next instruction $\st_1$ of the method
      body \ol\st, and so on.  By repeating the same steps, we get
      that the final table obtained by compiling the body is
      also maximally devirtualized:
        \[\devirtm{\mtbl_{n+1}}.\]
      Now, let us look at the property (3) that we need to prove.
      Assuming we have some \ol{\ty^x}, \ol{\st^x} such that
      \devirtst{\ol{\ty^x}}{\mtbl }{\ol{\st^x}},
      we want to show that
      \devirtst{\ol{\ty^x}}{\mtbl'}{\ol{\st^x}}.
      First, observe that by case analysis on the derivation of
      \devirtst{\ol{\ty^x}}{\mtbl }{\ol{\st^x}},
      we can show that the property holds in $\mtbl_0$, i.e.
        \devirtst{\ol{\ty^x}}{\mtbl_0}{\ol{\st^x}}.
      For an instruction that is not a direct call, the presence of an additional
      method instance is irrelevant. If an instruction is a direct call to an
      existing method instance, it stays maximally devirtualized because
      $\mtbl_0$ does not remove or alter any existing instances of \mtbl.\\
      Next, we can apply the fact (3) from the induction on $\st_0$
      to conclude that \devirtst{\ol{\ty^x}}{\mtbl_1}{\ol{\st^x}},
      and so on, until we get that
        \[\devirtst{\ol{\ty^x}}{\mtbl_{n+1}}{\ol{\st^x}}.\]
      Since $\mtbl'$ only replaces the body of $\msig\m{\ol\Ty}$,
      by case analysis similar to the above, we conclude
        \[\devirtst{\ol{\ty^x}}{\mtbl'}{\ol{\st^x}},\]
      which \emph{proves (3)} for \SC{C-Instance}.
      Proceeding with similar reasoning, we can chain facts (3) from all
      intermediate compilations and apply them to facts (1) of the induction,
      so we get that
        \[\forall i.\quad \devirtst{\ol\Ty\,\ol{\ty''}}{\mtbl'}{\st'_i},\]
      and thus by \SC{D-Seq},
        \[\devirtst{\ol\Ty\,\ol{\ty''}}{\mtbl'}{\ol{\st'}}.\]
      Since the compilation does not remove or alter any original methods,
      $\origmtbl{\mtbl'} = \origmtbl{\mtbl}$, and thus
      type inference \typeinf{\origmtbl{\mtbl'}}{\ol\Ty}{\ol\st} produces
      the same results as \typeinf{\origmtbl{\mtbl}}{\ol\Ty}{\ol\st}.
      Therefore, the requirement of \SC{D-Table} for the method instance
      $\langle \meth\m{\ol\Ty}{\ol{\v k}}\ol{\st'},\ \ol{\ty'} \rangle$
      in $\mtbl'$ is satisfied to \emph{prove (2)}.\\
      Finally, we can \emph{prove (1)} for $\mtbl'$ by \SC{D-Direct}
      because $\msig\m{\ol\Ty} \in \mtbl'$.
  \end{itemize}
}
\end{proof}


Putting it all together, we have shown that type-grounded methods
compile to method instances without dynamically dispatched calls.
\PAPERVERSION{
Namely, since a well-formed original table
is trivially maximally devirtualized, and compilation preserves this property
by \lemref{lem:max-devirt},
executing a well-formed program results in a method table where
all method instances are maximally devirtualized. Furthermore, type-grounded
instances are fully devirtualized by \lemref{lem:full-virt},
and thus do not contain dispatched calls.
\begin{theorem}[Grounded Methods Never Dispatch]
  For any initial well-formed \mtbl, if
  \[ \config{\done\ \main()}{\mtbl} \stepmul
    \config{\stk{\env\ (\cond\i\j{\call{\msig\m{\ol\Ty}}{\ol{\v\k}}}{\v\l})\,\ol\st}{\frm}}{\mtbl'}
    \quad
    \wedge
    \quad
    \text{\msig{\m}{\ol\Ty} is type-grounded in $\mtbl'$,}
  \]
  then \msig{\m}{\ol\Ty} is fully devirtualized in $\mtbl'$.
\end{theorem}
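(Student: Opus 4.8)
The plan is to reduce the theorem to a single application of \lemref{lem:full-virt} by maintaining a runtime invariant on the method table: every method table reachable from $\config{\done\ \main()}{\mtbl}$ is both well-formed and maximally devirtualized. In particular, the reached table $\mtbl'$ will satisfy $\mathit{WF}(\mtbl')$ and \devirtm{\mtbl'}. Granting this invariant, the theorem is immediate: at the reached configuration the top instruction is the direct call to $\msig\m{\ol\Ty}$, and $\msig\m{\ol\Ty}$ is a type-grounded instance of $\mtbl'$ by hypothesis, so \lemref{lem:full-virt} concludes that it is fully devirtualized.

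To establish the invariant I would induct on the number of steps in $\stepmul$. In the base case the initial table $\mtbl$ is well-formed by assumption; and since it is an original table --- each entry is a source method whose instance signature equals its parameter types, with no compiled instances yet --- the premise of \SC{D-Table} is vacuous (the guard $\ol\Ty \neq \ol\ty$ never fires for an original entry), so \devirtm{\mtbl} holds for free. For the inductive step I would analyze the last rule of \figref{sems}. Every rule other than \SC{Disp} --- namely \SC{Prim}, \SC{Reg}, \SC{New}, \SC{Field}, \SC{False1}, \SC{False2}, \SC{Direct}, and \SC{Ret} --- leaves the method table untouched, so the invariant transfers verbatim from the induction hypothesis. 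The only rule that grows the table is \SC{Disp}, which replaces the current table $\mtbl$ with $\jit(\mtbl, \m, \ol\Ty)$; there I would appeal to \lemref{lem:wf} to keep well-formedness and to the second part of \lemref{lem:max-devirt} to keep maximal devirtualization.

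The step demanding the most care is bridging the operational $\jit$ of \figref{compile} to the compilation relation $\compst{\ol\ty}{\st}{\mtbl}{\S}{\st'}{\mtbl'}{\S'}$ of \figref{fig:compile-wd}, since \lemref{lem:wf} and \lemref{lem:max-devirt} are phrased over the latter. I would make the correspondence explicit by a case split: when $\msig\m{\ol\Ty} \in \mtbl$, the call $\jit(\mtbl, \m, \ol\Ty)$ returns $\mtbl$ unchanged (rule \SC{CompileFalse}) and the invariant is preserved trivially; otherwise $\jit$ compiles a fresh instance, which is exactly the behaviour of a derivation of $\compst{\ol\ty}{\st}{\mtbl}{\S}{\st'}{\mtbl'}{\S'}$ ending in \SC{C-Instance} on the dispatched-call instruction $\cond\i\j{\call\m{\ol{\v\k}}}{\v\l}$ that triggered the step --- note that a single such compilation step already recursively compiles the entire callee body, so it captures the full $\jit$. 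Under this identification both lemmas apply directly. The other point worth flagging is the base case's reliance on the initial table being an original table; this is faithful to the compilation model of \secref{sec:jules}, in which instances are introduced only by the JIT at runtime, so a freshly translated program carries no instances and is maximally devirtualized by default.
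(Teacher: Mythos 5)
Your proof is correct and follows essentially the same route as the paper's: establish that the initial table is maximally devirtualized (vacuously, having no compiled instances), show by induction on $\stepmul$ that only \SC{Disp} touches the table and that \lemref{lem:max-devirt} preserves the invariant there, and finish with \lemref{lem:full-virt}. You are in fact somewhat more explicit than the paper on two points it glosses over — carrying well-formedness through the induction via \lemref{lem:wf} (needed as a hypothesis of both lemmas) and bridging the operational $\jit$ of \figref{compile} to the reformulated relation of \figref{fig:compile-wd} — though for the latter note that the register typing in the caller's frame need not be concrete at the argument positions, so the \SC{C-Instance} derivation should be taken in a typing context instantiated with the runtime types $\ol\Ty$ rather than literally the caller's typing.
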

}{
\begin{proof}
  First, observe that the initial table \mtbl is maximally devirtualized because
  it does not contain any compiled method instances. Then, by induction on
  \stepmul, we can show that the dynamic semantics preserves maximal
  devirtualization of the method table.
  Namely, by case analysis on one step of the dynamic semantics, we can see that the
  only rule that affects the method table is \SC{Disp}; all other rules simply
  propagate the same maximally devirtualized table.

  In the case of \SC{Disp}, by \lemref{lem:max-devirt},
  the new method table $\mtbl'$ is also maximally devirtualized.
  Since every step of the dynamic semantics preserves maximal devirtualization
  of the method table, so does the multi-step relation \stepmul.

  If the execution gets to a direct call, the call must have been produced
  by the JIT compiler.  As all compiled code is maximally devirtualized,
  by \SC{D-Direct}, the method instance \msig\m{\ol{\Ty}} exists,
  and, by \SC{D-Table}, it is known to be maximally devirtualized.
  Finally, as we know by \lemref{lem:full-virt}, any maximally devirtualized
  method instance that is type-grounded, is also fully devirtualized,
  and thus, it does not contain dispatched calls.
\end{proof}
}

}

\subsubsection{Relationship between Stability and Groundedness}\label{sec:stable}

\MODIFY{
While it is type groundedness that enables devirtualization,
the weaker property, type stability, is also important in practice.
Namely, stability of calles is crucial for groundedness
of the caller if the type inference algorithm
analyzes function calls using nothing but type information about the
arguments. \PAPERVERSIONINLINE{(A formal definition and a proof of this property
can be found in the extended version~\cite{oopsla21jules:arx}.)}{Since
types are the object of the analysis, we call such type inference
\emph{context-insensitive}: no information about the calling context
other than types is available to the analysis of a function call.}
A more powerful type inference algorithm might be able to work around
unstable methods in some cases, but even then, stability would be needed in general.

As an example, consider two type-unstable callees that return either an integer
or a string depending on the argument value, \c{f(x) = (x>0) ? 1 : "1"}
and \c{g(x) = (rand()>x) ? 1 : "1"}, and two calls, \c{f(y)} and \c{g(y)}, where
\c{y} is equal to $0.5$. If only type information about \c{y} is available to
type inference, both \c{f(y)} and \c{g(y)} are deemed to have abstract return
types. Therefore, the results of such function calls cannot be stored in
concretely typed registers, which immediately makes the calling code ungrounded.
If type inference were to analyze the value of \c{y} (not just its type),
the result of \c{f(y)} could be stored in a concrete, integer-valued register,
as only the first branch of \c{f} is known to execute. However, the other call,
\c{g(y)}, would still have to be assigned to an abstract register.
Thus, to enable groundedness and optimization of the client code regardless of
the value of argument \c{x}, \c{f(x)} and \c{g(x)} need to be type-stable.


Note that stability is a necessary but not sufficient
condition for groundedness of the client,
as conditional branches may lead to imprecise types,
like in the \c{f0} example from \figref{fla}.

}

\PAPERVERSION{}{
In what follows, we give a formal definition of
context-insensitive type inference, and
show that in that case, \emph{reachable} callees of a grounded method need to be
type stable.\ADD{
The intuition behind the definition is that inferring the type of
a function call in the context of a program gives us the same amount of
information as inferring the return type of the callee independently.
Technical complexities of the definition come from the fact that
call instructions are conditional: if the callee is not reachable,
its return type does not need to be compatible with the register
of the calling instruction.
}

\MODIFY{
\begin{definition}[Context-Insensitive Type Inference]
  A type inference algorithm $\typeinfop$ is
  \emph{context-insensitive}, if, 
  given any method table \mtbl, register typing \ol\ty, and instructions \ol\st
  such that (1)~type inference succeeds,
  \[
    \typeinf{\mtbl}{\ol\ty}{\ol\st} = \ol{\ty'},
  \]
  (2) instructions contain a function call \call\m{\ol{\v\k}},
  \[
    \ol\st = \ldots \st_i \ldots \quad \land \quad
    \st_i = \cond\i\j{\call\m{\ol{\v\k}}}{\v\l},
  \]
  and (3) the call is reachable by running \ol\st in a compatible environment,
  \[
    \exists \env.\ \typeof(\env) <: \ol\ty \quad \land \quad
    \config{\env\ \ol\st}{\mtbl} \stepmul
      \config{\stk{(\env'\ \ol{\st'})}{(\env''\ \st_i\ldots)}}{\mtbl'},
  \]
  where \ol{\st'} is the body of the callee, that is
  $\ol{\st'} = \body(\dispatch{\mtbl}{\m}{\typeof(\env')})$,
  the inferred type of the calling instruction
  is no more precise than
  the inferred return type of the callee:
  \[
    \last{\typeinf{\mtbl}{\typeof(\env')}{\ol{\st'}}} <: \ty'_i.
  \]
\end{definition}

\begin{theorem}[Type Stability of Callees Reachable from Type-Grouned Code]
  If type inference is context-insensitive,
  then for all type-grounded sequences of instructions \ol\st where
  \[
    \typeinf{\mtbl}{\ol\Ty}{\ol\st} = \ol{\Ty'},
  \]
  any reachable callee in \ol\st is type stable for the types of its arguments.
\end{theorem}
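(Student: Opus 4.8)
The plan is to derive the statement directly from context-insensitivity together with the finality of concrete types; no induction on the structure of \ol\st is needed, only a pointwise argument over the reachable call sites of \ol\st. First I would fix an arbitrary reachable callee: choose a call instruction $\st_i = \cond\i\j{\call\m{\ol{\v\k}}}{\v\l}$ occurring in \ol\st together with an environment \env compatible with the argument typing (so $\typeof(\env) <: \ol\Ty$) whose execution reaches and fires this call, pushing a callee frame $(\env'\ \ol{\st'})$ with body $\ol{\st'} = \body(\dispatch{\mtbl}{\m}{\typeof(\env')})$ on top of a frame headed by $\st_i$. By construction, \emph{the types of the callee's arguments} are exactly $\typeof(\env')$, so it suffices to prove type stability of the dispatched method for this single tuple $\typeof(\env')$; the set-based clause of \defref{def:ground} then upgrades this to stability over the whole set of reachable argument tuples.

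Next I would instantiate the definition of context-insensitive type inference at this call. Its three premises are discharged verbatim by our hypotheses: (1)~inference succeeds, since $\typeinf{\mtbl}{\ol\Ty}{\ol\st} = \ol{\Ty'}$ is precisely the groundedness assumption; (2)~\ol\st contains the call $\st_i$; and (3)~the call is reached by \env, exactly as witnessed in the previous step. Context-insensitivity therefore yields
\[
  \last{\typeinf{\mtbl}{\typeof(\env')}{\ol{\st'}}} <: \ty'_i,
\]
where $\ty'_i$ is the inferred type of the register assigned by $\st_i$. This inequality already presupposes that inference on the callee body is defined, so the callee's return type is well-defined.

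The key final step exploits groundedness. Since \ol\st is type-grounded, every entry of $\ol{\Ty'}$ is concrete; in particular $\ty'_i$ is a concrete type, say $\Ty'_i$. Because concrete types are leaves of the subtyping hierarchy and hence have no proper subtypes (the same property relied on by \SC{D-Direct}), any type that is $<:$ a concrete type must equal it. Thus $\last{\typeinf{\mtbl}{\typeof(\env')}{\ol{\st'}}} = \Ty'_i$ is concrete, which by clause~(1) of \defref{def:ground} means the dispatched method is type stable for $\typeof(\env')$, i.e.\ for the types of its arguments.

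I expect the only real obstacle to be bookkeeping around how inference and dispatch are scoped: one must check that the reachability notion in the theorem coincides exactly with premise~(3) of context-insensitivity (same compatible \env, same frame shape, same dispatched body $\ol{\st'}$), and that the table \origmtbl{\mtbl} over which inference is defined lines up with the \emph{original} method selected by dispatch, so that the conclusion of context-insensitivity instantiates \defref{def:ground} for the callee as written. Once those are aligned, promoting ``$<:$ a concrete type'' to ``equals that concrete type'' is immediate, and the pointwise conclusion lifts to all reachable argument tuples.
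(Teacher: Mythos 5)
Your proposal is correct and follows essentially the same route as the paper's proof: pick a reachable call, use groundedness to conclude the call's inferred register type is concrete, invoke context-insensitivity to bound the callee's inferred return type by that concrete type, and use the absence of proper subtypes of concrete types to conclude equality and hence stability of the callee. The extra bookkeeping you flag (aligning reachability with premise (3) and the dispatched body) is handled implicitly in the paper and does not change the argument.
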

\begin{proof}
  The property follows from the definition of context-insensitivity.
  Let $\st_i$ be a reachable call \cond\i\j{\call\m{\ol{\v\k}}}{\v\l}.
  Since \ol\st is type-grounded, the inferred type of $\st_i$
  is some concrete $\Ty'_i$.
  Because by the definition of context-insensitivity, $\Ty'_i$ is no more precise
  than the inferred return type $\ty^m$ of the body,
  \[
    \last{\typeinf{\mtbl}{\typeof(\env')}{\ol{\st'}}} = \ty^m <: \Ty'_i,
  \]
  and concrete types do not have subtypes other than themselves,
  $\ty^m = \Ty'_i$. Thus, according to the definition of type stability,
  the method \m is type stable for $\typeof(\env')$.
\end{proof}
}
}

\subsection{Correctness of Compilation}\label{sec:jit-correct}

\ADD{
In this section, we prove that evaluating a program with just-in-time
compilation yields the same result as if no compilation occurred. We use two
instantiations of the \jules semantics, one (written~\stepd) where
\jit is the identity, and the other (written \stepj) where
\jit is defined as before. Our proof strategy is to
\begin{enumerate}
  \item define the optimization relation $\eqdirop$ which relates original and
    optimized code (\figref{fig:opt-config}),
  \item show that compilation from~\figref{fig:compile-wd} does produce
    optimized code (\thmref{thm:compilation-correct}),
  \item show that evaluating a program with \stepj or \stepd gives the same
    result (\thmref{thm:disp-jit-equiv}).
\end{enumerate}
The main \thmref{thm:disp-jit-equiv} is a corollary of bisimulation between
\stepd and \stepj (\lemref{lem:bisim-disp-jit}). The bisimulation lemma uses
\thmref{thm:compilation-correct} but otherwise, it has a proof similar to the
proof of bisimulation between original and optimized code both running with
\stepd (\lemref{lem:bisim-disp}). A corollary of the latter bisimulation lemma,
\thmref{thm:disp-equiv}, shows the correctness of \jit as an
ahead-of-time compilation strategy.
}

\begin{figure}
\small\begin{mathpar}
\PAPERVERSIONINLINE{}{
\framebox{\optstd{\ol\ty}{\st}{\st'}}

\qquad\qquad

\framebox{\optstd{\ol{\ty}}{\ol\st}{\ol{\st'}}}\\}

\inferrule[OI-Refl]{ }{
  \optstd{\ol\ty}{\st}{\st}
}

\inferrule[OI-Direct]{
  \st = \cond\i\j{\call\m{\ol{\v k}}}{\v\l} \and
  \ol{\Ty} = \ol\ty[\ol\k] \and \\\\
  \signature(\dispatch\mtbl\m{\ol\Ty})
    = \origsignature(\idx{\mtbl'}{\msig\m{\ol\Ty}}) \\\\
  \st'= \cond\i\j{\call{\msig{\m}{\ol\Ty}}{\ol{\v k}}}{\v\l}
}{
  \optstd{\ol\ty}{ \st }{ \st' }
}

\inferrule[OI-Seq]{
  \eqstd{ \ol{\ty} }{\st_0}{\st'_0} \\\\ \ldots \\\\
  \eqstd{ \ol{\ty} }{\st_n}{\st'_n}
}{
  \optstd{\ol{\ty}}{\ol\st}{\ol{\st'}}
}
\\

\PAPERVERSIONINLINE{}{\framebox{\eqmtbld}\\}

\inferrule[O-Table]{
  \mtbl = \origmtbl{\mtbl'} \and
  \forall \langle \meth\m{\ol{\Ty}}{}\ol{\st'},\ \ol{\ty} \rangle,
    \langle \meth\m{\ol{\ty}}{}\ol{\st},\ \ol{\ty} \rangle \in \mtbl',
    \text{ s.t. }
    \ol\Ty \neq \ol\ty,\,\ol{\st'} \neq \done. \\\\
  \qquad\quad \and
  \ol{\ty'} = \typeinf{\mtbl}{\ol\Ty}{\ol\st} \and
  \optstd{\ol{\Ty}\,\ol{\ty'}}{\ol\st}{\ol{\st'}}
}{
  \eqmtbld
}
\end{mathpar}
\begin{tabular}{rcll}
  \\
  $\Delta$ & ::= & \done\ |\ \stk{\ol\ty}{\Delta'} & stack typing \\
  \\
\end{tabular}
\begin{mathpar}
\PAPERVERSIONINLINE{}{\framebox{\eqstd{\Delta}{\frm}{\frm'}}\\}

\inferrule[O-StackEmpty]{ }{
  \eqstd{\done}{\done}{\done}
}

\inferrule[O-Stack]{
  \eqstd{\ol\ty}{\ol\st}{\ol{\st'}} \and
  \eqstd{\Delta}{\frm}{\frm'}
}{
  \eqstd{\stk{\ol\ty}{\Delta}}
    {\stk{(\env\ \ol{\st})}{\frm}}
    {\stk{(\env\ \ol{\st'})}{\frm'}}
}
\\

\PAPERVERSIONINLINE{}{\framebox{\typeinfstd{\frm}{\Delta}}\\}

\inferrule[I-StackEmpty]{ }{
  \typeinfstd{\done}{\done}
}

\inferrule[I-Stack]{
  \exists \meth\m{\ol{\ty'}}{}\ol{\st_b} \in \mtbl. \and
  \env = \env' \env'' \and
  \frm \neq \done \ \implies\
    \config{\frm}{\mtbl} \stepd
    \config{\stk{\env'\ \ol{\st_b}}{\frm}}{\mtbl} \\\\
  \config{\stk{\env'\ \ol{\st_b}}{\frm}}{\mtbl} \stepdmul
    \config{\stk{\env\ \ol{\st}}{\frm}}{\mtbl} \and
  \ol\Ty = \typeof(\env') \and
  \ol{\ty''} = \typeinf{\mtbl}{\ol\Ty}{\ol{\st_b}} \\\\
  \ol\Ty <: \ol{\ty'} \and
  \ol\Ty\,\ol{\ty''} <: \ol\ty \and \and
  \typeinfstd{\frm}{\Delta}
}{
  \typeinfstd{\stk{(\env\ \ol{\st})}{\frm}}{\stk{\ol\ty}{\Delta}}
}
\\

\PAPERVERSIONINLINE{}{\framebox{\optconfig{\frm}{\mtbl}{\frm'}{\mtbl'}{\Delta}}\\}

\inferrule[O-Config]{
  \typeinfstd{\frm}{\Delta} \and
  \eqstd{\Delta}{\frm}{\frm'} \and
  \eqmtbld
}{
  \optconfig{\frm}{\mtbl}{\frm'}{\mtbl'}{\Delta}
}
\end{mathpar}
\caption{Optimization relation for instructions, method tables,
  stacks, and configurations}
\label{fig:opt-config}
\Description{Optimization relation for instructions, method tables,
  stacks, and configurations}
\end{figure}

\ADD{
First, \figref{fig:opt-config} defines optimization relations for instructions,
method tables, stacks, and configurations. According to the definition of
\eqmtbld, method table $\mtbl'$ optimizes \mtbl if (1) it has all the same
original methods\footnote{\origmtbl{\mtbl'} denotes the method table containing
all original methods of $\mtbl'$ without compiled instances.}, and (2) bodies of
compiled method instances optimize the original methods using more specific type
information available to the instance. These requirements guarantee that
dispatching in an original and optimized method tables will always resolve to
related methods. Optimization of instructions only allows for replacing
dynamically dispatched calls with direct calls in the optimized table.
Optimization of stacks ensures that for all frames, instructions are optimized
accordingly, and requires all value environments to coincide. The first premise
of configuration optimization \SC{O-Config} guarantees that the original
configuration \config{\frm}{\mtbl} is obtained by calling methods from the
original method table \mtbl, and bodies of those methods are amenable to type
inference. Based on the results of type inference, stacks $\frm, \frm'$ need
to be related in method tables $\mtbl, \mtbl'$, and the method tables themselves
also need to be related.

As we show below, when run with the dispatch semantics, related configurations
are guaranteed to run in lock-step and produce the same result.

\begin{lemma}[Bisimulation of Related Configurations]\label{lem:bisim-disp}
  For any well-formed method tables \mtbl and $\mtbl'$
  (i.e. $\mathit{WF}(\mtbl), \mathit{WF}(\mtbl')$ according to
  \defref{ref:def-well-formed}) where $\mtbl'$ does not have stubs
  (i.e. all method bodies $\neq \done$),
  any stacks $\frm_1$, $\frm'_1$, and stack typing $\Delta_1$
  that relates the configurations,
$\optconfig{\frm_1}{\mtbl}{\frm'_1}{\mtbl'}{\Delta_1}$,
  the following holds:
  \begin{enumerate}
    \item Forward direction:
      \[
      \begin{array}{l}
        \config{\frm_1}{\mtbl} \stepd \config{\frm_2}{\mtbl}
        \implies
        \exists \frm'_2, \Delta'_1.\quad
          \config{\frm'_1}{\mtbl'} \stepd \config{\frm'_2}{\mtbl'}\ \ \land\ \
          \optconfig{\frm_2}{\mtbl}{\frm'_2}{\mtbl'}{\Delta'_1}.
      \end{array}
      \]
    \item Backward direction:
      \[
      \begin{array}{l}
        \config{\frm'_1}{\mtbl'} \stepd \config{\frm'_2}{\mtbl'}
        \implies
        \exists \frm_2, \Delta'_1.\quad
          \config{\frm_1}{\mtbl} \stepd \config{\frm_2}{\mtbl}\ \ \land\ \
          \optconfig{\frm_2}{\mtbl}{\frm'_2}{\mtbl'}{\Delta'_1}.
      \end{array}
      \]
  \end{enumerate}
\end{lemma}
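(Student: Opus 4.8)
The plan is to prove both directions simultaneously by case analysis on the last rule of the observed step, exploiting that under \stepd the JIT is the identity: a step modifies neither \mtbl nor $\mtbl'$, so the third component of \optconfig{\frm_1}{\mtbl}{\frm'_1}{\mtbl'}{\Delta_1}, namely \eqmtbld, is preserved for free, and I only have to rebuild the stack-typing judgment (\SC{I-Stack}) and the structural relation (\SC{O-Stack}). A useful simplification is that, since $\mtbl = \origmtbl{\mtbl'}$, every body executed in $\frm_1$ comes from an original method, which by property~(4) of \defref{ref:def-well-formed} contains no direct calls; hence $\frm_1$ never carries a direct call on top, so in the forward direction only \SC{Prim}, \SC{Reg}, \SC{New}, \SC{Field}, \SC{False1}, \SC{Disp}, and \SC{Ret} can fire, whereas $\frm'_1$ may additionally take \SC{Direct} and \SC{False2}. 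Since \SC{O-Stack} forces the two stacks to share their value environments and to have equal height, a step on one side uniquely determines the matching step on the other.

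The routine cases are the non-call instructions (\SC{Prim}, \SC{Reg}, \SC{New}, \SC{Field}) and the guard-false calls (\SC{False1}, \SC{False2}). For a non-call instruction, the only rule relating it to the corresponding instruction on the opposite side is \SC{OI-Refl}, so the two top instructions are literally equal; as the environments coincide, both sides perform the same extension and the stack typing $\Delta_1$ is unchanged. Rebuilding \SC{I-Stack} is immediate because the per-frame register typing recorded in $\Delta_1$ is determined once and for all by the frame's initial arguments and body through \typeinf{\mtbl}{\ol\Ty}{\ol{\st_b}}, and the reachability premise of \SC{I-Stack} merely extends by the single step just taken; \SC{O-Stack} survives because the remaining instruction lists are suffixes of related sequences. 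The guard-false cases are analogous: whether the related instruction is the same dispatched call (\SC{OI-Refl}) or a direct call (\SC{OI-Direct}), a zero guard drives both sides through \SC{False1}/\SC{False2} with the identical extension.

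The substantive work is in the frame-pushing cases. In the forward direction $\frm_1$ fires \SC{Disp} on a dispatched call with concrete runtime argument types \ol\Ty, and the related instruction in $\frm'_1$ is either the same dispatched call (\SC{OI-Refl}) or the direct call $\msig\m{\ol\Ty}$ (\SC{OI-Direct}), so $\frm'_1$ fires \SC{Disp} or \SC{Direct}. I must show the two callees have bodies related by the optimization relation. Dispatch in \mtbl picks the most specific applicable original $\msig\m{\ol{\ty_0}}$ by \defref{disprel}; because concrete types have no proper subtypes, any instance of $\m$ applicable to \ol\Ty must have signature exactly \ol\Ty, and by property~(6) of \defref{ref:def-well-formed} its original is precisely $\msig\m{\ol{\ty_0}}$. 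Thus $\mtbl'$ either dispatches to the same original or to its exact-match instance, and in the direct-call case the premise $\signature(\dispatch\mtbl\m{\ol\Ty}) = \origsignature(\idx{\mtbl'}{\msig\m{\ol\Ty}})$ of \SC{OI-Direct} yields the same identification. In every sub-case \eqmtbld (through \SC{O-Table}) supplies \optstd{\ol\Ty\,\ol{\ty''}}{\ol{\st_b}}{\ol{\st_b'}} with $\ol{\ty''} = \typeinf{\mtbl}{\ol\Ty}{\ol{\st_b}}$, whose existence follows from property~(4) of \defref{ref:def-well-formed} and \emph{Monotonicity} (Requirement~\ref{prop:ti-monot}) applied at $\ol\Ty <: \ol{\ty_0}$. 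I then extend the stack typing by $\ol\Ty\,\ol{\ty''}$ and discharge \SC{I-Stack} and \SC{O-Stack} for the new frame. The backward direction is symmetric: \SC{Disp} on $\frm'_1$ is matched by \SC{Disp} on $\frm_1$ (via \SC{OI-Refl}), and \SC{Direct} on $\frm'_1$ is matched by \SC{Disp} on $\frm_1$ (via \SC{OI-Direct}), using the same signature-matching argument.

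The last case, \SC{Ret}, pops a frame on both sides along with the top entry of $\Delta_1$; since the environments coincide the two return values are equal, so the callers receive identical extensions and their remaining instruction lists stay related, while \SC{I-Stack} for the now-top frame is recovered by extending its reachability multistep by the \SC{Ret} step (its register typing being stable for the frame's lifetime). I expect the main obstacle to be exactly the call-correspondence argument above: showing, uniformly across the \SC{OI-Refl}/\SC{OI-Direct} split and across both tables, that dispatch lands on methods sharing a common original with related bodies, and that inference is guaranteed to succeed on the callee's concrete arguments. This is where \defref{disprel}, property~(6) of \defref{ref:def-well-formed}, \emph{Monotonicity}, and \eqmtbld must be combined; everything else reduces to bookkeeping on the stable per-frame typings held in $\Delta_1$.
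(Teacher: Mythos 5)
Your proof is correct and follows essentially the same route as the paper's: case analysis on the optimization relation and on the executed step, with the real work concentrated in the \textsc{Disp}/\textsc{Direct} correspondence, discharged via the dispatch contract, well-formedness of the tables, monotonicity of inference, and the table relation \eqmtbld. The one point you gloss over --- that the static signature $\msig\m{\ol\Ty}$ recorded in an \textsc{OI-Direct}-related instruction coincides with the runtime argument types --- is exactly where the paper invokes soundness of type inference together with concreteness of the types, and it is recoverable from the \textsc{I-Stack} invariant you maintain.
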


\begin{proof}
\PAPERVERSION{
  For both directions, the proof goes by case analysis on the optimization
  relation for configurations and case analysis on the step of execution.
  The most interesting cases are the ones where \config{\frm_1}{\mtbl}
  steps by \SC{Disp}, with \config{\frm'_1}{\mtbl'} stepping by
  \SC{Disp} or \SC{Direct}. The key observation is that because of
  well-formedness of method tables and \eqmtbld, the optimized configuration
  \config{\frm'_1}{\mtbl'} correctly dispatches or directly invokes
  a specialized method instance of the same original method
  that \config{\frm_1}{\mtbl} dispatches to.
  A complete proof is available
  in the extended version of the paper~\cite{oopsla21jules:arx}.
}{
  For both directions, the proof goes by case analysis on the optimization
  relation for configurations and case analysis on the step of execution.
  By analyzing the derivation of the optimization relation
  \optconfig{\frm_1}{\mtbl}{\frm_2}{\mtbl'}{\Delta_1},
  we have three assumptions, \SC{HS}, \SC{HM}, and \SC{H$\Delta$}:
  \begin{mathpar}
    \inferrule*[right=O-Config]{
      \inferrule[HI]{}{ \typeinfstd{\frm_1}{\Delta_1} } \and
      \inferrule[HS]{}{ \eqstd{\Delta_1}{\frm_1}{\frm'_1} } \and
      \inferrule[HM]{}{ \eqmtbld }
    }{
      \optconfig{\frm_1}{\mtbl}{\frm'_1}{\mtbl'}{\Delta_1}
    }
  \end{mathpar}
  The assumptions will be referenced in the proof below.\\
  \textbf{(1) Forward direction.}\\
  To make a step, $\frm_1$ has to have at least one stack frame,
  so \SC{HS} has the following form:
  \[
    \eqstd
        {\stk{\ol\ty}{\Delta}}
        {\stk{(\env\ \ol{\st_p})}{\frm }}
        {\stk{(\env\ \ol{\st'_p})}{\frm'}},
  \]
  where $\frm_1 = \stk{(\env\ \ol{\st_p})}{\frm}$,
  $\frm'_1 = \stk{(\env\ \ol{\st'_p})}{\frm'}$,
  and $\Delta_1 = \stk{\ol\ty}{\Delta}$.
  Let us analyze the top frame.

  \emph{(a) If the sequence of instructions is empty},
  i.e. $\ol{\st_p} = \done$, the only possible step for \config{\frm_1}{\mtbl}
  is by \SC{Ret}. Therefore, $\frm = \stk{(\env'\ \st\,\ol\st)}{\frm_r}$
  and $\frm_2 = \stk{(\env'\!+\!\last{\env}\ \ol\st)}{\frm_r}$.
  As \eqstd{\ol\ty}{\ol{\st_p}}{\ol{\st'_p}}, it has to be that
  $\ol{\st'_p} = \done$. By analyzing the last premise of \SC{HS},
  \eqstd{\Delta}{\frm}{\frm'}, we know that
  $\frm' = \stk{(\env'\ \st'\,{\ol\st'})}{\frm'_r}$, and thus
  $\config{\frm'_1}{\mtbl'}$ can step by \SC{Ret} accordingly:
  \[
    \config{\stk{(\env\ \done)}{\stk{(\env'\!+\!\last{\env}\ \ol{\st'})}{\frm'_r}}}{\mtbl'}
    \stepd
    \config{\stk{(\env'\!+\!\last{\env}\ \ol{\st'})}{\frm'_r}}{\mtbl'}.
  \]
  By \SC{HI} and the last premise of \SC{HI}, i.e.
  \typeinfstd{\stk{(\env'\ \st\,\ol\st)}{\frm_r}}{\stk{\ol{\ty'}}{\Delta_r}}
  where $\Delta = \stk{\ol{\ty'}}{\Delta_r}$,
  combined with $\config{\frm}{\mtbl} \stepdmul \config{\frm_1}{\mtbl}
    \stepd \config{\stk{(\env'\!+\!\last{\env}\ \ol\st)}{\frm_r}}{\mtbl}$,
  we can conclude $\SC{HI}'$:
  \[
    \typeinfstd{\stk{(\env'\!+\!\last{\env}\ \ol\st)}{\frm_r}}{\stk{\ol{\ty'}}{\Delta_r}}.
  \]
  By analyzing the derivation \SC{O-Stack} of \eqstd{\Delta}{\frm}{\frm'},
  we know that \eqstd{\Delta_r}{\frm_r}{\frm'_r} and
  \eqstd{\ol{\ty'}}{\env'\ \st\,\ol\st}{\env'\ \st'\,\ol{\st'}}.
  It is easy to see that
  \eqstd{\ol{\ty'}}{\env'\!+\!\last{\env}\ \ol\st}{\env'\!+\!\last{\env}\ \ol{\st'}}
  also holds by \SC{OI-Seq}, and thus we can conclude $\SC{HS}'$:
  \[
    \eqstd
      {\stk{\ol{\ty'}}{\Delta_r}}
      {\stk{(\env'\!+\!\last{\env}\ \ol\st)}{\frm_r}}
      {\stk{(\env'\!+\!\last{\env}\ \ol{\st'})}{\frm'_r}}.
  \]
  Putting it all together, by $\SC{HI}', \SC{HS}',$ and \SC{HM},
  we get \optconfig{\frm_2}{\mtbl}{\frm'_2}{\mtbl'}{\Delta}, i.e.:
  \[
    \optconfig
      {\stk{(\env'\!+\!\last{\env}\ \ol\st)}{\frm_r}}{\mtbl}
      {\stk{(\env'\!+\!\last{\env}\ \ol{\st'})}{\frm'_r}}{\mtbl'}
      {\stk{\ol{\ty'}}{\Delta_r}}.
  \]

  \emph{(b) If the top frame contains a non-empty sequence of instructions},
  i.e. $\ol{\st_p} = \st\,\ol\st$,
  \SC{HS} has the following form:
  \[
    \eqstd
        {\stk{(\ol\ty^\env \ty\,\ol\ty)}{\Delta}}
        {\stk{(\env\ \st \,\ol{\st })}{\frm }}
        {\stk{(\env\ \st'\,\ol{\st'})}{\frm'}},
  \]
  where $\frm_1 = \stk{(\env\ \st \,\ol{\st })}{\frm }$,
  $\frm'_1 = \stk{(\env\ \st'\,\ol{\st'})}{\frm'}$,
  and $\Delta_1 = \stk{(\ol\ty^\env \ty\,\ol\ty)}{\Delta}$.
  By analyzing the derivation \SC{O-Stack} of \SC{HS},
  we get two assumptions, \SC{HOpt} for optimization of the top-frame
  instructions,
  \eqstd{\ol\ty^\env \ty\,\ol\ty}{\st \,\ol{\st }}{\st'\,\ol{\st'}},
  and $\SC{HS}'$ for optimization of the residual stack,
  \eqstd{\Delta}{\frm}{\frm'}.
  The first premise of \SC{HOpt}, $\SC{HOpt}_0$, indicates optimization
  for the first instruction, \eqstd{\ol\ty^\env\ty\,\ol\ty}{\st}{\st'}.
  By case analysis on the step
  \config{\frm_1}{\mtbl} \stepd \config{\frm_2}{\mtbl},
  we can always find a corresponding step for \config{\frm'_1}{\mtbl'}:
  \begin{itemize}
    \item Case \SC{Prim}. Here, $\st = \ass\i\p$,
      and the configuration steps as:
      \[
        \config{\stk{(\env\ \st\,\ol\st)}{\frm}}{\mtbl} \stepd
        \config{\stk{(\env\!+\!\p\ \ol\st)}{\frm}}{\mtbl}.
      \]
      By analyzing $\SC{HOpt}_0$, we can see that the first instruction
      $\st' = \st$ by \SC{OI-Refl}, and thus the optimized configuration
      $\config{\frm'_1}{\mtbl'}$ has to step by \SC{Prim}:
      \[
        \config{\stk{(\env\ \st\,\ol{\st'})}{\frm'}}{\mtbl'} \stepd
        \config{\stk{(\env\!+\!\p\ \ol{\st'})}{\frm'}}{\mtbl'}.
      \]
      By analyzing \SC{HI} and recombining its premises with
      the \SC{Prim} step, we get:
      \[
        \typeinfstd
          {\stk{(\env\!+\!\p\ \ol{\st })}{\frm}}
          {\stk{(\ol\ty^{\env} \ty\,\ol\ty)}{\Delta}}.
      \]
      Since the rest of the instructions $\ol\st, \ol{\st'}$
      and stacks $\frm, \frm'$ did not change,
      putting it all together, by \SC{O-Stack} we get:
      \[
        \eqstd
            {\stk{(\ol\ty^{\env} \ty\,\ol\ty)}{\Delta}}
            {\stk{(\env\!+\!\p\ \ol{\st })}{\frm }}
            {\stk{(\env\!+\!\p\ \ol{\st'})}{\frm'}},
      \]
      and thus by \SC{O-Config}, we have the desired conclusion:
      \[
        \optconfig
          {\stk{(\env\!+\!\p\ \ol{\st})}{\frm }}{\mtbl}
          {\stk{(\env\!+\!\p\ \ol{\st'})}{\frm'}}{\mtbl'}
          {\stk{(\ol\ty^{\env} \ty\,\ol\ty)}{\Delta}}.
      \]

    \item Cases \SC{Reg}, \SC{New}, \SC{Field}, \SC{False1}, and \SC{False2}
      proceed similarly to \SC{Prim}.

    \item Case \SC{Disp}. Here, $\st = \cond\i\j{\call\m{\ol{\v\k}}}{\v\l}$,
      and the configuration steps as:
      \[
        \config{\stk{(\env\ \st\,\ol\st)}{\frm}}{\mtbl} \stepd
        \config{\stk{(\env'\ \ol{\st_b})}{\stk{(\env\ \st\,\ol\st)}{\frm}}}{\mtbl},
      \]
      where $\env' = \idx\env{\ol\k}$, $\ol\Ty = \typeof(\env')$, and
      $\ol{\st_b} = \body(\dispatch{\mtbl}{\m}{\ol\Ty})$.
      Note that because \config{\frm_1}{\mtbl} does make a step by assumption,
      we know that dispatch is defined for the given arguments.
      Let's denote the dispatch target in \mtbl as
      $\langle \meth\m{\ol{\ty_o}}{}\ol{\st_b},\ \ol{\ty_o} \rangle.$
      By the definition of dispatch, we know that $\ol\Ty <: \ol{\ty_o}$.
      Therefore, by well-formedness of \mtbl and monotonicity of $\typeinfop$,
      type inference succeeds
      $\typeinf{\mtbl}{\ol\Ty}{\ol{\st_b}} = \ol{\ty'}$
      (\mtbl contains only original methods, so $\origmtbl{\mtbl} = \mtbl$).
      Thus, we have:
      \[
        \typeinfstd
          {\stk{(\env'\ \ol{\st_b})}{\stk{(\env\ \st\,\ol\st)}{\frm}}}
          {\stk{(\ol\Ty\,\ol{\ty'})}{\stk{(\ol\ty^\env \ty\,\ol\ty)}{\Delta}}},
      \]
      where $\Delta'_1 =
      \stk{(\ol\Ty\,\ol{\ty'})}{\stk{(\ol\ty^\env \ty\,\ol\ty)}{\Delta}}$.
      Next, let us consider $\frm'_1$.
      There are two possibilities for $\st'$.
      \begin{enumerate}
        \item If $\SC{HOpt}_0$ is built with \SC{OI-Refl}, then
          $\frm'_1 = \stk{(\env\ \st\,\ol{\st'})}{\frm'}$
          where the first instruction is exactly the same dispatch call, i.e.
          $\st' = \st = \cond\i\j{\call\m{\ol{\v\k}}}{\v\l}$.
          The configuration
          \config{\stk{(\env\ \st\,\ol{\st'})}{\frm'}}{\mtbl'}
          could either step by \SC{Disp} or err if
          \dispatch{\mtbl'}{\m}{\ol\Ty} is undefined.
          Let us inspect the possibility of the latter.
          By \SC{HM}, $\mtbl'$ optimizes \mtbl, so $\mtbl'$ contains the same
          original method \msig{\m}{\ol{\ty_o}}. Thus, dispatch cannot fail
          due to the lack of applicable methods. As $\mtbl'$ is well-formed,
          we also know that if there is a specialized method instance,
          it is for the best original method and ambiguity is not possible.
          Thus, \dispatch{\mtbl'}{\m}{\ol\Ty} succeeds,
          and \config{\frm'_1}{\mtbl'} steps by \SC{Disp}:
          \[
            \config{\stk{(\env\ \st\,\ol{\st'})}{\frm'}}{\mtbl'} \stepd
            \config{\stk{(\env'\ \ol{\st'_b})}{\stk{(\env\ \st\,\ol{\st'})}{\frm'}}}{\mtbl'}.
          \]
          There are two possibilities for \ol{\st'_b}.
          (a) If $\mtbl'$ does not contain a specialization, then
          $\ol{\st'_b} = \ol{\st_b}$. By reflexivity of the optimization relation,
          we then have:
          \[
            \eqstd
              {\ol\Ty\,\ol{\ty'}}
              {\ol{\st_b}}
              {\ol{\st_b}}.
          \]
          (b) If $\mtbl'$ contains a specialized method instance
          $\langle \meth\m{\ol{\Ty}}{}\ol{\st'_b},\ \ol{\ty_o} \rangle$,
          then $\ol{\st'_b} \neq \done$ by the assumption on $\mtbl'$,
          and thus the desired relation is guaranteed by \SC{HM}:
          \[
            \eqstd
              {\ol\Ty\,\ol{\ty'}}
              {\ol{\st_b}}
              {\ol{\st'_b}}.
          \]

        \item If $\SC{HOpt}_0$ is built with \SC{OI-Direct}, then
          $\st'$ is a direct call, i.e.
          $\st' = \cond\i\j{\call{\msig{\m}{\ol{\Ty'}}}{\ol{\v k}}}{\v\l}$
          where $\ol{\Ty'} = \idx{\ol\ty^\env}{\ol\k}$,
          and method instance \msig{\m}{\ol{\Ty'}} is in $\mtbl'$.
          Note that by \SC{HI} and the soundness of type inference,
          we know that
          $\typeof(\idx\env{\ol\k}) <: \typeof(\idx{\ol\ty^\env}{\ol\k})$,
          i.e. $\ol{\Ty} <: \ol{\Ty'}$.
          Since both are concrete types, and concrete types do not have subtypes
          other than themselves, it has to be the case that $\ol{\Ty'} = \ol{\Ty}$
          and $\msig{\m}{\ol\Ty'} = \msig{\m}{\ol\Ty}$.
          Thus, \config{\frm'_1}{\mtbl'} calls \msig{\m}{\ol\Ty} and
          steps by \SC{Direct}:
          \[
            \config{\stk{(\env\ \st'\,\ol{\st'})}{\frm'}}{\mtbl'} \stepd
            \config{(\env'\ \stk{\ol{\st'_b})}{\stk{(\env\ \st'\,\ol{\st'})}{\frm'}}}{\mtbl'}.
          \]
          By \SC{OI-Direct}, \meth\m{\ol{\Ty}}{}\ol{\st'_b} specializes
          \msig{\m}{\ol{\ty_o}}. Since $\mtbl'$ does not have stubs,
          we have by \SC{HM}:
          \[
            \eqstd
              {\ol\Ty\,\ol{\ty'}}
              {\ol{\st_b}}
              {\ol{\st'_b}}.
          \]
      \end{enumerate}
      Because the new top frames $\env'\ \ol{\st_b}$ and
      $\env'\ \ol{\st'_b}$ are related in both cases, and the rest of
      the stacks did not change, we get that the entire stacks
      $\frm_2, \frm'_2$ are related:
      \[
        \eqstd
            {\stk{(\ol\Ty\,\ol{\ty'})}{\stk{(\ol\ty^\env \ty\,\ol\ty)}{\Delta}}}
            {\stk{(\env'\ \ol{\st_b })}{\stk{(\env\ \st \,\ol{\st })}{\frm }}}
            {\stk{(\env'\ \ol{\st'_b})}{\stk{(\env\ \st'\,\ol{\st'})}{\frm'}}}.
      \]
      And thus we get the desired:
      \[
        \optconfig{\frm_2}{\mtbl}{\frm'_2}{\mtbl'}{\Delta'_1}.
      \]

    \item Case \SC{Direct}. This case is not possible because
      \frm is obtained by running methods of \mtbl, and \mtbl consists of
      only original methods, which do not contain direct calls.
  \end{itemize}

  \textbf{(2) Backward direction.}
  This direction is similar to the forward direction. Because the structure
  of $\frm'_1$ matches the structure of $\frm_1$, we can always find
  the corresponding step for $\config{\frm_1}{\mtbl}$. The most interesting
  cases of
  \[
    \config{\frm'_1}{\mtbl'} \stepd \config{\frm'_2}{\mtbl'}
  \]
  are \SC{Disp} and \SC{Direct}, and some details on these are provided below.\\
  In both cases, we have
  $\frm_1 = \stk{(\env\ \st \,\ol{\st })}{\frm }$,
  $\frm'_1 = \stk{(\env\ \st'\,\ol{\st'})}{\frm'}$,
  and $\Delta_1 = \stk{(\ol\ty^\env \ty\,\ol\ty)}{\Delta}$,
  and we know that configurations step by making a function call.
  As a result, we get:
  \[
    \optconfig
      {\stk{(\env'\ \ol{\st_b })}{\stk{(\env\ \st \,\ol{\st })}{\frm }}}
      {\mtbl}
      {\stk{(\env'\ \ol{\st'_b})}{\stk{(\env\ \st'\,\ol{\st'})}{\frm'}}}
      {\mtbl'}
      {\stk{(\ol\Ty\,\ol{\ty'})}{\stk{(\ol\ty^\env \ty\,\ol\ty)}{\Delta}}},
  \]
  where $\ol{\ty'} = \typeinf{\mtbl}{\ol\Ty}{\ol{\st_b}}$ just like
  in the forward direction.

  \begin{itemize}
    \item Case \SC{Disp}. Here, $\st' = \cond\i\j{\call\m{\ol{\v\k}}}{\v\l}$,
      and the configuration steps as:
      \[
        \config{\stk{(\env\ \st'\,\ol{\st'})}{\frm'}}{\mtbl'} \stepd
        \config{\stk{(\env'\ \ol{\st'_b})}{\stk{(\env\ \st'\,\ol{\st'})}{\frm'}}}{\mtbl'},
      \]
      where $\env' = \idx\env{\ol\k}$, $\ol\Ty = \typeof(\env')$, and
      $\ol{\st'_b} = \body(\dispatch{\mtbl'}{\m}{\ol\Ty})$.
      There are two options for \ol{\st'_b}: it is the body of either the
      most applicable original method or its specialization in $\mtbl'$.
      Note that $\SC{HOpt}_0$ can be built only with \SC{OI-Refl},
      so $\st = \st'$. Because by \SC{HM}, $\mtbl = \origmtbl{\mtbl'}$,
      \mtbl has the same most applicable original method as $\mtbl'$,
      and thus \config{\frm_1}{\mtbl} steps by \SC{Disp}.
      By reasoning similar to the forward direction, we get that
      the new top frames (as well as entire configurations) are related
      by the optimization relation.

    \item Case \SC{Direct}. Here,
      $\st' = \cond\i\j{\call{\msig{\m}{\ol\Ty}}{\ol{\v k}}}{\v\l}$,
      and the argument about \SC{HI} and the soundness of type inference
      applies similarly to the case (2) of \SC{Disp} of the forward direction.
      As $\SC{HOpt}_0$ can be built only by \SC{OI-Direct}, we know
      $\st = \cond\i\j{\call\m{\ol{\v\k}}}{\v\l}$. Furthermore, dispatch is
      defined in \mtbl by \SC{OI-Direct}, and thus \config{\frm_1}{\mtbl}
      steps by \SC{Disp}. By reasoning similar to the forward direction,
      by \SC{HM}, we know that the instructions in the new top frames are
      related, and thus the entire configurations are also related.
  \end{itemize}
}
\end{proof}

\begin{theorem}[Correctness of Optimized Method Table]\label{thm:disp-equiv}
  For any well-formed method tables \mtbl and $\mtbl'$ where (1) \eqmtbld,
  (2) table $\mtbl'$ does not have stubs,
  and (3) $\langle \meth{\main}{\done}{}{\ol\st}, \done \rangle \in \mtbl$,
  \[
    \config{\done\ \ol\st}{\mtbl} \stepdmul \config{\env\ \done}{\mtbl}
    \ \ \iff\ \
    \config{\done\ \ol\st}{\mtbl'} \stepdmul \config{\env\ \done}{\mtbl'},
  \]
  i.e. program \ol\st runs to the same final value environment in both tables.
\end{theorem}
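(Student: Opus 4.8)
The plan is to obtain this theorem as a corollary of the bisimulation \lemref{lem:bisim-disp}, which already does the heavy lifting. The crucial simplifying observation is that under the dispatch semantics \stepd the compiler \jit is the identity, so neither \mtbl nor $\mtbl'$ ever changes during a run; hence the hypotheses of \lemref{lem:bisim-disp} (both tables well-formed, $\mtbl'$ stub-free) are automatically invariant along any execution. The only genuinely new work is (a) exhibiting an initial stack typing that relates the two starting configurations, and (b) reading equality of the final environments off the shape of the optimization relation at termination.

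First I would build the initial relation $\optconfig{\frm_0}{\mtbl}{\frm_0}{\mtbl'}{\Delta_0}$ for the common starting frame $\frm_0 = \stk{(\done\ \ol\st)}{\done}$ via \SC{O-Config}. I take $\Delta_0 = \stk{\ol\ty_0}{\done}$ with $\ol\ty_0 = \typeinf{\mtbl}{\done}{\ol\st}$; inference succeeds because \main is an original method of the well-formed \mtbl (property (4) of \defref{ref:def-well-formed}, using $\origmtbl{\mtbl}=\mtbl$ since \mtbl is original). The method-table premise $\eqmtbld$ is hypothesis (1). The stack premise $\eqstd{\Delta_0}{\frm_0}{\frm_0}$ follows from \SC{O-StackEmpty} and \SC{O-Stack}, where $\eqstd{\ol\ty_0}{\ol\st}{\ol\st}$ is obtained by \SC{OI-Seq} over \SC{OI-Refl}. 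The inference premise $\typeinfstd{\frm_0}{\Delta_0}$ holds by \SC{I-Stack} atop \SC{I-StackEmpty}: the residual stack is \done, so the reachability premise is vacuous, the argument environment is empty, the body reduction is the zero-step reduction, and the side-conditions collapse to $\done <: \done$ and $\ol\ty_0 <: \ol\ty_0$.

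Next I would lift the single-step bisimulation to \stepdmul. It is cleanest to prove the generalized claim by induction on the number of \stepd-steps: whenever $\optconfig{\frm_1}{\mtbl}{\frm'_1}{\mtbl'}{\Delta_1}$ and $\config{\frm_1}{\mtbl} \stepdmul \config{\env\ \done}{\mtbl}$, then $\config{\frm'_1}{\mtbl'} \stepdmul \config{\env\ \done}{\mtbl'}$, and symmetrically. In the inductive step I apply the forward (resp. backward) part of \lemref{lem:bisim-disp} to the first step, obtaining a matching step on the other side together with a related successor pair, to which the induction hypothesis applies; the base case is immediate. Instantiating at the initial related pair from the previous paragraph gives both directions of the stated biconditional, since the forward part of the lemma handles $(\Rightarrow)$ and the backward part handles $(\Leftarrow)$.

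Finally I would extract equality of the final environments. At termination the left configuration is $\config{\env\ \done}{\mtbl}$, and the preserved relation yields $\eqstd{\Delta_N}{(\env\ \done)}{\frm'_N}$. A single-frame stack over the empty stack can only be matched by \SC{O-Stack}, whose conclusion forces $\frm'_N = \stk{(\env\ \ol{\st'})}{\done}$ with the \emph{same} environment \env (the residual stack collapses to \done by \SC{O-StackEmpty}, and $\eqstd{\ol\ty_N}{\done}{\ol{\st'}}$ forces $\ol{\st'} = \done$ by the empty instance of \SC{OI-Seq}). Hence $\frm'_N = \env\ \done$, so the primed run terminates in the identical value environment \env. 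I expect the main obstacle to be purely bookkeeping in the second paragraph, namely assembling the \SC{I-Stack}/\SC{I-StackEmpty} witness for $\Delta_0$ and confirming that the bisimulation's invariants (stub-freeness of $\mtbl'$ and well-formedness of both tables) really do persist, which is sound precisely because \stepd leaves the method tables untouched.
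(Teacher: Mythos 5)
Your proposal is correct and follows essentially the same route as the paper: establish the initial \SC{O-Config} relation by reflexivity of the optimization relation and well-formedness of \mtbl, iterate \lemref{lem:bisim-disp} along the execution (valid because \stepd never modifies the tables), and read off equality of the final environments from \SC{O-Stack}. You merely spell out in more detail the construction of the initial stack typing via \SC{I-Stack}/\SC{I-StackEmpty} and the induction on the step count, which the paper leaves implicit.
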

\begin{proof}
\PAPERVERSION{
  This is a corollary of \lemref{lem:bisim-disp};
  detailes are in~\cite{oopsla21jules:arx}.
}{
  This is a corollary of \lemref{lem:bisim-disp}.
  By reflexivity of the optimization relation and well-formedness of \mtbl, we know:
  \[
    \optconfig{\ol\st}{\mtbl}{\ol\st}{\mtbl'}{\ol\ty},
  \]
  where $\ol\ty = \typeinf{\mtbl}{\done}{\ol\st}$.
  Thus, the bisimulation lemma is applicable: if one of the configurations
  can make a step, so does the other, and the step leads to a pair of
  related configurations. Since method tables did not change,
  the lemma can be applied again to these configurations, and so on.
  If the program terminates and does not err,
  both sides arrive to final configurations where
  environments are guaranteed to coincide by \SC{O-Stack}.
}
\end{proof}

Next, we show that compilation as defined in \figref{fig:compile-wd}
produces optimized code in an optimized method table
according to \figref{fig:opt-config}.

\begin{theorem}[Compilation Satisfies Optimization Relation]\label{thm:compilation-correct}
  For any well-formed method tables \mtbl and $\mtbl'$, typing \ol\ty, and
  instruction \st, such that
  \[
    \eqmtbld \ \ \land\ \ \compst{\ol\ty}{\st}{\mtbl'}{}{\st''}{\mtbl''}{},
  \]
  it holds that:
  \begin{enumerate}
    \item $\eqst{\ol\ty}{\mtbl}{\mtbl''}{\st}{\st''},$
    \item $\eqmtbl{\mtbl}{\mtbl''},$
    \item and the optimization relation on $\mtbl, \mtbl'$
      is preserved on $\mtbl, \mtbl''$:\\
      $\forall \ol{\ty^x}, \st^x, \st^y.\quad
      \eqst{\ol{\ty^x}}{\mtbl}{\mtbl' }{\st^x}{\st^y} \ \ \implies\ \
      \eqst{\ol{\ty^x}}{\mtbl}{\mtbl''}{\st^x}{\st^y}$.
  \end{enumerate}
\end{theorem}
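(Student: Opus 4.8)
The plan is to induct on the derivation of \compst{\ol\ty}{\st}{\mtbl'}{}{\st''}{\mtbl''}{}, with conclusions (1)--(3) playing exactly the roles that facts (1)--(3) play in the proof of \lemref{lem:max-devirt}. The three cases that do not grow the table are immediate. For \SC{C-NoDisp} and \SC{C-Disp} we have $\st'' = \st$ and $\mtbl'' = \mtbl'$, so (1) follows from \SC{OI-Refl}, (2) is the hypothesis \eqmtbld, and (3) is trivial. For \SC{C-Direct} again $\mtbl'' = \mtbl'$, so (2) and (3) are immediate, and for (1) I would build \eqst{\ol\ty}{\mtbl}{\mtbl'}{\st}{\st''} by \SC{OI-Direct}; its side condition \signature(\dispatch\mtbl\m{\ol\Ty}) = \origsignature(\idx{\mtbl'}{\msig\m{\ol\Ty}}) is discharged by noting that $\mtbl = \origmtbl{\mtbl'}$ (from \eqmtbld), so that dispatching the concrete \ol\Ty in \mtbl lands on the same original method whose signature the matching instance in \mtbl' records.

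The real work is \SC{C-Instance}, where a fresh specialization is compiled. First I would observe that $\mtbl_0 = \mtbl' + \langle\msig\m{\ol\Ty}\done,\ol{\ty'}\rangle$ still satisfies \eqmtbl{\mtbl}{\mtbl_0}: adding a stub leaves $\origmtbl{\mtbl_0} = \origmtbl{\mtbl'} = \mtbl$ untouched, and \SC{O-Table} imposes no obligation on stubs because of its premise $\ol{\st'} \neq \done$. As in \lemref{lem:wf}, $\mtbl_0$ is also well-formed, using that dispatch returns the most applicable method to conclude $\msig\m{\ol\Ty}\notin\mtbl'$. Hence the induction hypothesis applies to \compst{\ol\Ty\,\ol{\ty''}}{\st_0}{\mtbl_0}{\S_0}{\st'_0}{\mtbl_1}{\S_1} against the reference \mtbl; its fact (2) gives \eqmtbl{\mtbl}{\mtbl_1}, which together with well-formedness of $\mtbl_1$ lets me reapply it to $\st_1$, threading through to \eqmtbl{\mtbl}{\mtbl_{n+1}}. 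To close (2) for \mtbl'' I must verify the new \SC{O-Table} obligation for the instance \msig\m{\ol\Ty}, namely \eqst{\ol\Ty\,\ol{\ty''}}{\mtbl}{\mtbl''}{\ol\st}{\ol{\st'}} (note $\origmtbl{\mtbl''} = \mtbl$, so inference still yields \ol{\ty''}); I would obtain this by taking the per-instruction facts (1), \eqst{\ol\Ty\,\ol{\ty''}}{\mtbl}{\mtbl_{i+1}}{\st_i}{\st'_i}, pushing each forward to \mtbl'' using the preservation facts (3), and assembling the body relation by \SC{OI-Seq}. Conclusion (1) is then \SC{OI-Direct} for the inserted instance, whose recorded original signature \ol{\ty'} matches \dispatch\mtbl\m{\ol\Ty} exactly as in the \SC{C-Direct} argument.

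Conclusion (3) is the step I expect to be the main obstacle, and it must be handled before (2) can be closed, since (2) relies on pushing relations forward along the chain $\mtbl' \to \mtbl_0 \to \cdots \to \mtbl_{n+1} \to \mtbl''$. Given an arbitrary related pair \eqst{\ol{\ty^x}}{\mtbl}{\mtbl'}{\st^x}{\st^y}, I would first show it survives the insertion of the stub by case analysis on its derivation: \SC{OI-Refl} is insensitive to the right-hand table, and for \SC{OI-Direct} the only table-dependent premises are that the targeted instance still be present (appending cannot remove it, and the stub introduces a genuinely new signature) and that \dispatch\mtbl pick the same method (the left table \mtbl never changes). I would then chain the induction's facts (3) across $\mtbl_0,\dots,\mtbl_{n+1}$, and finally cross from $\mtbl_{n+1}$ to \mtbl'' by the same case analysis, since overwriting a stub's body deletes no instance, alters no $\origmtbl$, and changes no dispatch outcome in \mtbl. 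The delicate point throughout is that the \SC{OI-Direct} side condition couples dispatch in the \emph{fixed} table \mtbl with the \emph{recorded} original-signature tag of an instance in the \emph{growing} table; at each extension I must check that the targeted instance persists and that its tag is never rewritten, which holds precisely because compilation only appends instances and overwrites stub bodies in place.
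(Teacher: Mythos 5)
Your proposal is correct and follows essentially the same route as the paper's proof: induction on the compilation derivation, with the three table-preserving cases dispatched by \SC{OI-Refl}/\SC{OI-Direct} and the assumption \eqmtbld, and the \SC{C-Instance} case handled by inserting the stub, chaining the induction hypothesis along $\mtbl_0,\dots,\mtbl_{n+1}$ using facts (2) and \lemref{lem:wf}, and pushing the per-instruction relations forward to $\mtbl''$ via facts (3) before assembling the body relation with \SC{OI-Seq} and closing (1) by \SC{OI-Direct}. The paper's argument differs only in presentation detail, e.g.\ it splits the \SC{C-Direct} side condition explicitly into the original-method and compiled-instance subcases.
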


\begin{proof}
  By induction on the derivation of
  \compst{\ol\ty}{\st}{\mtbl'}{}{\st''}{\mtbl''}{}.
  Similar to the proof of \lemref{lem:max-devirt} on maximal devirtualization,
  the most interesting case is \SC{C-Instance} where
  compilation of additional instances happens.
\PAPERVERSION{
  The key step of the proof is to observe that the induction hypothesis
  is applicable to
  \compst{\ol\Ty\,\ol{\ty''}}{\st_0}{\mtbl_0}{\S_0}{\st'_0}{\mtbl_1}{\S_1},
  and then, using facts (2) and \lemref{lem:wf}, to all
  \compst{\ol\Ty\,\ol{\ty''}}{\st_i}{\mtbl_i}{\S_0}{\st'_i}{\mtbl_{i+1}}{\S_{i+1}}.
  Using facts (3), we can propagate the information about optimization
  of $\st_i, \st'_i$ and respective tables $\mtbl_i$ to the final $\mtbl''$.
  A complete proof is available in the extended version~\cite{oopsla21jules:arx}.
}{
  \begin{itemize}
    \item Cases \SC{C-NoDisp} and \SC{C-Disp} are straightforward:
      by reflexivity of the optimization relation, we immediately get
      $\eqst{\ol\ty}{\mtbl}{\mtbl''}{\st}{\st}$;
      by assumption \eqmtbld and $\mtbl'' = \mtbl'$,
      we also have \eqmtbl{\mtbl}{\mtbl''};
      property (3) also holds trivially because of $\mtbl'' = \mtbl'$.

    \item Case \SC{C-Direct}. Here, $\st = \cond\i\j{\call\m{\ol{\v\k}}}{\v\l}$
      and $\st'' = \cond\i\j{\call{\msig{\m}{\ol\Ty}}{\ol{\v k}}}{\v\l}$.
      Since $\mtbl'' = \mtbl'$, we have \eqmtbl{\mtbl}{\mtbl''} by assumption,
      and (3) holds trivially.\\
      If \msig\m{\ol\Ty} is an original method, then by
      $\origmtbl{\mtbl'} = \mtbl$ (because of \eqmtbld) and the properties
      of dispatch, \msig\m{\ol\Ty} has to be the method returned by
      \dispatch{\mtbl}{\m}{\ol\Ty}. Therefore,
      $\signature(\dispatch{\mtbl}{\m}{\ol\Ty}) = \ol\Ty =
        \signature(\idx{\mtbl''}{\msig\m{\ol\Ty}}) =
        \origsignature(\idx{\mtbl''}{\msig\m{\ol\Ty}}).$\\
      If \msig\m{\ol\Ty} is a compiled instance in $\mtbl'$, then by
      well-formedness of $\mtbl'$, we know that its original method is the
      most applicable method in $\origmtbl{\mtbl'}$.
      Since $\origmtbl{\mtbl'} = \mtbl$, we get
      $\signature(\dispatch{\mtbl}{\m}{\ol\Ty}) =
        \origsignature(\idx{\mtbl''}{\msig\m{\ol\Ty}})$.\\
      Thus, for both original and compiled \msig\m{\ol\Ty}, we have
      $\eqst{\ol\ty}{\mtbl}{\mtbl''}{\st}{\st''}$ by \SC{OI-Direct}.

    \item Case \SC{C-Instance}. Here, $\st = \cond\i\j{\call\m{\ol{\v\k}}}{\v\l}$
      and $\st'' = \cond\i\j{\call{\msig{\m}{\ol\Ty}}{\ol{\v k}}}{\v\l}$
      like in previous case, but $\mtbl''$ is obtained by compiling the body
      of method \meth{\m}{\ol{\ty'}}{}{\ol{\st}} that is a dispatch target
      of \dispatch{\mtbl'}{\m}{\ol\Ty} where $\ol{\ty'} \neq \ol\Ty$.
      Because of the latter condition, we know that \msig\m{\ol{\ty'}} has to
      be an original method in $\mtbl'$.
      Since $\origmtbl{\mtbl'} = \mtbl$, \msig\m{\ol{\ty'}}
      is also the most applicable method in \mtbl, so it has to be
      returned by \dispatch\mtbl{\m}{\ol\Ty}.\\
      Now, let us consider
      $\mtbl_0 = \mtbl' + \langle \msig\m{\ol{\Ty}}\done, \ol{\ty'} \rangle$.
      Since $\msig\m{\ol\Ty} \notin \mtbl'$ and $\mtbl_0$ only adds a new stub
      of a compiled instance to the well-formed $\mtbl'$, we know
      $\mathit{WF}(\mtbl_0)$. Furthermore, because \eqmtbld and $\mtbl_0$
      adds the stub without modifying anything else in $\mtbl'$,
      it is easy to show that all instruction optimizations
      $\eqst{\ol{\ty^x}}{\mtbl}{\mtbl' }{\st^x}{\st^y}$ are preserved
      by $\mtbl_0$, that is $\eqst{\ol{\ty^x}}{\mtbl}{\mtbl_0}{\st^x}{\st^y}$,
      and thus \eqmtbl{\mtbl}{\mtbl_0}.
      Using the result of type inference on the original body in \origmtbl{\mtbl'},
      i.e. $\ol{\ty''} = \typeinf{\mtbl}{\ol\Ty}{\ol{\st}}$,
      we can apply the induction hypothesis to
      \compst{\ol\Ty\,\ol{\ty''}}{\st_0}{\mtbl_0}{}{\st'_0}{\mtbl_1}{},
      which gives us:
      \begin{itemize}
        \item $\eqst{\ol\Ty\,\ol{\ty''}}{\mtbl}{\mtbl_1}{\st_0}{\st'_0},$
        \item $\eqmtbl{\mtbl}{\mtbl_1},$
        \item $\forall \ol{\ty^x}, \st^x, \st^y.\quad
          \eqst{\ol{\ty^x}}{\mtbl}{\mtbl_0}{\st^x}{\st^y} \ \ \implies\ \
          \eqst{\ol{\ty^x}}{\mtbl}{\mtbl_1}{\st^x}{\st^y}$.
      \end{itemize}
      The fact that \eqmtbl{\mtbl}{\mtbl_1} and $\mathit{WF}(\mtbl_1)$
      by \lemref{lem:wf}, lets us apply the induction
      hypothesis to the next instruction $\st_1$ of the method body $\ol\st$,
      which produces:
      \begin{itemize}
        \item $\eqst{\ol\Ty\,\ol{\ty''}}{\mtbl}{\mtbl_2}{\st_1}{\st'_1},$
        \item $\eqmtbl{\mtbl}{\mtbl_2},$
        \item $\forall \ol{\ty^x}, \st^x, \st^y.\quad
          \eqst{\ol{\ty^x}}{\mtbl}{\mtbl_1}{\st^x}{\st^y} \ \ \implies\ \
          \eqst{\ol{\ty^x}}{\mtbl}{\mtbl_2}{\st^x}{\st^y}$.
      \end{itemize}
      Now we can apply the latter to
      $\eqst{\ol\Ty\,\ol{\ty''}}{\mtbl}{\mtbl_1}{\st_0}{\st'_0}$,
      which gives us $\eqst{\ol\Ty\,\ol{\ty''}}{\mtbl}{\mtbl_2}{\st_0}{\st'_0}$.\\
      Proceeding in this manner, we get:
      \[
        \eqst{\ol\Ty\,\ol{\ty''}}{\mtbl}{\mtbl_{n+1}}{\ol\st}{\ol{\st'}}
      \]
      and
      \[
        \forall \ol{\ty^x}, \st^x, \st^y.\quad
          \eqst{\ol{\ty^x}}{\mtbl}{\mtbl'}{\st^x}{\st^y} \ \ \implies\ \
          \eqst{\ol{\ty^x}}{\mtbl}{\mtbl_{n+1}}{\st^x}{\st^y}.
      \]
      Finally, let us look at $\mtbl''$. Its only difference from $\mtbl_{n+1}$
      is the non-stub body \ol{\st'} for \msig\m{\ol\Ty}.
      Thus, all \eqst{\ol{\ty^x}}{\mtbl}{\mtbl_{n+1}}{\st^x}{\st^y}
      are trivially preserved by $\mtbl''$, which gives us
      \[
        \forall \ol{\ty^x}, \st^x, \st^y.\quad
          \eqst{\ol{\ty^x}}{\mtbl}{\mtbl' }{\st^x}{\st^y} \ \ \implies\ \
          \eqst{\ol{\ty^x}}{\mtbl}{\mtbl''}{\st^x}{\st^y}
      \]
      and
      \[
        \eqst{\ol\Ty\,\ol{\ty''}}{\mtbl}{\mtbl''}{\ol\st}{\ol{\st'}}.
      \]
      The latter lets us conclude \eqmtbl{\mtbl}{\mtbl''}.\\
      Finally, because $\signature(\dispatch\mtbl{\m}{\ol\Ty}) = \ol{\ty'}$
      and \msig\m{\ol\Ty} optimizes \msig\m{\ol{\ty'}} in $\mtbl''$,
      we get \[\signature(\dispatch{\mtbl}{\m}{\ol\Ty}) =
      \origsignature(\idx{\mtbl''}{\msig\m{\ol\Ty}})\]
      and conclude $\eqst{\ol\ty}{\mtbl}{\mtbl''}{\st}{\st''}$ by \SC{OI-Direct}.
  \end{itemize}
}
\end{proof}

Finally, using an auxiliary lemma about preserving stub methods
during compilation, we can show that the JIT-compilation semantics
is equivalent to the dispatch semantics.
\begin{lemma}[Preserving Stubs]\label{lem:comp-stubs}
  For any well-formed method tables $\mtbl'$ and $\mtbl''$, typing \ol\ty, and
  instruction \st, such that
  \[
    \compst{\ol\ty}{\st}{\mtbl'}{}{\st''}{\mtbl''}{},
  \]
  it holds that
  \[
    \{ \ol\Ty\ |\ \meth{\m}{\ol\Ty}{}{\done} \in \mtbl'  \} =
    \{ \ol\Ty\ |\ \meth{\m}{\ol\Ty}{}{\done} \in \mtbl'' \},
  \]
  i.e. the set of stubbed method instances is preserved by a compilation step.
\end{lemma}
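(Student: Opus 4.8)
The plan is to proceed by induction on the derivation of $\compst{\ol\ty}{\st}{\mtbl'}{}{\st''}{\mtbl''}{}$, mirroring the inductive structure used in the proof of \lemref{lem:wf}. In the three rules \SC{C-NoDisp}, \SC{C-Disp}, and \SC{C-Direct}, the method table is left untouched ($\mtbl'' = \mtbl'$), so the set of stubbed instances is preserved trivially, and the entire argument reduces to the single case \SC{C-Instance}.

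For \SC{C-Instance} I would track the stub set across the three edits that the rule performs on the table. The intermediate table $\mtbl_0 = \mtbl' + \langle \msig\m{\ol{\Ty}}\done, \ol{\ty'} \rangle$ inserts a fresh stub for $\msig\m{\ol\Ty}$; reusing the fact from the proof of \lemref{lem:wf} that $\ol\Ty \neq \ol{\ty'}$ together with well-formedness of $\mtbl'$ forces $\msig\m{\ol\Ty} \notin \mtbl'$, the stub set of $\mtbl_0$ is exactly that of $\mtbl'$ extended with the single new element $\msig\m{\ol\Ty}$. The body is then compiled through the chain $\mtbl_0,\ldots,\mtbl_{n+1}$ of recursive subderivations; since \lemref{lem:wf} guarantees each intermediate table is well-formed, the induction hypothesis applies at every step, and the stub set is preserved along the chain, so the stubs of $\mtbl_{n+1}$ coincide with those of $\mtbl_0$. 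Finally, $\mtbl'' = \mtbl_{n+1} + \langle \meth\m{\ol\Ty}{\ol{\v k}} \ol{\st'},\ol{\ty'} \rangle$ overwrites the $\msig\m{\ol\Ty}$ stub with the compiled body $\ol{\st'}$, which is non-empty because the original body $\ol\st$ is non-empty by well-formedness and compilation is instruction-for-instruction. Hence the stubs of $\mtbl''$ are those of $\mtbl_{n+1}$ with $\msig\m{\ol\Ty}$ deleted, which by the chain of equalities collapses back to the stub set of $\mtbl'$.

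The main obstacle I anticipate is the bookkeeping that makes the final overwrite exactly cancel the initial stub insertion, which hinges on two facts: that $\msig\m{\ol\Ty}$ is \emph{not} already a stub of $\mtbl'$, and that it \emph{is} still a stub in $\mtbl_{n+1}$. The former is precisely the $\msig\m{\ol\Ty}\notin\mtbl'$ fact borrowed from \lemref{lem:wf}; without it the final set difference would not simplify to the stub set of $\mtbl'$. The latter follows from the induction hypothesis, which preserves the \emph{whole} stub set at each recursive step; as supporting intuition, one can also observe that any recursive dispatched call resolving to $\msig\m{\ol\Ty}$ is compiled by \SC{C-Direct} (since that signature is already present), never by \SC{C-Instance}, so the stub is never recompiled away. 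Once these two facts are in place, the result is immediate set arithmetic.
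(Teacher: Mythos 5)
Your proof is correct and follows essentially the same route as the paper's: induction on the compilation derivation, with all cases but \SC{C-Instance} trivial, and for \SC{C-Instance} tracking the stub set as $\S_0 = \S' \cup \{\ol\Ty\}$, preserved through the chain $\mtbl_0,\ldots,\mtbl_{n+1}$ by the induction hypothesis (enabled by \lemref{lem:wf}), and finally cancelled by the overwrite so that $\S'' = \S_{n+1} \setminus \{\ol\Ty\} = \S'$. Your explicit remarks on why $\msig\m{\ol\Ty}\notin\mtbl'$ and why the final body is non-empty are welcome details the paper leaves implicit.
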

\begin{proof}
  By induction on the derivation of
  \compst{\ol\ty}{\st}{\mtbl'}{}{\st''}{\mtbl''}{},
  similar to the proof of \lemref{lem:wf} on well formedness.
\PAPERVERSION{
See \cite{oopsla21jules:arx}.
}{

  All cases except for \SC{C-Instance} are trivial because the method table
  does not change. For \SC{C-Instance}, let's denote the set of stubs in
  $\mtbl'$ by $\S'$, that is:
  \[
    \S' = \{ \ol\Ty\ |\ \meth{\m}{\ol\Ty}{}{\done} \in \mtbl'  \}.
  \]
  Since $\msig\m{\ol\Ty} \notin \mtbl'$ and
  $\mtbl_0 = \mtbl' + \langle \msig\m{\ol{\Ty}}\done, \ol{\ty'} \rangle$,
  we have $\S_0 = \S' \cup \{\ol\Ty\}$.
  By applying the induction hypothesis to
  \compst{\ol\Ty\,\ol{\ty''}}{\st_0}{\mtbl_0}{}{\st'_0}{\mtbl_1}{},
  we know that the set $\S_1$ of stubs of $\mtbl_1$ is the same as~$\S_0$,
  and $\mtbl_1$ is well-formed by \lemref{lem:wf}.
  Proceeding by applying the induction hypothesis to compilation of
  all $\st_i$, we get that:
  \[\S_{n+1} = \S_0 = \S' \cup \{\ol\Ty\}.\]
  Finally, the only difference between $\mtbl''$ and $\mtbl_{n+1}$
  is that the stub for \msig\m{\ol\Ty} is replaced by an actual method body.
  Therefore, we get the desired property:
  \[ \S'' = \S_{n+1} \setminus \{\ol\Ty\} = \S'. \]
}
\end{proof}

\begin{lemma}[Bisimulation of Related Configurations with Dispatch and JIT Semantics]\label{lem:bisim-disp-jit}
  For any well-formed method tables \mtbl and $\mtbl'$
  where $\mtbl'$ does not have stubs,
  any frame stacks $\frm_1$, $\frm'_1$, and stack typing $\Delta_1$,
  such that
$  \optconfig{\frm_1}{\mtbl}{\frm'_1}{\mtbl'}{\Delta_1}$,
  the following holds:
  \begin{enumerate}
    \item Forward direction:
      \[
      \begin{array}{l}
        \config{\frm_1}{\mtbl} \stepd \config{\frm_2}{\mtbl}
        \implies
        \exists \frm'_2, \mtbl'', \Delta'_1.\quad
          \config{\frm'_1}{\mtbl'} \stepj \config{\frm'_2}{\mtbl''}\ \ \land\ \
          \optconfig{\frm_2}{\mtbl}{\frm'_2}{\mtbl''}{\Delta'_1}.
      \end{array}
      \]
    \item Backward direction:
      \[
      \begin{array}{l}
        \config{\frm'_1}{\mtbl'} \stepj \config{\frm'_2}{\mtbl''}
        \implies
        \exists \frm_2, \Delta'_1.\quad
          \config{\frm_1}{\mtbl} \stepd \config{\frm_2}{\mtbl}\ \ \land\ \
          \optconfig{\frm_2}{\mtbl}{\frm'_2}{\mtbl''}{\Delta'_1}.\qquad
      \end{array}
      \]
  \end{enumerate}
  Furthermore, $\mtbl''$ is well-formed and does not have stubs.
\end{lemma}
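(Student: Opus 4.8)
The plan is to mirror the proof of \lemref{lem:bisim-disp}, proceeding by case analysis on the derivation of the configuration-optimization relation \optconfig{\frm_1}{\mtbl}{\frm'_1}{\mtbl'}{\Delta_1} (which yields the three assumptions \SC{HI}, \SC{HS}, and \SC{HM} corresponding to the premises of \SC{O-Config}) and then on the execution step. The single new phenomenon relative to \lemref{lem:bisim-disp} is that a \SC{Disp} step taken under the \stepj semantics invokes \jit and may extend the method table on the optimized side; every other rule leaves both tables untouched, so those cases are handled exactly as before, taking $\mtbl'' = \mtbl'$ and re-establishing \SC{HI} and \SC{HS} verbatim from the earlier proof. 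In all such cases $\mtbl''=\mtbl'$ is well-formed and stub-free by hypothesis, discharging the ``furthermore'' clause.

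\textbf{Steps that preserve the table.} For \SC{Prim}, \SC{Reg}, \SC{New}, \SC{Field}, \SC{False1}, \SC{False2}, and \SC{Ret}, the matching instruction on the optimized side is identical (built by \SC{OI-Refl}), so the optimized configuration takes the corresponding step with $\mtbl'' = \mtbl'$, and the residual stack typing and frame relations are reconstructed as in \lemref{lem:bisim-disp}. A \SC{Direct} step on the optimized side, possible only when the matching instruction was produced by \SC{OI-Direct}, also leaves the table fixed: by soundness of type inference the runtime argument types $\ol\Ty = \typeof(\env')$ refine the statically recorded ones, and since both are concrete they coincide, so the dispatch side answers with \SC{Disp} to the same original method whose compiled instance, related through \SC{HM} ($\eqmtbl{\mtbl}{\mtbl'}$), supplies the optimized body.

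\textbf{The compilation step.} The crux is the forward \SC{Disp} case where the optimized side must compile. When the matching instruction is the same dispatched call (\SC{OI-Refl}) and $\msig\m{\ol\Ty} \notin \mtbl'$, the \stepj step sets $\mtbl'' = \jit(\mtbl', \m, \ol\Ty)$; since $\msig\m{\ol\Ty}\notin\mtbl'$, dispatch returns an original method, matching the side conditions of \SC{C-Instance}. Choosing a register typing \ol\ty that records the argument types $\ol\Ty$ at positions \ol\k, this table update coincides with the one produced by a \SC{C-Instance} step $\compst{\ol\ty}{\st}{\mtbl'}{}{\st''}{\mtbl''}{}$ on the dispatched call, because \figref{fig:compile-wd} inlines \jit into the compilation relation. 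Then \thmref{thm:compilation-correct} yields $\eqmtbl{\mtbl}{\mtbl''}$, \lemref{lem:wf} gives $\mathit{WF}(\mtbl'')$, and \lemref{lem:comp-stubs} shows the stub set is preserved, so $\mtbl''$ stays stub-free. Because $\ol\Ty$ is concrete, dispatch in $\mtbl''$ selects the freshly compiled instance, whose body optimizes the original body run on the dispatch side by \SC{O-Table}; the new top frames are thus related by \SC{O-Stack} via \SC{OI-Seq}, and \SC{HI} for the pushed frame follows from monotonicity and soundness of type inference together with $\origmtbl{\mtbl''} = \mtbl$, exactly as in the \SC{Disp} case of \lemref{lem:bisim-disp}. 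If instead $\msig\m{\ol\Ty} \in \mtbl'$, then \jit is inert (\SC{CompileFalse}), $\mtbl'' = \mtbl'$, and \SC{HM} provides the body relation directly.

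\textbf{Backward direction and main obstacle.} The backward direction mirrors the forward one: a \stepj step by \SC{Disp} forces the source instruction to be the same dispatched call (\SC{OI-Refl}), and the dispatch side answers with \SC{Disp} (with \jit as the identity, leaving \mtbl fixed), while a \stepj step by \SC{Direct} forces an \SC{OI-Direct} source and again a dispatch-side \SC{Disp}; the table-extension bookkeeping for the \SC{Disp} subcase reuses \thmref{thm:compilation-correct}, \lemref{lem:wf}, and \lemref{lem:comp-stubs} identically. I expect the main obstacle to be precisely this bookkeeping: after the JIT extends the optimized table I must re-verify all three invariants of \SC{O-Config} for $\mtbl''$ — that $\mtbl''$ optimizes the unchanged \mtbl, remains well-formed, and remains stub-free — and that dispatch in the extended table resolves to an instance of the same original method the dispatch side selects. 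Establishing $\eqmtbl{\mtbl}{\mtbl''}$ through the \jit-to-\SC{C-Instance} correspondence, rather than re-deriving it inline, is what keeps this case tractable and factors cleanly through the already-proved compilation-correctness results.
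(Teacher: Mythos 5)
Your proposal is correct and follows essentially the same route as the paper: case analysis on the optimization relation and on the step, reducing everything to \lemref{lem:bisim-disp} except the \SC{Disp} case where \jit fires, and there discharging the three \SC{O-Config} invariants for the extended table via \thmref{thm:compilation-correct}, \lemref{lem:wf}, and \lemref{lem:comp-stubs} --- exactly the three results the paper invokes. Your extra remark about choosing a register typing to align the operational \jit call with a \SC{C-Instance} derivation of \figref{fig:compile-wd} makes explicit a correspondence the paper leaves implicit, but does not change the argument.
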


\begin{proof}
  By case analysis on the derivation of optimization
  \optconfig{\frm_1}{\mtbl}{\frm'_1}{\mtbl'}{\Delta_1}
  and case analysis on the step (\stepd for the forward and \stepj
  for the backward direction), similarly to the proof of
  bisimulation for the dispatch semantics (\lemref{lem:bisim-disp}).
  The only difference appears in cases where \config{\frm'_1}{\mtbl'}
  steps by \SC{Disp}: these are the only places where JIT compilation
  fires and $\mtbl''$ might be different from $\mtbl'$.
\PAPERVERSION{
  The key observation is that compilation produces $\mtbl''$ which
  (1) optimizes \mtbl by \thmref{thm:compilation-correct},
  \eqmtbl{\mtbl}{\mtbl''}, (2) is well-formed by \lemref{lem:wf},
  and (3) does not contain stubs by \lemref{lem:comp-stubs}.
  More details are available in the extended version
  of the paper~\cite{oopsla21jules:arx}.
}{
  As an example, we consider only the case of the forward direction
  where $\config{\frm_1}{\mtbl} \stepd \config{\frm_2}{\mtbl}$ steps
  by \SC{Disp}, and $\SC{HOpt}_0$ is built by \SC{OI-Refl},
  reusing all the notation from \lemref{lem:bisim-disp}.
  Thus, we have $\st = \st' = \cond\i\j{\call\m{\ol{\v\k}}}{\v\l}$,
  $\frm_1 = \stk{(\env\ \st\,\ol\st)}{\frm}$,
  $\frm'_1 = \stk{(\env\ \st\,\ol{\st'})}{\frm'}$,
  $\Delta_1 = \stk{(\ol\ty^\env \ty\,\ol\ty)}{\Delta}$,
  and \config{\frm_1}{\mtbl} steps as:
  \[
    \config{\stk{(\env\ \st\,\ol\st)}{\frm}}{\mtbl} \stepd
    \config{\stk{(\env'\ \ol{\st_b})}{\stk{(\env\ \st\,\ol\st)}{\frm}}}{\mtbl},
  \]
  where $\env' = \idx\env{\ol\k}$, $\ol\Ty = \typeof(\env')$, and
  $\ol{\st_b} = \body(\dispatch{\mtbl}{\m}{\ol\Ty})$.
  \config{\frm'_1}{\mtbl'} can step only by \SC{Disp}, which triggers
  JIT compilation.
  According to the definition of \jit from \figref{compile},
  there are two possibilities: either \msig\m{\ol\Ty} is already in $\mtbl'$,
  in which case $\mtbl'' = \mtbl'$, or \msig\m{\ol\Ty} is not in $\mtbl'$,
  in which case the new method instance gets compiled and added to $\mtbl''$.
  \begin{itemize}
    \item In the former case, the proof proceeds exactly as
      in~\lemref{lem:bisim-disp}.
    \item In the latter case, by \thmref{thm:compilation-correct}, we know that
      \eqmtbl{\mtbl}{\mtbl''} and all instruction optimizations on $\mtbl, \mtbl'$
      are preserved for $\mtbl, \mtbl''$. Therefore, we know:
      \[
        \eqst
            {\stk{(\ol\ty^\env \ty\,\ol\ty)}{\Delta}}
            {\mtbl}{\mtbl''}
            {\stk{(\env\ \st\,\ol{\st })}{\frm }}
            {\stk{(\env\ \st\,\ol{\st'})}{\frm'}}.
      \]
      Furthermore, as $\mtbl'$ is well-formed and does not have stubs
      by assumption, $\mtbl''$ is also well-formed by \lemref{lem:wf}
      and does not have stubs by \lemref{lem:comp-stubs}.
      Reasoning similarly to \lemref{lem:bisim-disp}, we can see that
      the body \ol{\st_b} of the original method returned by
      \dispatch{\mtbl}{\m}{\ol\Ty} optimizes to the body \ol{\st'_b}
      of the compiled instance \msig\m{\ol\Ty}. Thus, we get that
      \config{\frm'_1}{\mtbl'} steps by \SC{Disp},
      \[
        \config{\stk{(\env\ \st\,\ol{\st'})}{\frm'}}{\mtbl'} \stepj
        \config{\stk{(\env'\ \ol{\st'_b})}{\stk{(\env\ \st\,\ol{\st'})}{\frm'}}}{\mtbl''},
      \]
      the resulting configurations are related,
      \[
        \optconfig
          {\stk{(\env'\ \ol{\st_b })}{\stk{(\env\ \st \,\ol{\st })}{\frm }}}
          {\mtbl}
          {\stk{(\env'\ \ol{\st'_b})}{\stk{(\env\ \st'\,\ol{\st'})}{\frm'}}}
          {\mtbl''}
          {\stk{(\ol\Ty\,\ol{\ty'})}{\stk{(\ol\ty^\env \ty\,\ol\ty)}{\Delta}}},
      \]
      and $\mtbl''$ has no stubs.
  \end{itemize}
}
\end{proof}

\begin{theorem}[Correctness ot JIT]\label{thm:disp-jit-equiv}
  For any original well-formed method table \mtbl the following holds:
  \[
    \config{\done\ \ol\st}{\mtbl} \stepdmul \config{\env\ \done}{\mtbl}
    \ \ \iff\ \
    \config{\done\ \ol\st}{\mtbl} \stepjmul \config{\env\ \done}{\mtbl'},
  \]
  i.e. program \ol\st runs to the final environment \env with the dispatch
  semantics \stepd if and only if it runs to the same environment with
  the JIT-compilation semantics \stepj.
\end{theorem}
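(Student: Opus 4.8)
The statement is a corollary of the dispatch/JIT bisimulation \lemref{lem:bisim-disp-jit}, and I would mirror the proof of \thmref{thm:disp-equiv} almost line for line, substituting the dispatch/JIT bisimulation for the dispatch/dispatch one. The first step is to relate the two initial configurations by the optimization relation. Since \mtbl is an \emph{original} well-formed table, it contains neither compiled instances nor stubs, so $\origmtbl{\mtbl} = \mtbl$ and \eqmtbl{\mtbl}{\mtbl} holds trivially by \SC{O-Table}. Well-formedness also yields $\ol\ty = \typeinf{\mtbl}{\done}{\ol\st}$ for the body of \main, which gives the stack typing through \SC{I-Stack} (instantiated with \main, whose residual stack is empty) and \SC{I-StackEmpty}; reflexivity of the optimization relation on instructions (\SC{OI-Refl}, \SC{OI-Seq}, \SC{O-Stack}) then relates the initial stack to itself. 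Hence \SC{O-Config} delivers
\[
  \optconfig{\ol\st}{\mtbl}{\ol\st}{\mtbl}{\ol\ty}.
\]

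Next I would iterate the bisimulation. Because \mtbl is well-formed and stub-free, \lemref{lem:bisim-disp-jit} applies to this initial related pair. For the forward ($\Rightarrow$) direction, each \stepd step on the dispatch side is matched by a \stepj step producing a possibly extended table $\mtbl''$, with the resulting configurations again related; the closing clause of \lemref{lem:bisim-disp-jit} guarantees that $\mtbl''$ stays well-formed and stub-free, so the lemma's hypotheses are re-established and it can be applied again at the next step. Iterating, the two runs proceed in lock-step. When the dispatch run reaches the terminal configuration \config{\env\ \done}{\mtbl}, the matched JIT configuration is related to it; by \SC{O-Stack} and \SC{O-StackEmpty} this forces it to be a single frame with empty instruction list whose environment \emph{coincides} with \env, i.e.\ \config{\env\ \done}{\mtbl'} for whatever table $\mtbl'$ has been reached. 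The backward ($\Leftarrow$) direction is entirely symmetric, using the backward half of the lemma.

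I expect the main obstacle to be keeping the bisimulation invariants alive as the JIT table grows. Unlike the dispatch/dispatch setting of \thmref{thm:disp-equiv}, where the table is fixed, here every dispatched call may trigger compilation and change the table, so before each reapplication of the lemma I must know that the new table is still well-formed, still stub-free, and still related to \mtbl. These three facts are precisely what \lemref{lem:bisim-disp-jit} provides in its concluding sentence (relying in turn on \thmref{thm:compilation-correct}, \lemref{lem:wf}, and \lemref{lem:comp-stubs}), so the inductive reapplication is sound. A secondary point to verify is that the two runs halt together: since the bisimulation matches steps one-to-one in both directions, neither side can take a step the other cannot, so termination is forced to agree and the final environments are forced to coincide by the environment-equality built into \SC{O-Stack}. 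This also rules out the only delicate failure mode, namely one semantics erring while the other terminates, since the structural constraints of the optimization relation map a terminal \config{\env\ \done}{} frame only to another terminal \config{\env\ \done}{} frame.
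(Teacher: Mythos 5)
Your proposal is correct and follows essentially the same route as the paper: both establish \optconfig{\ol\st}{\mtbl}{\ol\st}{\mtbl}{\ol\ty} by reflexivity of the optimization relation and well-formedness, then iterate \lemref{lem:bisim-disp-jit} using its guarantee that the evolving table stays well-formed and stub-free, and conclude via \SC{O-Stack}. Your write-up is in fact more explicit than the paper's brief corollary argument about why the invariants are re-established at each step.
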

\begin{proof}
  This is a corollary of \lemref{lem:bisim-disp-jit}. Be reflexivity of the
  optimization relation and well-formedness of \mtbl, we know:
  $\optconfig{\ol\st}{\mtbl}{\ol\st}{\mtbl}{\ol\ty}$,
  where $\ol\ty = \typeinf{\mtbl}{\done}{\ol\st}$. Since \mtbl does not have
  stubs, the bisimulation lemma is applicable: if one of the configurations can
  make a step, so does the other, and the step leads to a pair of related
  configurations such that the lemma can be applied again. If the program
  terminates and does not err,
  both sides arrive to final configurations where environments are
  guaranteed to coincide by \SC{O-Stack}.
\end{proof}

}

\section{Empirical Study}\label{sec:empirical}

Anecdotal evidence suggests that type stability is discussed in the
Julia community, but does Julia code exhibit the properties of stability
and groundedness in practice? And if so, are there any indicators correlated with
instability and ungroundedness? To find out, we ran a dynamic analysis on a
corpus of Julia packages. All the packages come from the main language registry
and are readily available via Julia's package manager; registered packages have
to pass basic sanity checks and usually have a test suite.

The main questions of this empirical study are:
\begin{enumerate}
\item How uniformly are type stability and groundedness spread over Julia packages?
  How much of a difference can users expect from different packages?
\item Are package developers aware of type stability?
\item Are there any predictors of stability and groundedness in the code and do
  they influence how type-stable code looks?
\end{enumerate}

\subsection{Methodology}

We take as our main corpus the 1000 most starred (using GitHub stars) packages
from the Julia package registry; as of the beginning of 2021, the registry
contained about 5.5K packages. The main corpus is used for an automated,
high-level, aggregate analysis. We also take the 10 most starred packages from
the corpus to perform finer-grained analysis and manual inspection. Out of the
1000 packages in the corpus, tests suits of only \goodpkgsnum succeeded on
\juliaversion, so these \goodpkgsnum comprise our final corpus. The reasons of
failures are diverse, spanning from missing dependencies to the absence of
tests, to timeout.

For every package of interest, the dynamic analysis runs the package test suite,
analyzes compiled method instances, and records information relevant to
type stability and groundedness. Namely, once a test suite runs to completion,
we query Julia's virtual machine for the current set of available method
instances, which represent instances compiled during the tests' execution. To
avoid bias towards the standard library, we remove instances of methods defined in
standard modules, such as \c{Base}, \c{Core}, etc., \ADD{which typically leaves
us with several hundreds to several thousands of instances.}
For these remaining instances,
we analyze their type stability and groundedness. As type information is not
directly available for compiled, optimized code, we retrieve the original method
of an instance and run Julia's type inferencer to obtain each register's type,
similarly to the \jules model. In rare cases, type inference may fail;
on our corpus, this almost never happened, with at most 5 failures per package.
With the inference results at hand, we check the concreteness of the register
typing and record a yes/no answer for both stability and groundedness. In
addition to that, several metrics are recorded for each method: 
method size, the number of gotos and returns in the body, whether the method has
varargs or \c{@nospecialize} arguments\footnote{\texttt{@nospecialize} tells the
compiler to \emph{not} specialize for that argument and leave it abstract.}, and
how polymorphic the method is, i.e. how many instances were compiled for it.
This information is then used to identify possible correlations between the
metrics and stability/groundedness.

To get a better understanding of type stability and performance, we employ
several additional tools to analyze the 10 packages. For example, we look at
their documentation, especially at the stated goals and domain of a package, and
check the Git history to see if and how type stability is mentioned in the
commits.

\begin{table}[t]\small
\caption{Aggregate statistics for stability and groundedness}%
\label{empirical:fig:all}
\begin{tabular}{@{}lrr@{}}
\toprule
          & \multicolumn{1}{c}{Stable} & \multicolumn{1}{c}{Grounded} \\ \midrule
Mean      & 74\%                       & 57\%                         \\
Median    & 80\%                       & 57\%                         \\
Std. Dev. & 22\%                       & 24\%                         \\ \bottomrule
\end{tabular}
\end{table}

\subsection{Package-Level Analysis}

The aggregate results of the dynamic analysis for the \goodpkgsnum packages are
shown in Table~\ref{empirical:fig:all}: 74\% of method instances in a package
are stable and 57\% grounded, on average; median values are close to the means.
The standard deviation is noticeable, so even on small samples of packages, we
expect to see packages with large deflections from the means.

A more detailed analysis of the 10 most starred packages, in alphabetical order,
is shown in Table~\ref{empirical:fig:top}. A majority of these packages have
stability numbers very close to the averages shown above, with the exception of
Knet, which has only 16\% of stable and 8\% of grounded instances.

\begin{table}[h]\small
\caption{Stability, groundedness, and polymorphism in 10 popular packages}%
\label{empirical:fig:top}
\begin{tabular}{@{}lrrrrrr@{}}
\toprule
\multicolumn{1}{c}{Package} & \multicolumn{1}{c}{Methods} & \multicolumn{1}{c}{Instances} & \multicolumn{1}{l}{Inst/Meth} & \multicolumn{1}{c}{Varargs} & \multicolumn{1}{c}{Stable} & \multicolumn{1}{c}{Grounded} \\ \midrule
{\footnotesize DifferentialEquations}       & 1355                        & 7381                          & 5.4                           & 3\%                         & 70\%                       & 44\%                         \\
Flux                        & 381                         & 4288                          & 11.3                          & 13\%                        & 76\%                       & 70\%                         \\
Gadfly                      & 1100                        & 4717                          & 4.3                           & 10\%                         & 81\%                       & 58\%                         \\
Gen                         & 973                         & 2605                          & 2.7                           & 2\%                         & 64\%                       & 43\%                         \\
Genie                       & 532                         & 1401                          & 2.6                           & 12\%                        & 93\%                       & 78\%                         \\
IJulia                      & 39                          & 136                           & 3.5                           & 8\%                         & 84\%                       & 60\%                         \\
JuMP                        & 2377                        & 36406                         & 15.3                          & 7\%                        & 83\%                       & 63\%                         \\
Knet                        & 594                         & 9013                          & 15.2                          & 7\%                         & 16\%                       & 8\%                          \\
Plots                       & 1167                        & 5377                          & 4.6                           & 8\%                         & 74\%                       & 58\%                         \\
Pluto                       & 727                         & 2337                          & 3.2                           & 4\%                         & 80\%                       & 66\%                         \\ \bottomrule
\end{tabular}

\end{table}

The Knet package is a type stability outlier. A quick search over project's
documentation and history shows that the only kind of stability ever mentioned
is numerical stability; furthermore, the word ``performance'' is mostly used to
reference the performance of neural networks or CUDA-related utilities. Indeed,
the package primarily serves as a communication layer for a GPU; most
computations are done by calling the CUDA API for the purpose of building deep
neural networks. Thus, in this specific domain, type stability of Julia code
appears to be irrelevant.

On the other side of the stability spectrum is the 93\% stable (78\% grounded)
Genie package, which provides a web application framework. Inspecting the
package,
we can confirm that its developers were aware of type stability and
intentional about performance.
For example, Genie's Git history contains several commits mentioning (improved)
``type stability.''
The project README file states that the authors build upon
\begin{quote}
\it ``Julia's strengths (high-level, high-performance, dynamic, JIT compiled).''
\end{quote}
Furthermore, the tutorial claims:
\begin{quote}
\it ``Genie's goals: unparalleled developer productivity, excellent run-time
performance.''
\end{quote}

\paragraph{Type stability (non-)correlates}
One parameter that we conjectured may correlate with stability is the average
number of method instances per method (Inst/Meth column of
Table~\ref{empirical:fig:top}), as it expresses the amount of polymorphism
discovered in a package. Most of the packages compile just 2--4 instances per
method on average, but Flux, JuMP, and Knet have this metric 5--6 times higher,
with JuMP and Knet exploiting polymorphism the most, at 15.3 and 15.2
instances per method, respectively. The latter may be related to the very low type stability
index of Knet. However, the other two packages are more stable than the overall
average. Analyzing JuMP and Flux further, we order their methods by the number
of instances. In JuMP, the top 10\% of most instantiated methods are 5\% less
stable and grounded than the package average, whereas in Flux, the top 10\% have
about the same stability characteristics as on average. Overall, we cannot
conclude that method polymorphism is related to type stability.

Another dimension of polymorphism is the variable number of arguments in a
method (Varargs column of Table~\ref{empirical:fig:top}). We looked into three
packages with a higher than average (9\%) number of varargs methods in the 10
packages: Flux, Gadfly and Genie. Relative to the total number of methods, Flux has the most
varargs methods---13\%---and those methods are only 55\% stable and 44\%
grounded, which is a significant drop of 21\% and 26\% below this package's
averages. However, the other two packages have higher-than-package-average
stability rates, 82\% (Gadfly) and 99\% (Genie), with groundedness being high in
Genie, 93\%, and low in Gadfly, 38\%. Again, no general conclusion about the
relationship between varargs methods and their stability can be made.

\subsection{Method-Level Analysis}

\MODIFY{
In this section, we inspect stability of individual methods in its possible
relationship with other code properties like size, control flow (number of goto
and return statements), and polymorphism (number of compiled instances and varargs).
Our analysis consists of two steps: first, we plot histograms
showing the number of methods with particular values of properties, and second,
we manually sample some of the methods with more extreme characteristics.
}

\subsubsection{Graphical Analysis}
\label{sssect:graphs}

We use two-dimensional histograms like those presented in
Fig.~\ref{figs:size:Pluto:main} to discover possible relationships between stability
of code and its other properties. The vertical axis measures stability (on the
left diagram) or groundedness (on the right): $1$ means that all recorded
instances of a method are stable/grounded, and $0$ means that none of them are.
The horizontal axis measures the property of interest; in the case of
Fig.~\ref{figs:size:Pluto:main}, it is method size (actual numbers are not
important here: they are computed from Julia's internal representation of source
code). The color of an area reflects how many methods have characteristics
corresponding to that area's position on the diagram; e.g.\ in
Fig.~\ref{figs:size:Pluto:main}, the lonely yellow areas indicate that there are about
500 (400) small methods that are stable (grounded).

\begin{figure}[h]
     \begin{subfigure}[b]{0.35\textwidth}
       \includegraphics[width=\textwidth]{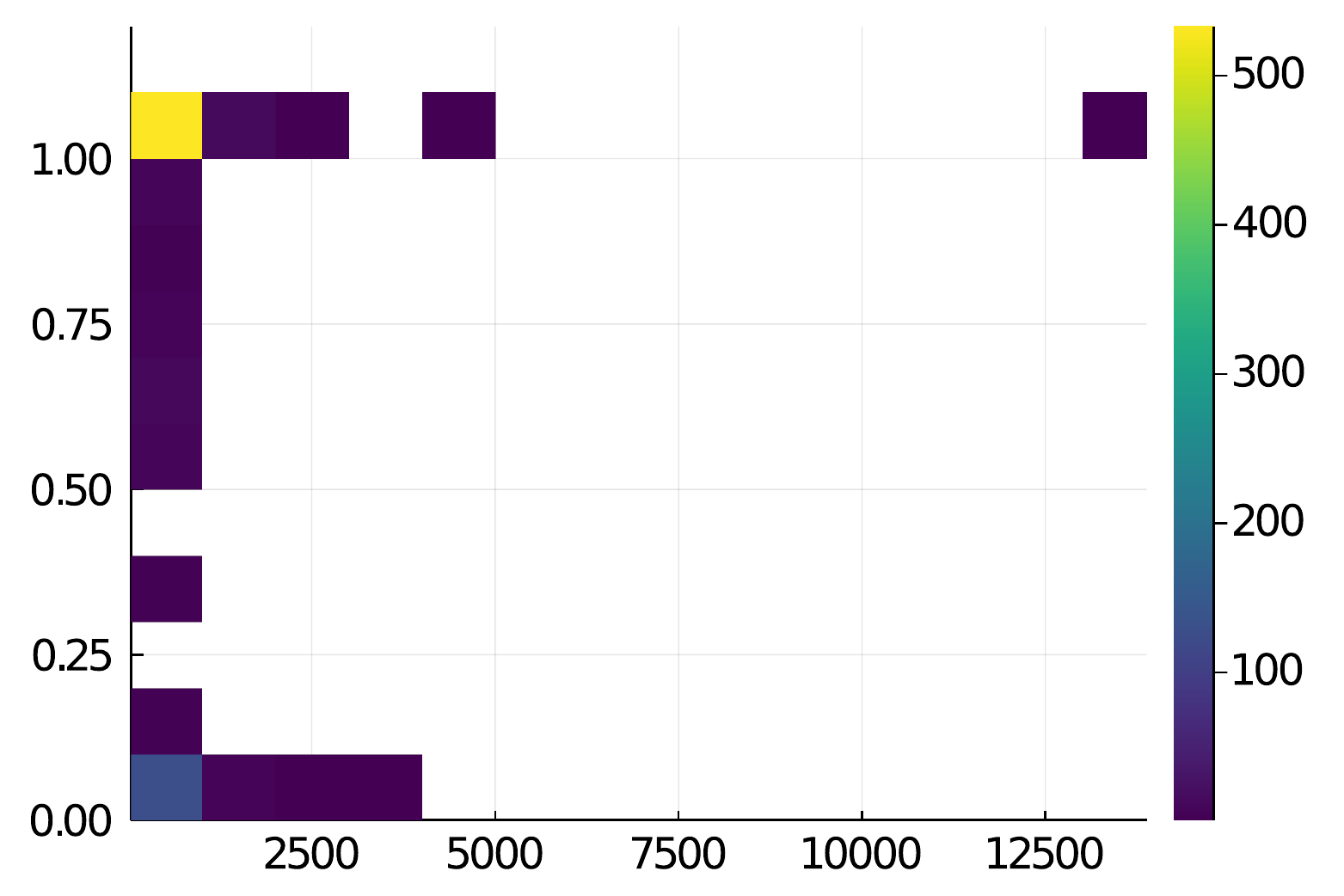}
     \end{subfigure}
     \ \ \
     \begin{subfigure}[b]{0.35\textwidth}
       \includegraphics[width=\textwidth]{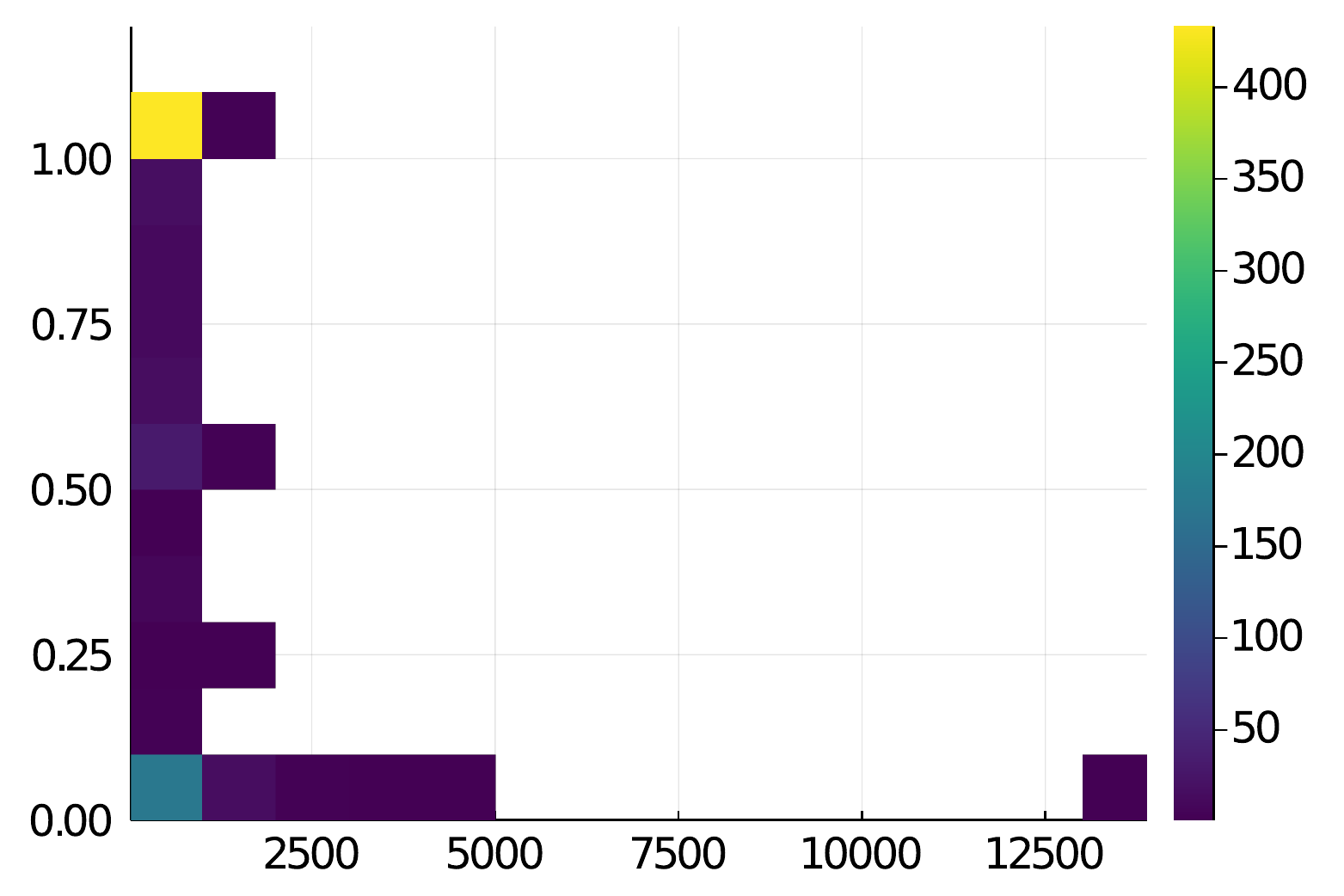}
     \end{subfigure}
\caption{Stability (left, OY axis) and groundedness (right, OY) by method size (OX) in Pluto}%
\Description{Stability and groundedness by method size in Pluto}%
\label{figs:size:Pluto:main}
\end{figure}

We generate graphs for all of the 10 packages listed in
Table~\ref{empirical:fig:top}, for all combinations of the properties of
interest\PAPERVERSIONINLINE{; the graphs are available in the extended
version~\cite{oopsla21jules:arx}}{; the graphs are provided in \appref{sec:app}}.
Most of the graphs look very similar to the ones from
Fig.~\ref{figs:size:Pluto:main}, which depicts Pluto---a package for creating
Jupyter-\hspace{0pt}like notebooks in Julia. In the following paragraphs, we
discuss features of these graphs and highlight the discrepancies.

The first distinctive feature of the graphs is the hot area in the top-left
corner: most of the 10 packages employ many small, stable/grounded methods;
the bottom-left corner is usually the second-most hot, so a significant number
of small methods are unstable/ungrounded. For the Knet package,
these two corners are reversed; for DifferentialEquations, they are reversed
only on the groundedness plot. Both of these facts are not surprising after seeing
Table~\ref{empirical:fig:top}, but having a visual tool to discover such facts
may be useful for package developers.

The second distinctive feature of these graphs is the behavior of large,
ungrounded methods (bottom-right part of the right-hand-side graph). The
``tail'' of large methods on the groundedness graphs almost always lies below
the $1$-level; furthermore, larger methods tend to be less grounded.
However, if we switch from groundedness to stability plots, a large portion of
the tail jumps to the $1$-level. This means larger methods are unlikely to be
grounded (as expected, because of the growing number of registers), but they
still can be stable and thus efficiently used by other methods. Pluto provides a
good example of such a method: its \c{explore!} method of size 13003 (right-most
rectangle on Fig.~\ref{figs:size:Pluto:main}, 330 lines in the source code) analyzes
Julia syntax trees for scope information with a massive \c{if/else if/..} statement.
This method has a very low chance of being grounded, and it was not grounded on the
runs we analyzed. However, the method has a concrete return type annotation, so
Julia (and the programmer) can easily see that it is stable.


In the case of the number of gotos and returns, the plots are largely similar
to the ones for method size, but they highlight one more package with low
groundedness. Namely, the Gen package (aimed at probabilistic
inference~\cite{JuliaGenPkgPub2019}) has the hottest area in the bottom-left
corner, contrary to the first general property we identified for the size-based
plots. Recall (Tables \ref{empirical:fig:all} and \ref{empirical:fig:top}) that
Gen's groundedness is 14\% less than the average on the whole corpus of
\goodpkgsnum packages.

\subsubsection{Manual Inspection}

\ADD{
To better understand the space of stable methods,
we performed a qualitative analysis of a sample of stable methods that
have either large sizes or many instances.

Many large methods have one common
feature: they often have a return type ascription on the method header of
the form:
}
\begin{lstlisting}[language=julia]
function f(...) :: Int
  ...
end
\end{lstlisting}
\ADD{
These ascriptions are a relatively new tool in Julia, and they are used only
occasionally, in our experience. An ascription makes the Julia compiler insert implicit
conversions on all return paths of the method. Conversions are user extendable:
if the user defines type \c{A}, they can also add methods of a special
function \c{convert} for conversion to \c{A}. This function will be called
when the compiler expects \c{A} but infers a different type, for example,
if the method returns \c{B}.
If the method returns \c{A}, however, then \c{convert} is a no-op.

Type ascriptions may be added simply as documentation, but they can also
be used to turn type instability into a run-time error:
if the ascribed type is concrete and a necessary conversion is not available,
the method will fail. This provides a useful, if
unexpected, way to assure that a large method never becomes unstable.

While about 85\% of type-stable methods in the top 10 packages are uninteresting
in that they always return the same type, sampling the rest illuminates
a clear pattern: the methods resemble what we
are used to see in statically typed languages with parametric polymorphism. Below
is a list of categories that we identify in this family.
}

\begin{itemize}

\lstset{basicstyle=\footnotesize\tt,linewidth=.93\textwidth}

\ADD{
\item
  Various forms of the identity function---a surprisingly popular function that
  packages keep reinventing. In an impure language, such as Julia, an identity
  function can produce various side effects.
  For example, the Genie package adds a caching effect to
  its variant of the identity function:
}
\begin{lstlisting}[language=julia]
# Define the secret token used in the app for encryption and salting.
function secret_token!(value::AbstractString=Generator.secret_token())
  SECRET_TOKEN[] = value
  return value
end
\end{lstlisting}

\ADD{
\item
  Container manipulations for various kinds of containers, such as arrays, trees, or
  tuples. For instance, the latter is exemplified by the following function
  from Flux, which maps a tuple of functions by applying them to the given argument:
}
\begin{lstlisting}[language=julia]
function extraChain(fs::Tuple, x)
  res = first(fs)(x)
  return (res, extraChain(Base.tail(fs), res)...)
end
extraChain(::Tuple{}, x) = ()
\end{lstlisting}

\ADD{
\item
  Smart constructors for user-defined polymorphic structures. For example, the following
  convenience function from JuMP creates an instance of the
  \c{VectorConstraint} structure with three fields, each of which is polymorphic:
}
\begin{lstlisting}[language=julia]
function build_constraint(_error::Function, Q::Symmetric{V,M}, ::PSDCone)
        where {V<:AbstractJuMPScalar,M<:AbstractMatrix{V}}
    n = LinearAlgebra.checksquare(Q)
    shape = SymmetricMatrixShape(n)
    return VectorConstraint(
        vectorize(Q, shape),
        MOI.PositiveSemidefiniteConeTriangle(n),
        shape)
end
\end{lstlisting}

\ADD{
\item
  Type computations---an unusually wide category for a dynamically typed
  language. Thus, for instance, the Gen package defines a type that represents generative
  functions in probabilistic programming, and a function that extracts the
  return and argument types: 
}
\begin{lstlisting}[language=julia]
# Abstract type for a generative function with return value type T and trace type U.
abstract type GenerativeFunction{T,U <: Trace} end
get_return_type(::GenerativeFunction{T,U}) where {T,U} = T
get_trace_type(::GenerativeFunction{T,U}) where {T,U} = U
\end{lstlisting}

\lstset{linewidth=\textwidth}

\end{itemize}

\subsection{Takeaways}

Our analysis shows that a Julia user can expect mostly stable (74\%) and
somewhat grounded (57\%) code in widely used Julia packages. If the authors
are intentional about performance and stability, as demonstrated by the Genie
package, those numbers can be much higher. Although our sample of packages is
too small to draw strong conclusions, we suggest that several factors can be
used by a Julia programmer to pinpoint potential sources of instability in their
package. For example, in some cases, varargs methods might indicate instability.
Large methods, especially ones with heavy control flow,
tend to not be type grounded but often are stable; in particular,
if they always return the same concrete type.
Finally, although highly polymorphic methods are neither stable nor unstable
in general, code written in the style of parametric polymorphism 
often suggests type stability.

Our dynamic analysis and visualization code is written in Julia (and some bash
code), and relies on the vanilla Julia implementation. Thus, it can be employed
by package developers to study type instability in their code, as well as check
for regressions.

\section{Related Work}

Type stability and groundedness are the consequences of Julia's compilation
strategy put into practice. The approach Julia takes is new and simpler than
other approaches to efficient compilation of dynamic code.

Attempts to efficiently compile dynamically dispatched languages go back nearly
as far as dynamically dispatched languages themselves.
\citet{HurricaneSmalltalk} used a combination of run-time-checked user-provided
types and a simple type inference algorithm to inline methods. \citet{CU89}
pioneered the just-in-time model of compilation in which methods are specialized
based on run-time information. \citet{holzle1994odd} followed up with method
inlining optimization based on recently observed types at the call site.
\citet{Psyco2004} specialized methods on invocation based on their arguments,
but this was limited to integers. Similarly, \citet{cannon2005localized} developed a
type-inferring just-in-time compiler for Python, but it was limited by the
precision of type inference. \citet{RATA} extended this approach with a more
sophisticated abstract interpretation-based inference system for JavaScript.

At the same time, trace-based compilers approached the problem from another
angle~\cite{chang2007efficient}.
Instead of inferring from method calls, these compilers had exact type
information for variables in straight-line fragments of the program called
traces. \citet{gal09} describes a trace-based compiler for JavaScript that avoids
some pitfalls of type stability, as traces can cross method boundaries.
However, it is more difficult to fix a program when tracing does not work well,
for the boundaries of traces are not apparent to the programmer.

\ADD{
Few of these approaches to compilation have been formalized.
\citet{CompilingWithTraces} described the core of a trace-based compiler with
two optimizations, variable folding and dead branch/store elimination.
\citet{VerifiedJITx86} formalized self-modifying code for x86. Finally,
\citet{popl18} formally described speculation and deoptimization and proved
correctness of some optimizations; \citet{oopsla21} extended and mechanized
these results.

The Julia compiler uses standard techniques, but differs considerably in how it
applies them. Many production just-in-time compilers rely on static type
information when it is available, as well as a combination of profiling and
speculation~\cite{TruffleIR,TruffleInterpreters}. Speculation allows these
compilers to perform virtual dispatch more efficiently~\cite{oopsla20c}. Profiling allows
for tuning optimizations to a specific workload~\cite{GoWithTheFlow,HHVMJIT},
eliminating overheads not required for cases observed during execution. Julia, on
the other hand, performs optimization only once per method instance.
This presents both advantages
and issues. For one, Julia's performance is more predictable than that of other
compilers, as the warmup is simple~\cite{VMsBlow}. Overall, Julia is able
to achieve high performance with a simple compiler.

}

\section{Conclusion}

\MODIFY{
Julia's performance is impressive, as it improves on that of a statically typed
language such as Java, whose compiler has been developed for two decades by an
army of engineers. This is achieved by a combination of careful language design,
simple but powerful compiler optimizations, and disciplined use of abstractions
by programmers.

In this paper, we formally define the notions of type stability and groundedness,
which guide programmers towards code idioms the compiler can optimize. To this
end, we model Julia's intermediate representation with an abstract machine
called \jules, prove the correctness of its JIT compiler,
and show that type groundedness is the property that enables the
compiler to devirtualize method calls. The weaker notion of type stability is
still useful as it allows callers of a function to be grounded. This
relationship between groundedness and stability explains the discrepancy between
the definition of stability and what programmers do in practice, namely,
inspecting
the compiler's output to look for variables that have abstract types. Our corpus
analysis of Julia packages shows that more than half of compiled methods are
grounded, and over two-thirds are stable. This suggests that developers follow
type-related performance guidelines.
}

\ADD{
Although our analysis suggests high presence of type-stable code,
some Julia packages, even among more popular ones, have significant unstable
portions. This may indicate an oversight from the package authors, albeit an
understandable one: the tools for detecting unstable code are
limited.}\!\footnote{Besides manually calling \texttt{@code\_warntype}, one can use
\href{https://github.com/aviatesk/JET.jl}{JET.jl} package in
``performance linting'' mode. JET relies on Julia's type inference;
its primary goal is to report likely dynamic-dispatch errors,
and it does not provide sound analysis.}
\ADD{There is
a future work opportunity in developing a user-facing static type system that
would soundly determine whether a method is type stable or not. This could
provide the benefits of a traditional (gradual) type system, and additionally,
allow programmers to ensure type stability. A less demanding approach to
facilitating stability would be to add primitives for reifying the programmer's
intent. For example, the user could write \c{assert(is_type_stable())} to
indicate that the method does not tolerate inefficient compilation.
}

Although the specific optimizations enabled by grounded code may not be as
important for other languages, they can benefit from the lessons learned from
Julia's performance model: namely, that the interface exposed by a compiler
can be just as important as the cleverness of the compiler itself. Performance
is not an isolated property: it is the result of a dialogue between the compiler
and the programmers that use it.

\section*{Data Availability Statement}
The paper is accompanied by the artifact~\cite{artifact} reproducing the results
of~\secref{sec:empirical}. In particular, the artifact contains the list of
1000 Julia packages analyzed (with the exact package versions), as well as Bash and
Julia scripts performing stability analysis and generating
Tables~\ref{empirical:fig:all} and \ref{empirical:fig:top}, and
Fig.~\ref{figs:size:Pluto:main}.

\section*{Acknowledgments}

We thank Ming-Ho Yee and the anonymous reviewers for their insightful comments
and suggestions to improve this paper.

This work was supported by Office of Naval Research (ONR) award 503353, the
National Science Foundation awards 1759736, 1925644, 1618732, CCF-1909143 and
CCF-1908389 the Czech Ministry of Education from the Czech Operational Programme
Research, Development, and Education, under grant agreement No.
CZ.02.1.01/0.0/0.0/15\_003/0000421, and the European Research Council under the
European Union’s Horizon 2020 research and innovation programme, under grant
agreement No. 695412.

\bibliographystyle{ACM-Reference-Format}{}
\bibliography{bib/jv,bib/all,bib/lj}

\PAPERVERSION{}{
\appendix

\section{Graphs for \secref{sec:empirical}}\label{sec:app}

This appendix contains graphs similar to the ones described in
\secref{sssect:graphs} for all 10 packages discussed in \secref{sec:empirical}.
There are 6 graphs per package: the top two show the relationship between the
method size and stability (left) or groundedness (right); the other four graphs
connect the two type-related properties with control-flow features: the number
of gotos and the number of returns in a method instance.

Note that the bottom four graphs for every package are different from the top
two in that they group method instances, not methods. Therefore,
the bottom four graphs have all data bins either at level $OY=0$ or $1$, because
we always know whether a method instance is stable (grounded) or not.
The change comes from the fact that the control-flow features in question
depend on compiled code and the way it was optimized: e.\;g., an \c{if true} in a
method can be optimized away during compilation.

\clearpage

\subsection{Package: DifferentialEquations}
\begin{figure}[h]
     \begin{subfigure}[b]{0.49\textwidth}
       \includegraphics[width=\textwidth]{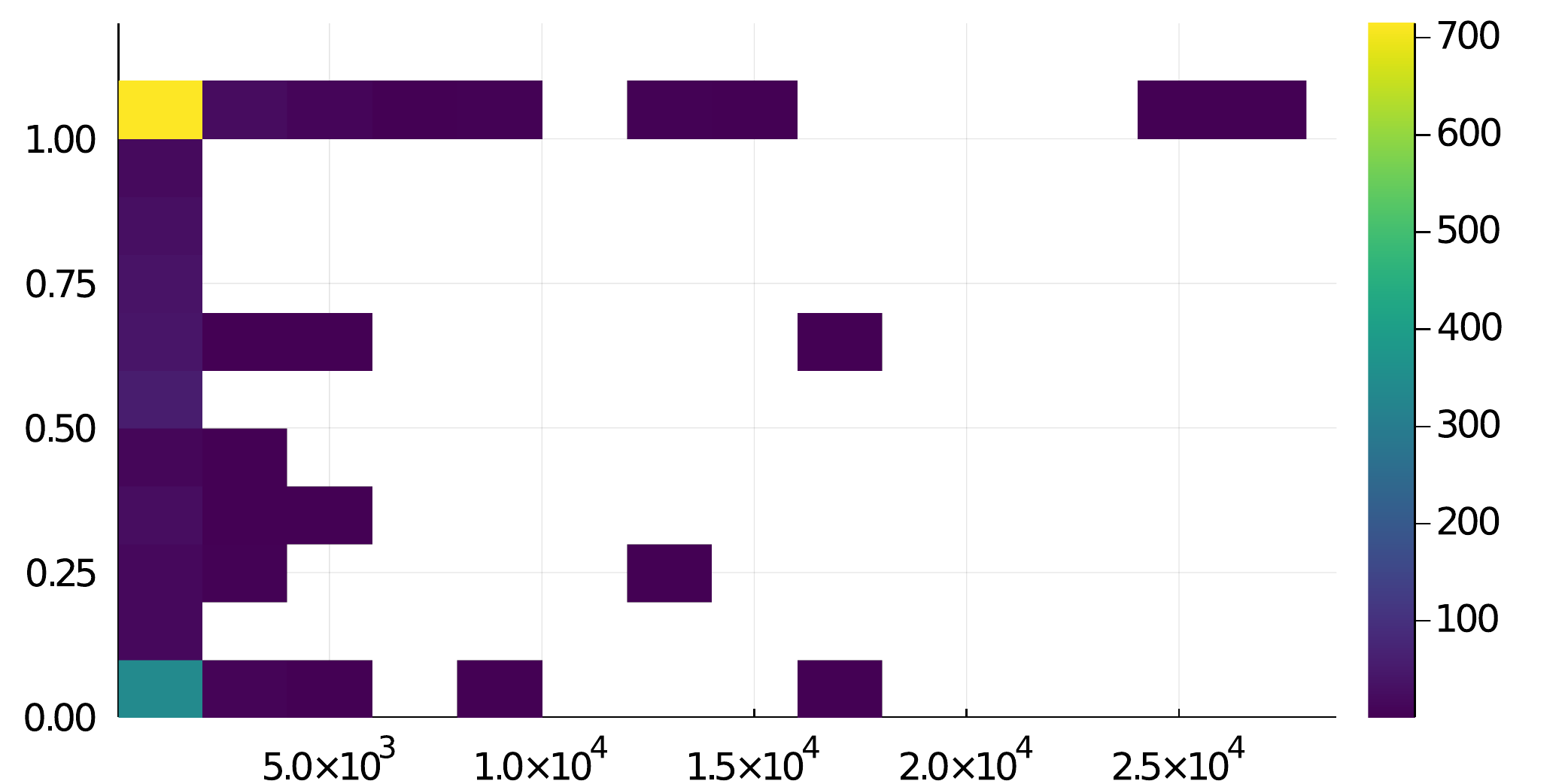}
     \end{subfigure}
     \ \
     \begin{subfigure}[b]{0.49\textwidth}
       \includegraphics[width=\textwidth]{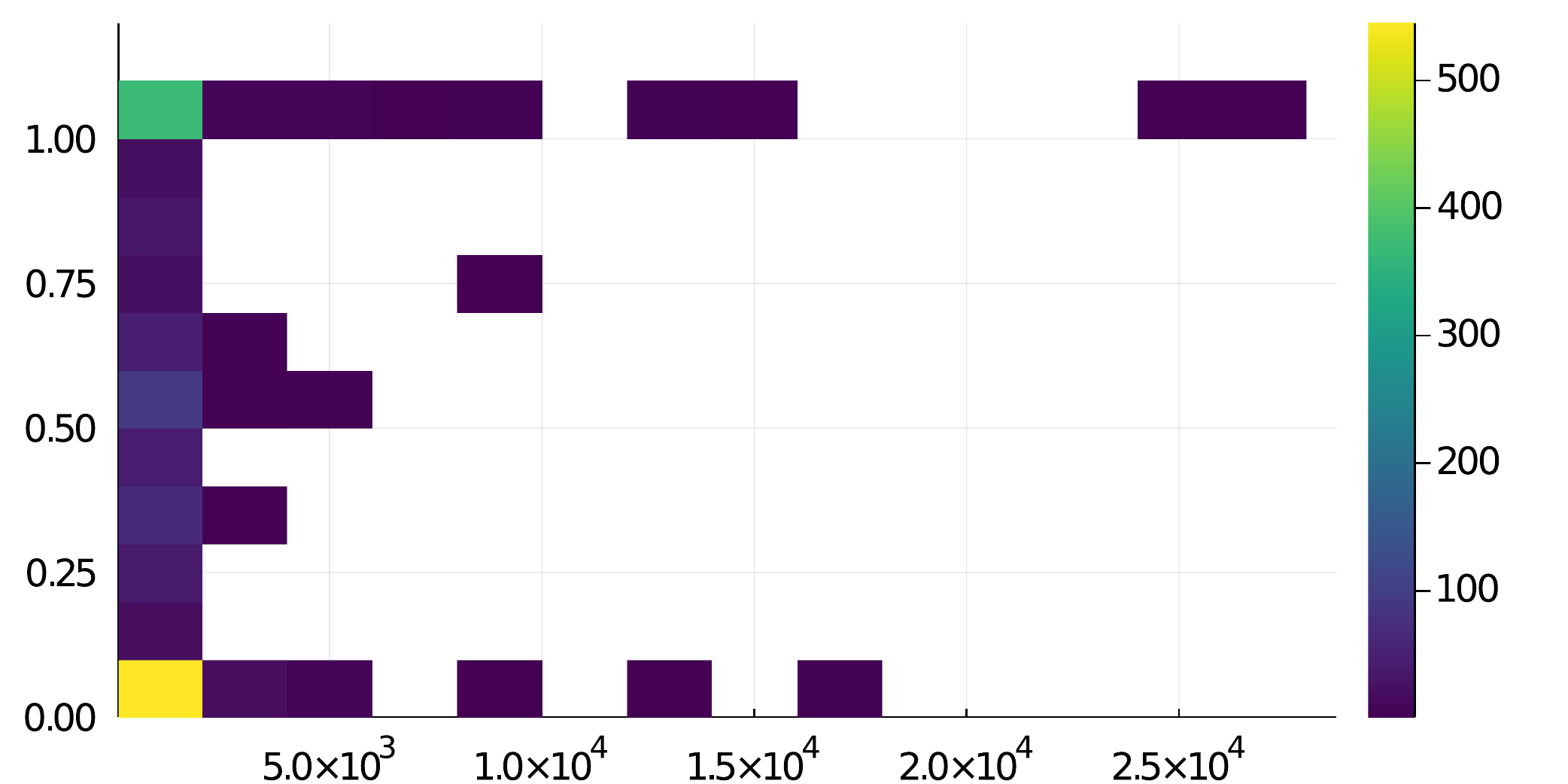}
     \end{subfigure}
\caption{Stability (left, OY axis) and groundedness (right, OY) by method size (OX)}%
\Description{Stability and groundedness by method size in DifferentialEquations}%
\label{figs:size:DifferentialEquations}
\end{figure}

\begin{figure}[h]
     \begin{subfigure}[b]{0.49\textwidth}
       \includegraphics[width=\textwidth]{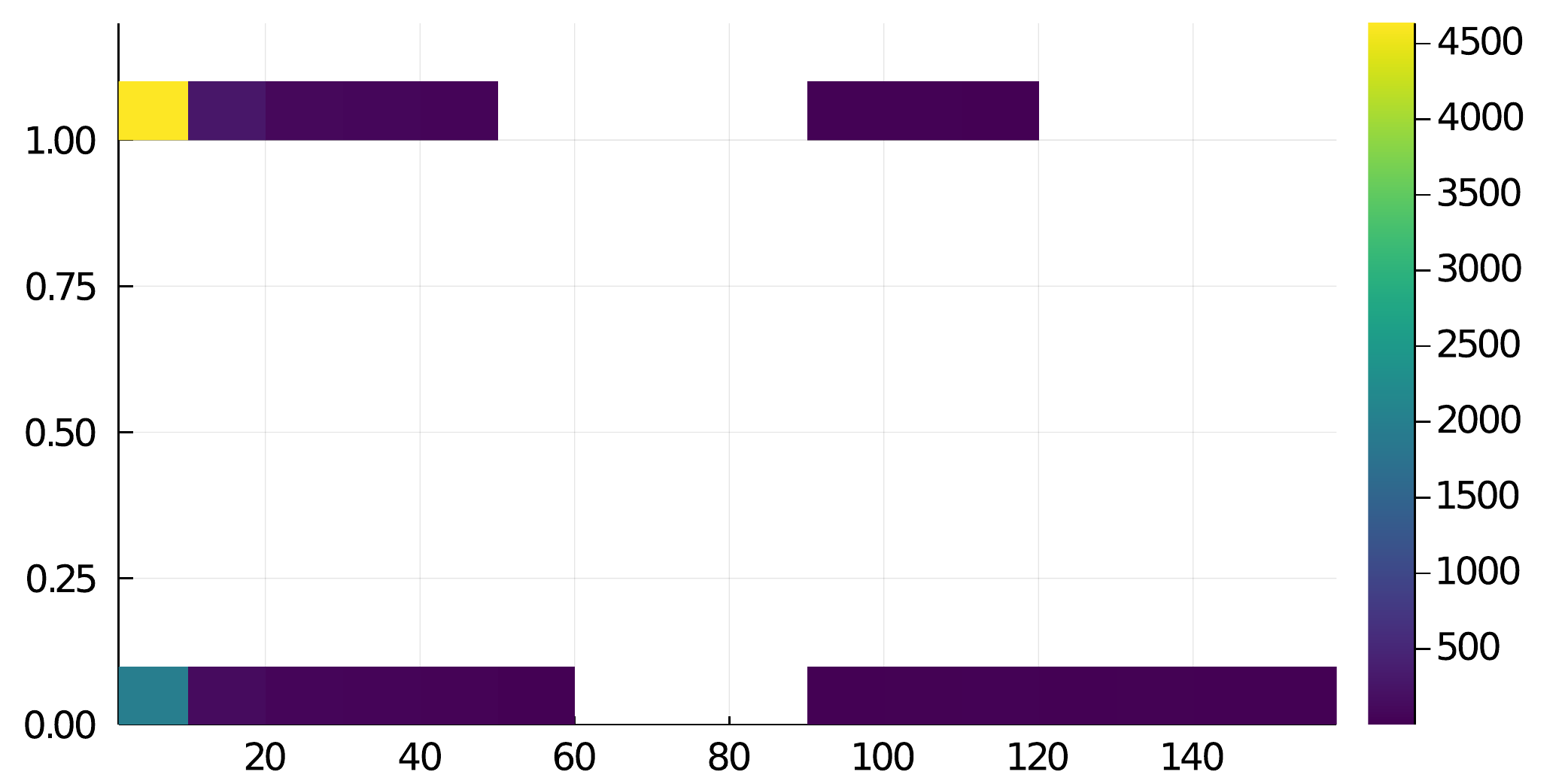}
     \end{subfigure}
     \ \
     \begin{subfigure}[b]{0.49\textwidth}
       \includegraphics[width=\textwidth]{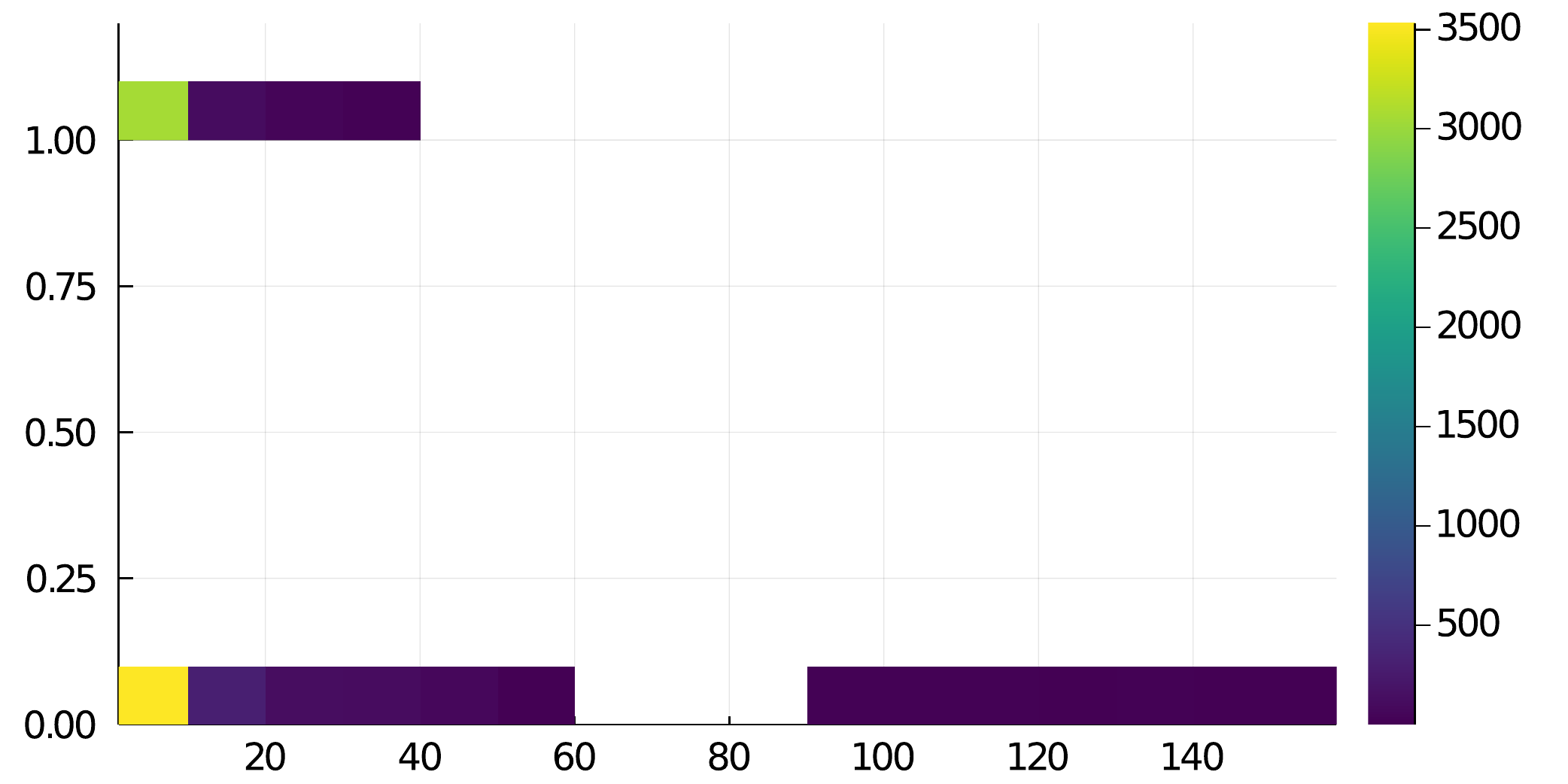}
     \end{subfigure}
\caption{Stability (left, OY axis) and groundedness (right, OY) by number of gotos in method instances (OX)}%
\Description{Stability and groundedness by number of goto's in method instances in DifferentialEquations}%
\label{figs:gotos:DifferentialEquations}
\end{figure}

\begin{figure}[h]
     \begin{subfigure}[b]{0.49\textwidth}
       \includegraphics[width=\textwidth]{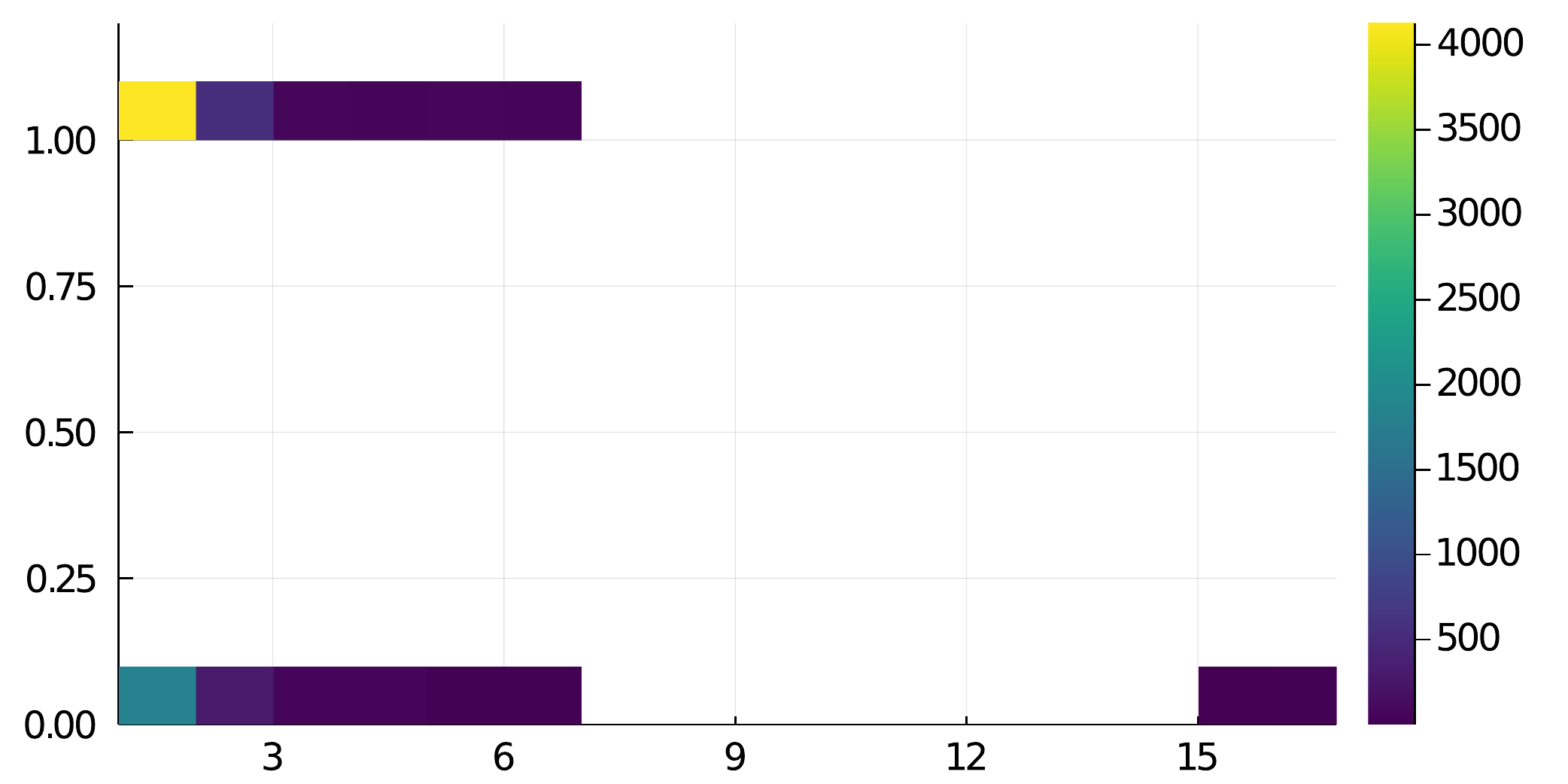}
     \end{subfigure}
     \ \
     \begin{subfigure}[b]{0.49\textwidth}
       \includegraphics[width=\textwidth]{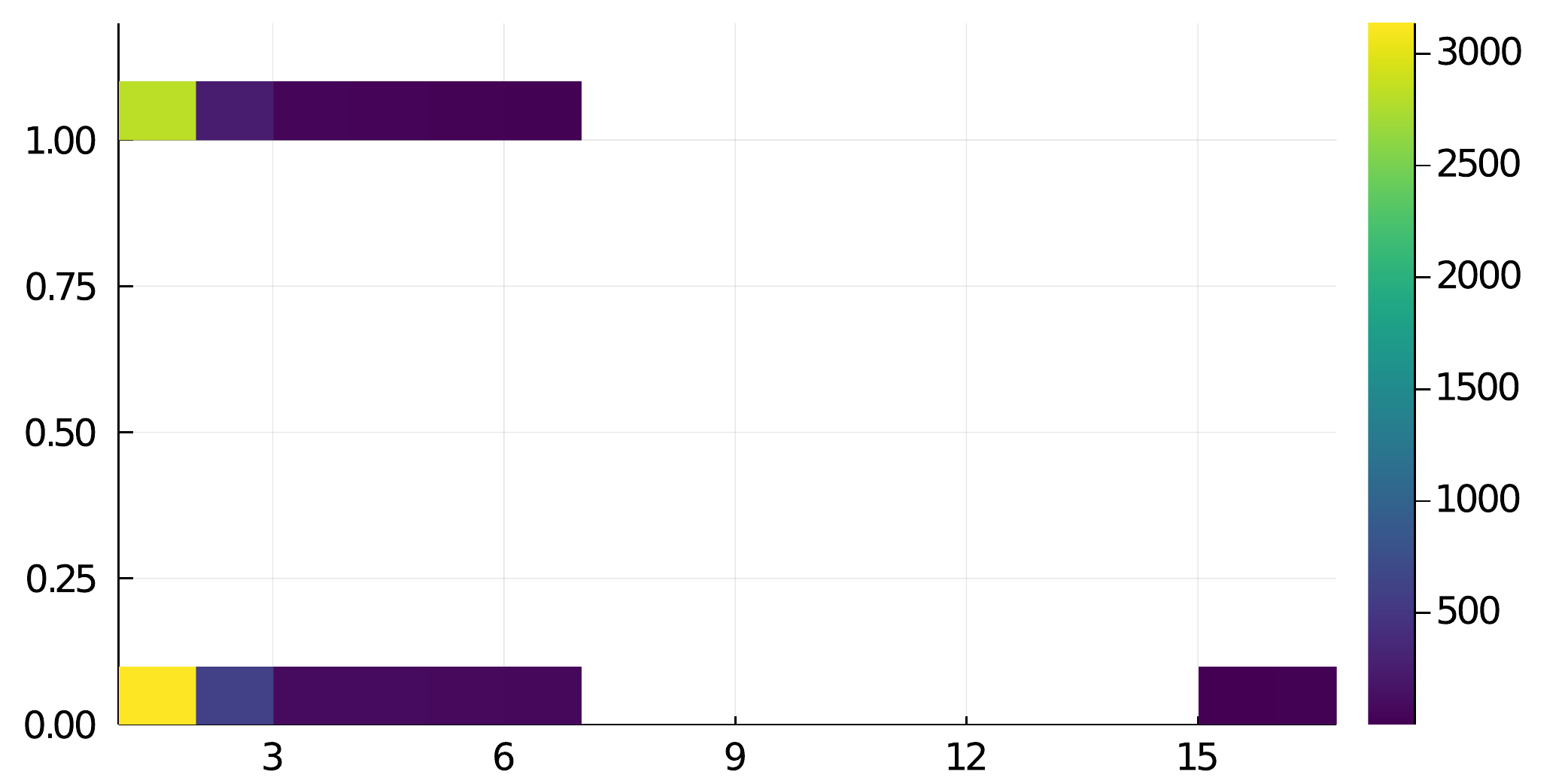}
     \end{subfigure}
\caption{Stability (left, OY axis) and groundedness (right, OY) by number of returns in method instances (OX)}%
\Description{Stability and groundedness by number of returns in method instances in DifferentialEquations}%
\label{figs:returns:DifferentialEquations}
\end{figure}
\clearpage
\subsection{Package: Flux}
\begin{figure}[h]
     \begin{subfigure}[b]{0.49\textwidth}
       \includegraphics[width=\textwidth]{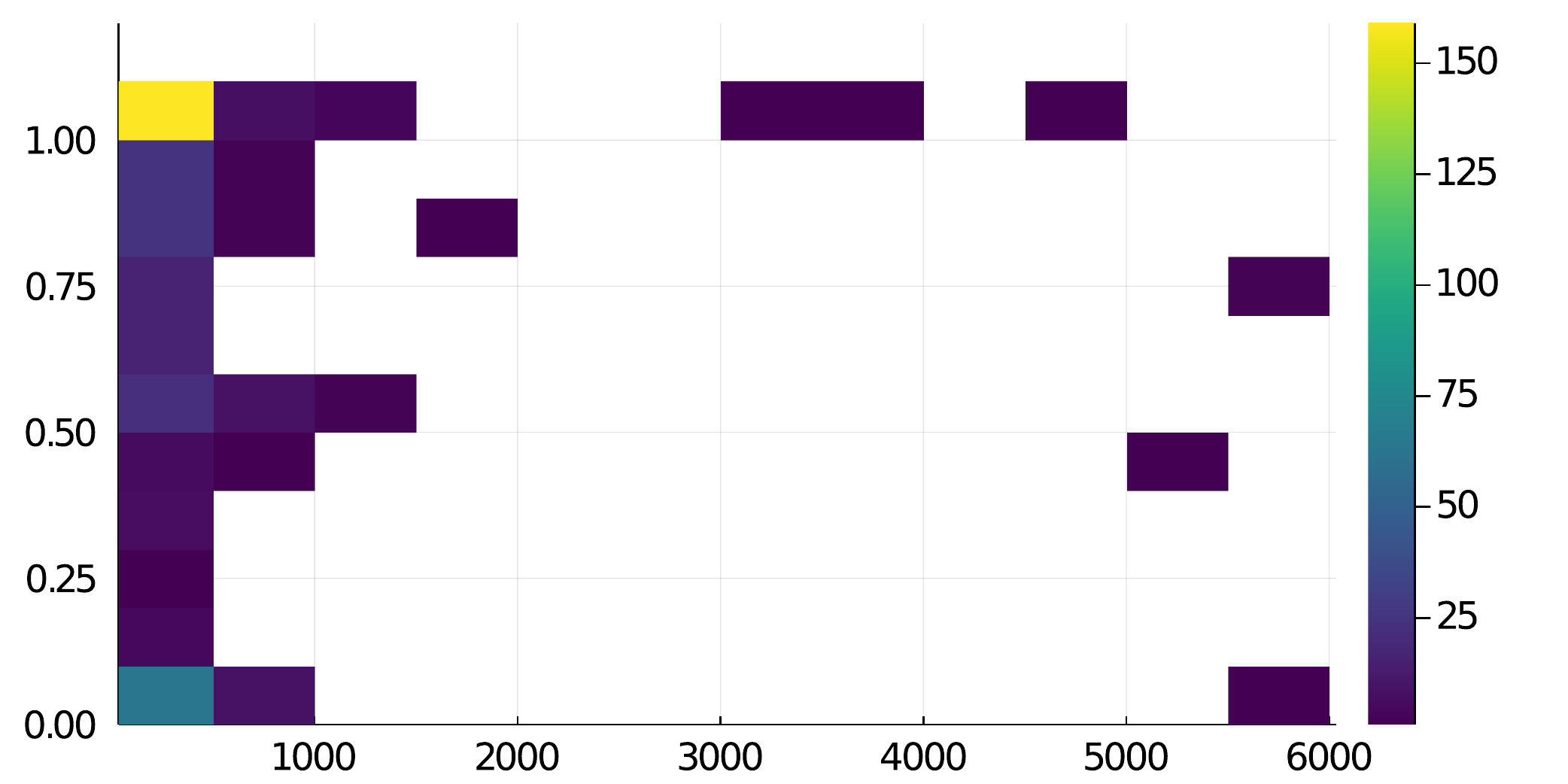}
     \end{subfigure}
     \ \
     \begin{subfigure}[b]{0.49\textwidth}
       \includegraphics[width=\textwidth]{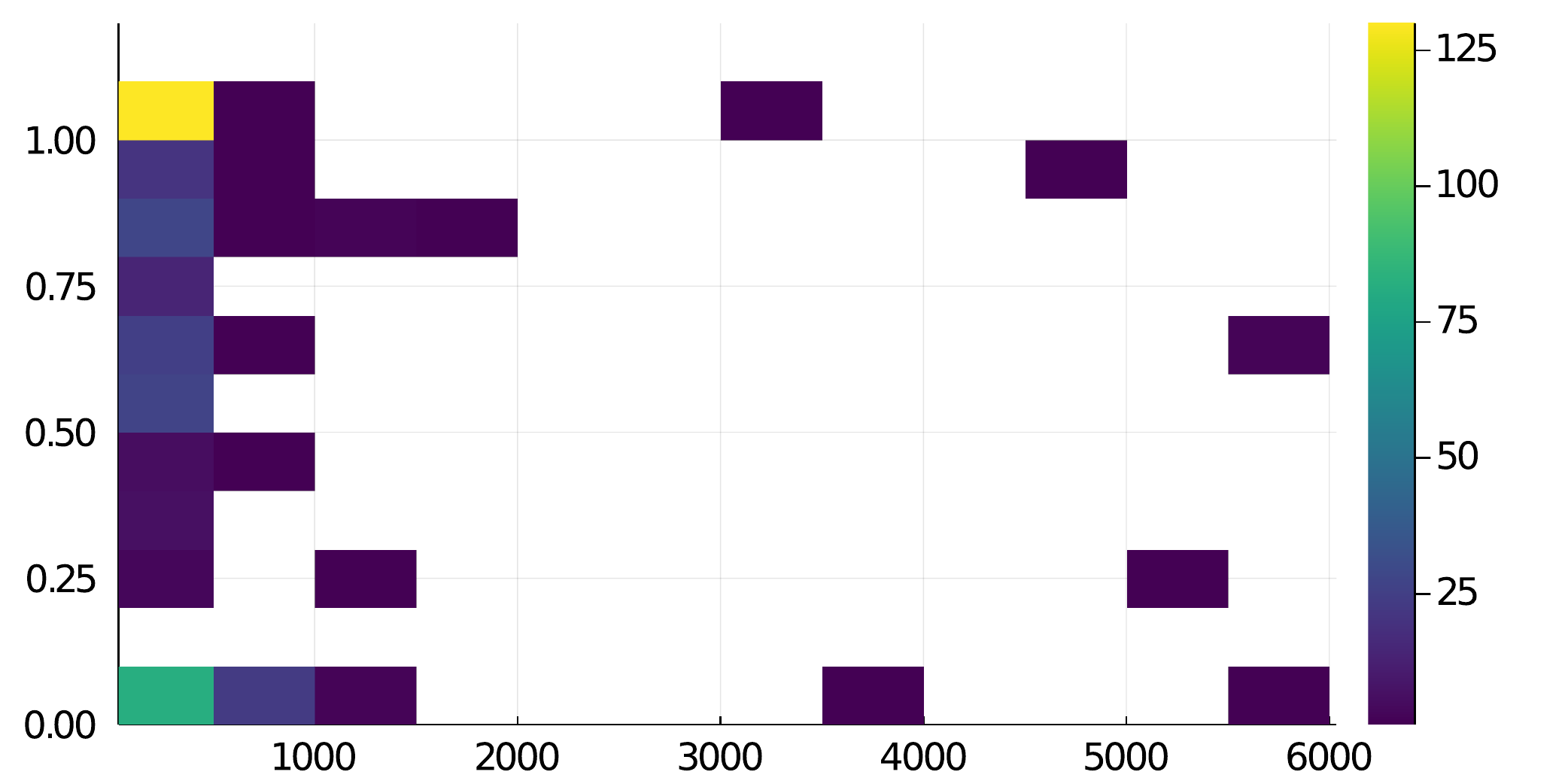}
     \end{subfigure}
\caption{Stability (left, OY axis) and groundedness (right, OY) by method size (OX)}%
\Description{Stability and groundedness by method size in Flux}%
\label{figs:size:Flux}
\end{figure}

\begin{figure}[h]
     \begin{subfigure}[b]{0.49\textwidth}
       \includegraphics[width=\textwidth]{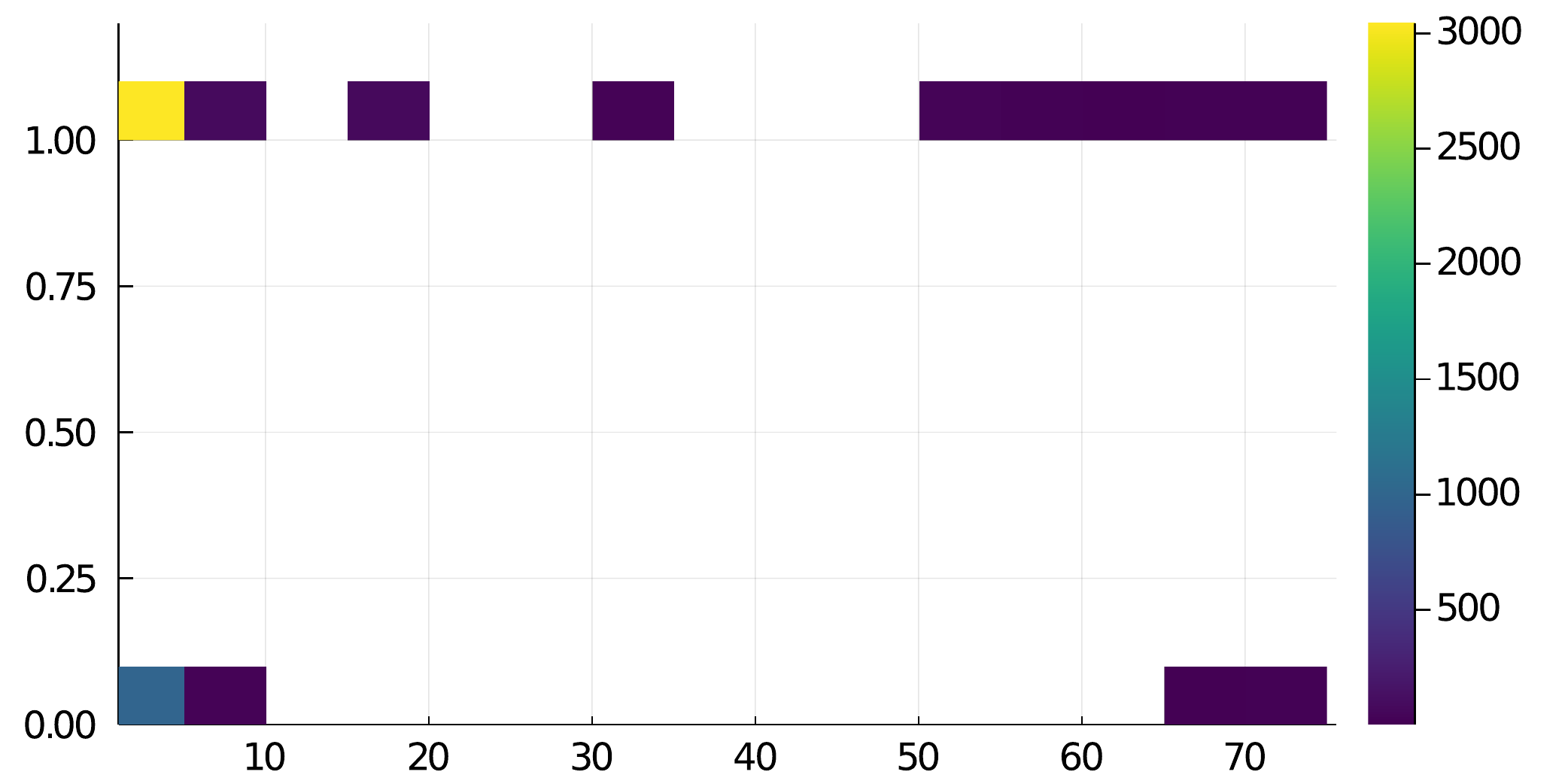}
     \end{subfigure}
     \ \
     \begin{subfigure}[b]{0.49\textwidth}
       \includegraphics[width=\textwidth]{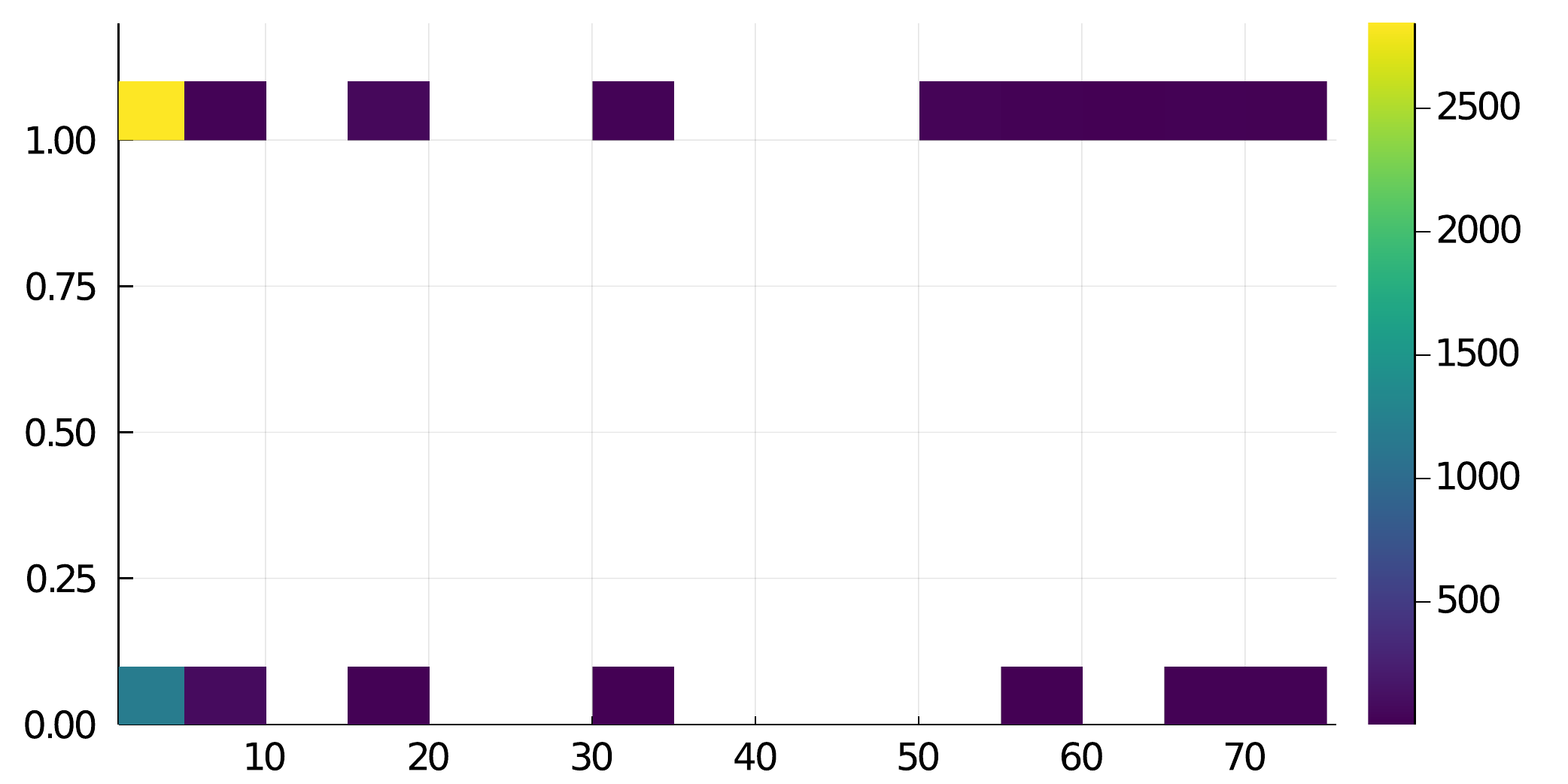}
     \end{subfigure}
\caption{Stability (left, OY axis) and groundedness (right, OY) by number of gotos in method instances (OX)}%
\Description{Stability and groundedness by number of goto's in method instances in Flux}%
\label{figs:gotos:Flux}
\end{figure}

\begin{figure}[h]
     \begin{subfigure}[b]{0.49\textwidth}
       \includegraphics[width=\textwidth]{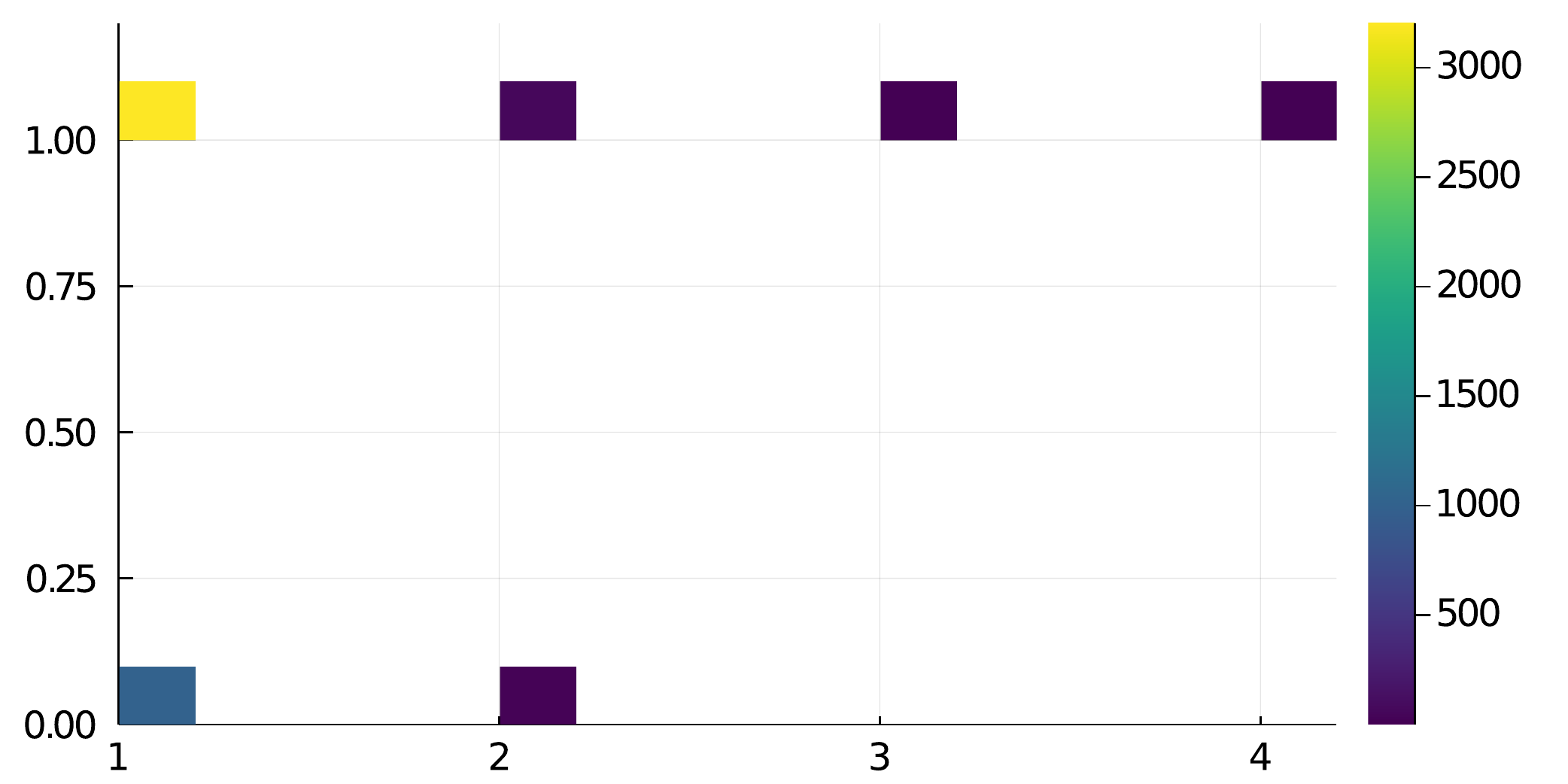}
     \end{subfigure}
     \ \
     \begin{subfigure}[b]{0.49\textwidth}
       \includegraphics[width=\textwidth]{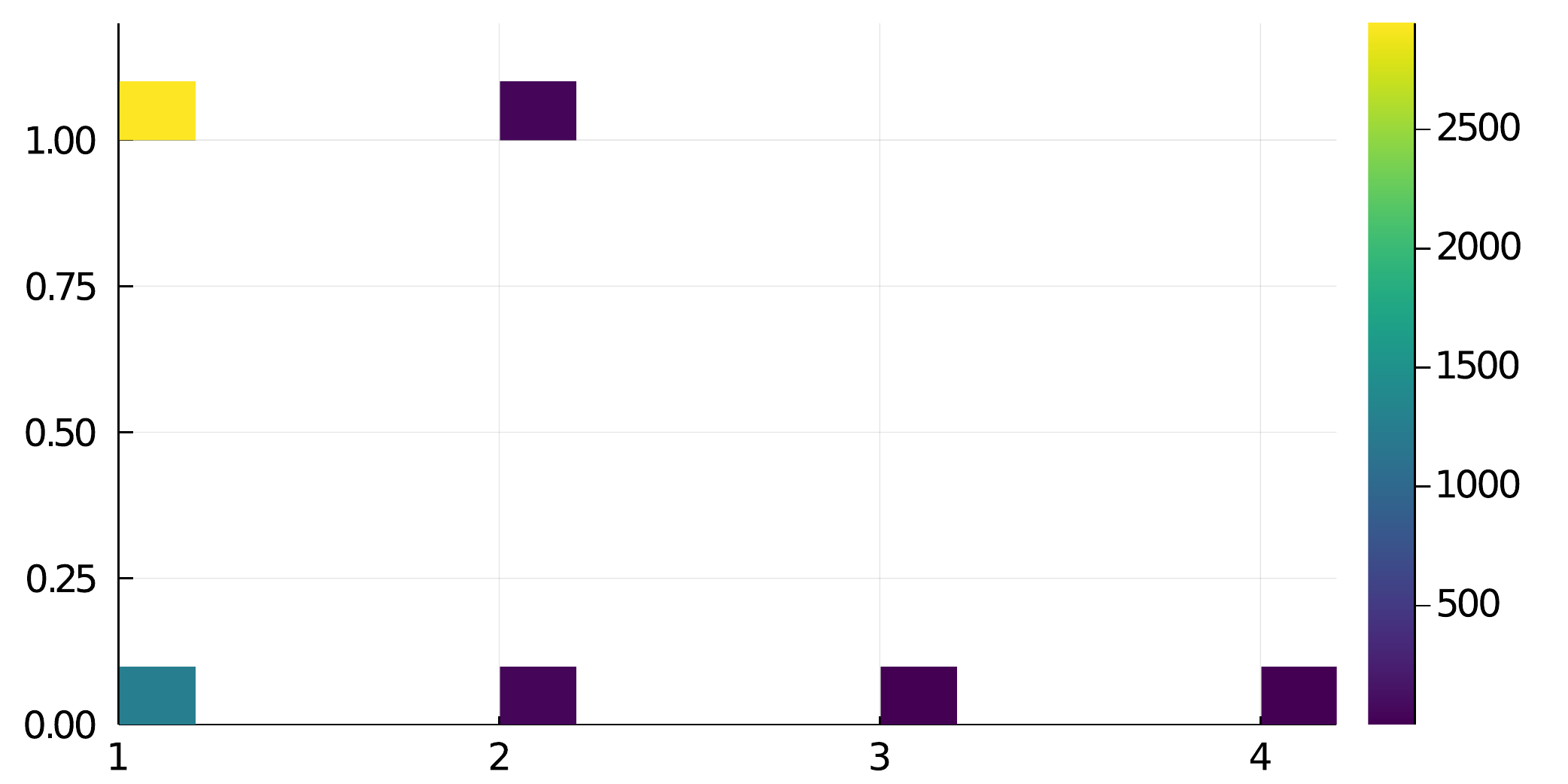}
     \end{subfigure}
\caption{Stability (left, OY axis) and groundedness (right, OY) by number of returns in method instances (OX)}%
\Description{Stability and groundedness by number of returns in method instances in Flux}%
\label{figs:returns:Flux}
\end{figure}
\clearpage
\subsection{Package: Gadfly}
\begin{figure}[h]
     \begin{subfigure}[b]{0.49\textwidth}
       \includegraphics[width=\textwidth]{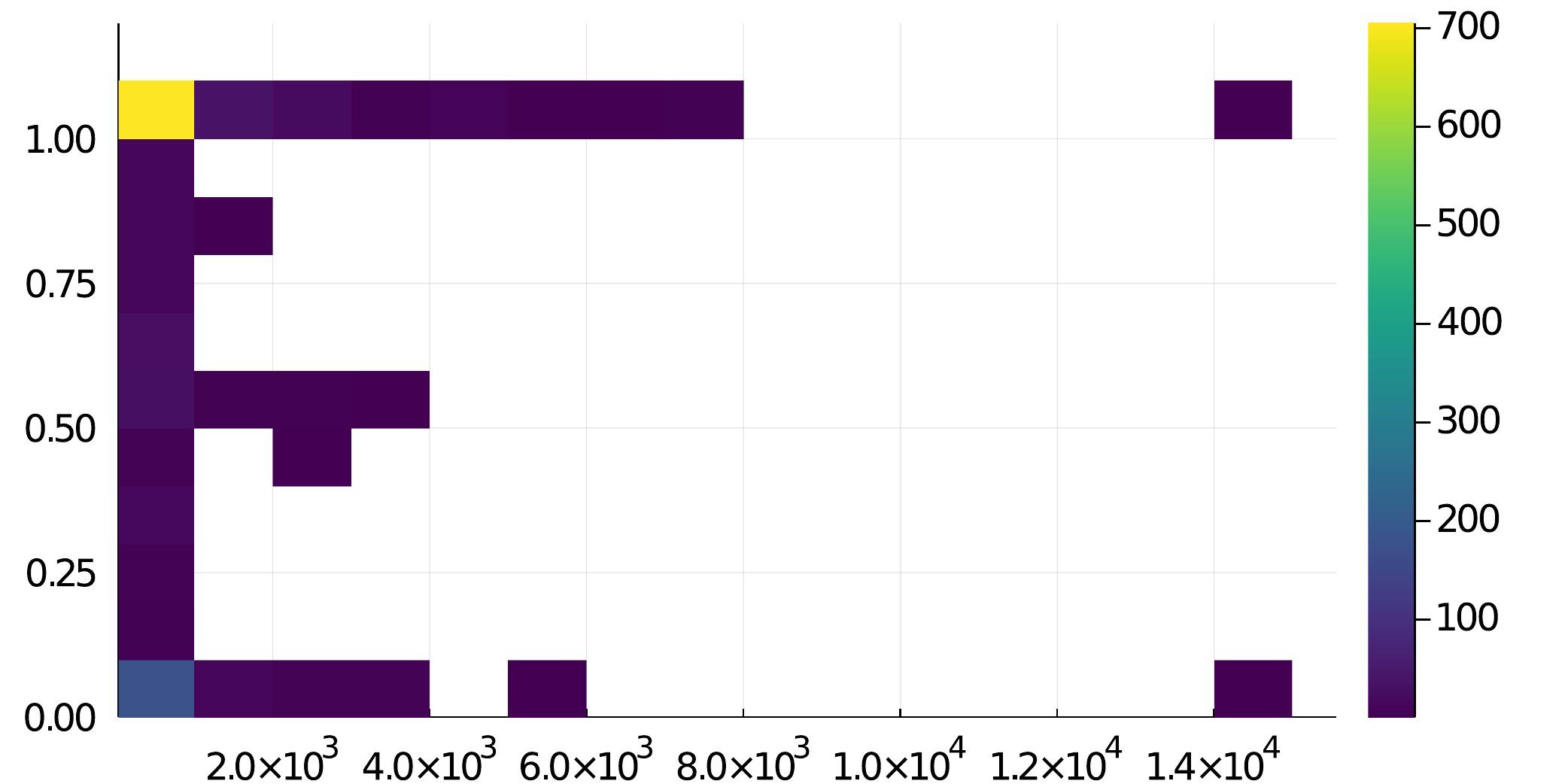}
     \end{subfigure}
     \ \
     \begin{subfigure}[b]{0.49\textwidth}
       \includegraphics[width=\textwidth]{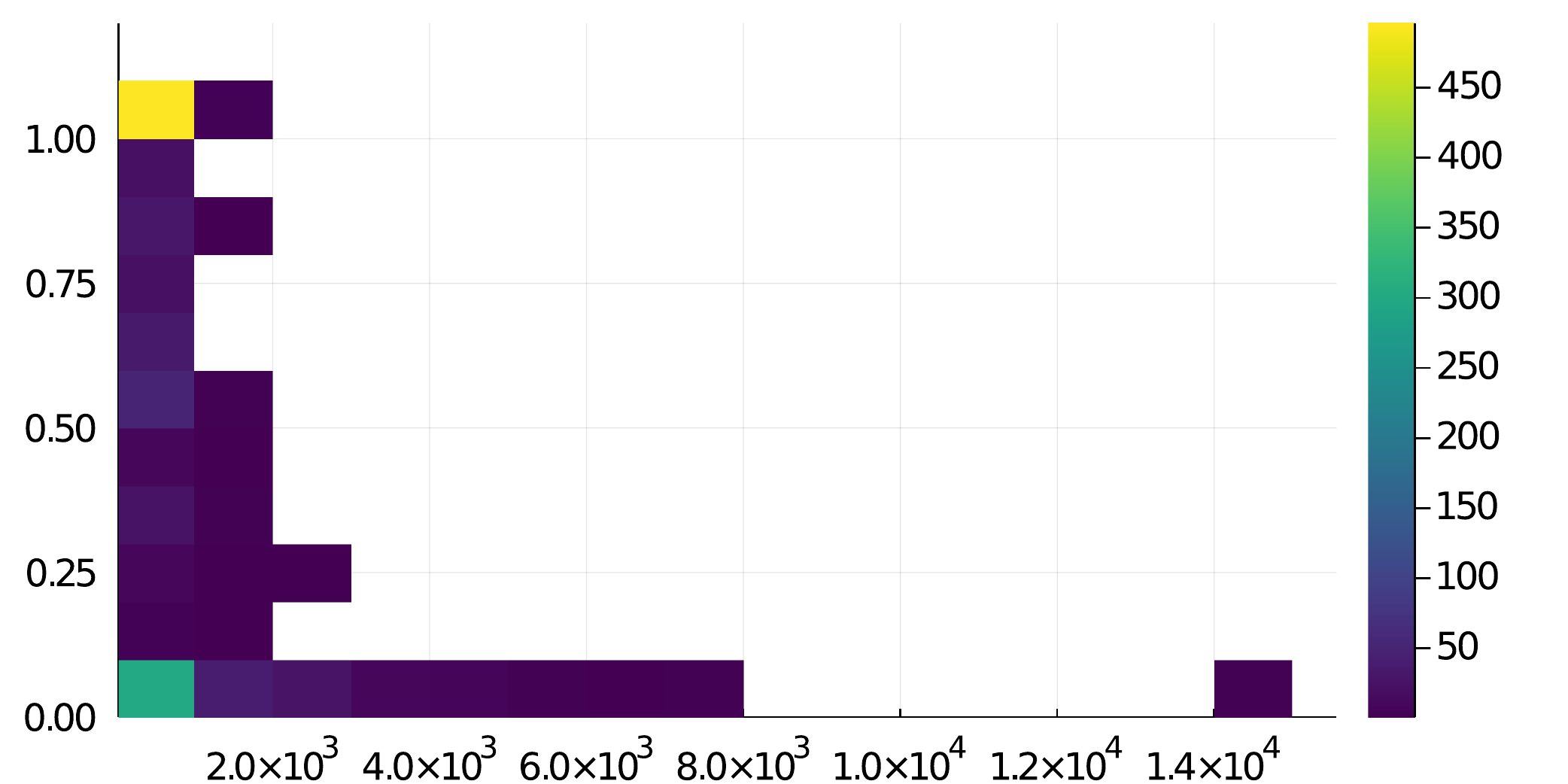}
     \end{subfigure}
\caption{Stability (left, OY axis) and groundedness (right, OY) by method size (OX)}%
\Description{Stability and groundedness by method size in Gadfly}%
\label{figs:size:Gadfly}
\end{figure}

\begin{figure}[h]
     \begin{subfigure}[b]{0.49\textwidth}
       \includegraphics[width=\textwidth]{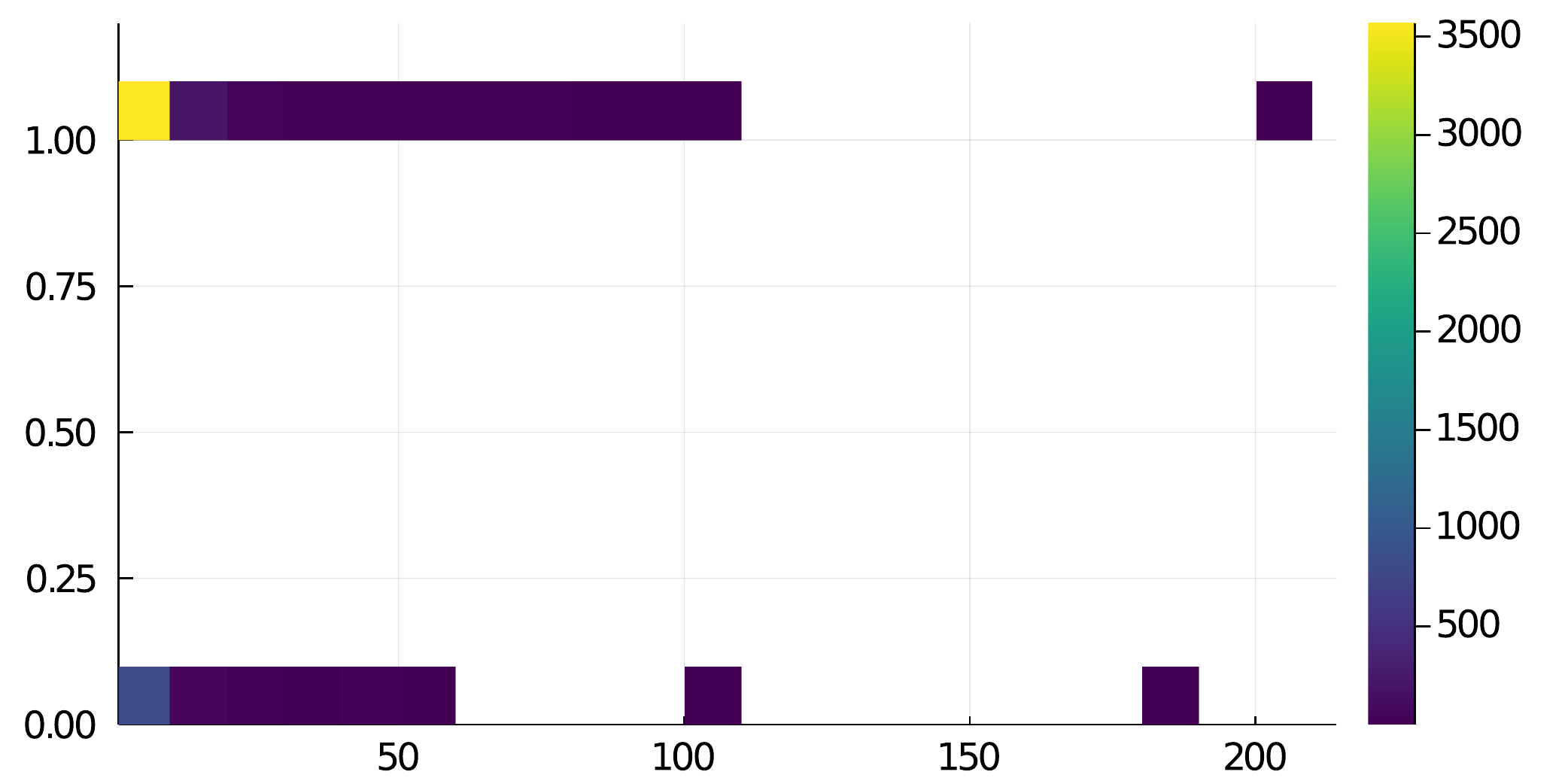}
     \end{subfigure}
     \ \
     \begin{subfigure}[b]{0.49\textwidth}
       \includegraphics[width=\textwidth]{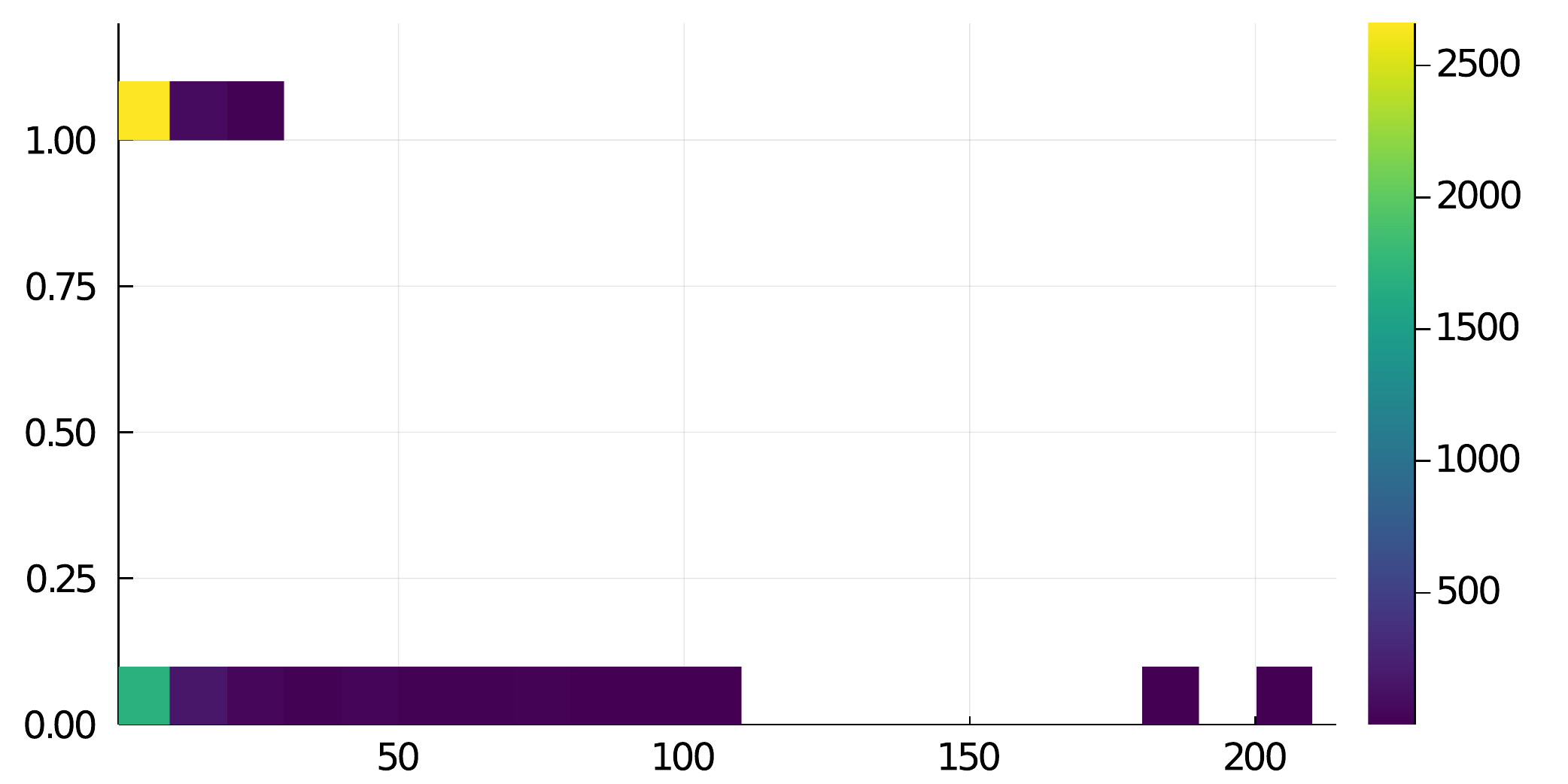}
     \end{subfigure}
\caption{Stability (left, OY axis) and groundedness (right, OY) by number of gotos in method instances (OX)}%
\Description{Stability and groundedness by number of goto's in method instances in Gadfly}%
\label{figs:gotos:Gadfly}
\end{figure}

\begin{figure}[h]
     \begin{subfigure}[b]{0.49\textwidth}
       \includegraphics[width=\textwidth]{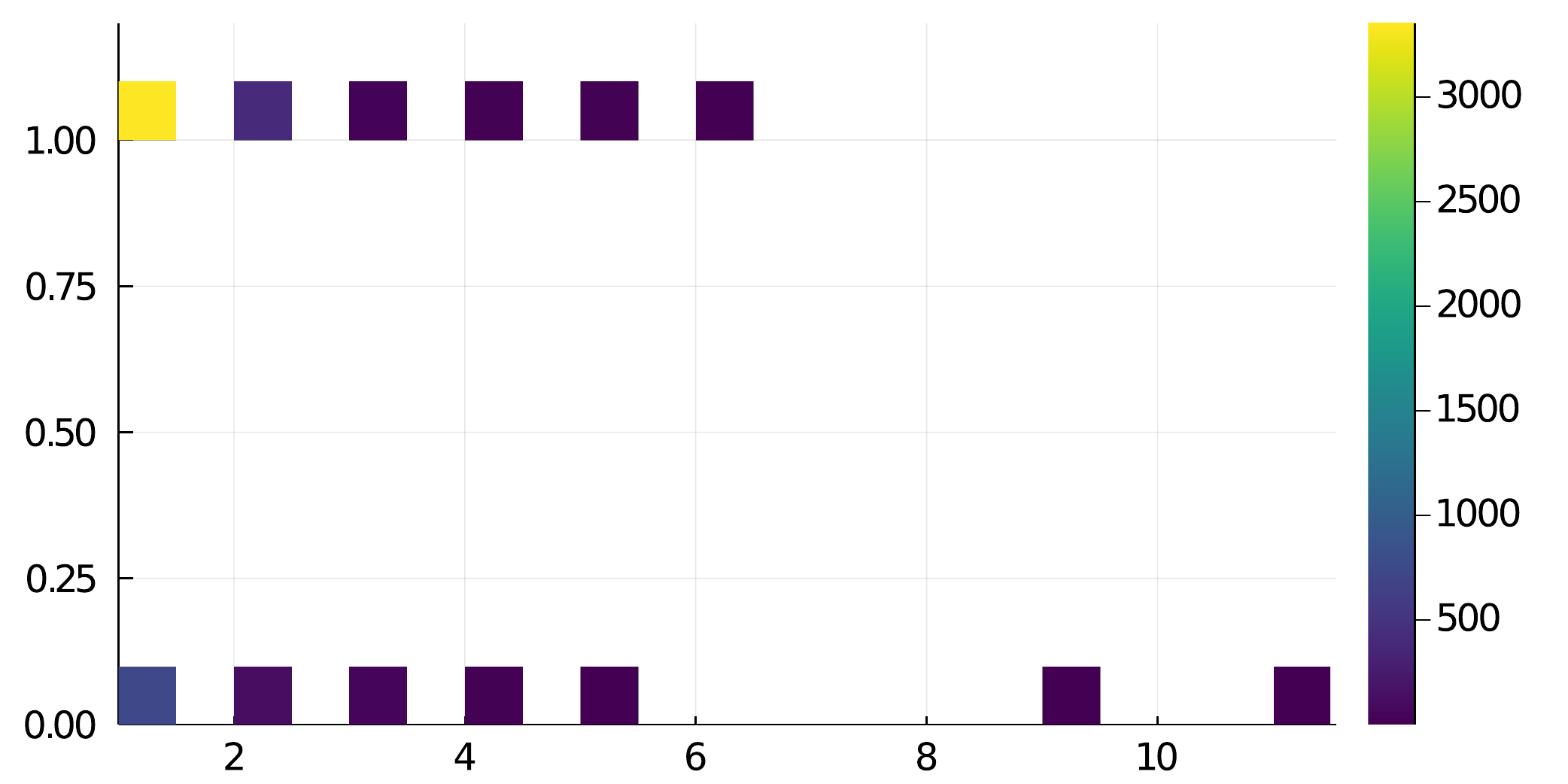}
     \end{subfigure}
     \ \
     \begin{subfigure}[b]{0.49\textwidth}
       \includegraphics[width=\textwidth]{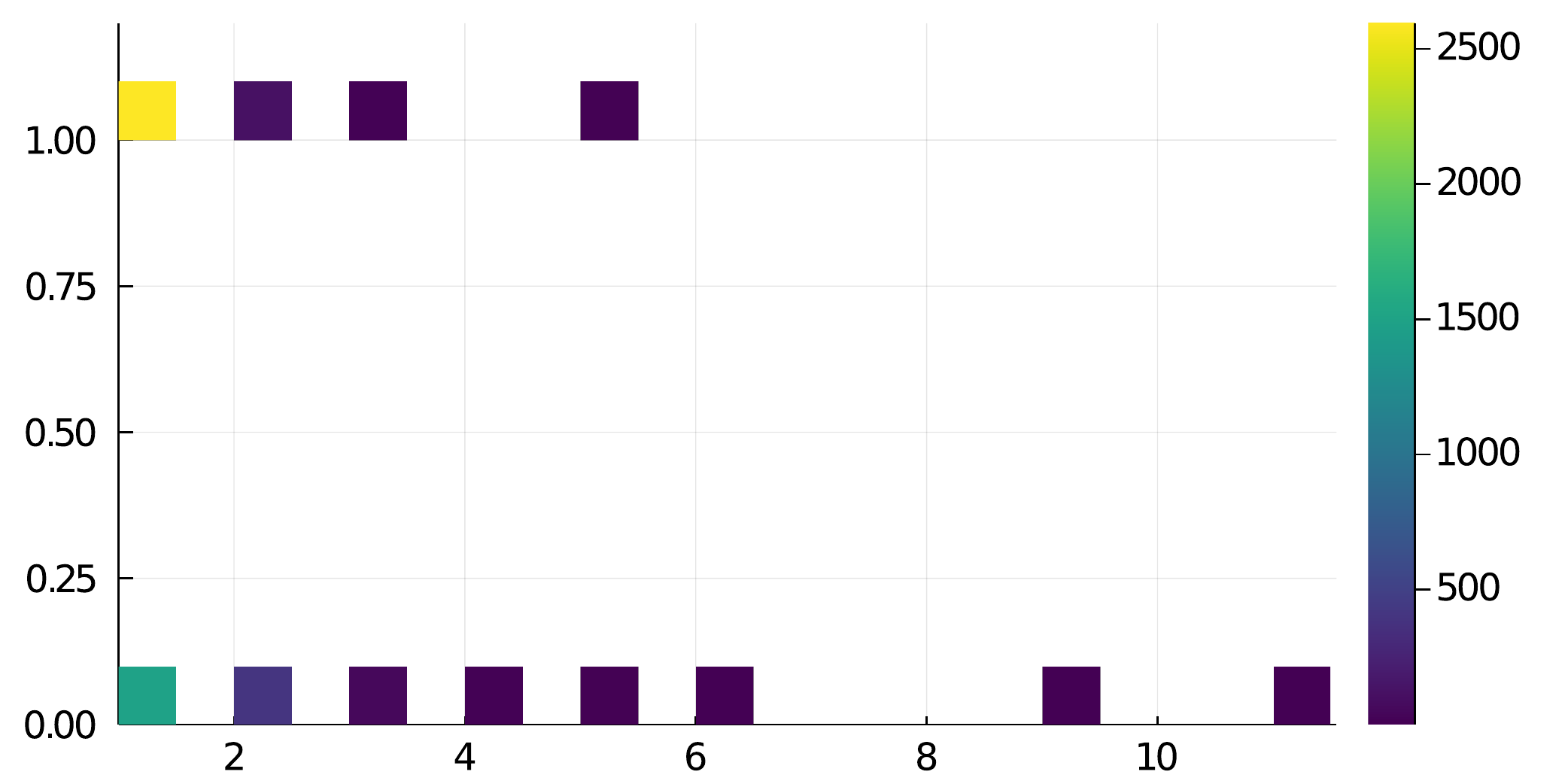}
     \end{subfigure}
\caption{Stability (left, OY axis) and groundedness (right, OY) by number of returns in method instances (OX)}%
\Description{Stability and groundedness by number of returns in method instances in Gadfly}%
\label{figs:returns:Gadfly}
\end{figure}
\clearpage
\subsection{Package: Gen}
\begin{figure}[h]
     \begin{subfigure}[b]{0.49\textwidth}
       \includegraphics[width=\textwidth]{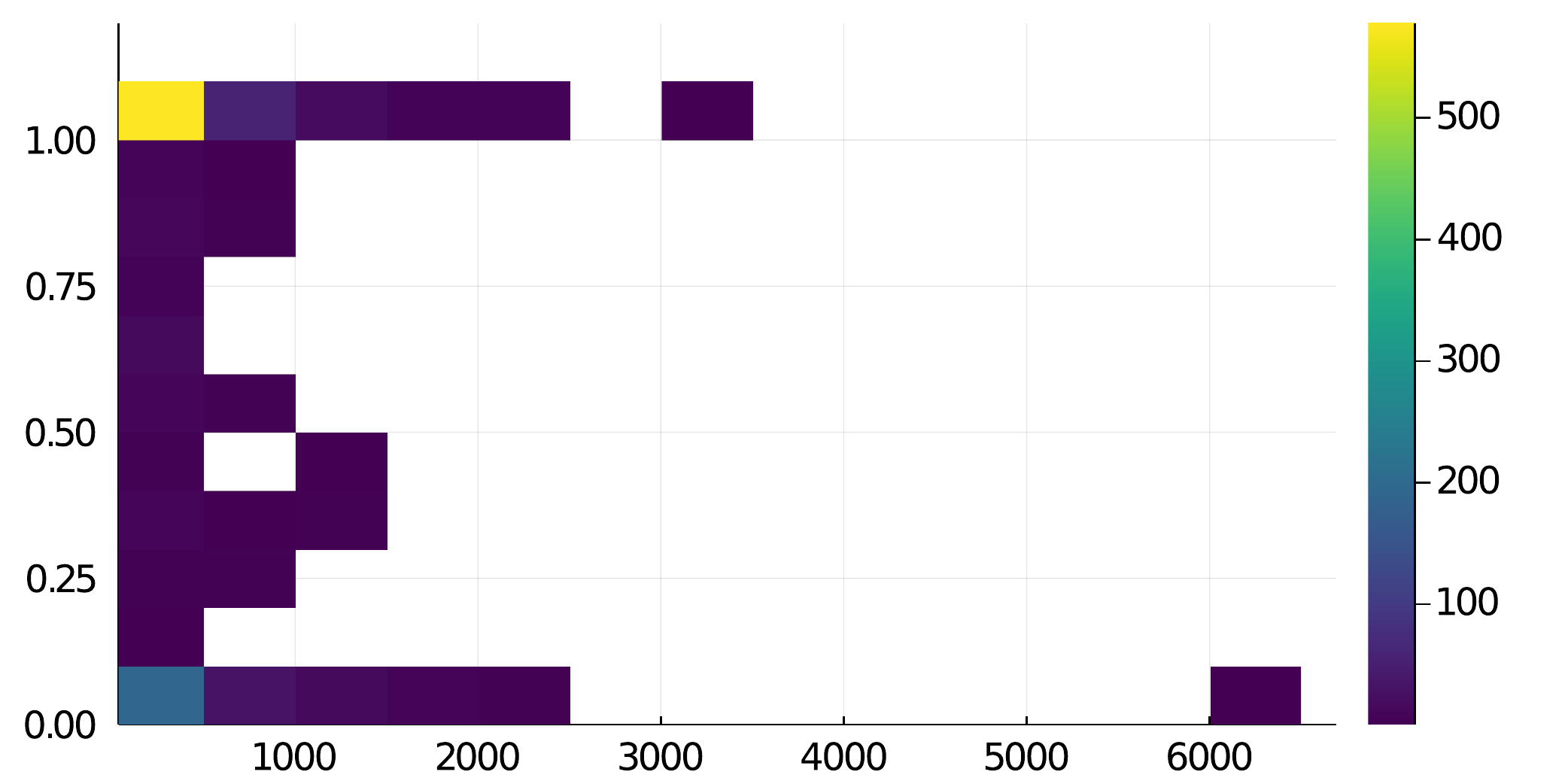}
     \end{subfigure}
     \ \
     \begin{subfigure}[b]{0.49\textwidth}
       \includegraphics[width=\textwidth]{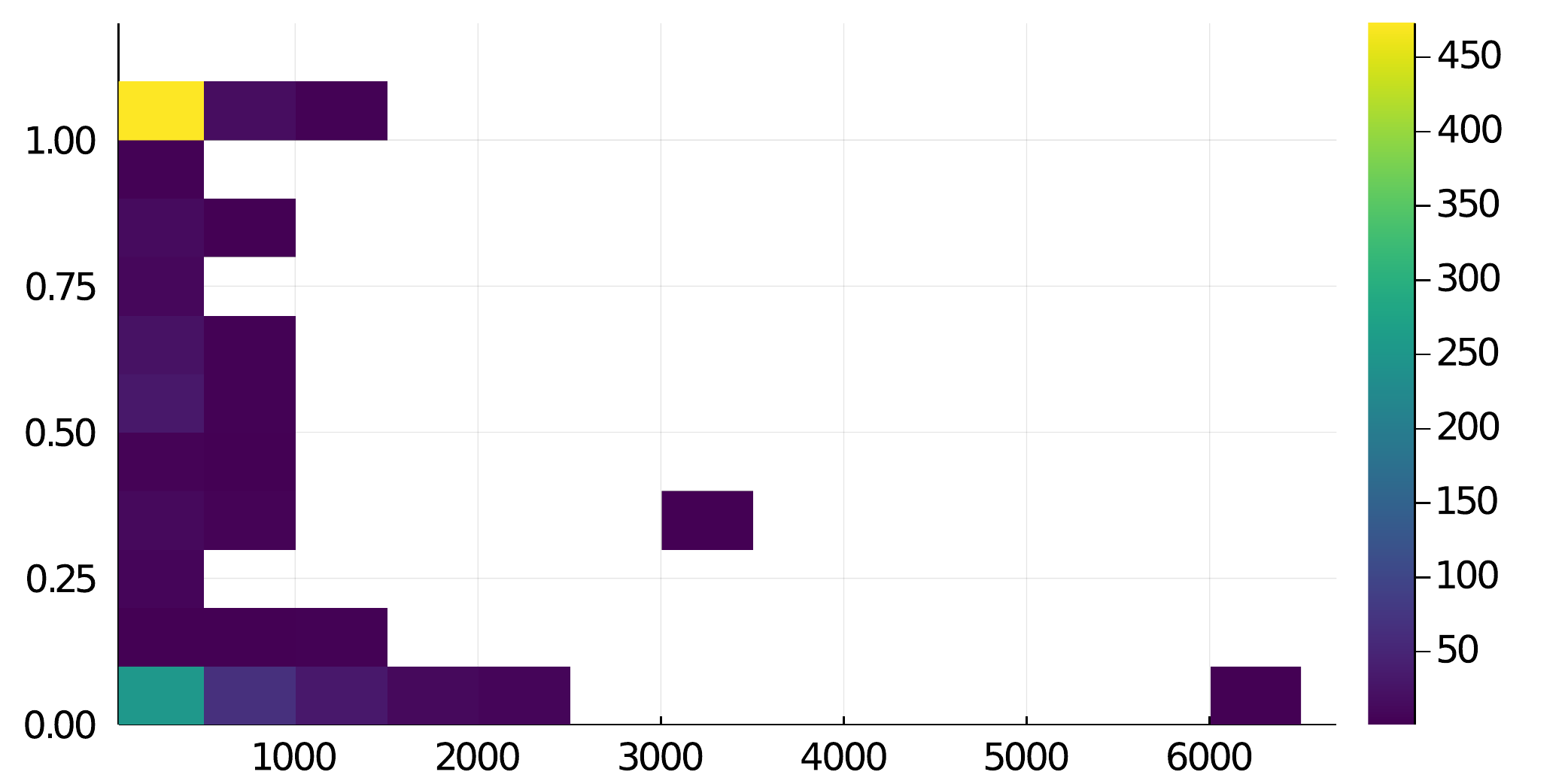}
     \end{subfigure}
\caption{Stability (left, OY axis) and groundedness (right, OY) by method size (OX)}%
\Description{Stability and groundedness by method size in Gen}%
\label{figs:size:Gen}
\end{figure}

\begin{figure}[h]
     \begin{subfigure}[b]{0.49\textwidth}
       \includegraphics[width=\textwidth]{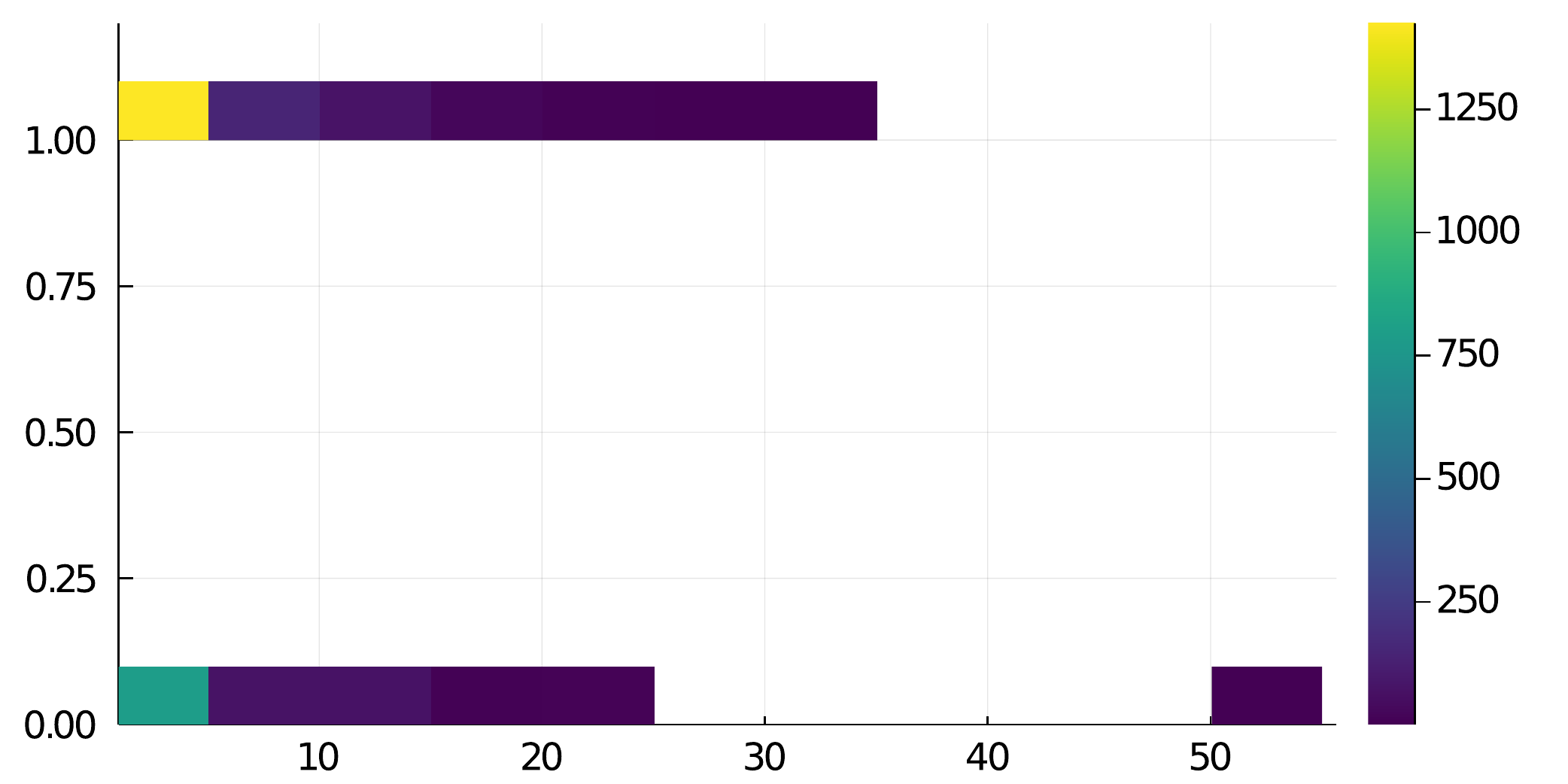}
     \end{subfigure}
     \ \
     \begin{subfigure}[b]{0.49\textwidth}
       \includegraphics[width=\textwidth]{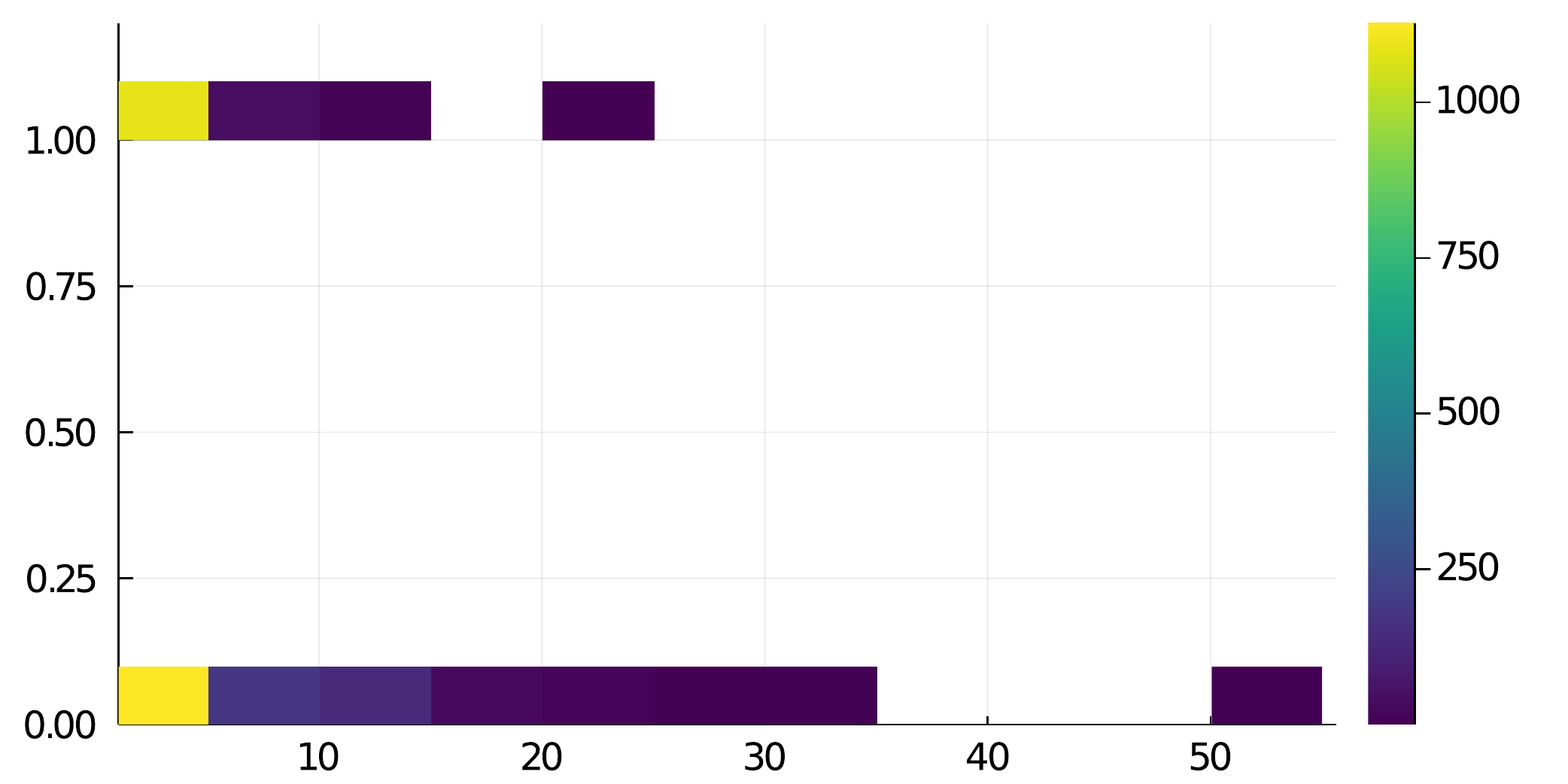}
     \end{subfigure}
\caption{Stability (left, OY axis) and groundedness (right, OY) by number of gotos in method instances (OX)}%
\Description{Stability and groundedness by number of goto's in method instances in Gen}%
\label{figs:gotos:Gen}
\end{figure}

\begin{figure}[h]
     \begin{subfigure}[b]{0.49\textwidth}
       \includegraphics[width=\textwidth]{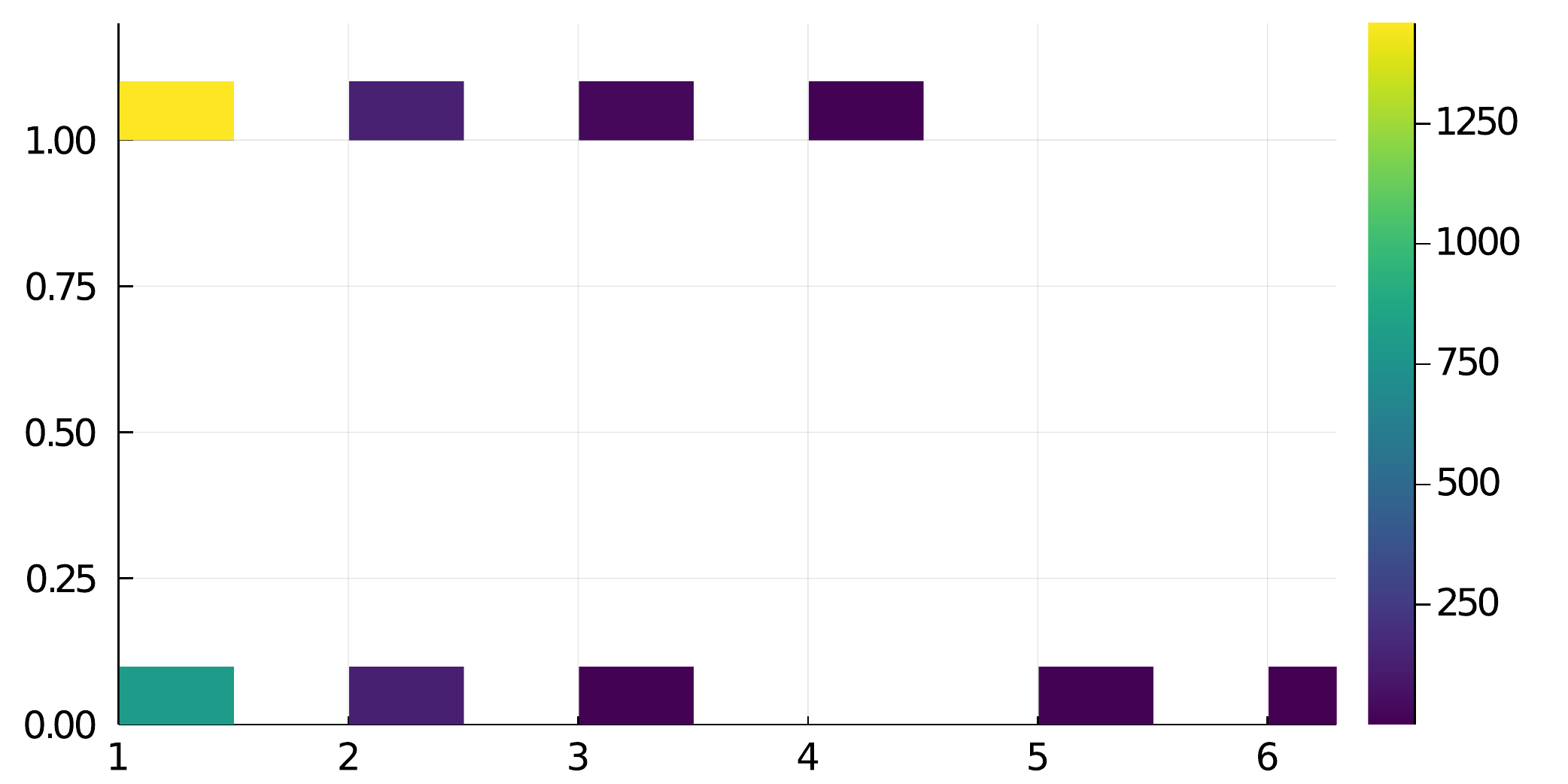}
     \end{subfigure}
     \ \
     \begin{subfigure}[b]{0.49\textwidth}
       \includegraphics[width=\textwidth]{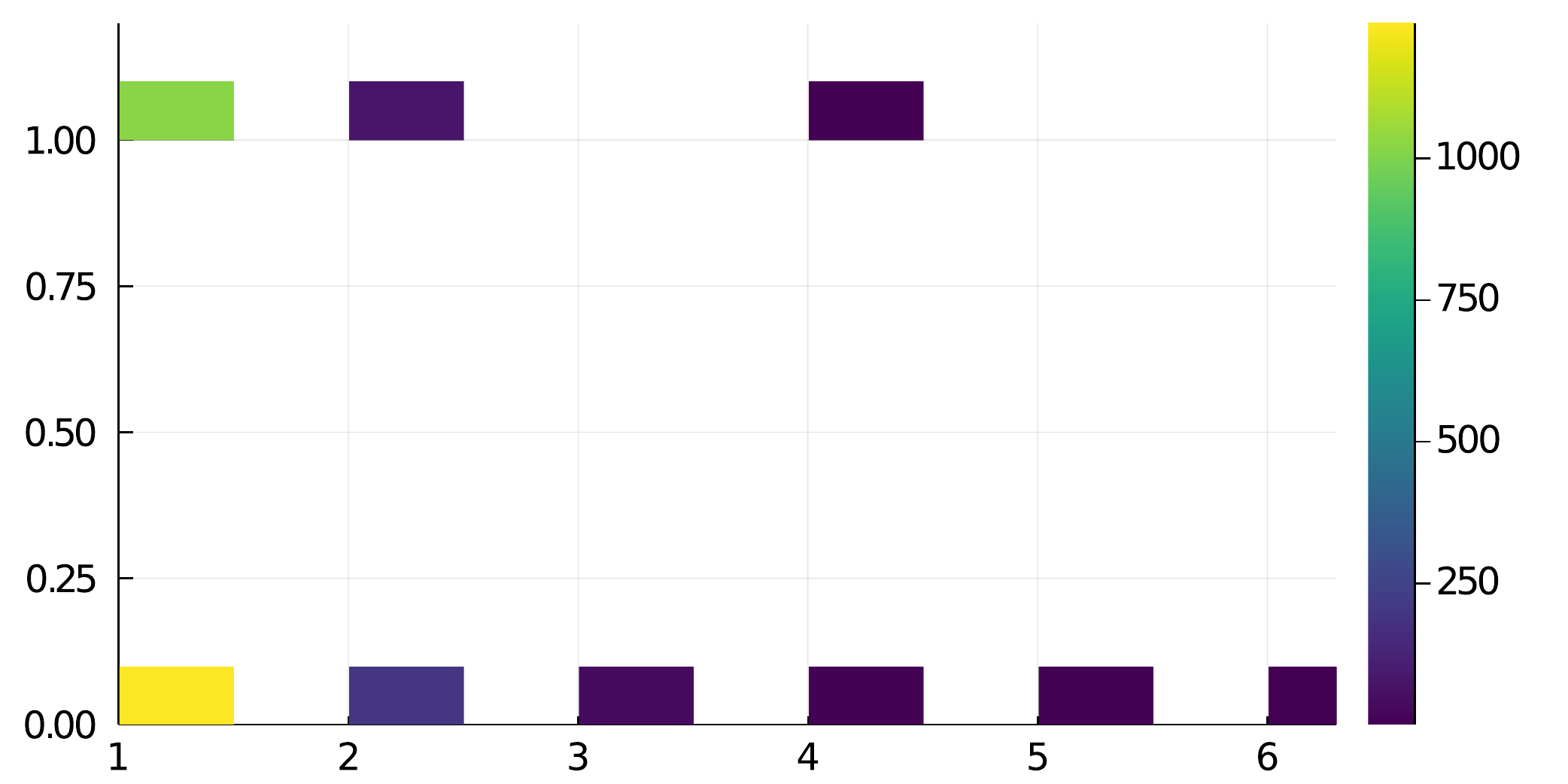}
     \end{subfigure}
\caption{Stability (left, OY axis) and groundedness (right, OY) by number of returns in method instances (OX)}%
\Description{Stability and groundedness by number of returns in method instances in Gen}%
\label{figs:returns:Gen}
\end{figure}
\clearpage
\subsection{Package: Genie}
\begin{figure}[h]
     \begin{subfigure}[b]{0.49\textwidth}
       \includegraphics[width=\textwidth]{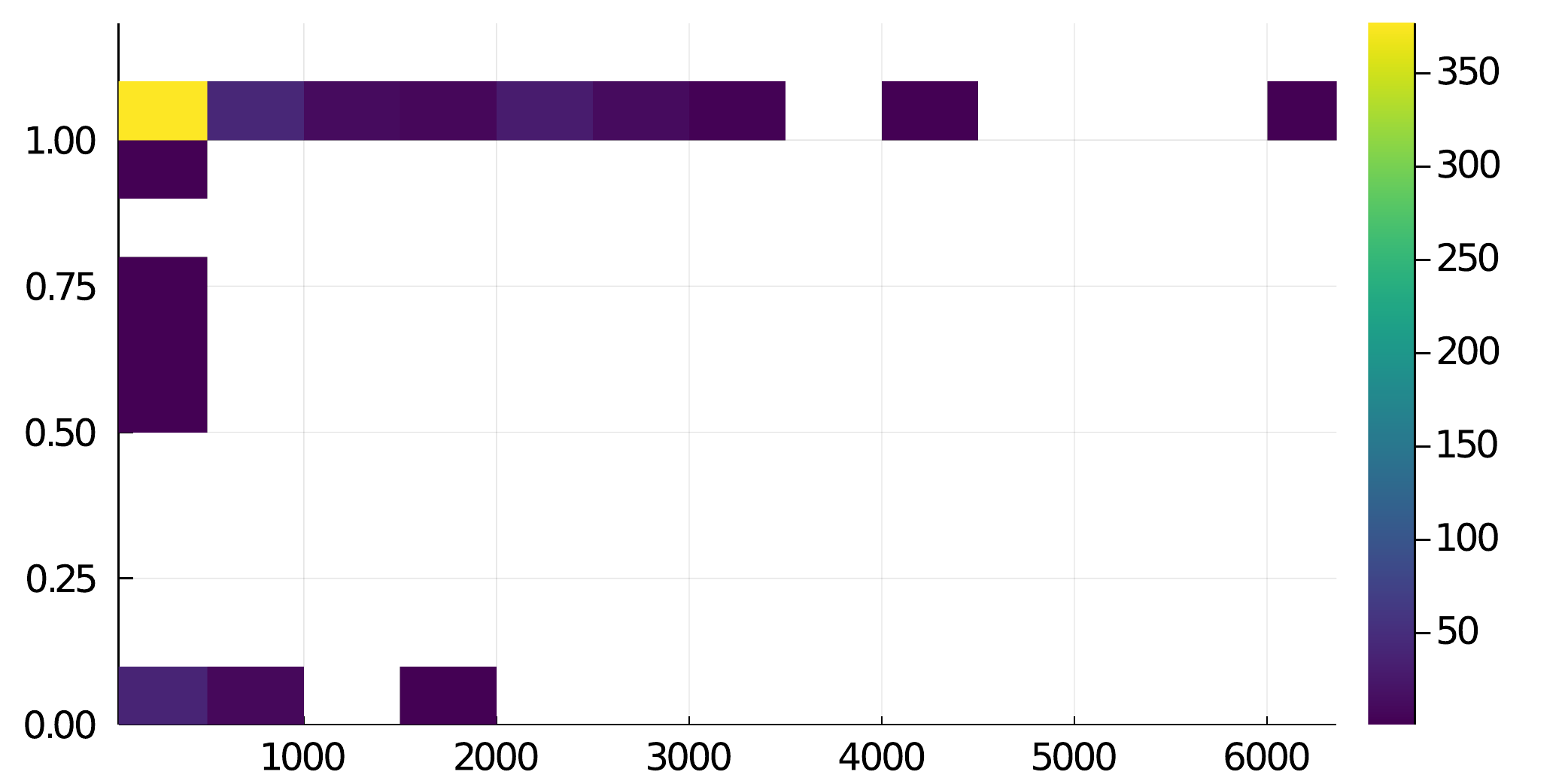}
     \end{subfigure}
     \ \
     \begin{subfigure}[b]{0.49\textwidth}
       \includegraphics[width=\textwidth]{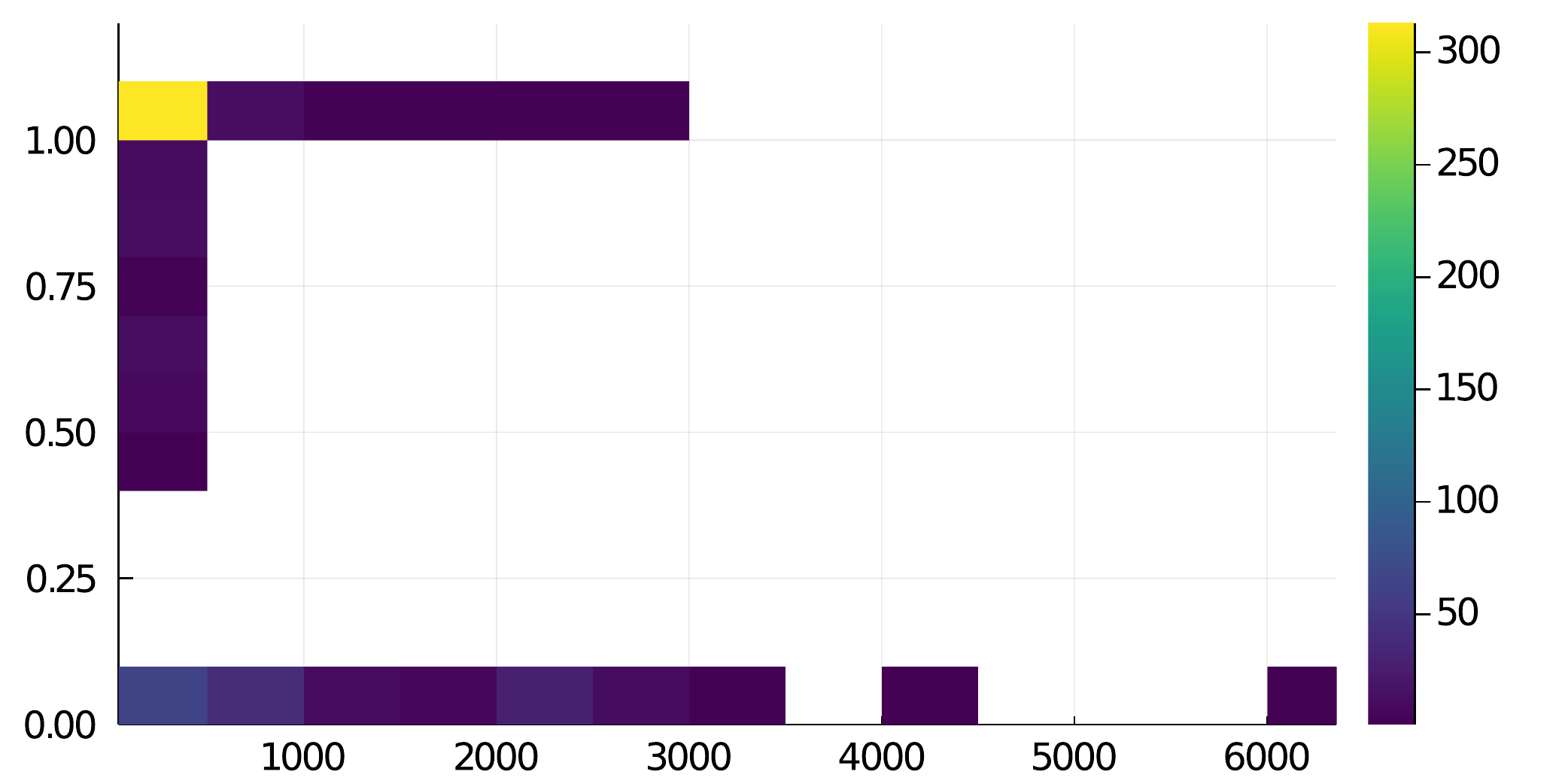}
     \end{subfigure}
\caption{Stability (left, OY axis) and groundedness (right, OY) by method size (OX)}%
\Description{Stability and groundedness by method size in Genie}%
\label{figs:size:Genie}
\end{figure}

\begin{figure}[h]
     \begin{subfigure}[b]{0.49\textwidth}
       \includegraphics[width=\textwidth]{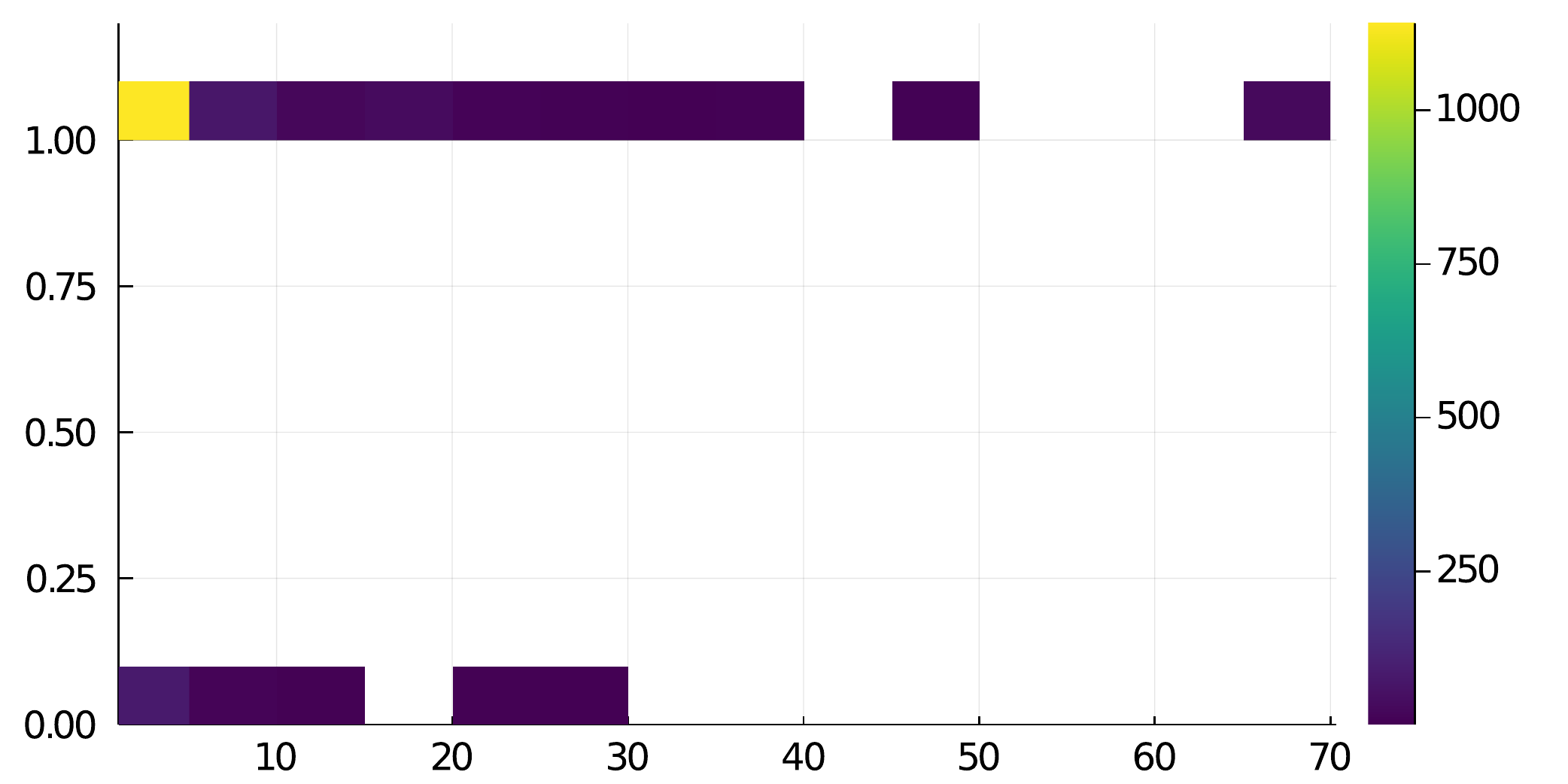}
     \end{subfigure}
     \ \
     \begin{subfigure}[b]{0.49\textwidth}
       \includegraphics[width=\textwidth]{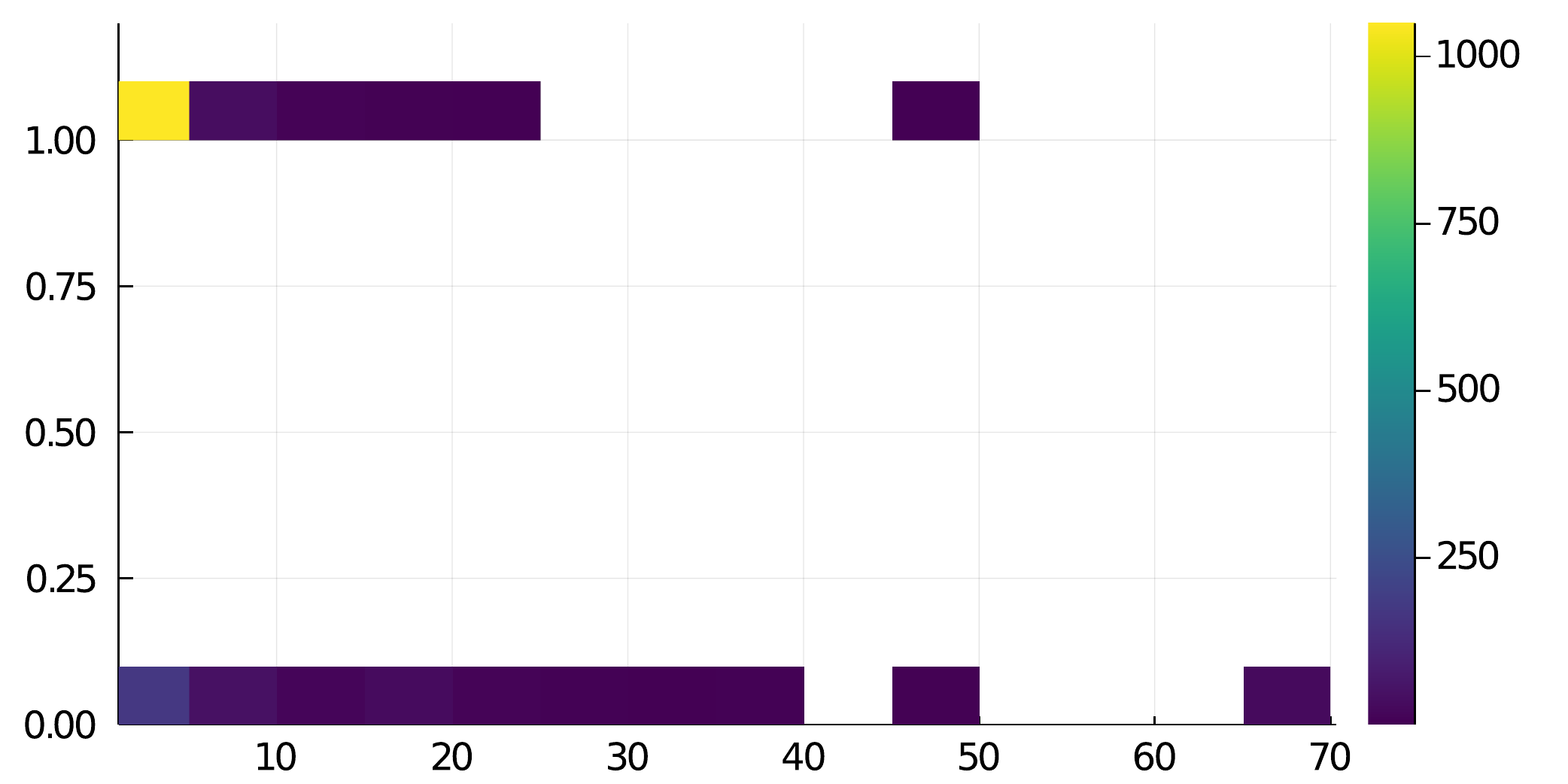}
     \end{subfigure}
\caption{Stability (left, OY axis) and groundedness (right, OY) by number of gotos in method instances (OX)}%
\Description{Stability and groundedness by number of goto's in method instances in Genie}%
\label{figs:gotos:Genie}
\end{figure}

\begin{figure}[h]
     \begin{subfigure}[b]{0.49\textwidth}
       \includegraphics[width=\textwidth]{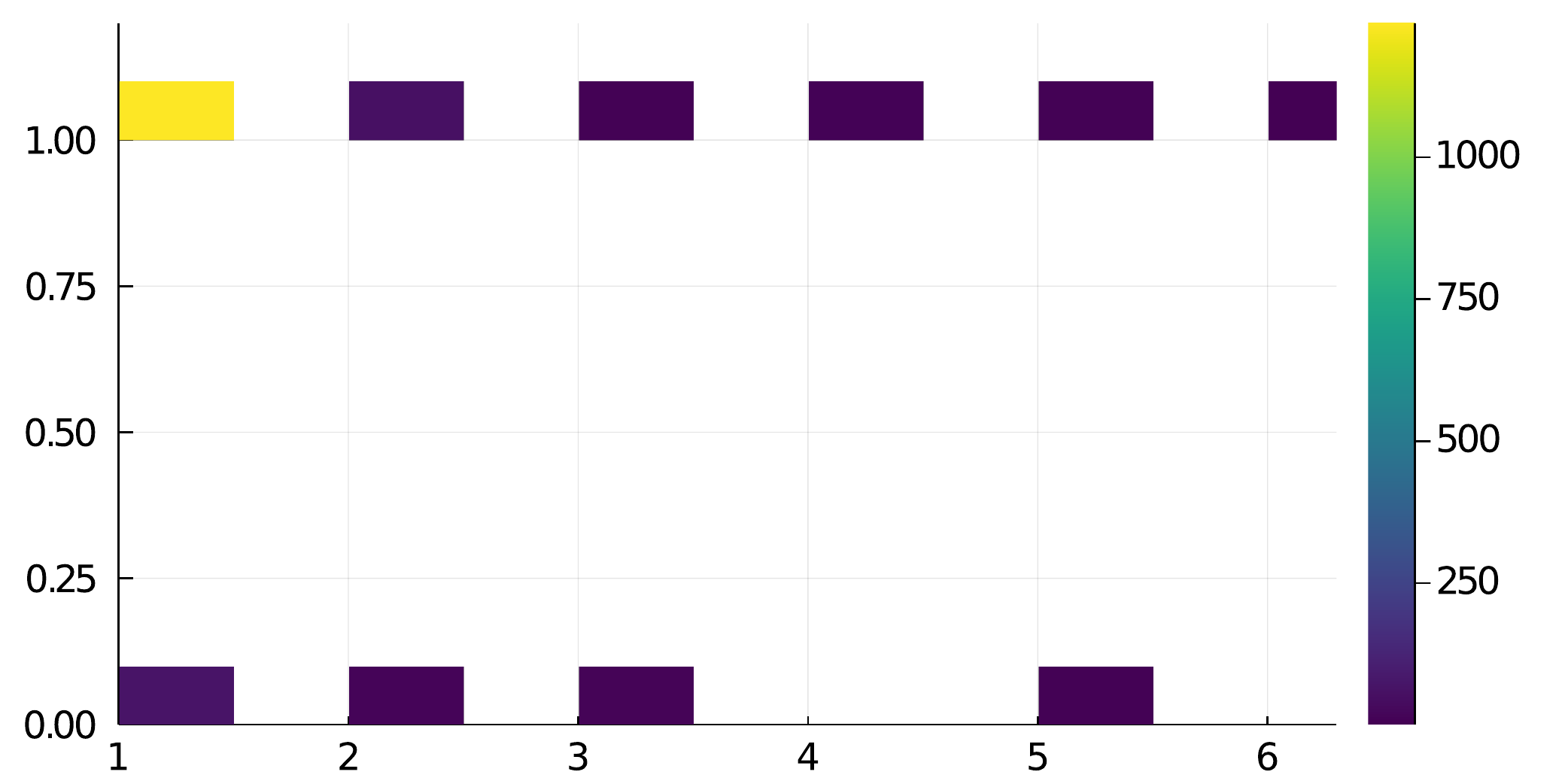}
     \end{subfigure}
     \ \
     \begin{subfigure}[b]{0.49\textwidth}
       \includegraphics[width=\textwidth]{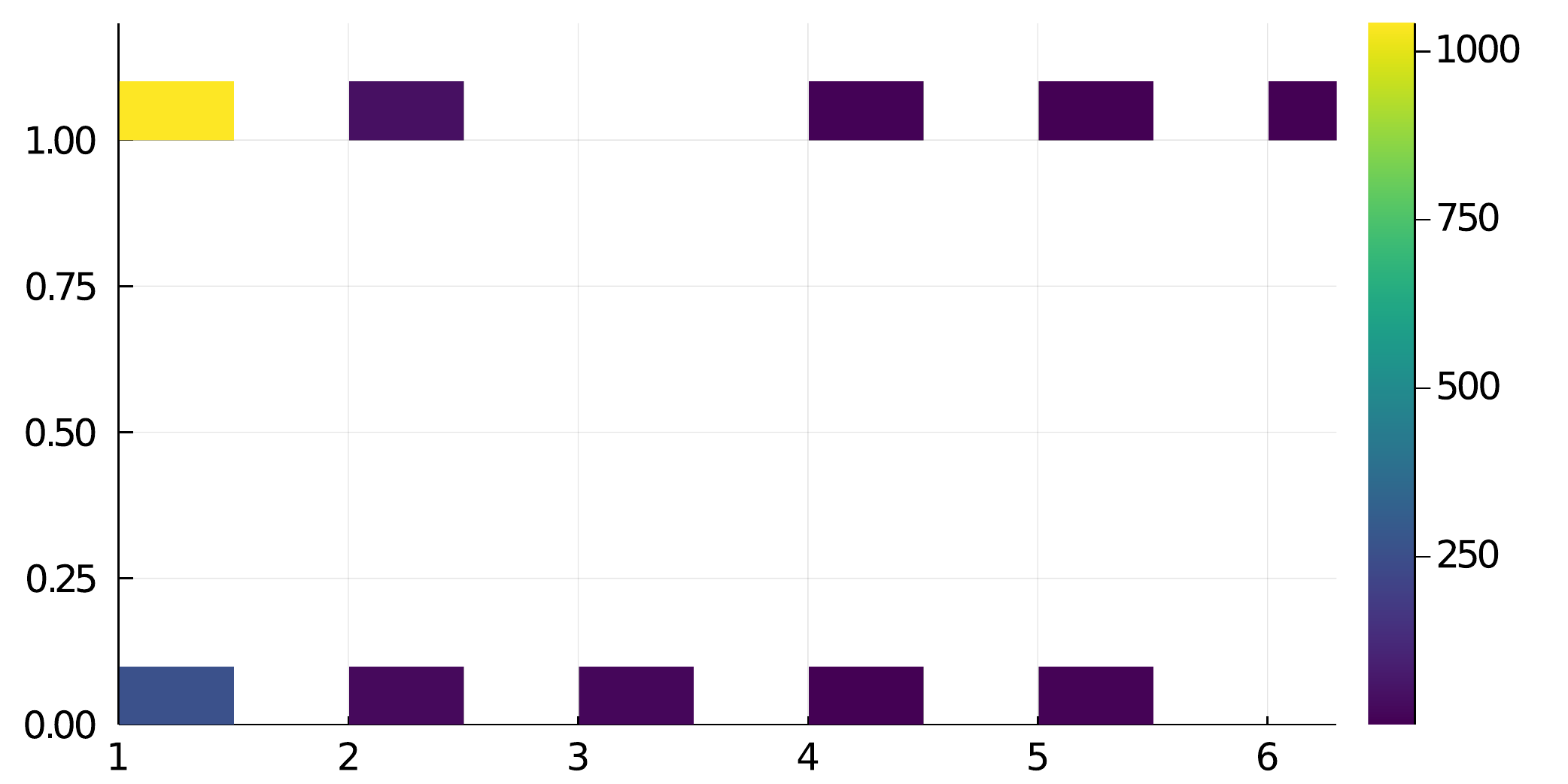}
     \end{subfigure}
\caption{Stability (left, OY axis) and groundedness (right, OY) by number of returns in method instances (OX)}%
\Description{Stability and groundedness by number of returns in method instances in Genie}%
\label{figs:returns:Genie}
\end{figure}
\clearpage
\subsection{Package: IJulia}
\begin{figure}[h]
     \begin{subfigure}[b]{0.49\textwidth}
       \includegraphics[width=\textwidth]{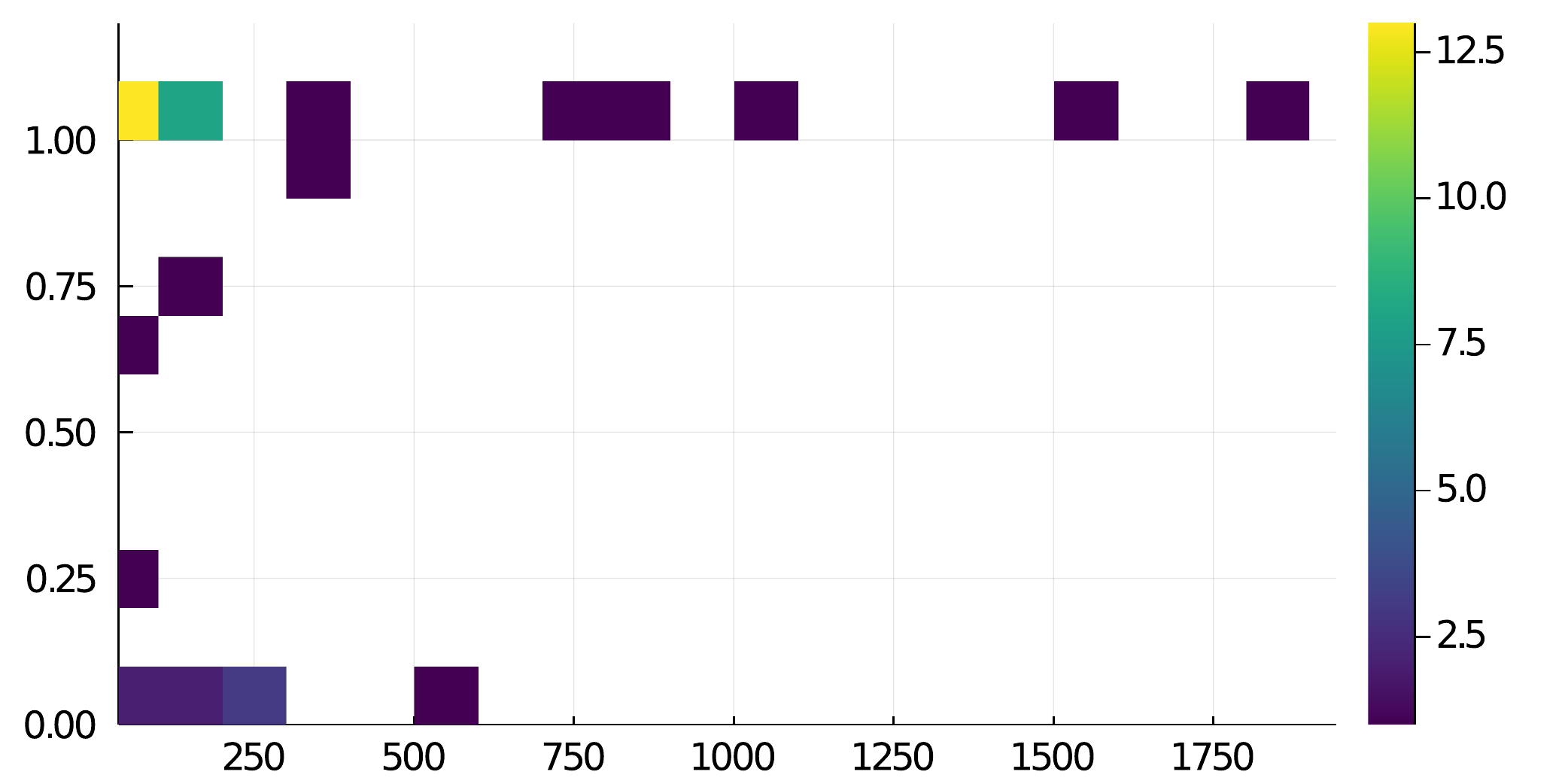}
     \end{subfigure}
     \ \
     \begin{subfigure}[b]{0.49\textwidth}
       \includegraphics[width=\textwidth]{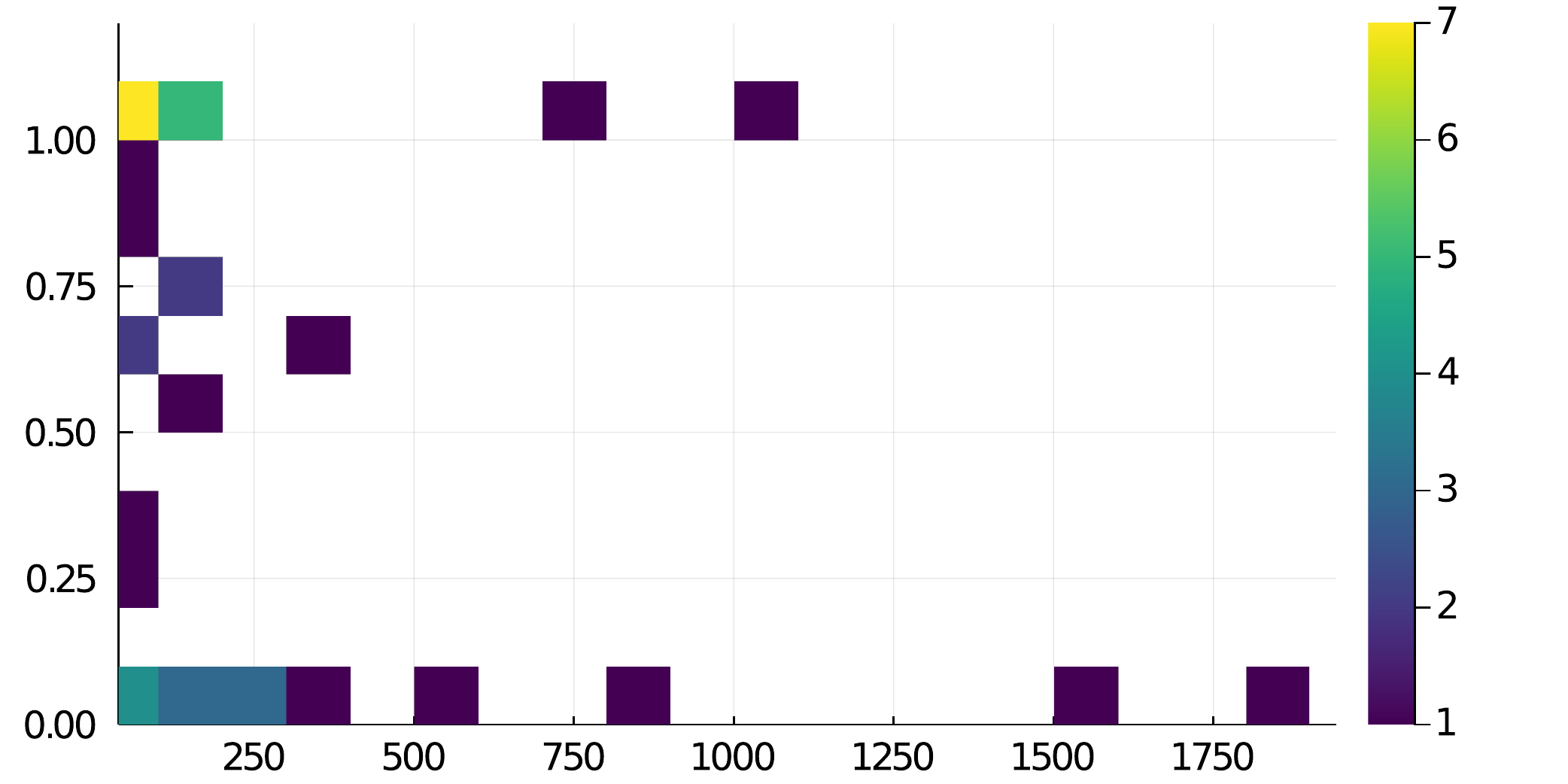}
     \end{subfigure}
\caption{Stability (left, OY axis) and groundedness (right, OY) by method size (OX)}%
\Description{Stability and groundedness by method size in IJulia}%
\label{figs:size:IJulia}
\end{figure}

\begin{figure}[h]
     \begin{subfigure}[b]{0.49\textwidth}
       \includegraphics[width=\textwidth]{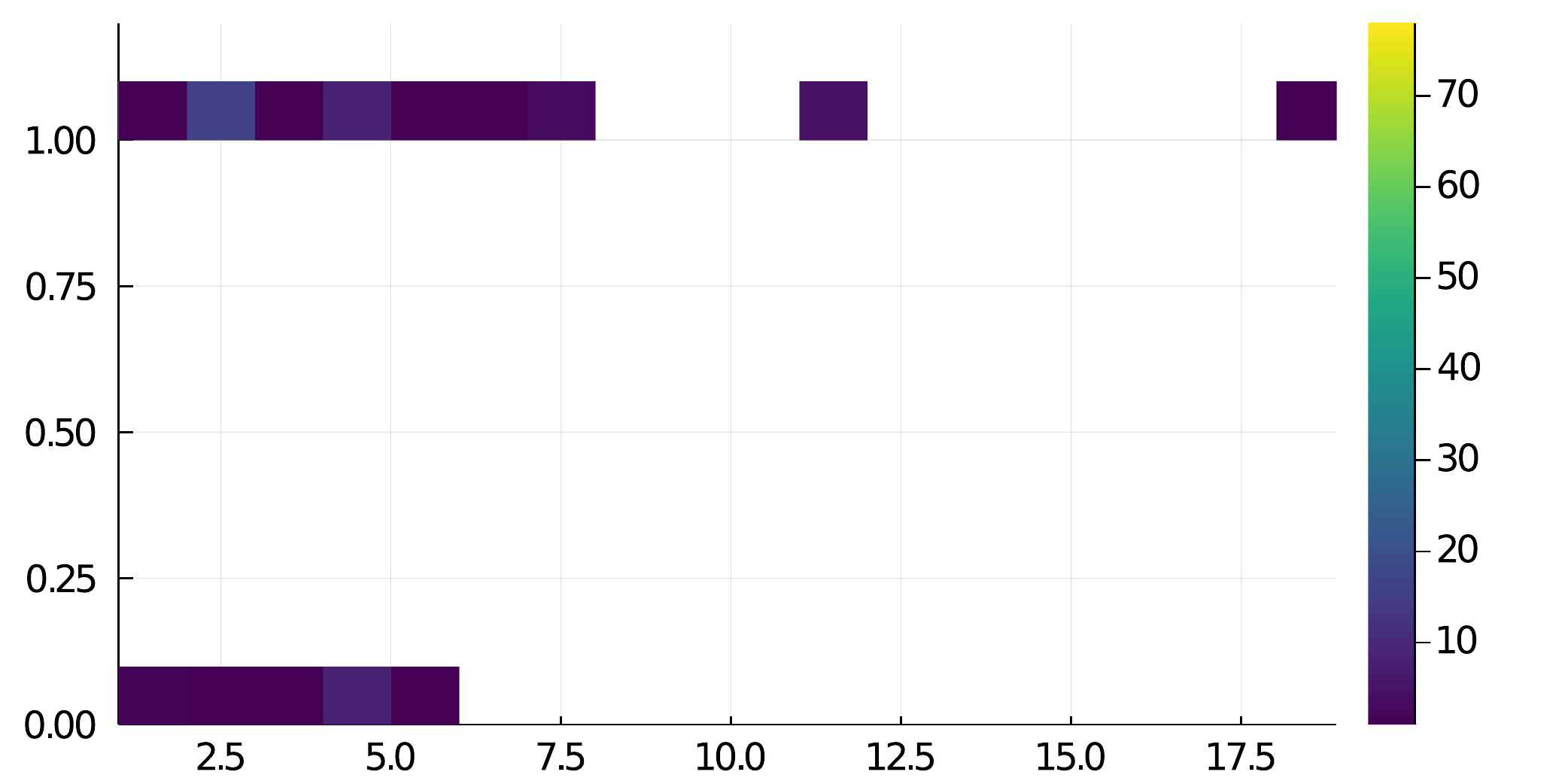}
     \end{subfigure}
     \ \
     \begin{subfigure}[b]{0.49\textwidth}
       \includegraphics[width=\textwidth]{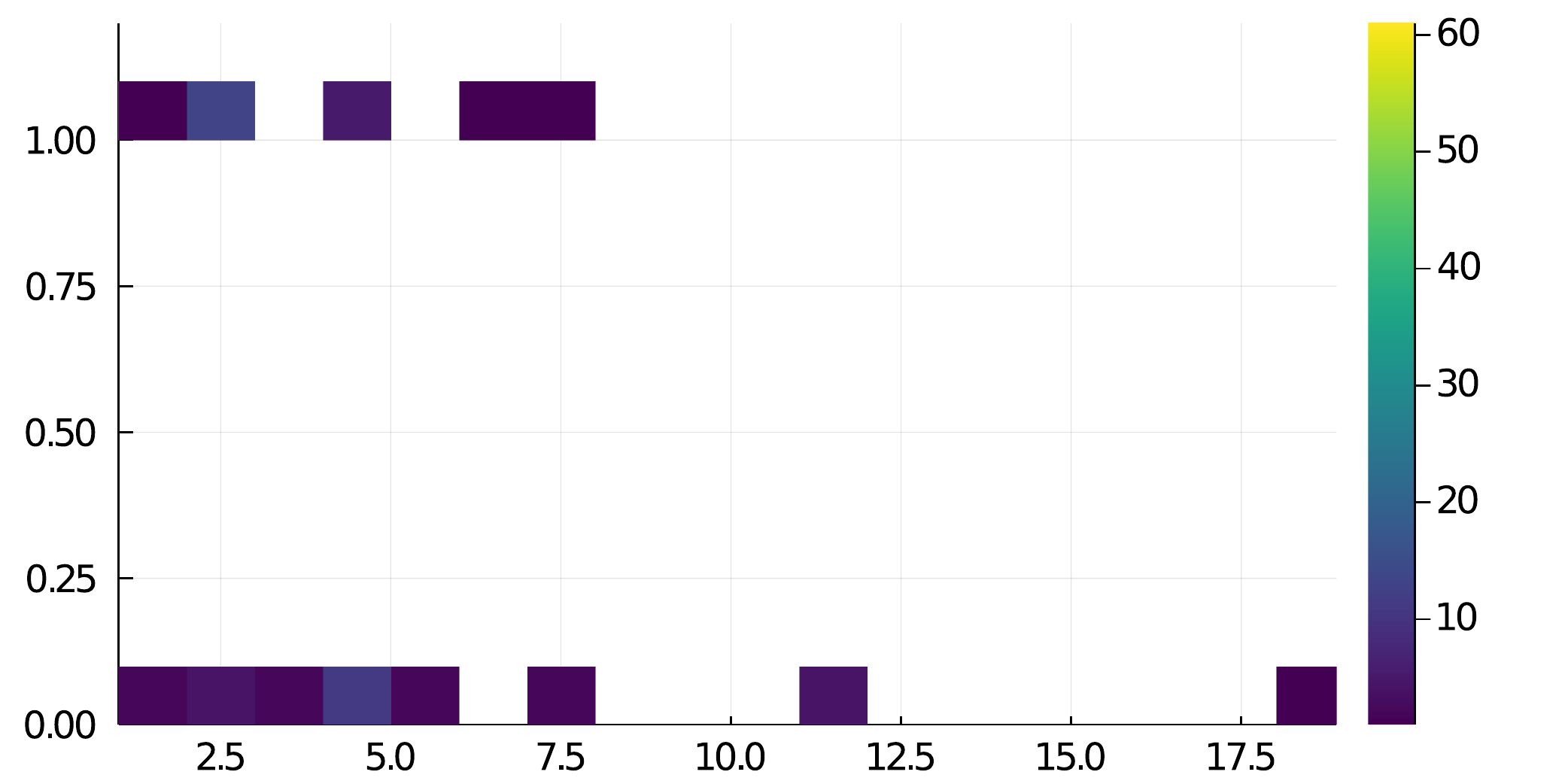}
     \end{subfigure}
\caption{Stability (left, OY axis) and groundedness (right, OY) by number of gotos in method instances (OX)}%
\Description{Stability and groundedness by number of goto's in method instances in IJulia}%
\label{figs:gotos:IJulia}
\end{figure}

\begin{figure}[h]
     \begin{subfigure}[b]{0.49\textwidth}
       \includegraphics[width=\textwidth]{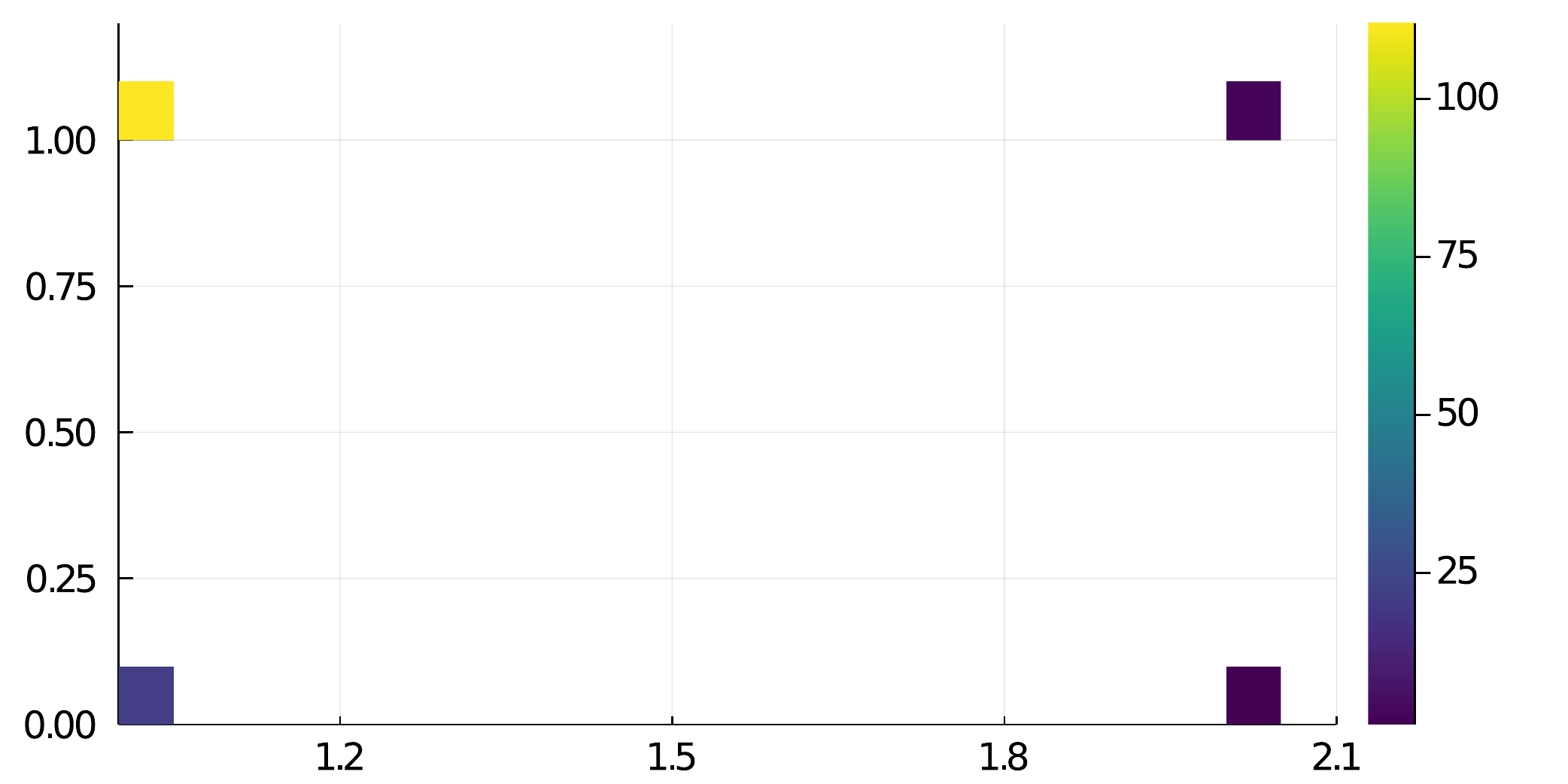}
     \end{subfigure}
     \ \
     \begin{subfigure}[b]{0.49\textwidth}
       \includegraphics[width=\textwidth]{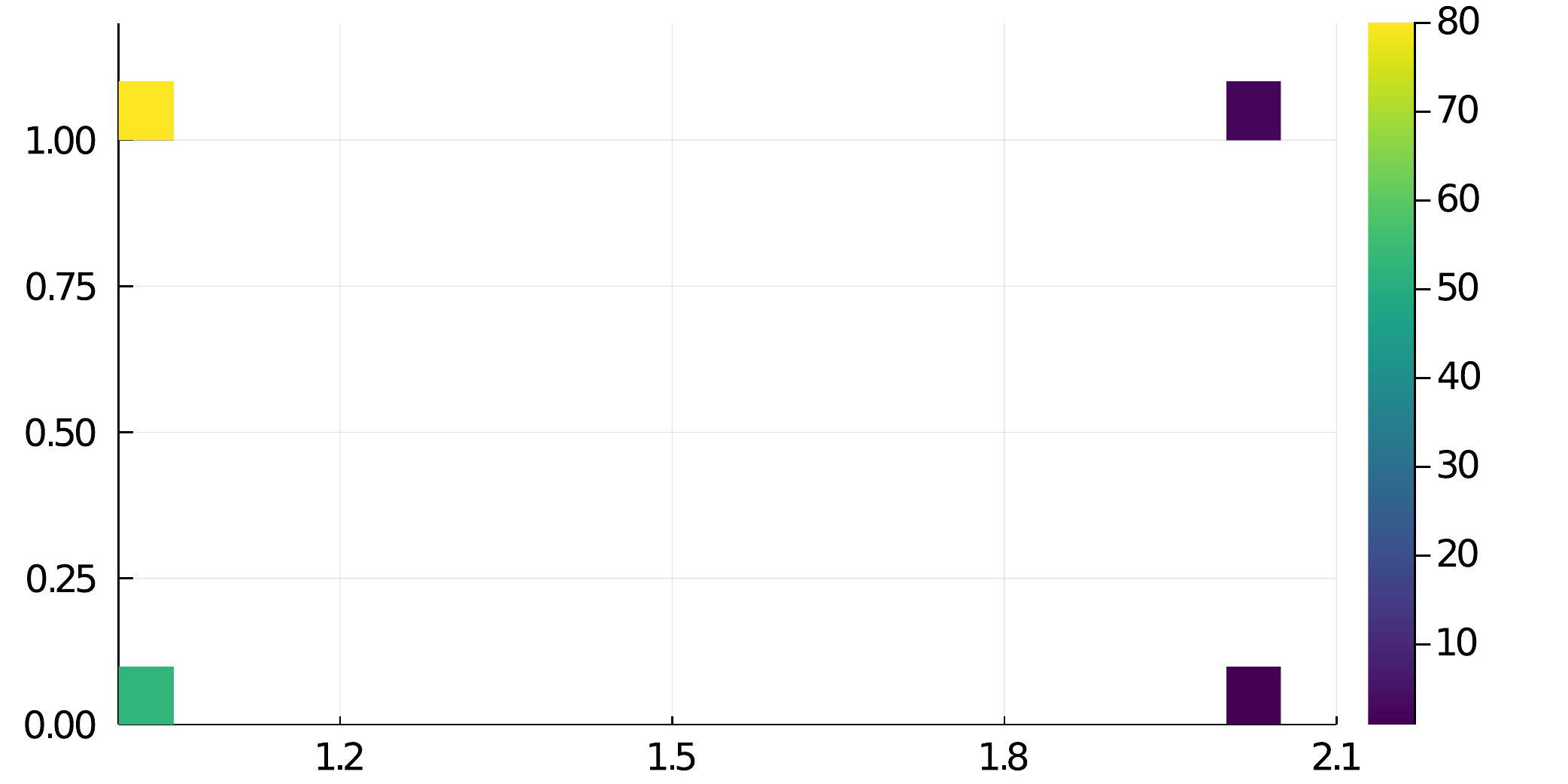}
     \end{subfigure}
\caption{Stability (left, OY axis) and groundedness (right, OY) by number of returns in method instances (OX)}%
\Description{Stability and groundedness by number of returns in method instances in IJulia}%
\label{figs:returns:IJulia}
\end{figure}
\clearpage
\subsection{Package: JuMP}
\begin{figure}[h]
     \begin{subfigure}[b]{0.49\textwidth}
       \includegraphics[width=\textwidth]{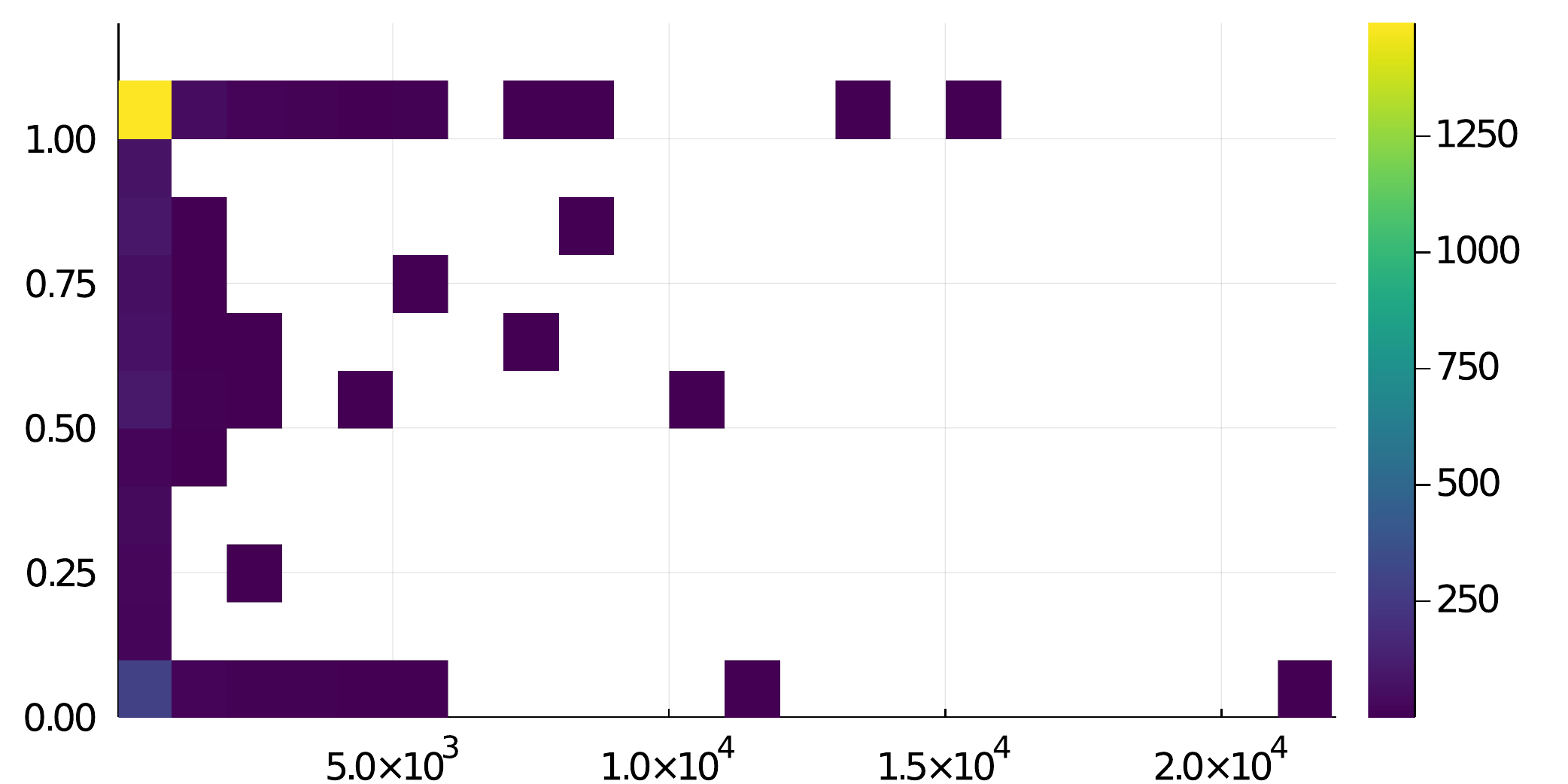}
     \end{subfigure}
     \ \
     \begin{subfigure}[b]{0.49\textwidth}
       \includegraphics[width=\textwidth]{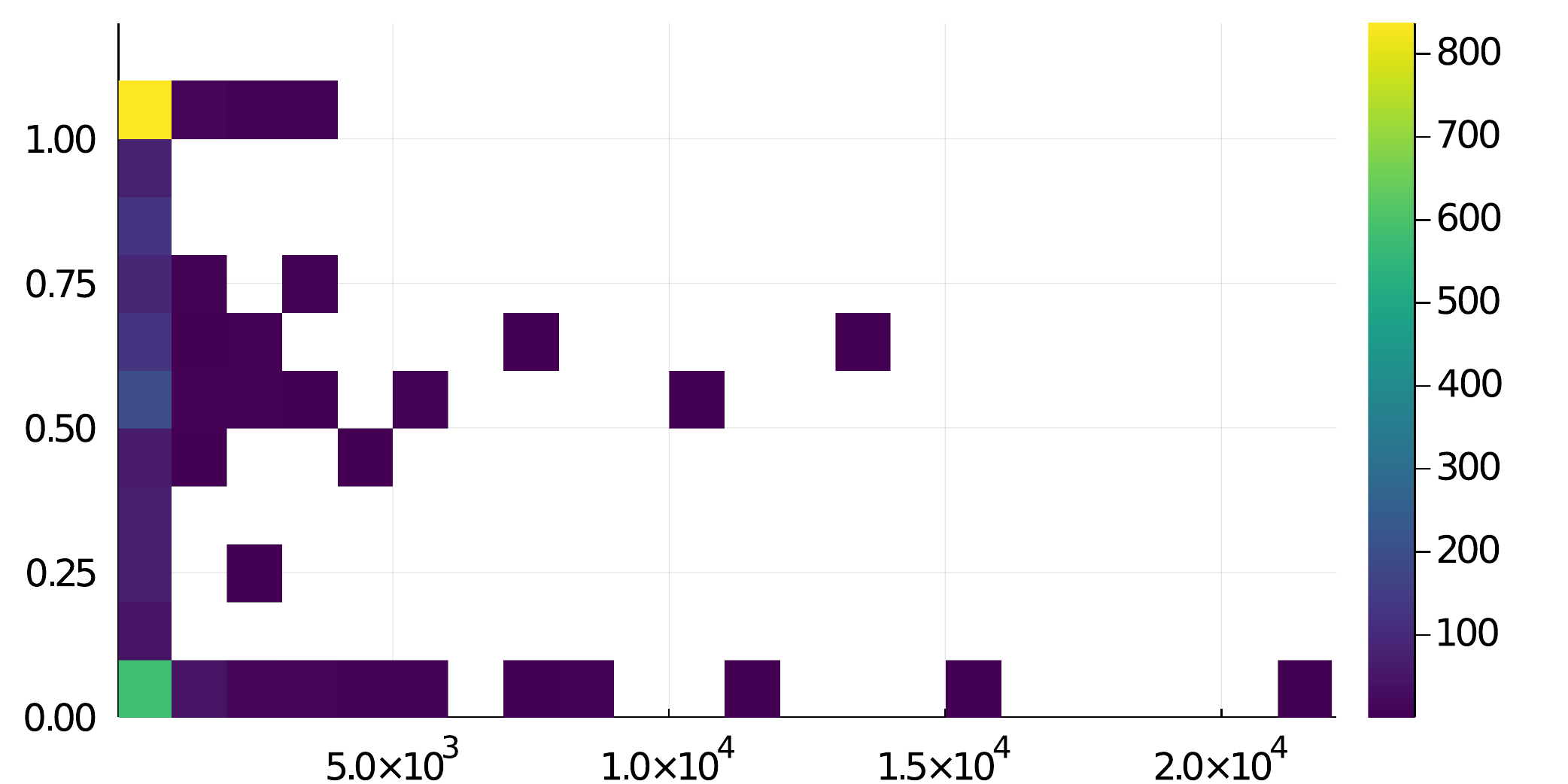}
     \end{subfigure}
\caption{Stability (left, OY axis) and groundedness (right, OY) by method size (OX)}%
\Description{Stability and groundedness by method size in JuMP}%
\label{figs:size:JuMP}
\end{figure}

\begin{figure}[h]
     \begin{subfigure}[b]{0.49\textwidth}
       \includegraphics[width=\textwidth]{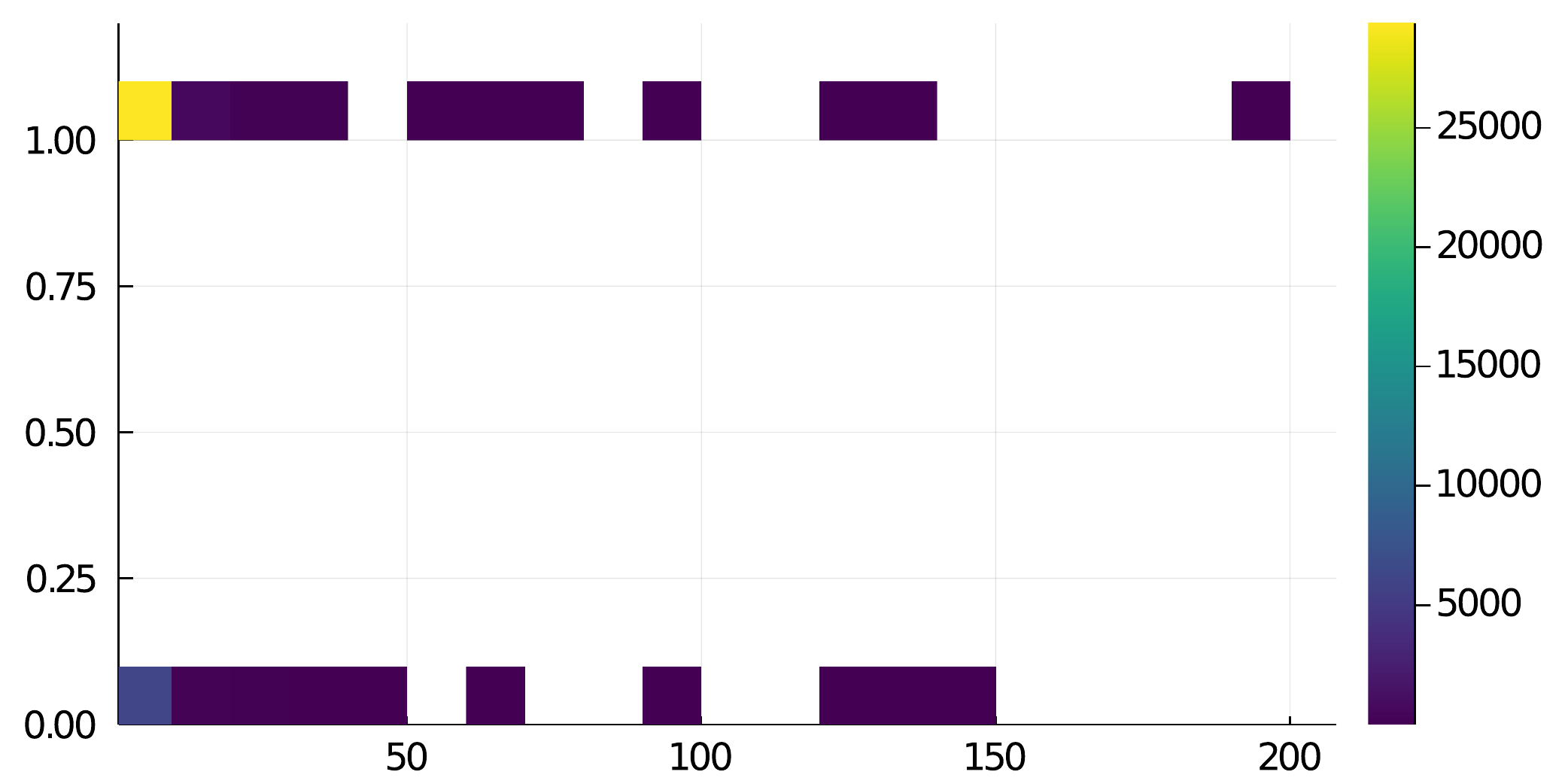}
     \end{subfigure}
     \ \
     \begin{subfigure}[b]{0.49\textwidth}
       \includegraphics[width=\textwidth]{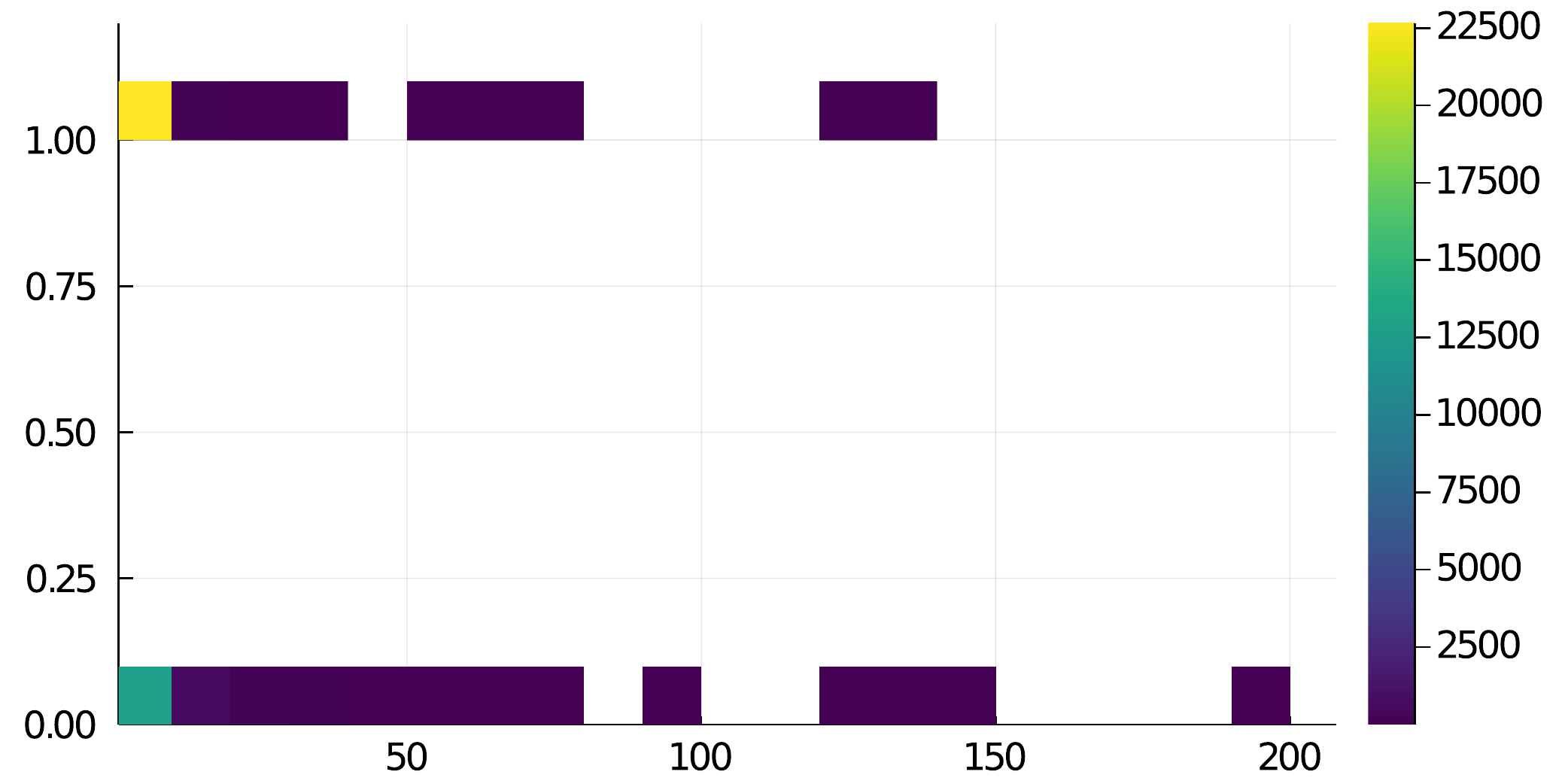}
     \end{subfigure}
\caption{Stability (left, OY axis) and groundedness (right, OY) by number of gotos in method instances (OX)}%
\Description{Stability and groundedness by number of goto's in method instances in JuMP}%
\label{figs:gotos:JuMP}
\end{figure}

\begin{figure}[h]
     \begin{subfigure}[b]{0.49\textwidth}
       \includegraphics[width=\textwidth]{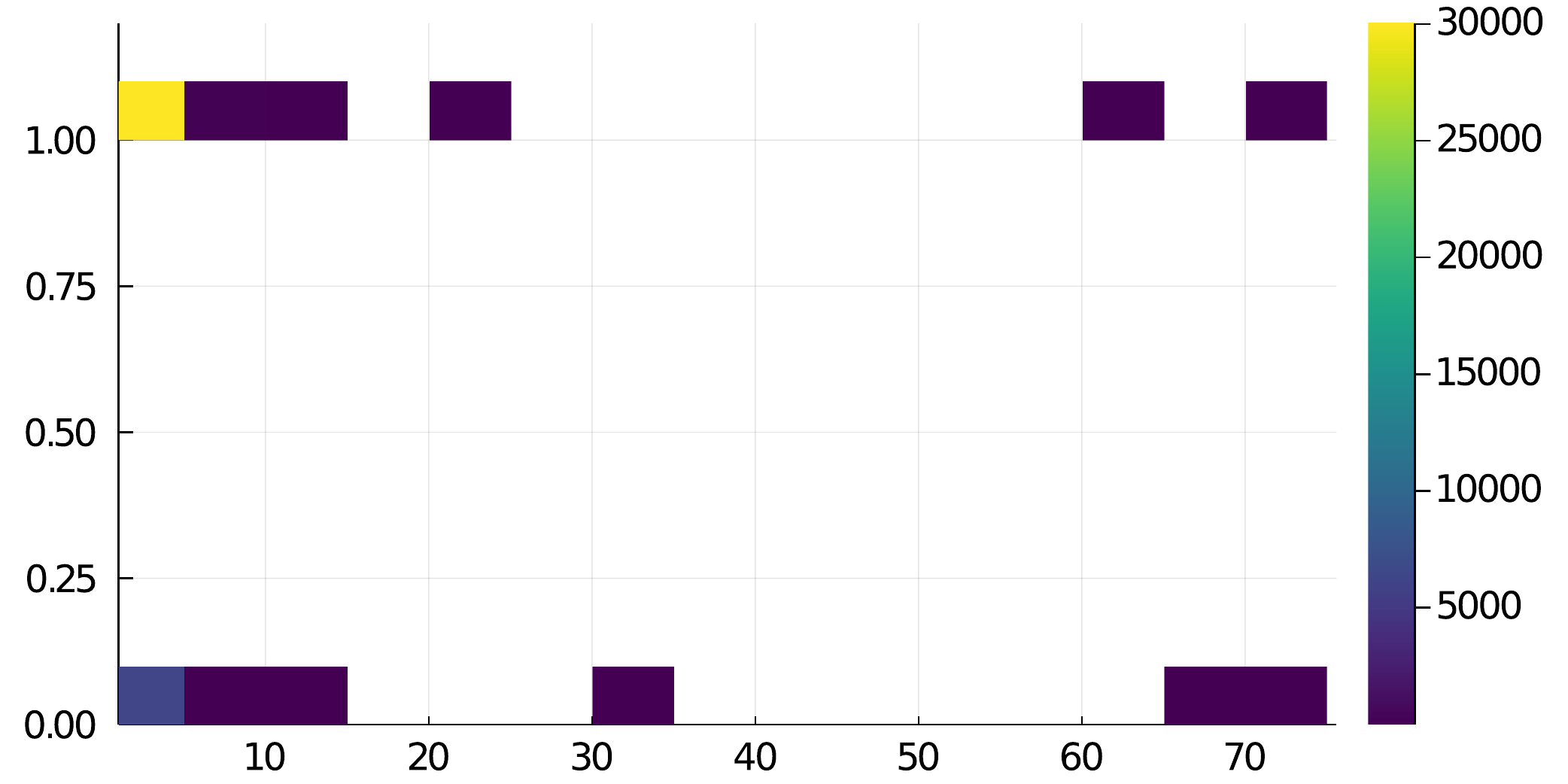}
     \end{subfigure}
     \ \
     \begin{subfigure}[b]{0.49\textwidth}
       \includegraphics[width=\textwidth]{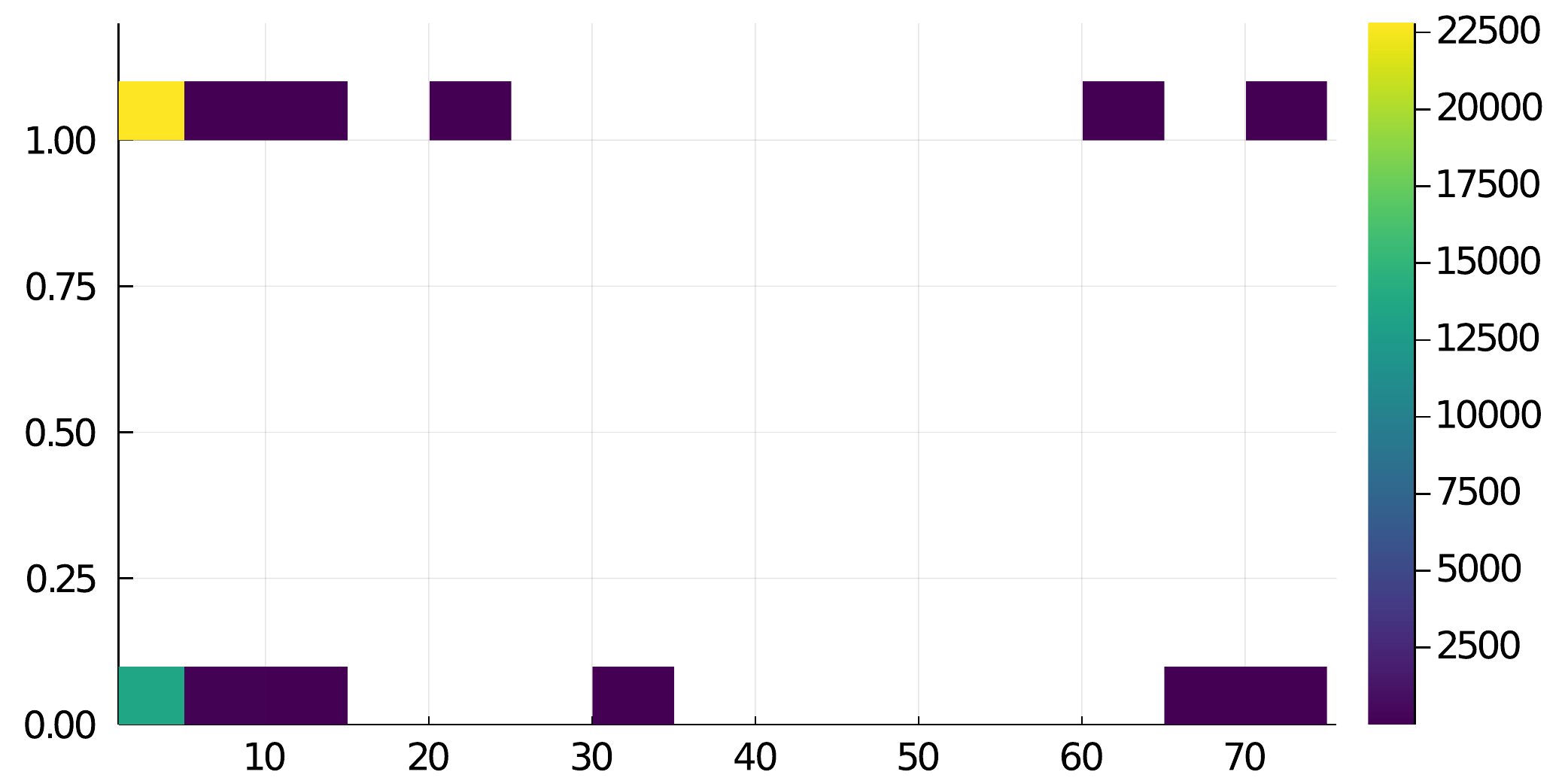}
     \end{subfigure}
\caption{Stability (left, OY axis) and groundedness (right, OY) by number of returns in method instances (OX)}%
\Description{Stability and groundedness by number of returns in method instances in JuMP}%
\label{figs:returns:JuMP}
\end{figure}
\clearpage
\subsection{Package: Knet}
\begin{figure}[h]
     \begin{subfigure}[b]{0.49\textwidth}
       \includegraphics[width=\textwidth]{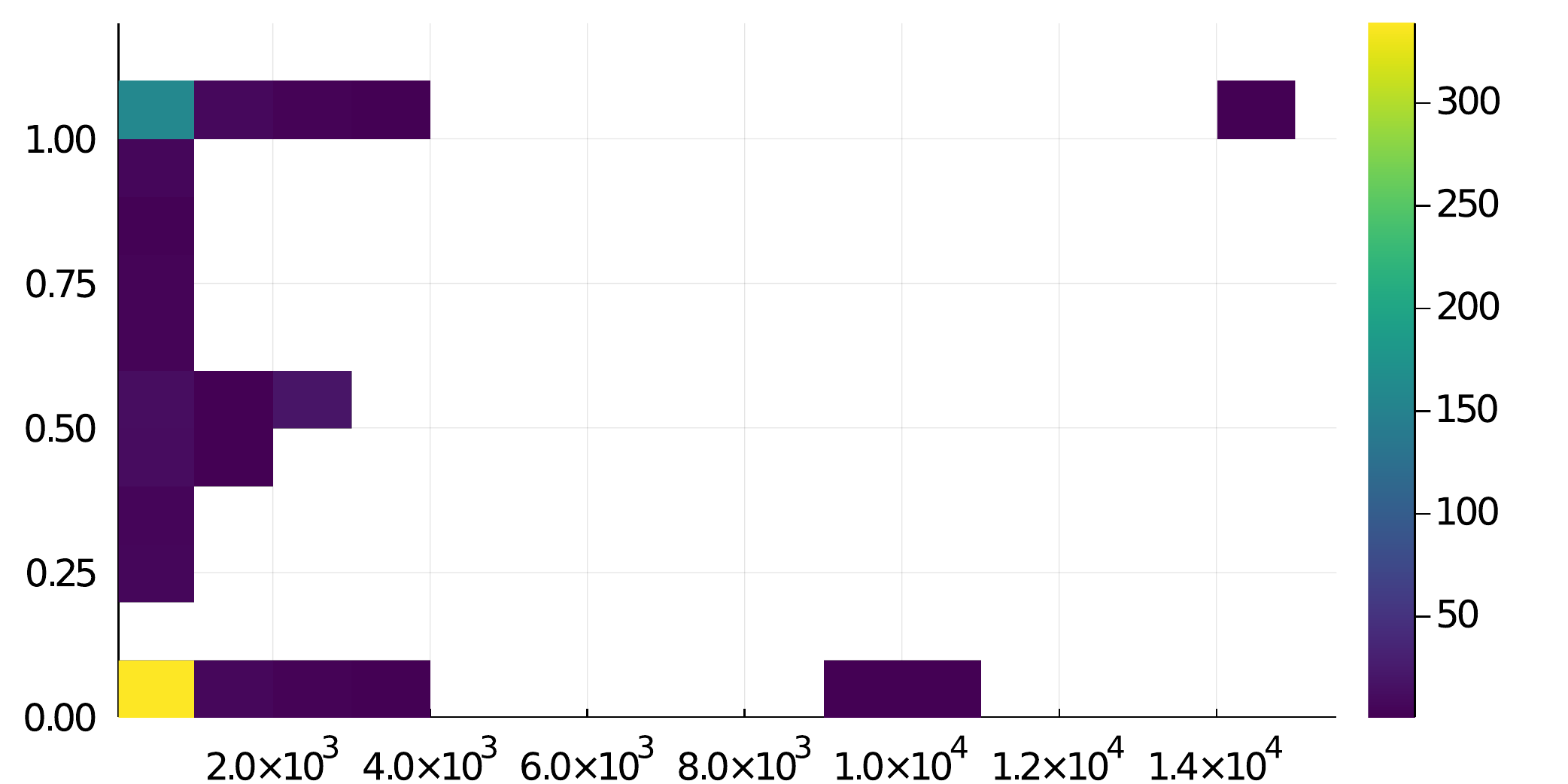}
     \end{subfigure}
     \ \
     \begin{subfigure}[b]{0.49\textwidth}
       \includegraphics[width=\textwidth]{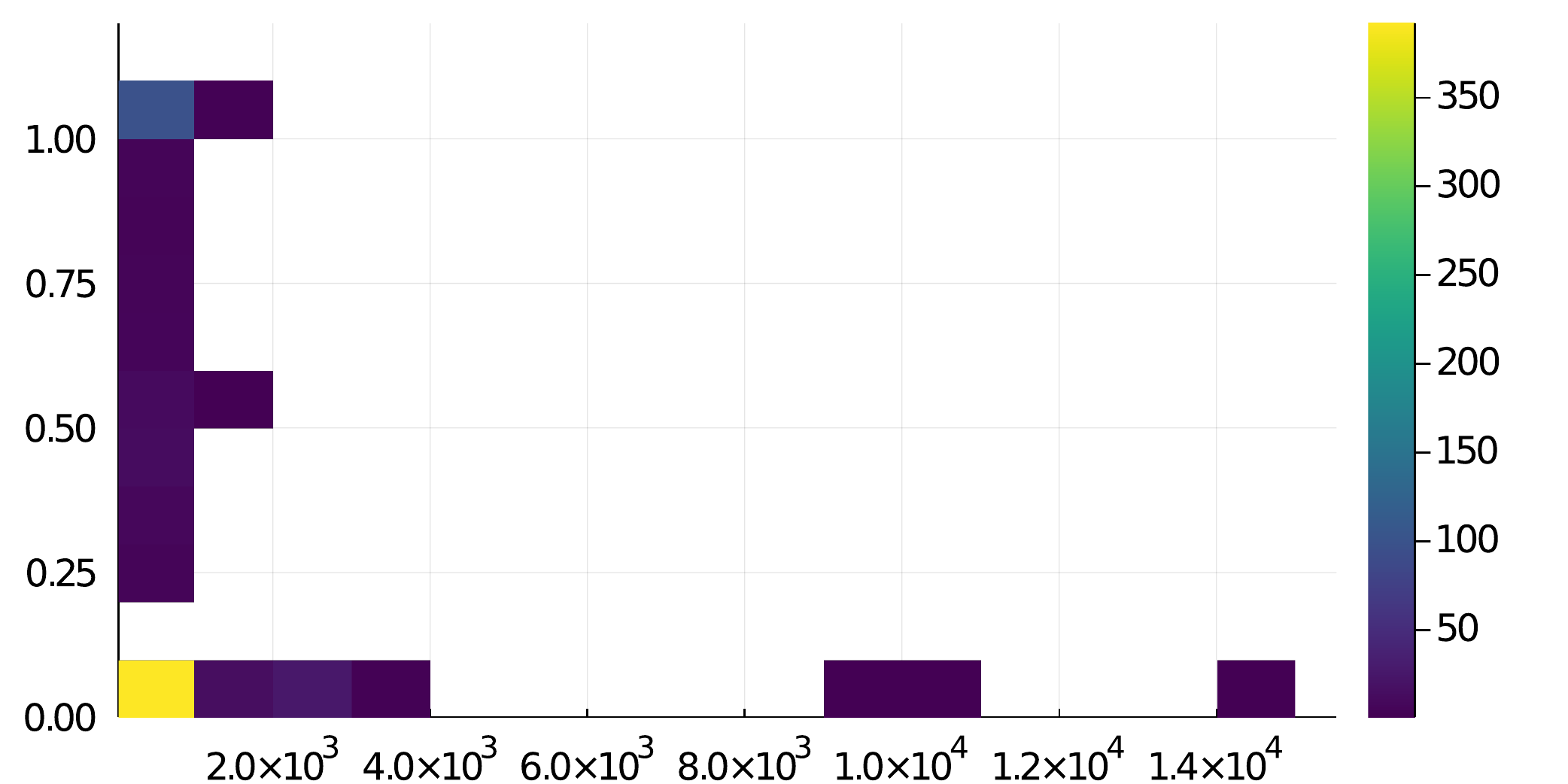}
     \end{subfigure}
\caption{Stability (left, OY axis) and groundedness (right, OY) by method size (OX)}%
\Description{Stability and groundedness by method size in Knet}%
\label{figs:size:Knet}
\end{figure}

\begin{figure}[h]
     \begin{subfigure}[b]{0.49\textwidth}
       \includegraphics[width=\textwidth]{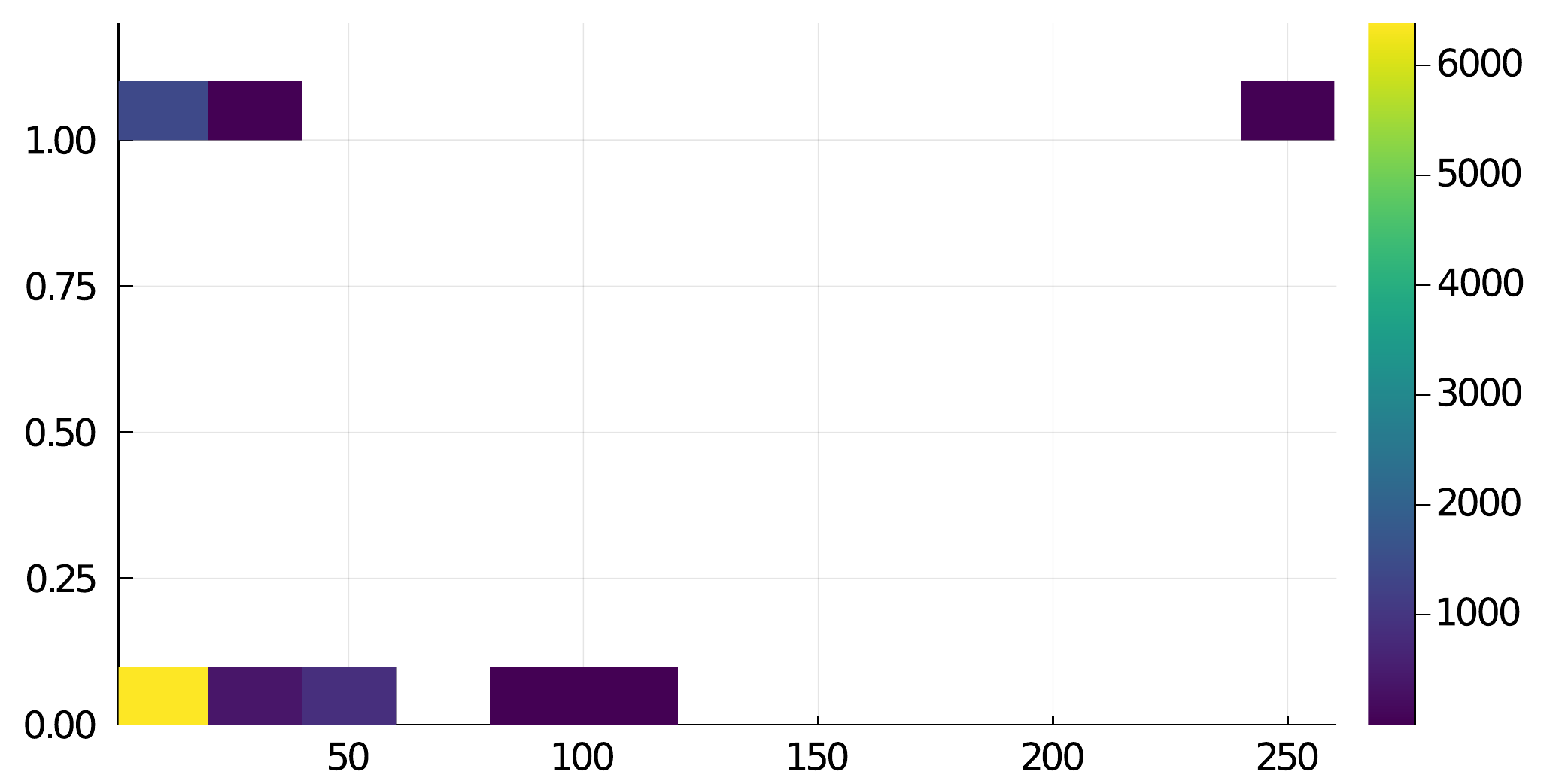}
     \end{subfigure}
     \ \
     \begin{subfigure}[b]{0.49\textwidth}
       \includegraphics[width=\textwidth]{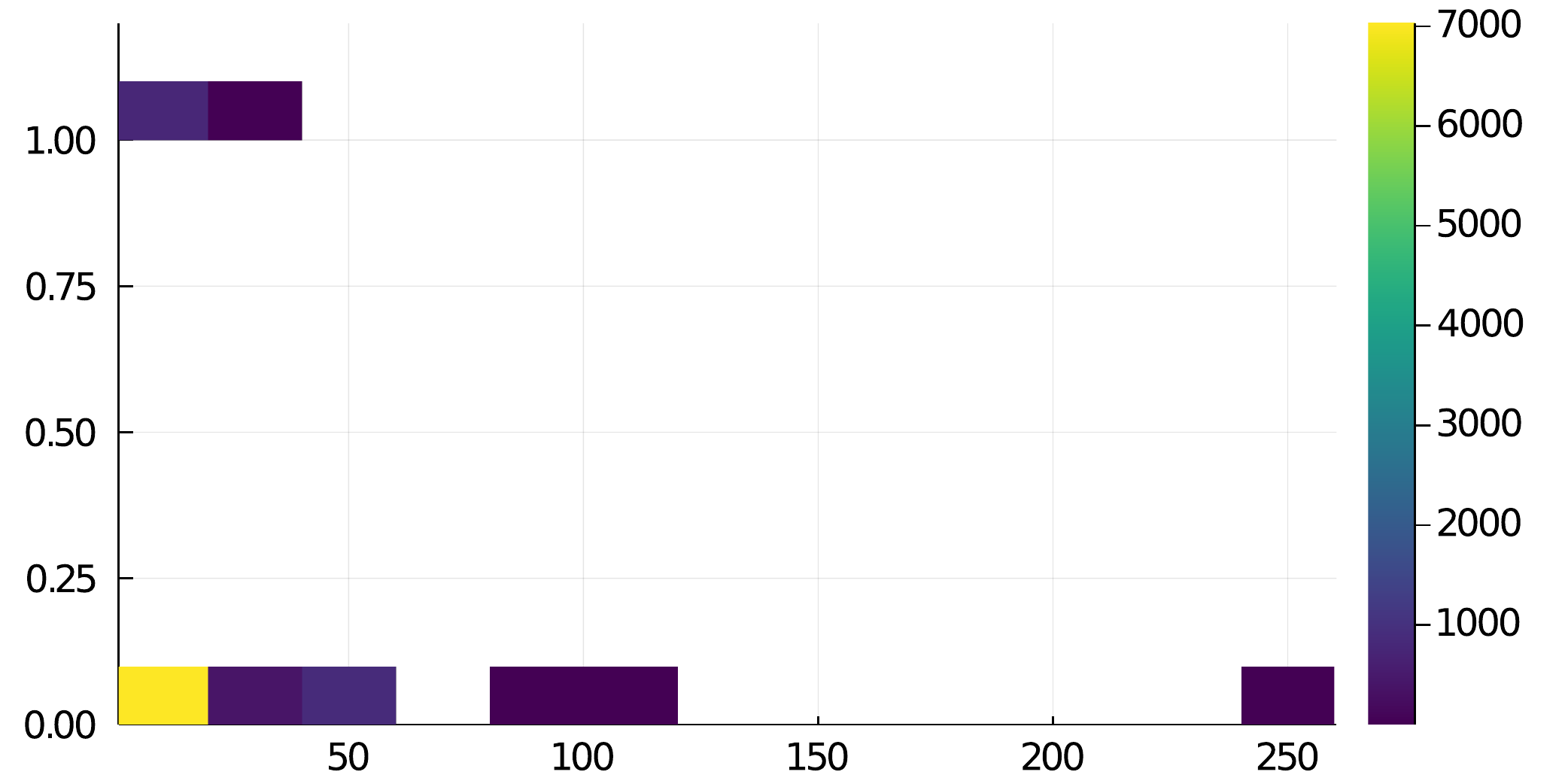}
     \end{subfigure}
\caption{Stability (left, OY axis) and groundedness (right, OY) by number of gotos in method instances (OX)}%
\Description{Stability and groundedness by number of goto's in method instances in Knet}%
\label{figs:gotos:Knet}
\end{figure}

\begin{figure}[h]
     \begin{subfigure}[b]{0.49\textwidth}
       \includegraphics[width=\textwidth]{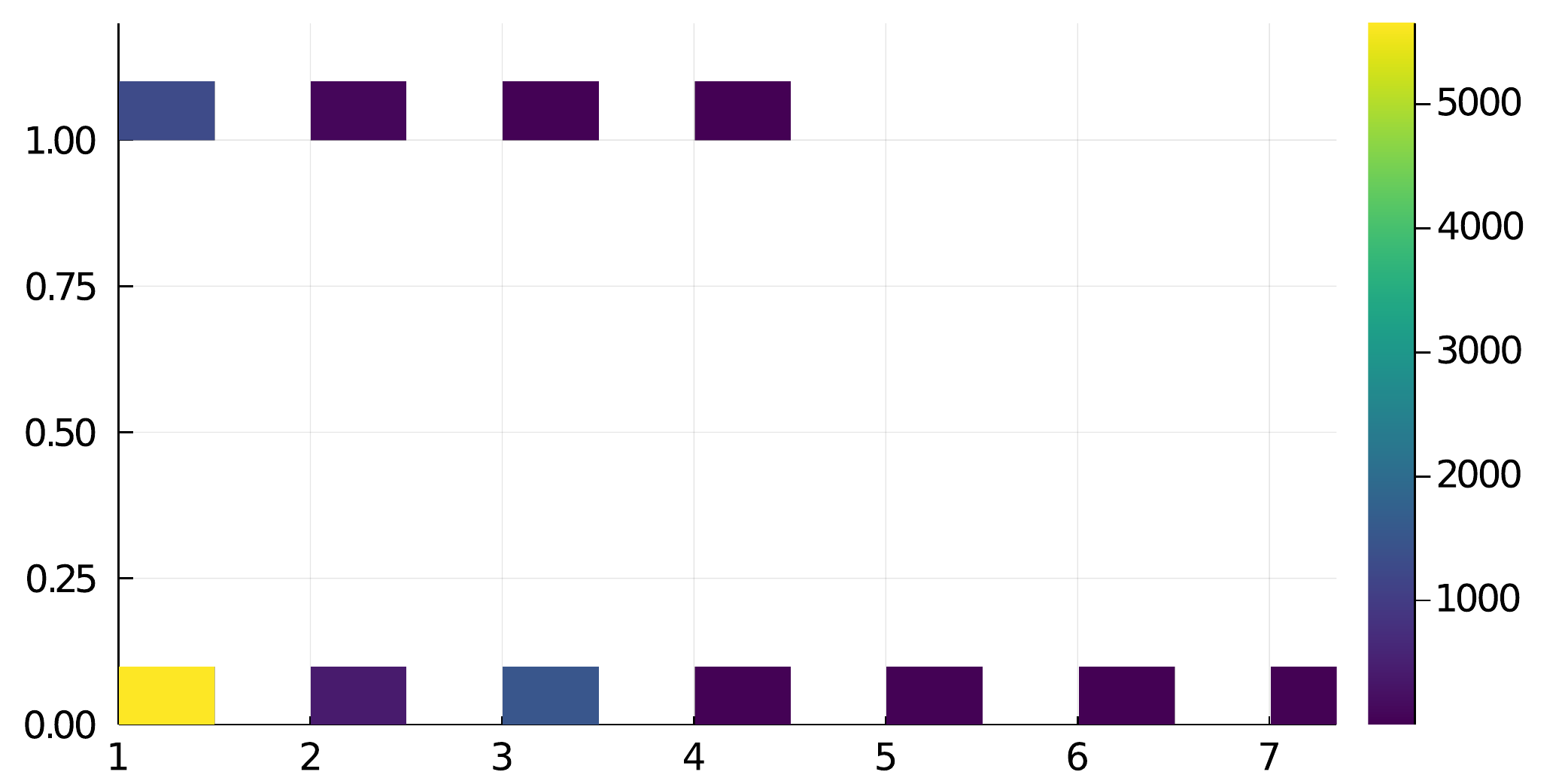}
     \end{subfigure}
     \ \
     \begin{subfigure}[b]{0.49\textwidth}
       \includegraphics[width=\textwidth]{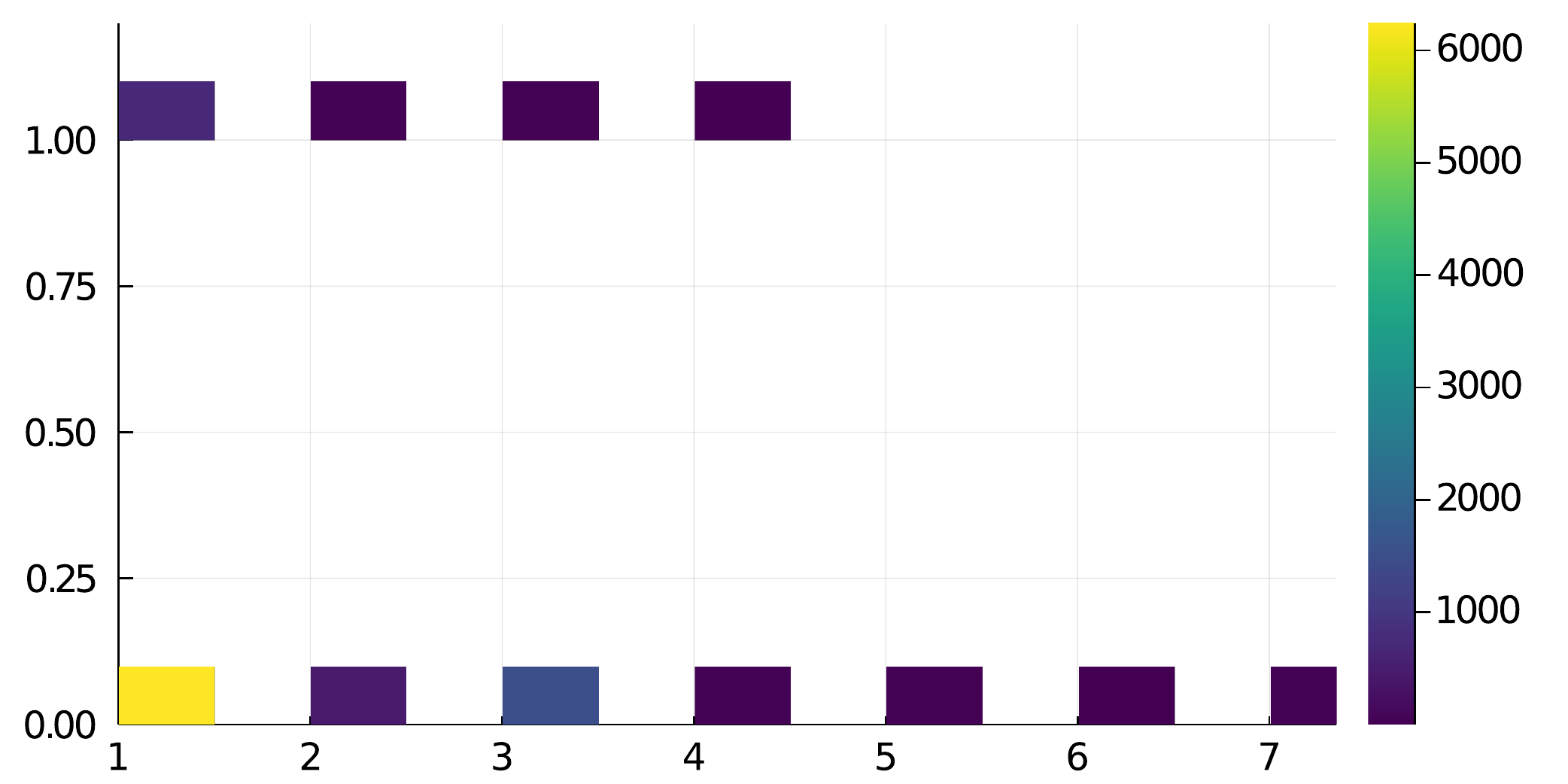}
     \end{subfigure}
\caption{Stability (left, OY axis) and groundedness (right, OY) by number of returns in method instances (OX)}%
\Description{Stability and groundedness by number of returns in method instances in Knet}%
\label{figs:returns:Knet}
\end{figure}
\clearpage
\subsection{Package: Plots}
\begin{figure}[h]
     \begin{subfigure}[b]{0.49\textwidth}
       \includegraphics[width=\textwidth]{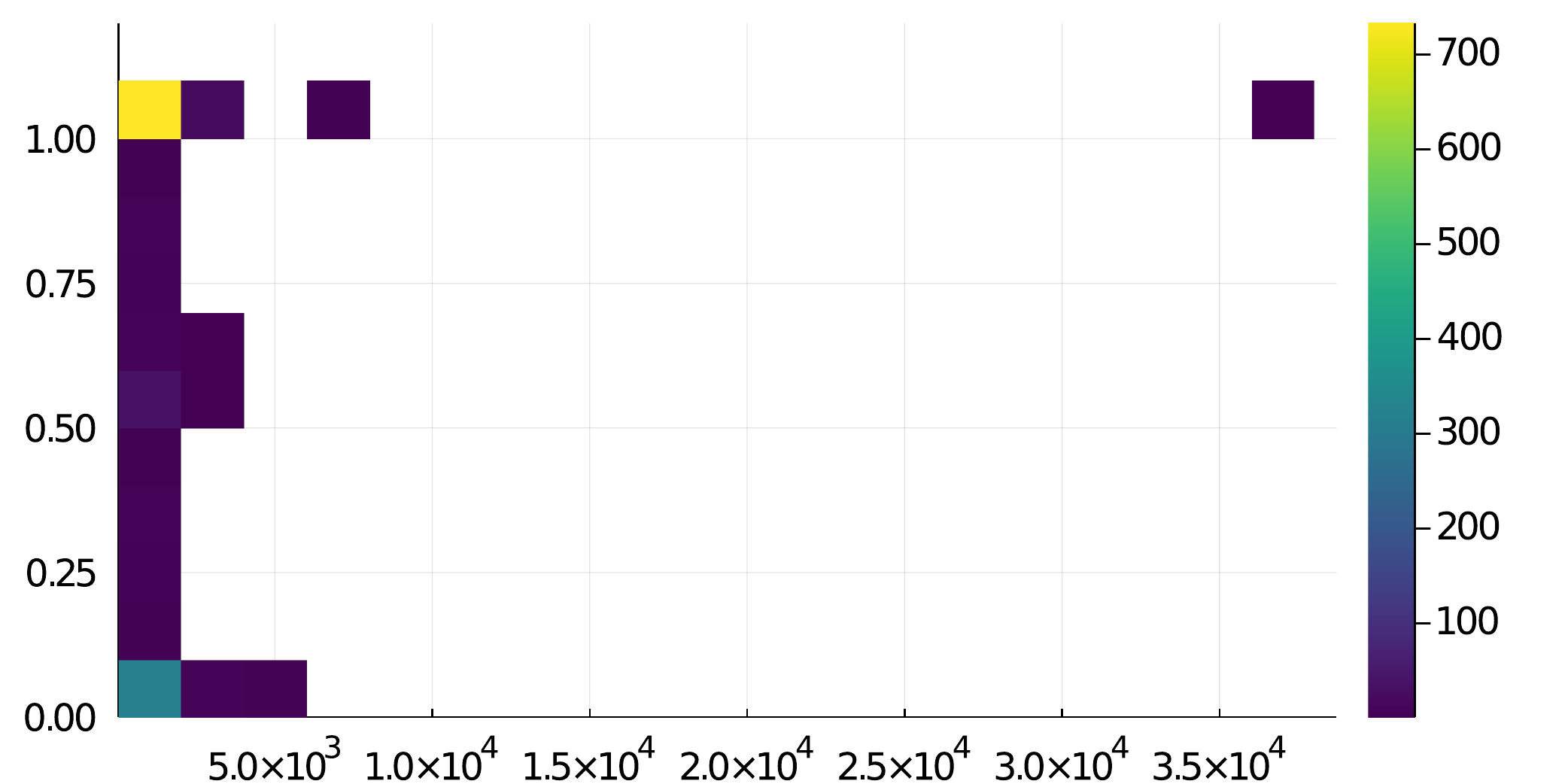}
     \end{subfigure}
     \ \
     \begin{subfigure}[b]{0.49\textwidth}
       \includegraphics[width=\textwidth]{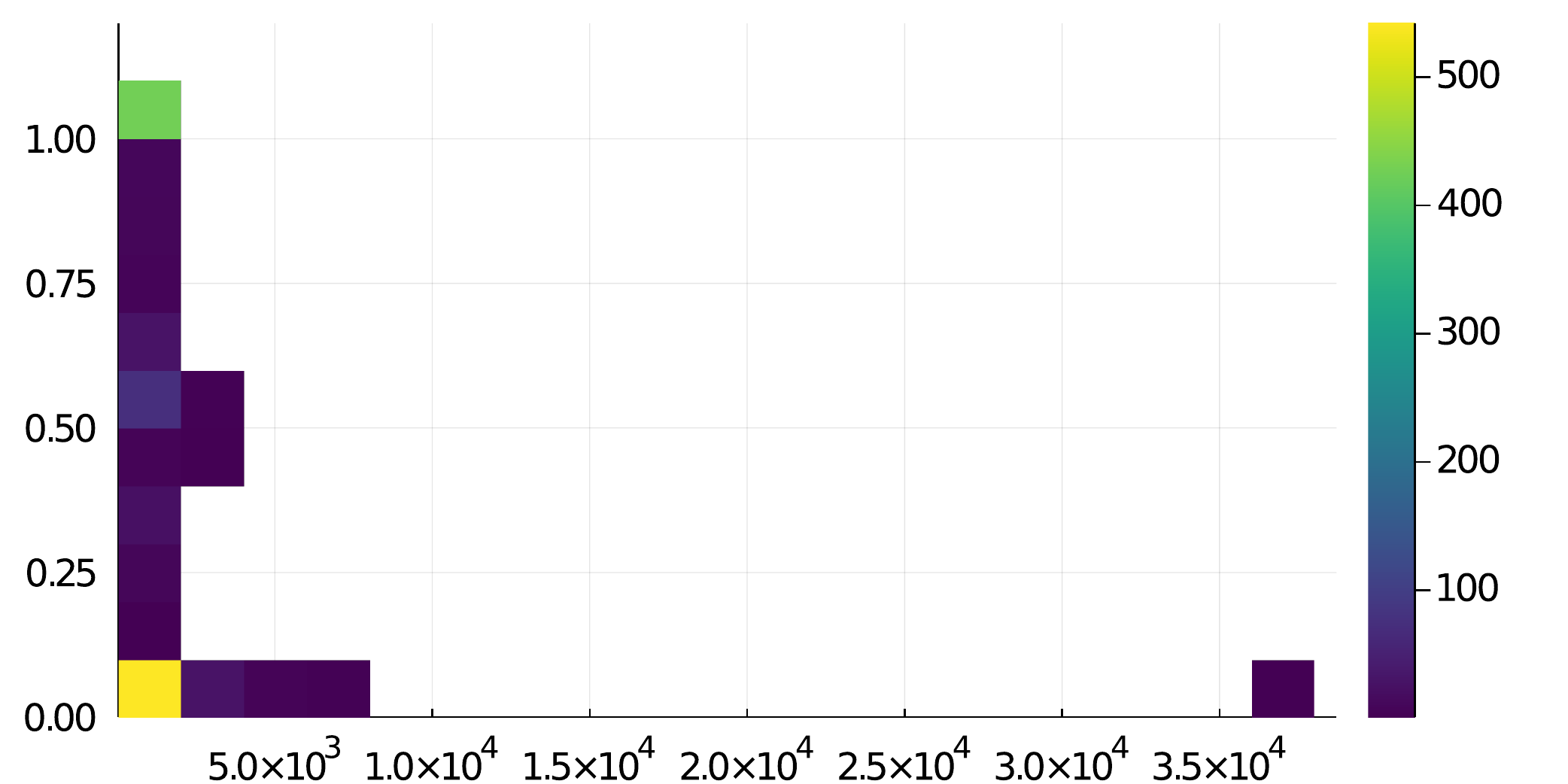}
     \end{subfigure}
\caption{Stability (left, OY axis) and groundedness (right, OY) by method size (OX)}%
\Description{Stability and groundedness by method size in Plots}%
\label{figs:size:Plots}
\end{figure}

\begin{figure}[h]
     \begin{subfigure}[b]{0.49\textwidth}
       \includegraphics[width=\textwidth]{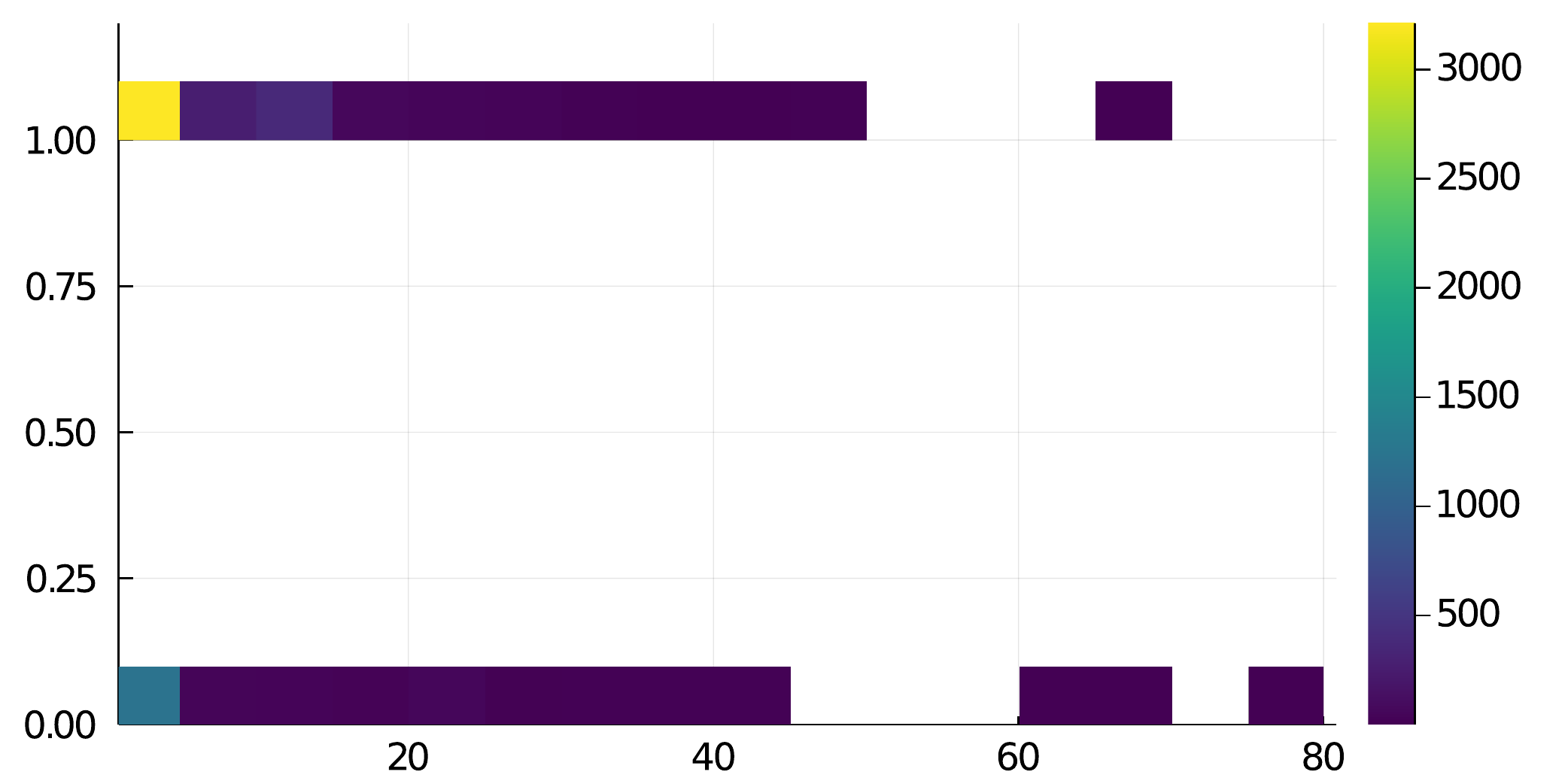}
     \end{subfigure}
     \ \
     \begin{subfigure}[b]{0.49\textwidth}
       \includegraphics[width=\textwidth]{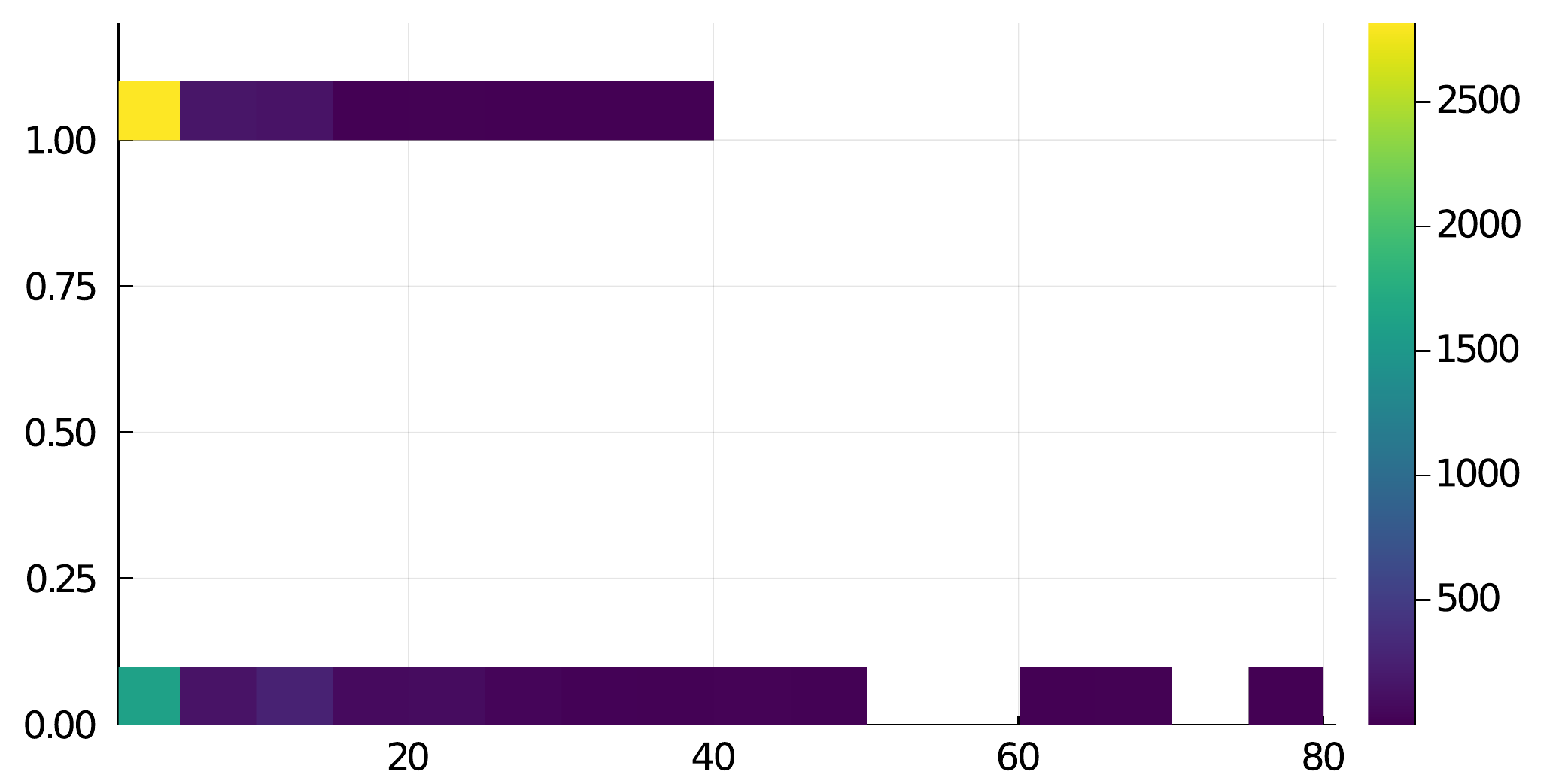}
     \end{subfigure}
\caption{Stability (left, OY axis) and groundedness (right, OY) by number of gotos in method instances (OX)}%
\Description{Stability and groundedness by number of goto's in method instances in Plots}%
\label{figs:gotos:Plots}
\end{figure}

\begin{figure}[h]
     \begin{subfigure}[b]{0.49\textwidth}
       \includegraphics[width=\textwidth]{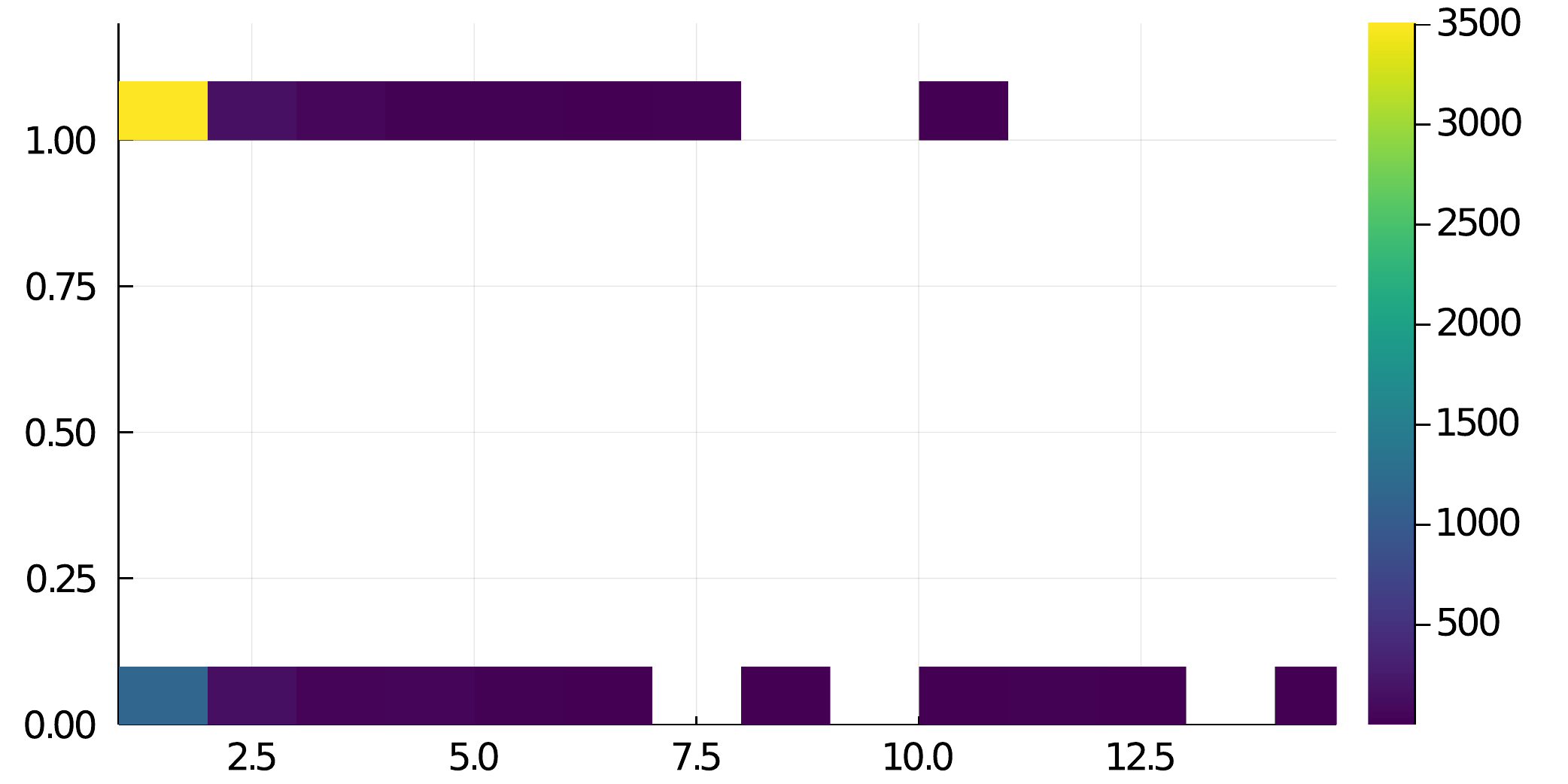}
     \end{subfigure}
     \ \
     \begin{subfigure}[b]{0.49\textwidth}
       \includegraphics[width=\textwidth]{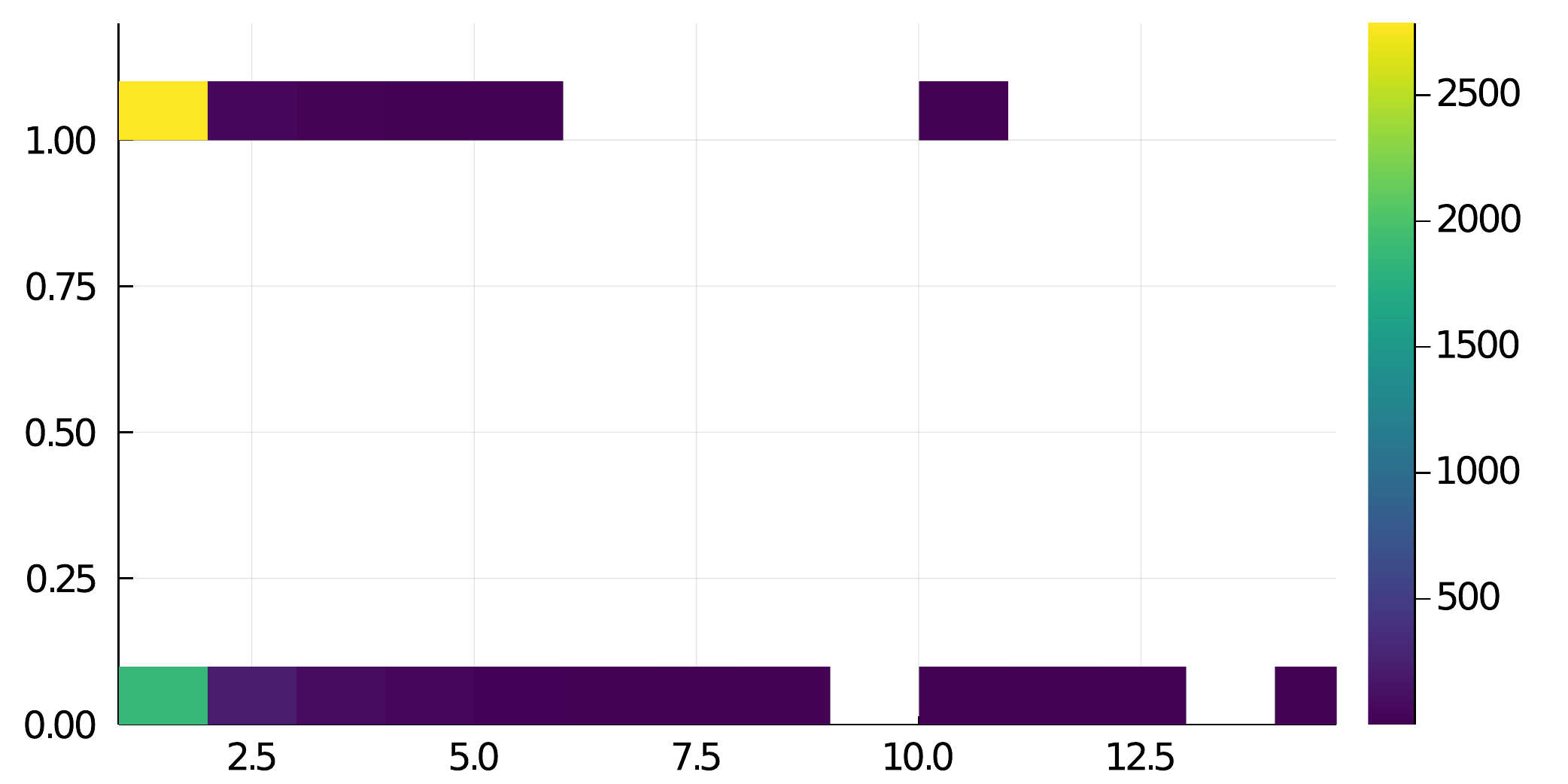}
     \end{subfigure}
\caption{Stability (left, OY axis) and groundedness (right, OY) by number of returns in method instances (OX)}%
\Description{Stability and groundedness by number of returns in method instances in Plots}%
\label{figs:returns:Plots}
\end{figure}
\clearpage
\subsection{Package: Pluto}
\begin{figure}[h]
     \begin{subfigure}[b]{0.49\textwidth}
       \includegraphics[width=\textwidth]{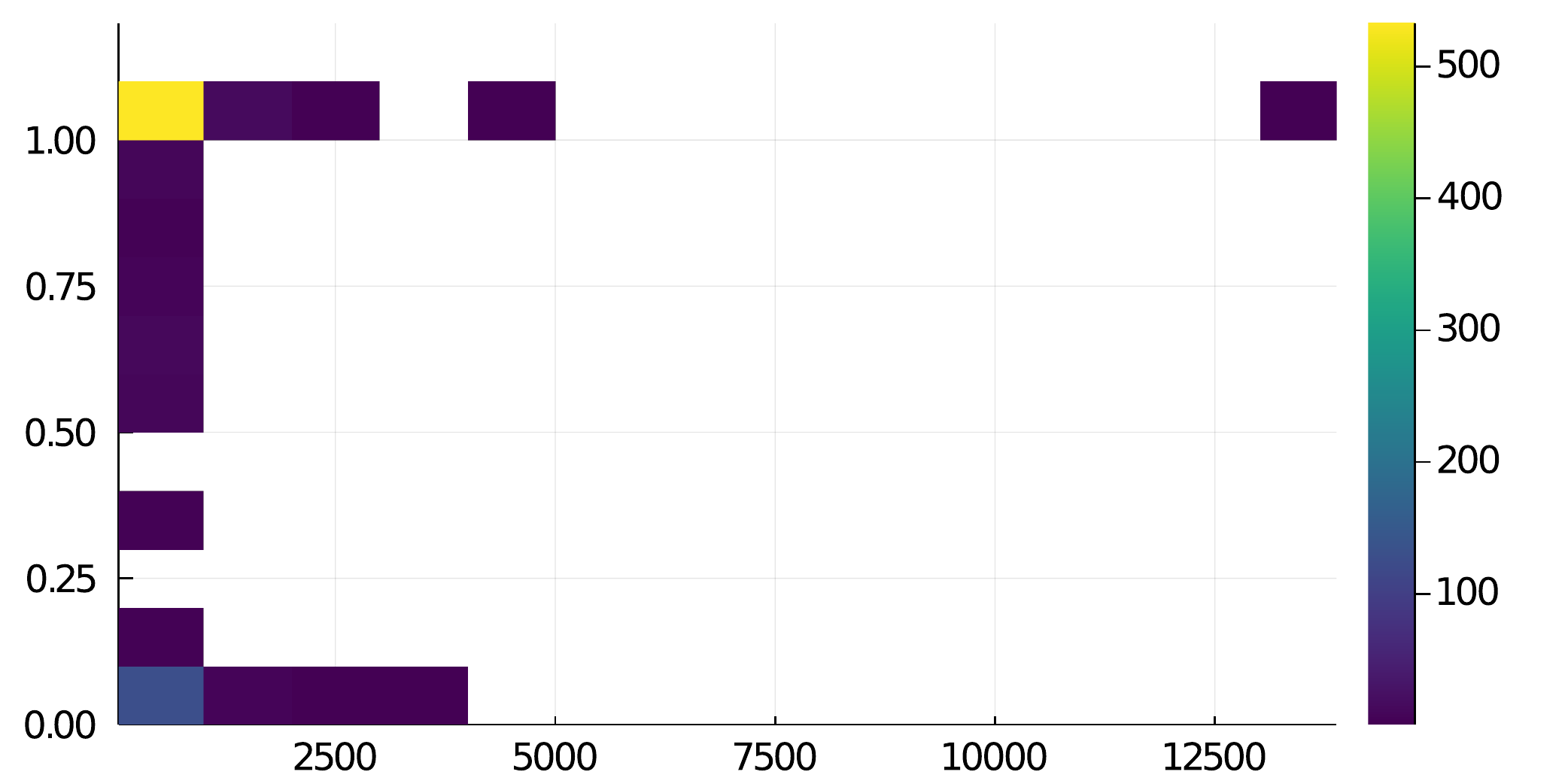}
     \end{subfigure}
     \ \
     \begin{subfigure}[b]{0.49\textwidth}
       \includegraphics[width=\textwidth]{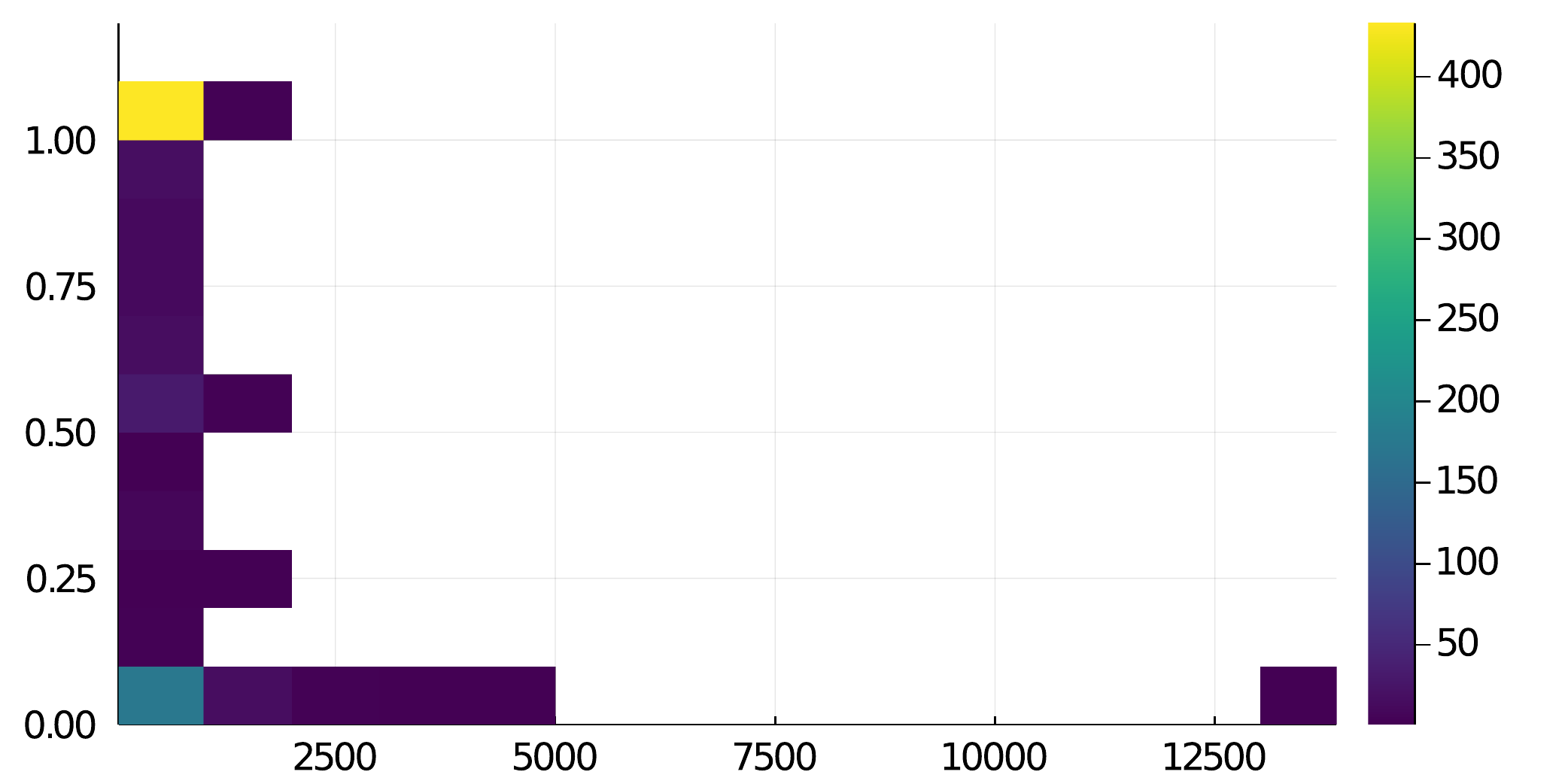}
     \end{subfigure}
\caption{Stability (left, OY axis) and groundedness (right, OY) by method size (OX)}%
\Description{Stability and groundedness by method size in Pluto}%
\label{figs:size:Pluto}
\end{figure}

\begin{figure}[h]
     \begin{subfigure}[b]{0.49\textwidth}
       \includegraphics[width=\textwidth]{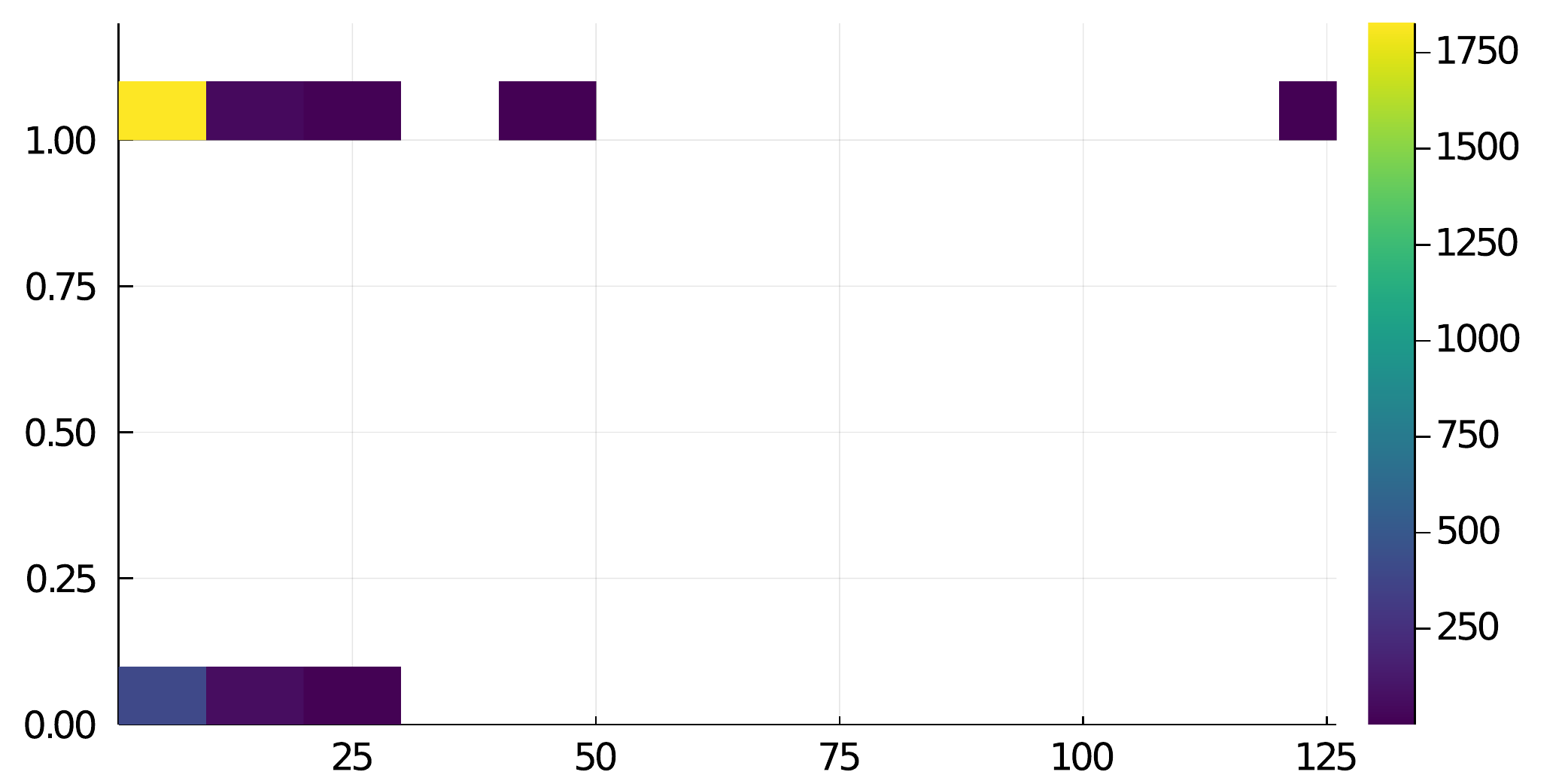}
     \end{subfigure}
     \ \
     \begin{subfigure}[b]{0.49\textwidth}
       \includegraphics[width=\textwidth]{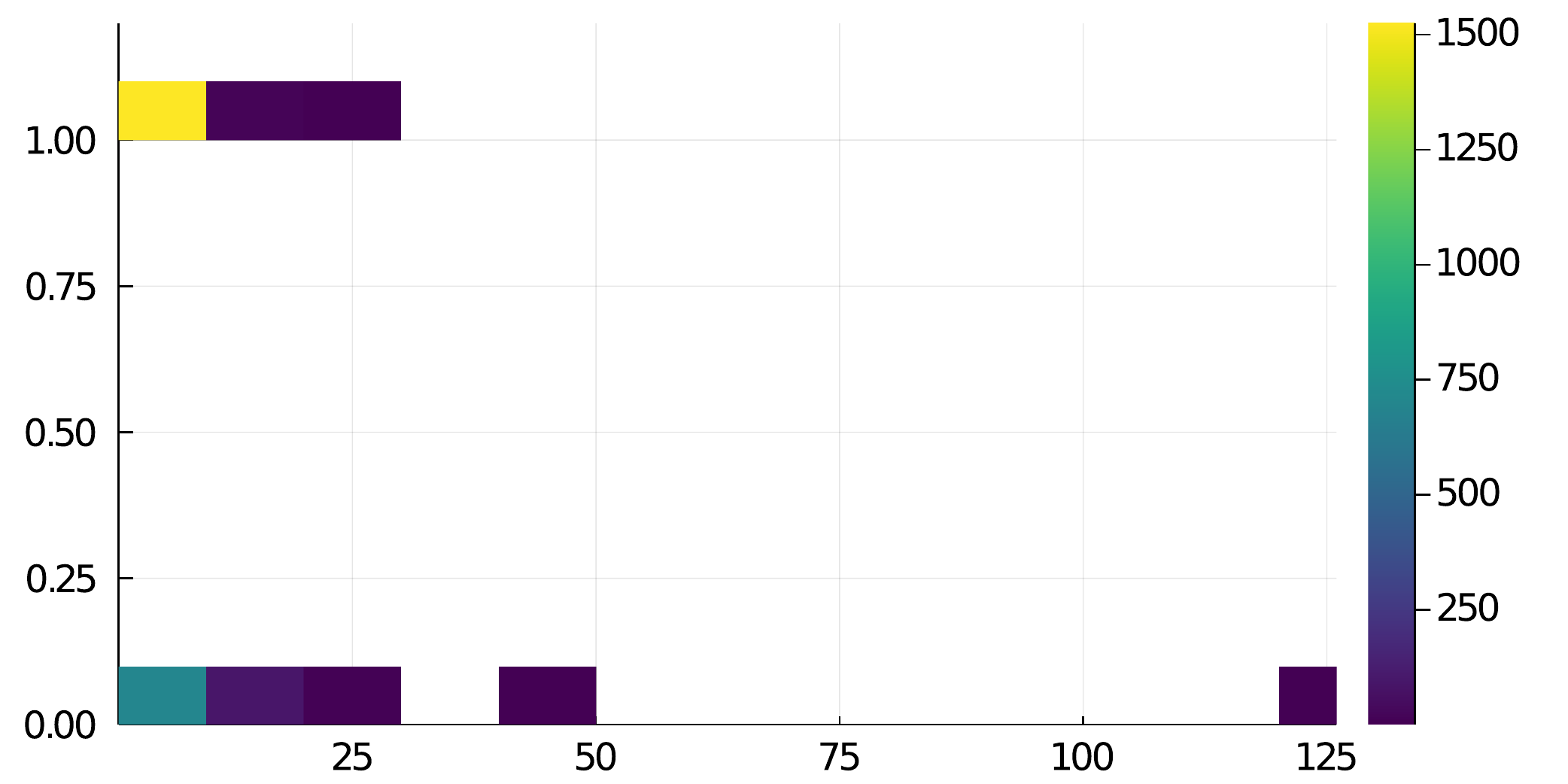}
     \end{subfigure}
\caption{Stability (left, OY axis) and groundedness (right, OY) by number of gotos in method instances (OX)}%
\Description{Stability and groundedness by number of goto's in method instances in Pluto}%
\label{figs:gotos:Pluto}
\end{figure}

\begin{figure}[h]
     \begin{subfigure}[b]{0.49\textwidth}
       \includegraphics[width=\textwidth]{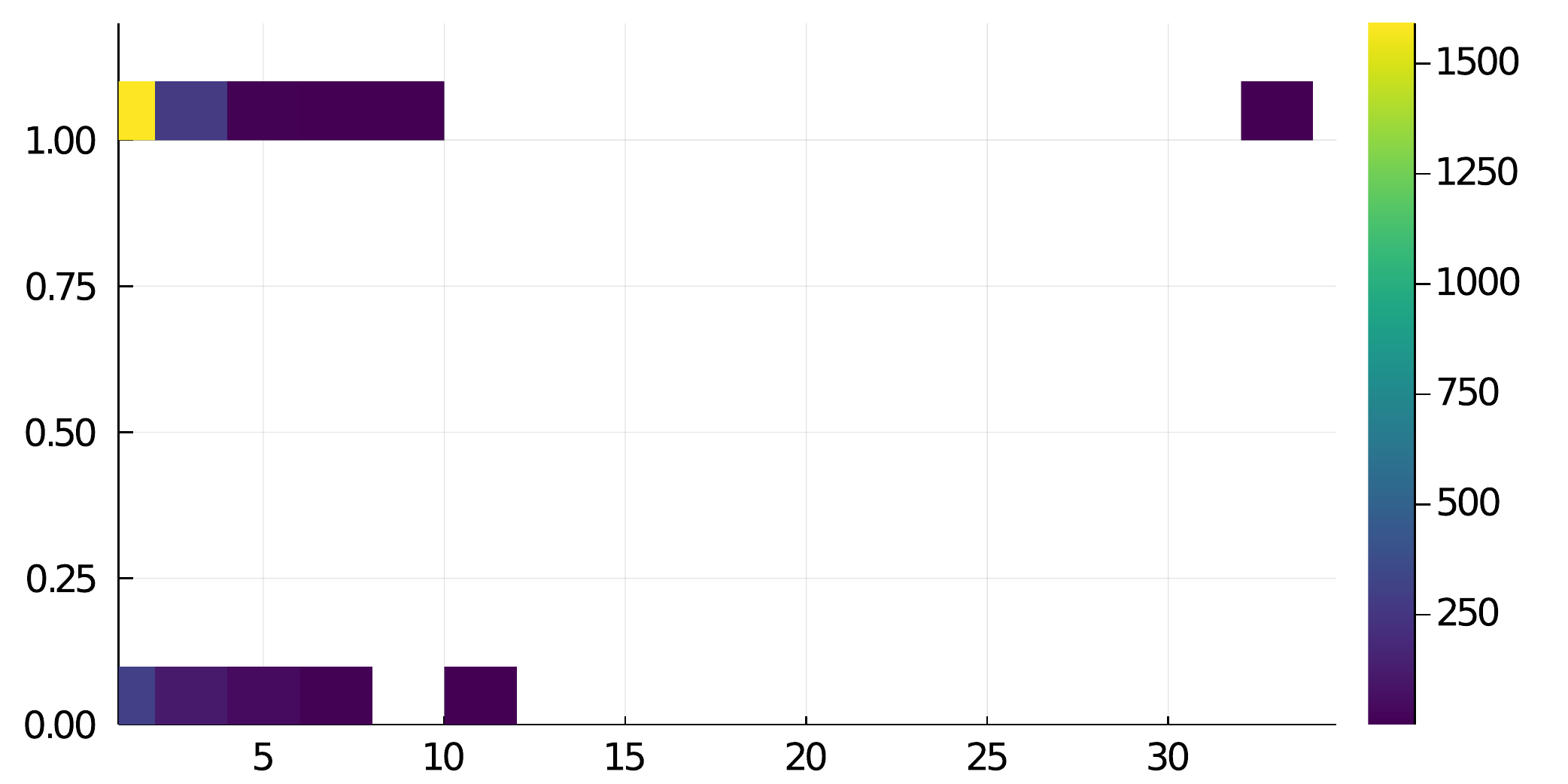}
     \end{subfigure}
     \ \
     \begin{subfigure}[b]{0.49\textwidth}
       \includegraphics[width=\textwidth]{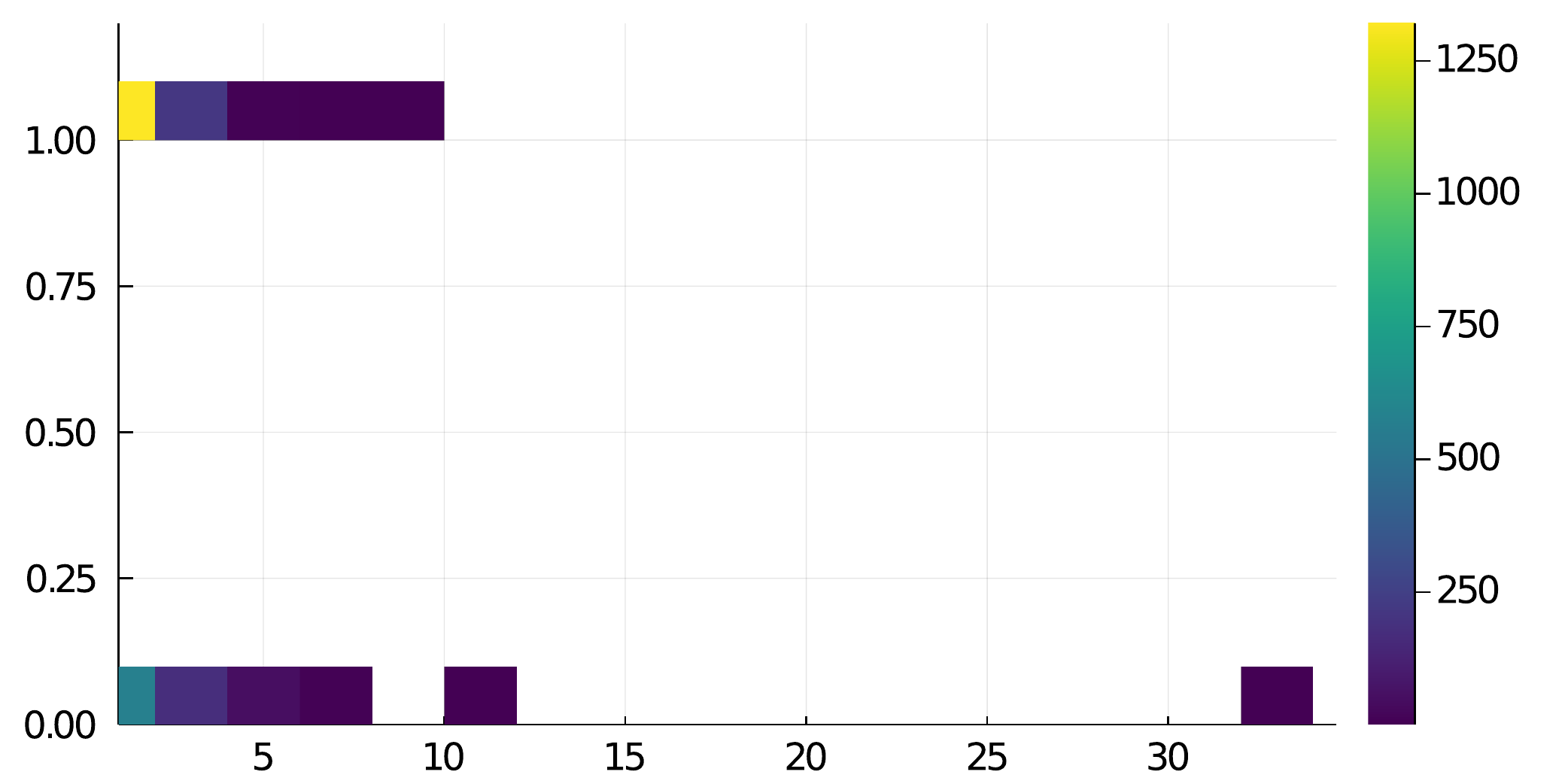}
     \end{subfigure}
\caption{Stability (left, OY axis) and groundedness (right, OY) by number of returns in method instances (OX)}%
\Description{Stability and groundedness by number of returns in method instances in Pluto}%
\label{figs:returns:Pluto}
\end{figure}
\clearpage

}

\end{document}